\theoremstyle{plain}
  \newtheorem{theorem}{Theorem}[section]
  \newtheorem{proposition}[theorem]{Proposition}
  \newtheorem{lemma}[theorem]{Lemma}
  \newtheorem{remark}[theorem]{Remark}
\theoremstyle{definition}
  \newtheorem{definition}{Definition}[section]
  \newtheorem{assumption}[theorem]{Assumption}
\theoremstyle{remark}
\numberwithin{equation}{section}
\DeclareMathOperator{\Tr}{Tr}
 \DeclareMathOperator{\supp}{Supp}
\renewcommand{\Re}{\mathrm{Re}\, }
\renewcommand{\Im}{\mathrm{Im}\,}
\newcommand\otimesal{\mathop{\hbox{\raise 1.6 ex
  \hbox{$\scriptscriptstyle\mathrm{al}$}
\kern -0.92 em \hbox{$\otimes$}}}}
\newcommand\oplusal{\mathop{\hbox{\raise 1.6 ex
  \hbox{$\scriptscriptstyle\mathrm{al}$}
\kern -0.92 em \hbox{$\oplus$}}}}
\newcommand\Gammal{\hbox{\raise 1.7 ex
\hbox{$\scriptscriptstyle\mathrm{al}$}\kern -0.50 em $\Gamma$}}
\renewcommand\i{\mathrm{i}}
\let\al=\alpha \let\be=\beta \let\de=\delta \let\ep=\epsilon
\let\ve=\varepsilon  \let\ga=\gamma 
\let\ka=\kappa \let\la=\lambda \let\om=\omega 
\let\si=\sigma
 \let\Ga=\Gamma \let\La=\Lambda \let\Om=\Omega
  \let\Si=\Sigma
\newcommand{\caA}{{\mathcal A}}
\newcommand{\caB}{{\mathcal B}}
\newcommand{\caC}{{\mathcal C}}
\newcommand{\caD}{{\mathcal D}}
\newcommand{\caE}{{\mathcal E}}
\newcommand{\caF}{{\mathcal F}}
\newcommand{\caG}{{\mathcal G}}
\newcommand{\caH}{{\mathcal H}}
\newcommand{\caI}{{\mathcal I}}
\newcommand{\caJ}{{\mathcal J}}
\newcommand{\caL}{{\mathcal L}}
\newcommand{\caM}{{\mathcal M}}
\newcommand{\caN}{{\mathcal N}}
\newcommand{\caP}{{\mathcal P}}
\newcommand{\caR}{{\mathcal R}}
\newcommand{\caT}{{\mathcal T}}
\newcommand{\caU}{{\mathcal U}}
\newcommand{\caV}{{\mathcal V}}
\newcommand{\caW}{{\mathcal W}}
\newcommand{\caZ}{{\mathcal Z}}
\newcommand{\scrB}{{\mathscr B}}
\newcommand{\scrE}{{\mathscr E}}
\newcommand{\scrG}{{\mathscr G}}
\newcommand{\scrH}{{\mathscr H}}
\newcommand{\scrS}{{\mathscr S}}
\newcommand{\bbC}{{\mathbb C}}
\newcommand{\bbN}{{\mathbb N}}
\newcommand{\bbR}{{\mathbb R}}
\newcommand{\bbS}{{\mathbb S}}
\newcommand{\bbT}{{\mathbb T}}
\newcommand{\bbZ}{{\mathbb Z}}
\newcommand{\opunit}{\text{1}\kern-0.22em\text{l}}
\newcommand{\frh}{{\mathfrak h}}
\newcommand{\frl}{{\mathfrak l}}
\newcommand{\frt}{{\mathfrak t}}
\newcommand{\frC}{{\mathfrak C}}
\newcommand{\frD}{{\mathfrak D}}
\newcommand{\frL}{{\mathfrak L}}
\newcommand{\e}{{\mathrm e}}
\renewcommand{\d}{{\mathrm d}}
\newcommand{\sys}{{\mathrm S}}
\newcommand{\res}{{\mathrm R}}
\renewcommand{\sp}{\mathrm{sp}}
\newcommand{\Ran}{\mathrm{Ran}}
\newcommand{\Dom}{\mathrm{Dom}}
\newcommand{\beq}{ \begin{equation} }
\newcommand{\eeq}{ \end{equation} }
\newcommand{\bet}{ \begin{theorem} }
\newcommand{\eet}{ \end{theorem} }
\newcommand{\Symm}{\mathrm{Sym}}
\newcommand{\baq}{\begin{eqnarray}}
\newcommand{\eaq}{\end{eqnarray}}
\newcommand{\dsi}{\si}
\newcommand{\dsil}{   \dsi_{l }  }
\renewcommand{\supp}{\mathrm{Supp}}
 \newcounter{smallarabics}
\newenvironment{arabicenumerate}
{\begin{list}{{\normalfont\textrm{\arabic{smallarabics})}}}
  {\usecounter{smallarabics}\setlength{\itemindent}{0cm}
  \setlength{\leftmargin}{5ex}\setlength{\labelwidth}{4ex}
  \setlength{\topsep}{0.75\parsep}\setlength{\partopsep}{0ex}
   \setlength{\itemsep}{0ex}}}
{\end{list}}
\newcounter{smallroman}
\newcommand{\ben}{\begin{arabicenumerate}}
\newcommand{\een}{\end{arabicenumerate}}
\newcommand{\sfock}{\Ga_{\mathrm{s}}}
\newcommand{\norm}{ \|}
\newcommand{\str}{ |}
\newcommand{\interspace}{ \scrS }
\newcommand{\lat}{ \bbZ^d }
\newcommand{\tor}{ {\bbT^d}  }
\newcommand{\initialres}{\rho_\res^\be}
\newcommand{\links}{L}
\newcommand{\rechts}{R}
\newcommand{\adjoint}{\mathrm{ad}}
\newcommand{\ad}{\adjoint}
\newcommand{\bosondispersion}{\omega}
\newcommand{\meleq}{\mathop{\leq}\limits_{m.e.}}
\newcommand{\cone}{c(\ga)}
\newcommand{\ctwo}{c(\ga,\la)}
\newcommand{\cthree}{c'(\ga,\la)}
\begin{document}
\begin{center}
\large{ \bf{Diffusion of a massive quantum particle  coupled to a quasi-free thermal medium. }} \\
\vspace{15pt} \normalsize

%
%{\bf   W.  De Roeck\footnote{Postdoctoral Fellow FWO-Flanders,
%email: {\tt
% wojciech.deroeck@fys.kuleuven.be}}  }\\

{\bf   W.  De Roeck\footnote{
email: {\tt
 w.deroeck@thphys.uni-heidelberg.de}}  }\\
\vspace{10pt} 
{\it   Institut f\"ur Theoretische Physik  \\ Universit\"at Heidelberg \\
D69120 Heidelberg,  Germany 
} \\

\vspace{20pt}

{\bf   J. Fr\"ohlich   }\\
\vspace{10pt} 
{\it   Institute for Theoretical Physics \\
ETH Z\"urich \\
CH-8093 Z\"urich, Switzerland}
\vspace{20pt} 

\end{center}

\vspace{20pt} \footnotesize \noindent {\bf Abstract: }
We consider a heavy quantum particle with an internal degree of freedom moving on the $d$-dimensional lattice $\bbZ^d$ (e.g., a heavy atom  with finitely many internal states).   The particle is coupled to a thermal medium (bath) consisting of free relativistic bosons (photons or Goldstone modes) through an  interaction of  strength $\la$ linear in creation and annihilation operators. The mass of the quantum particle is assumed to be of order $\la^{-2}$,  and we assume that  the internal degree of freedom is  coupled ``effectively" to the thermal medium. 
We prove that the motion of the  quantum particle is diffusive in $d\geq 4$ and for $\la$ small enough.

\vspace{5pt} \footnotesize \noindent {\bf KEY WORDS:}  diffusion, weak
coupling limit, quantum Boltzmann equation, quantum field theory  \vspace{20pt}
\normalsize

\section{Introduction}\label{sec: intro}

\subsection{Diffusion }

Diffusion and Brownian motion are  central phenomena in the theory of transport processes and nonequilibrium statistical physics in general. 
One can think of the  diffusion of a tracer particle in interacting particle systems,  the diffusion of energy in coupled oscillator chains, and many other examples. 

From a heuristic point of view, diffusion is  rather well-understood in most of these examples. It can often be successfully described by  some Markovian approximation, e.g.\ the Boltzmann equation or Fokker-Planck equation, depending on the example  under study. In fact, this has been the strategy of Einstein in his ground breaking work of 1905, in which he modeled diffusion as a random walk. 

However, up to this date, there is no rigorous derivation of diffusion from classical Hamiltonian mechanics or unitary quantum mechanics, except for some special chaotic systems; see Section \ref{sec: results classical}. Such a derivation ought to allow us, for example, to prove that the motion of a tracer particle that interacts with its environment is diffusive at large times. In other words, one would like to prove a central limit theorem for the position of such a particle. 

In recent years, some promising steps towards this goal have been taken. We provide a brief review of previous results in Section \ref{sec: related results}.
In the present paper, we rigorously exhibit diffusion for a quantum particle weakly coupled to a thermal reservoir. However, our method is restricted to spatial dimension $d \geq 4$.

\subsection{Informal description of the model and  main results}

We consider a quantum particle hopping on the lattice $\bbZ^d$, and interacting with a reservoir of bosons (photons or phonons) at temperature $\be^{-1}>0$. In the present section, we describe  the system in a way that is appropriate at zero temperature, but is formal when $\be <\infty$.   The total Hilbert space, $\scrH$, of the coupled system is a tensor product of the particle space, $\scrH_\sys$, with a reservoir space, $\scrH_\res$. Thus 
\beq
\scrH:= \scrH_\sys \otimes  \scrH_{\res}.  
\eeq
The particle space $\scrH_\sys$ is given by $l^2(\bbZ^d) \otimes \scrS$, where the Hilbert space $\scrS$ describes the internal degrees of freedom of the particle, e.g., a (pseudo-)spin or dipole moment, and the particle Hamiltonian is given by the sum of the kinetic energy and the energy of the internal degrees of freedom
\beq
H_\sys := H_{\sys,\mathrm{kin}} \otimes 1+  1 \otimes H_{\sys,\mathrm{spin}} 
\eeq
The kinetic energy is chosen to be small in comparison with the interaction energy, and this is made manifest in its definition by a factor  $\la^2$, where $\la$ is the coupling strength between the particle and the reservoir (to be introduced below). Hence we set
\beq
H_{\sys,\mathrm{kin}} = \la^2 \varepsilon(P),
\eeq
where the function $\varepsilon$ is the dispersion law of the particle and $P$ is the lattice-momentum operator. The most natural choice is to take $\varepsilon(P)$ to be (minus) the discrete lattice Laplacian, $-\Delta$. The energy of states of the internal degree of freedom is to a large extent arbitrary
\beq
H_{\sys,\mathrm{spin}}:= Y, \qquad \textrm{for some Hermitian matrix} \,  Y,
\eeq
the main requirement being that  $Y$ not be equal to a multiple of the identity. 

 The reservoir is described by a free boson field;  creation and annihilation operators creating/annihilating bosons with momentum $q \in \bbR^d$  are written as $ a^{*}_{q}, a_{q}$, respectively. They satisfy the canonical commutation relations
\beq
 [a^{\#}_{q} , a^{\#}_{q'}] =0, \qquad  [a_{q} , a^{*}_{q'}] =\delta(q-q'),
\eeq
where $a^{\#}$ stands for either $a$ or $a^{*}$. The energy of a reservoir mode $q$ is given by the dispersion law $\om(q) \geq 0$.
To describe the coupling of the particle to the reservoir, we introduce a Hermitian matrix $W $ on $\scrS$ and we write $X$ for the position operator on $l^2(\lat)$. 
%Since, for different $x$, the reservoirs are independent, we can also write 
%\beq
%[a_x(q) , a^{*}_{x'}(q')] = \delta_{x,x'}\delta(q-q').
%\eeq

The total Hamiltonian of the system is taken  to be
\beq  \label{def: hamiltonian1}
H_\la: = H_\sys    +    \int_{\bbR^d} \d q  \,  \bosondispersion(q)     a^{*}_{q} a_{q}   + \la   \int_{\bbR^d} \d q    \left( \e^{\i q \cdot X} \otimes W  \otimes  \phi(q) a_q+   \e^{-\i q \cdot X} \otimes W  \otimes  \overline{\phi(q)} a^*_q  \right)  
\eeq 
acting on $\scrH_\sys \otimes  \scrH_\res$. The function  $\phi(q)$ is a form factor and $\la \in \bbR$ is the coupling strength. We write $H_\sys$ instead of $H_\sys \otimes 1 $, etc.
% and $a_{q}$ instead of $1 \otimes a(q)$. 

We introduce three important assumptions:  
\ben
\item{The kinetic energy is small w.r.t.\  the coupling term in the Hamiltonian, as has already been indicated by the inclusion of $\la^2$ in the definition of $H_{\sys,\mathrm{kin}}$. Physically, this means that the particle is heavy.}
\item{We require a linear dispersion law for the reservoir modes, $\om(q) \equiv \str q\str
$, in order to have good  decay estimates at low speed.  This means that the reservoir consists of photons, phonons or Goldstone modes of a Bose-Einstein condensate. }
\item{ We assume that the amplitude of the wave front of a reservoir excitation (located on the light cone) has integrable  (in time) decay.  This is satisfied if the  dimension of space\footnote{Since the integrability in time is only needed for reservoir excitations, we can  in principle also treat models in which the particle is $3$-dimensional, but the reservoir is effectively $4$-dimensional.}  is at least $4$.
}
\een
Additional assumptions will concern the smoothness of the form factor $\phi$ and the ``effectiveness" of the coupling to the heat bath (e.g.,  the interaction between the internal degrees of freedom and the reservoir, described by the matrix $W$, should not vanish.)

The initial state, $\initialres $, of the reservoir is chosen to be an equilibrium state at  temperature $\be^{-1}>0$.
 For mathematical details on the construction of infinite reservoirs, see e.g.\ \cite{derezinski1,bratellirobinson,arakiwoods}. The initial state of the whole system, consisting of the particle and the reservoir, is a product state $\rho_{\sys}\otimes \initialres$, with $\rho_{\sys}$ a density  matrix for the particle that will be specified later. 
The time-evolved density matrix of the particle ('subsystem') is called $\rho_{\sys,t}$ and  is obtained by ``tracing out the reservoir degrees of freedom'' after the time-evolution has acted on the initial state during a time $t$, i.e., formally,
\beq \label{def: formal reduced evolution}
\rho_{\sys,t}   :=\Tr_{\scrH_\res} \left[ \e^{-\i t H_\la}  \left(\rho_{\sys}\otimes \initialres\right)   \e^{\i t H_\la}      \right],
\eeq
where $\Tr_{\scrH_\res}$ is the partial trace over $\scrH_\res$.  We warn the reader that the above formula does not make sense mathematically for an infinitely extended reservoir, since the reservoir state $\initialres $ is not a density matrix on $\scrH_\res$. This is a consequence of the fact that the reservoir is described from the start in the thermodynamic limit  and, hence,  the reservoir modes form a continuum.  Nevertheless, the LHS of formula \eqref{def: formal reduced evolution} can be given a meaning in the thermodynamic limit.
%  and $\Tr_{\scrH_{\sys}}$ and $\Tr_{\scrH}$ are the traces on $$ 

The density matrix $\rho_{\sys,t}$ obviously depends on the coupling strength $\la$, but we do not indicate this explicitly. We also drop the subscript $\sys$ and we  simply write $\rho_t$, instead of $\rho_{\sys,t} $, in what follows. 

We will often represent $\rho_t$ as a $ \scrB(\scrS)$-valued kernel on $\lat \times \lat$:
\beq
\rho_t(x^{}_{\links},x^{}_{\rechts}) \in \scrB(\scrS), \qquad  x^{}_{\links}, x^{}_{\rechts} \in \lat.
\eeq
Although this is not necessary for many of our results, we require the initial state of the particle to be exponentially localized near the origin of the lattice, i.e., 
\beq
\norm \rho_t(x^{}_{\links},x^{}_{\rechts}) \norm_{\scrB(\scrS)} \leq C \e^{- \delta' \str x^{}_{\links} \str } \e^{- \delta' \str x^{}_{\rechts}\str }, \qquad \textrm{for some constants}\, C, \delta'>0 
\eeq

Our first result concerns the diffusion of the position of the particle.
\subsubsection{Diffusion} \label{sec: brief diffusion}
  We define the probability density 
\beq
\mu_t(x) := \Tr_{\scrS}\rho_t(x,x) 
\eeq
where $\Tr_{\scrS}$ denotes the partial trace over the internal degrees of freedom. The number $\mu_t(x) $ is the probability to find the particle at site $x$ after time $t$.

 By diffusion, we mean that, for large $t$, 
\beq\label{def: diffusion1} 
\mu_t(x)    \sim  \left( \frac{1}{2 \pi t}\right)^{d/2} (\mathrm{det}D)^{-1/2} \exp \{ -  \frac{1}{2}\left(  \frac{x  }{\sqrt{t}}  \cdot D^{-1} \frac{x  }{\sqrt{t}}  \right) \},
\eeq
where the diffusion tensor $D \equiv D_{\la} $ is a  strictly positive matrix with real entries; actually, if the particle dispersion law $\varepsilon$ is invariant under lattice rotations, then the tensor $D$ is isotropic and hence a scalar. 
The magnitude of  $D$ is inferred from the following reasoning: 
The particle undergoes collisions with the reservoir modes.  Let $t_{m}$ be the mean time  between two collisions, and let $ v_m$ be the mean speed of the particle (the direction of the particle velocity is assumed to be random). Then the mean free path is $v_m \times t_m$ and  the central limit theorem suggests that the particle diffuses with diffusion constant
\beq
D \sim \frac{(v_m \times t_m)^2}{t_m}
\eeq
The mean time $t_{m}$ is of order $t_{m} \sim\la^{-2}$ since the interaction with the reservoir contributes only in second order. The mean velocity $v_m$ is of order $v_m \sim \la^2$ because of the factor $\la^2$ in the definition of the kinetic energy. Hence $D \sim \la^2$.

We now move towards quantifying \eqref{def: diffusion1}.  Let us fix a time $t$. Since $\mu_t(x)$ is a probability measure, one can think of $x_t$ as a random variable with
\beq\label{def: random variable}
\mathrm{Prob}(x_t=x): = \mu_t(x).
\eeq
The claim that the random variable $\frac{x_t}{\sqrt{t}}$ converges in distribution, as $t \nearrow \infty$, to a Gaussian random variable with mean $0$ and variance $D$ is called a Central Limit Theorem (CLT). 
It is equivalent to pointwise convergence of the characteristic function, i.e., 
\beq \label{eq: conv characteristic function}
 \sum_{x \in \lat} \e^{ -\frac{\i }{\sqrt{t}} x \cdot q}  \mu_t(x)   \quad \mathop{\longrightarrow}\limits_{t \uparrow \infty} \quad  \e^{- \frac{1}{2}q \cdot D q}, \quad  \textrm{for all}  \, q \in \bbR^d,
\eeq
and it is this statement which is our main result, Theorem \ref{thm: diffusion}. 
A stronger version of the convergence in \eqref{eq: conv characteristic function} (also included in Theorem \ref{thm: diffusion}) implies that the rescaled  moments of $\mu_t$ converge. For example, for $i,j =1,\ldots,d$,
 \baq 
 \frac{1}{t} \Tr[ \rho_t X_i]   =   \frac{1}{t} \sum_{x}   x_i \mu_t(x) \quad & \mathop{\longrightarrow}\limits_{t \uparrow \infty}& \quad  0
 \label{eq: convergence of first moment} \\[2mm]
    \frac{1}{t} \Tr[\rho_t X_i X_j ]  =   \frac{1}{t} \sum_{x}   x_i x_j \mu_t(x) \quad  &\mathop{\longrightarrow}\limits_{t \uparrow \infty} &\quad   D_{i,j},  \label{eq: convergence of second moment}
 \eaq
 In the form as stated here,  these results are expected only if one assumes that the model has space inversion symmetry, which is  assumed throughout.

\subsubsection{Equipartition}\label{sec: brief equipartition}

 Our second result concerns the asymptotic expectation value of the kinetic energy of the particle and the internal degrees of freedom. 
The equipartition theorem suggests that the energy of all degrees of freedom of the particle, the  translational and internal degrees of freedom, thermalizes at the temperature $\be^{-1}$ of the heat bath. We will establish this property up to a correction that is small in the coupling strength $\la$. This is acceptable, since  the interaction effectively modifies the Gibbs state  of the particle. 
We prove that, for all bounded functions $F$,
\baq
\Tr [\rho_t F(H_{\sys,kin} )]  \qquad  &\mathop{\longrightarrow}\limits_{t \nearrow \infty} &  \frac{1}{Z}  \int_{\tor} \d k \ F(\la^2 \ve(k)) \e^{-\be \la^2 \ve(k) } +o(\str \la\str^0)  \\[2mm]
\Tr [\rho_t  F( H_{\sys,spin})]  \qquad     &\mathop{\longrightarrow}\limits_{t \nearrow \infty} &   \qquad  \frac{1}{Z'}   \sum_{e \in \sp Y}  F(e) \e^{-\be e } +o(\str \la\str^0) , \qquad \textrm{as} \, \,   \la \searrow 0
\eaq
where $Z,Z'$ are normalization constants and the sum $\sum_{e \in \sp Y} $ ranges over all eigenvalues of the Hamiltonian $Y$. 
We note that the factor $\e^{-\be \la^2 \ve(k) }  $ can be replaced by $1$ (as in Theorem \ref{thm: equipartition}) since we anyhow allow a correction term that is small in $\la$ and the function $ \ve(k)$ is bounded.    For this reason, one could say that, for very small values of $\la$, the translational degrees of freedom thermalize at infinite temperature ($\be=0$).

\subsubsection{Decoherence}\label{sec: brief decoherence}

By decoherence we mean that  off-diagonal elements $\rho_t(x,y)$ of the density matrix $\rho_t$ in the position representation  fall off rapidly as a function of $\str x-y\str$. Of course, this property can only hold at large enough times when the effect of the reservoir on the particle has destroyed all initial long-distance coherence, i.e., after a time of order $\la^{-2}$.
Thus, there is a decoherence length $1/{\ga_{dch}}$ and a decay rate $g$ such that 
\beq
\norm \rho_t(x^{}_{\links},x^{}_{\rechts}) \norm_{\scrB(\scrS)}  \leq C \e^{-\ga_{dch} \str x^{}_{\links}-x^{}_{\rechts} \str } + C' \e^{-\la^2 g t}, \qquad  \textrm{as} \,\,  t \nearrow \infty
\eeq
for some constants $C,C'$.
The magnitude of the inverse decoherence rate ${\ga_{dch}}$ is determined as follows: The time the reservoir needs to destroy coherence is of the order of the mean free time $t_m$, while the time that is needed for coherence to be built up over a distance $1/{\ga_{dch}}$ is given by $({\ga_{dch}} \times v_m)^{-1}$, where $v_m$ is the mean velocity of the particle. Equating these two times yields
\beq
   \ga_{dch} \sim   (t_m\times v_m)^{-1}
\eeq
and hence, recalling that  $ t_m \sim \la^{-2}$ and $v_m \sim \la^2$, as argued in Section \ref{sec: brief diffusion}, we find that $\ga_{dch}$ does not scale with $\la$.

\subsection{Related results and discussion} \label{sec: related results}   

\subsubsection{Classical mechanics} \label{sec: results classical}

Diffusion has been established for the two-dimensional finite horizon billiard in \cite{bunimovichsinai}.   In that setup, a point particle travels in a periodic, planar array of fixed hard-core scatterers. The \emph{finite-horizon condition} refers to the fact that the particle cannot move further than a fixed distance without hitting an obstacle.  

Knauf \cite{knaufergodic} replaced the hard-core scatterers by a planar lattice of attractive Coulombic potentials, i.e., the potential is $V(x) =- \sum_{j \in \bbZ^2 }   \frac{1}{\str x-j \str}  $.
In that case, the motion of the particle can be mapped to the free motion on a manifold with strictly negative curvature, and one can again prove  diffusion. 

Recently, a  different approach was taken in \cite{bricmontkupiainencoupledmaps}:  Interpreted freely, the model in \cite{bricmontkupiainencoupledmaps} consists of a  $d=3$ lattice of confined particles that interact locally with chaotic maps such that the energy of the particles is preserved but their momenta are randomized. Neighboring particles can exchange energy via collisions and one proves diffusive behaviour of the energy profile.

\subsubsection{Quantum mechanics for extended systems} \label{sec: qmextended}

 The earliest result for extended quantum systems that we are aware of, \cite{ovchinnikoverikhman},  treats a quantum  particle interacting with  a time-dependent random potential that has no memory (the time-correlation function is $\delta(t)$). Recently, this was generalized in \cite{kangschenker} to the case of time-dependent random potentials where the time-dependence is given by a Markov process with a gap (hence, the free time-correlation function of the environment is exponentially decaying).  In \cite{deroeckfrohlichpizzo}, we treated a quantum particle interacting with independent heat reservoirs at each lattice site. This model also has an exponentially decaying  free reservoir time-correlation function and as such, it  is very similar to   \cite{kangschenker}.  Notice also that, in spirit, the model with independent heat baths is comparable to the model of \cite{bricmontkupiainencoupledmaps}, but, in practice, it is easier since quantum mechanics is linear!

The most serious shortcoming of these results is the fact that the assumption of exponential decay of the correlation function in time is unrealistic.  In the model of the present paper, the space-time correlation function,  called $\psi(x,t)$ in what follows, is the correlation function of freely-evolving excitations in the reservoir, created by interaction with the particle.
 Since momentum is conserved locally, these excitations cannot decay exponentially in time $t$, uniformly in $x$.  For example, if the dispersion law of the reservoir modes is linear, then $\psi(x,t)$ is a solution of the linear wave equation. In $d=3$, it behaves qualitatively as
\beq
\psi(x,t) \sim \frac{1}{\str x \str}  \delta (c \str t \str - \str x \str), \qquad \textrm{with} \, \, c\, \, \textrm{the propagation speed of the reservoir modes}
\eeq
In higher dimensions, one has better dispersive estimates, namely $\sup_{x} \str\psi(x,t)\str \leq t^{\frac{d-1}{2}}$ (under certain conditions), and this is the reason why, for the time being, our approach is restricted to $d\geq 4$. In the Anderson model, the analogue of the correlation function does not decay at all, since the potentials are fixed in time. Indeed, the Anderson model is different from our particle-reservoir model: diffusion is only expected to occur for small values of the coupling strength, whereas the particle gets trapped (Anderson localization) at large coupling.

Finally, we mention a recent and exciting development: in \cite{disertorispencerzirnbauer}, the existence of a delocalized phase in three dimensions is proven for a  supersymmetric model which is interpreted as a toy version of the Anderson model. 

\subsubsection{Quantum mechanics for confined systems}

The theory of confined quantum systems, i.e., multi-level atoms,  in contact with quasi-free thermal reservoirs has been intensively studied in the last decade, e.g.\ by \cite{bachfrohlichreturn, jaksicpillet2, derezinskijaksicreturn}. 
In this setup, one proves approach to equilibrium for the multi-level atom.  Although at first sight, this problem is different from ours (there is no analogue of diffusion), the techniques are quite similar and we were mainly inspired by these results.   However, an important difference is that, due to its confinement,  the multi-level atom experiences a free reservoir correlation function with better decay properties than that of our model.  For example, in \cite{jaksicpillet2}, the free reservoir correlation function is actually exponentially decaying. 

\subsubsection{Scaling limits}

Up to now, most of the rigorous results on diffusion starting from deterministic dynamics are formulated in a \emph{scaling limit}. This means that one does not fix one dynamical system and study its behaviour in the long-time limit, but, rather,  one compares a family of dynamical systems at different times. 
 The precise definition of the scaling limit differs from model to model, but, in general,  one scales time, space and the coupling strength (and possibly also the initial state) such that the Markovian approximation to the dynamics becomes exact. 
In our model the natural scaling limit is the so-called weak coupling limit: one introduces the macroscopic time $\tau:=\la^{2}t$ and one takes the limit $\la \searrow 0, t \nearrow \infty$ while keeping $\tau$ fixed. In that limit, the dynamics of the particle becomes Markovian in $\tau$ (as if the heat bath had no memory) and it is described by a Lindblad evolution.  The long-time behavior of this Lindblad evolution is diffusive. This is explained in detail in Section \ref{sec: lindblad}. One may say that, in this scaling limit, the heuristic reasoning  employed  in the previous sections to deduce the $\la$-dependence of the diffusion constant and the decoherence length becomes exact.  The same scaling is known very well in the theory of confined open quantum systems as it gives rise to the Pauli master equation. This was first made precise in \cite{davies1}.  

If we had set up the model with a kinetic energy of  $O(1)$ (instead of $O(\la^2)$), then one should also rescale space by introducing the macroscopic space-coordinate $\chi:= \la^{2}x$. The reason for this additional rescaling is that, between two collisions, a particle with mass of order $1$ moves during a time of order $\la^{-2}$, and hence it  travels a distance of order $\la^{-2}$.  
The resulting scaling limit  \beq x \to  \la^{-2}x, \quad t \to  \la^{-2}t, \quad \la \searrow 0 \eeq is often called the \emph{kinetic limit}. 
In the kinetic limit the dynamics of the particle is described by a linear Boltzmann equation (LBE) in the variables $(\chi,\tau)$.  The convergence of the particle dynamics to the LBE has been proven in \cite{erdos} for a quantum particle coupled to a heat bath, and in \cite{erdosyauboltzmann} for a quantum particle coupled to a random potential (Anderson model).   The long-time, large-distance limit of the Boltzmann equation is the heat equation, which suggests that one should be able to derive the heat equation directly in the limiting regime corresponding to
\beq x \to  \la^{-(2+ \ka) }x,\quad  t \to  \la^{-(2+2 \ka)}t, \quad \la \searrow 0, \qquad  \textrm{for some}\,\,  \ka >0.  \eeq
This was accomplished in \cite{erdossalmhoferyaunonrecollision, erdossalmhoferyaurecollision} for the Anderson model.    An analogous result was obtained in \cite{komorowskiryzhikdiffusion} for a classical particle moving in a random force field. 

\subsubsection{Limitations to our result}\label{sec: restrictions}

Two striking features of our model are the large mass, of order $\la^{-2}$, and the internal degrees of freedom described by the Hamiltonian $H_{\sys,\mathrm{spin}}=Y$. Physically speaking, these choices are of course not necessary for diffusion, they just make our task of proving it easier.  Let us explain why this is so. 
 First of all, once the mass is chosen to be of order $\la^{-2}$, the internal degrees of freedom are necessary to make the model diffusive in second order perturbation theory. Without the internal degrees of freedom, it would be ballistic. This is explained in Section \ref{sec: momentum representation of caM}; in particular, it can be deduced immediately from conservation of momentum and energy for the processes in Figure \ref{fig: markov}. Note also that the dependence on $\la$ is chosen such that the kinetic term $H_{\sys,\mathrm{kin}} = \la^2 \varepsilon(P)$ is comparable to the particle-reservoir interaction in second order of perturbation theory (both are of order $\la^2$). The large mass ensures that the position of the particle remains well-defined  for a  time of order $\la^{-2}$, which permits us to sum up  Feynman diagrams in real space. 

Further, we  note that our result requires an  analyticity assumption on the form factor $\phi$, see Assumption \ref{ass: analytic form factor}.  This assumption ensures that the free reservoir correlation function  $\psi(x,t)$  is exponentially decaying for small $x$, even though it has slow decay on the lightcone, as explained in Section \ref{sec: qmextended}.

\subsection{Outline of the paper}
The model is introduced in Section \ref{sec: model} and the results are stated in Section \ref{sec: result}.  In Section \ref{sec: lindblad}, we describe the Markovian approximation to our model. This approximation provides most of the intuition and it is a key ingredient of the proofs.  Section \ref{sec: strategy} describes the main ideas of the proof, which is contained in the remaining Sections \ref{sec: polymer models}-\ref{sec: model only long} and the four appendices \ref{app: reservoirs}-\ref{app: combinatorics}.

\section*{Acknowledgements}
W. De Roeck thanks J. Bricmont and H. Spohn for helpful discussions and suggestions, and for pointing out several references.   He has also greatly benefited from collaboration with J. Clark and  C. Maes. In particular, the results described in Section \ref{sec: lindblad} and Appendix \ref{app: caM} were essentially obtained in \cite{clarkderoeckmaes}.  After the first version was submitted, some inaccuracies were pointed out by L. Erd\"os, A. Knowles, H.-T. Yau and J. Yin. Most importantly, the remarks of an anonymous referee allowed for serious improvements in the presentation of the proof. 
At the time when this work was completed, W.D.R.\ was a postdoctoral fellow  of the Flemish research fund `FWO-Vlaanderen'. The support of this institution is gratefully acknowledged. 

\section{ The model} \label{sec: model}
After fixing conventions in Section \ref{sec: conventions},  we introduce the model.  
Section \ref{sec: particle} describes the particle, while Section \ref{sec: reservoir} deals with the reservoir. In Section \ref{sec: dynamics of coupled system}, we  couple the particle to the reservoir, and we define the reduced particle dynamics $\caZ_t$. Section \ref{sec: translation invariance and fiber decomposition} introduces the fiber decomposition.

\subsection{Conventions and notation} \label{sec: conventions}

Given a Hilbert space $\scrE$, we use the standard notation 
\beq \scrB_p(\scrE):= \left\{  S \in \scrB(\scrE), \Tr\left[(S^*S)^{p/2}\right] < \infty  \right\}  ,\qquad   1 \leq p \leq \infty,  \eeq
with $\scrB_\infty(\scrE)\equiv \scrB(\scrE)$ the bounded operators on $\scrE$,
and
\beq
\norm S \norm_p := \left(\Tr\left[(S^*S)^{p/2}\right]\right)^{1/p}, \qquad  \norm S \norm:=\norm S \norm_{\infty}.
\eeq

For bounded  operators acting on $\scrB_p(\scrE)$, i.e. elements of $\scrB(\scrB_p(\scrE))$, we use in general  the calligraphic font: $\caV,\caW,\caT,\ldots$.
An operator $X \in \scrB(\scrE)$ determines an operator  $\adjoint(X) \in \scrB(\scrB_p(\scrE))$ by 
\beq
\adjoint(X) S: =[X,S]= XS-SX, \qquad   S \in \scrB_p(\scrE).
\eeq
 The norm of  operators in $\scrB(\scrB_p(\scrE))$ is defined by \beq\label{def: norm on operators}
\norm \caW \norm := \sup_{S \in \scrB_p(\scrE)}   \frac{\norm \caW(S) \norm_p}{\norm S\norm_p}.
\eeq
We will mainly work with Hilbert-Schmidt operators  ($p=2$) and, unless mentioned otherwise, the notation $\norm \caW \norm$ will refer to this case.

For vectors $\upsilon \in \bbC^d$, we let $\Re \upsilon , \Im  \upsilon$ denote the vectors $(\Re  \upsilon_1, \ldots, \Re  \upsilon_d)$ and  $(\Im  \upsilon_1, \ldots, \Im \ \upsilon_d)$, respectively. The scalar product on $\bbC^d$ is written as 
$ \upsilon \cdot  \upsilon'$ and the norm  as $\str  \upsilon \str := \sqrt{ \upsilon \cdot  \upsilon}$. 

The scalar product on a general Hilbert space $\scrE$  is written as $\langle \cdot, \cdot \rangle$, or, occasionally, as $\langle \cdot, \cdot \rangle_{\scrE}$. All scalar products are defined to be  linear in the second argument and anti-linear in the first one.
We use the physicist's notation 
\beq \label{def: ket bra notation}
\str \varphi \rangle \langle \varphi' \str \qquad \textrm{for the rank-1 operator in $ \scrB(\scrE)$ acting as} \quad  \varphi''  \mapsto     \langle \varphi', \varphi'' \rangle   \varphi
\eeq

We write $\sfock(\scrE)$ for the symmetric (bosonic) Fock space over the Hilbert space $\scrE$ and we refer to  \cite{derezinski1} for definitions and discussion. 
If $\om$ is a self-adjoint operator on $\scrE$, then its (self-adjoint) second quantization, $\d \sfock (\om)$, is defined by 
\beq
\d \sfock (\om)  \Symm (\varphi_1 \otimes \ldots \otimes \varphi_n)   := \sum_{i=1}^n    \Symm (\varphi_1 \otimes\ldots \otimes  \om \varphi_i \otimes \ldots \otimes \varphi_n),  
\eeq
where $ \Symm$ projects on the symmetric subspace of $ \otimes^{n}\scrE$ and $\varphi_1, \ldots, \varphi_n \in \scrE$.  

We use $C,C'$ to denote constants whose precise value can change from equation to  equation.

\subsection{The particle} \label{sec: particle}

We choose a finite-dimensional Hilbert space $\interspace$, which can be thought of as the state space of some internal degrees of freedom of the particle, such as spin or a dipole moment.  
The total Hilbert space of the particle is given by  $\scrH_\sys := l^2(\lat, \interspace)=   l^2(\lat) \otimes\interspace $  (the subscript $\sys$ refers to 'system', as is customary in system-reservoir models). 

We define the position operators, $X_j$, on $\scrH_\sys$ by 
\beq
(X_j \varphi)(x)= x_j \varphi(x) ,\qquad         x \in \lat, \quad \varphi \in l^2(\lat, \interspace), \quad j=1,\ldots,d
\eeq
In what follows, we will almost always drop the component index $j$ and write $X \equiv (X_j)$ to denote the vector-valued position operator.
%We define a one-dimensional projection, $1_x $, on $\scrH_\sys$ by 
%\beq
%(1_x \varphi)(x')= \delta_{x,x'} \varphi(x') ,\qquad         x,x' \in \lat, \varphi \in l^2(\lat, \interspace)
%\eeq
We will often consider the space $\scrH_\sys$ in its dual representation, i.e.\ as $L^2(\bbT^d, \interspace) $, where $\bbT^d$ is the $d$-dimensional torus (momentum space), which is identified with $L^2([-\pi,\pi]^d, \interspace)$.  We formally define the  `momentum' operator $P$ as multiplication by $k \in \bbT^d$, i.e., 
\beq
P\varphi(k)=k\varphi(k), \quad  k \in [-\pi,\pi]^d,  \,\, \varphi \in L^2(\bbT^d, \interspace)
\eeq
Although $P$ is well-defined as a bounded operator, it does not correspond to a continuous function on $\tor$, and  it is not true that $[X_j,P_j]=-\i$.
Throughout the paper, we will only use operators  $F(P)$ where $F$ is a function on $\tor$ that is extended periodically to $\bbR^d$. We choose such a periodic function, $\varepsilon$,  of $P$ to determine the dispersion law of the particle.
 The kinetic energy of our particle is given by 
$\la^2 \varepsilon(P)$, where $\la$ is a small parameter, i.e., the 'mass' of the particle is of order $\la^{-2}$

The energy  of the internal degrees of freedom is given by a self-adjoint operator $Y \in \caB(\interspace)$, acting on $\scrH_\sys$ as
$
(Y\varphi)(k)= Y(\varphi(k))
$.
The Hamiltonian of the particle is 
\beq
H_\sys:=     \la^2  \varepsilon(P) \otimes 1 +1\otimes  Y 
\eeq
As in Section \ref{sec: intro}, we will mostly write $ \varepsilon(P) $ instead of $ \varepsilon(P) \otimes 1$ and $Y$ instead of $1\otimes  Y $.

 Our first assumption ensures that the Hamiltonian $H_\sys= Y +\la^2 \varepsilon(P)$ has good regularity properties
\begin{assumption}[Analyticity of the particle dynamics] \label{ass: analytic dispersion} \emph{
The function $\ve$, defined originally on $\bbT^d$, extends to an analytic function  in a neighborhood of the complex multistrip of width $ \delta_{\ve} >0$.  That is, when viewed as a periodic function on $\bbR^d$,  $\ve$ is analytic (and bounded) in a neighborhood of $(\bbR+ \i [-\delta_{\ve}, \delta_{\ve}])^d $.   
Moreover,  $\varepsilon$ is symmetric with respect to space inversion, i.e.,
\beq \label{eq: space inversion for varepsilon}
 \varepsilon(k)=\varepsilon(-k).
 \eeq
Furthermore, we assume there is no $\upsilon \in \bbR^d$ such that the function $k \mapsto \upsilon \cdot \nabla \varepsilon(k)$ vanishes identically and that $\ve$ does not have a smaller periodicity than that of $\tor$, i.e., we assume that
\beq \label{eq: no smaller periodicity}
\ve (k)=\ve(z+k)\,\, \textrm{for all} \, \, k \in \tor   \qquad  \Leftrightarrow     \qquad z \in (2\pi \bbZ)^d. 
\eeq
}
\end{assumption}

The most natural choice for $\varepsilon$ is $\varepsilon(k)= \sum_{i=1}^d 2 (1-\cos (k_i))$, which corresponds to $-\varepsilon(P)$ being the lattice Laplacian.
As already indicated in Section \ref{sec: brief diffusion}, the symmetry assumption \eqref{eq: space inversion for varepsilon} is necessary to exclude an asymptotic drift of the particle.

By a simple Paley-Wiener argument, Assumption \ref{ass: analytic dispersion} implies that one has exponential propagation estimates for the evolution generated by the operator $\varepsilon(P)$. Indeed, from the relation
\beq  \label{eq: combes thomas}
\left\norm ( \e^{\i \nu \cdot X }    \e^{- \i t \ve(P) }   \e^{- \i \nu \cdot X }) \right\norm  =   \left\norm  \e^{- \i t \ve(P+ \nu) }  \right\norm    \leq   \e^{ q_{\ve}(\str \Im\nu \str) \str t\str }, \qquad  \textrm{for}\, \,  \str \Im \nu \str \leq \delta_{\ve}
\eeq
with $q_{\ve}(\ga):= \sup_{\str\Im p\str \leq \ga  }  \str \Im \varepsilon(p)\str $, one obtains
\beq \label{eq: bound on matrix elements laplacian}
\left\norm ( \e^{- \i t \ve(P) } )(x_{\links},x_{\rechts}) \right\norm_{\scrS}\leq  \e^{-\ga \str  x_{\links}-x_{\rechts} \str} \e^{q_{\ve}(\ga)  \str t\str }, \qquad \textrm{for} \, \,  \ga \leq \delta_{\varepsilon}
\eeq
where we write $S({x_{\links},x_{\rechts}}) $ for a $\scrB(\scrS)$-valued 'matrix element' of $S \in \scrB(\scrH_\sys)$.

\subsection{The reservoirs} \label{sec: reservoir}

\subsubsection{The reservoir space}
We introduce a one-particle reservoir  space
$\frh = L^2(\bbR^d)  $  and a positive one-particle Hamiltonian $\omega \geq 0$. 
The coordinate $q \in \bbR^d$ should be thought of as a momentum coordinate, and $\omega$ acts by multiplication with a function $\omega(q)$, 
\beq
(\omega\varphi)(q)= \omega(q) \varphi(q)
\eeq
In other words, $\omega$ is the dispersion law of the reservoir particles.
%Translation of  $\varphi \in \frh$ by a vector $x \in \bbR^d$ is defined as
%\beq
%\tau_x(\varphi)(q) = \e^{-i (q,x)} \varphi(q).
%\eeq
The full reservoir Hilbert space, $\scrH_\res$,  is the symmetric Fock space (see Section \ref{sec: conventions} or \cite{derezinski1}) over the one-particle space $\frh$, 
\beq \scrH_\res:= \sfock (\frh) \eeq 
The reservoir Hamiltonian, $H_\res$, acting on  $\scrH_\res$, is then the second quanitzation of $\om$
 \beq H_\res:= \d \Gamma_{\mathrm{s}} (\omega  ) =  \mathop{\int}\limits_{\bbR^d} \d q\,  \om(q) a^*_q a_q.\eeq  
with the creation/annihilation operators $a_q^*,a_q$ to be introduced below.

\subsubsection{The system-reservoir coupling}
The coupling between system and reservoir is assumed to be translation invariant. We choose  a `form factor' $\phi \in L^2(\bbR^d)$  and a self-adjoint operator $W=W^* \in \scrB(\interspace)$ with $\norm W \norm \leq 1 $, and we define
the interaction Hamiltonian $H_{\sys\res}$ by 
\beq
H_{\sys\res}:= \int \d q    \left( \e^{\i q \cdot X} \otimes W  \otimes  \phi(q) a_q+   \e^{-\i q \cdot X} \otimes W  \otimes   \overline{\phi(q)} a^*_q  \right)  \quad   \textrm{on} \quad 
\scrH_\sys \otimes  \scrH_\res,  
\eeq
where $a_q,a^*_q$ are the creation/annihilation operators (actually, operator-valued distributions) on $\frh$ satisfying the canonical commutation relations (CCR)
\beq
[a_q, a^*_{q'}] = \delta{(q-q')}, \qquad   [a^{\#}_q, a^{\#}_{q'}] = 0
\eeq
with $a^{\#}$ standing for either $a$ or $a^*$. We also introduce the smeared creation/annihilation operators
\beq
a^{*}(\varphi) := \int_{\bbR^d} \d q \, \varphi(q)   a^{*}_q, \qquad    a(\varphi) := \int_{\bbR^d} \d q \,  \overline{\varphi(q)}   a_q, \qquad  \varphi \in L^2(\bbR^d).
\eeq 

In what follows we will specify our assumptions on $H_{\sys\res}$, but we already mention that  we need $[W, Y]\neq 0$ for the internal degrees of freedom to be coupled effectively to the field.

\subsubsection{Thermal states} \label{sec: thermal states}

Next, we put some tools in place to describe the positive temperature state  of the reservoir. 
We introduce the density operator 
\beq T_{\beta}= (\e^{\be \omega}-1)^{-1}\quad   \textrm{on} \quad \frh=L^2(\bbR^d). \eeq
Let $\frC$ be the $^{*}$algebra consisting of polynomials in the creation and annihilation operators $a(\varphi),a^*(\varphi')$ with $\varphi, \varphi'  \in \frh$.
We define $\initialres $ as  a quasi-free state defined on $\frC$.  It is a linear functional on $\frC$, fully specified by the following properties:
\ben
\item{ Gauge-invariance 
\beq \initialres \left[  a^*(\varphi)
 \right]=
 \initialres \left[  a(\varphi) \right]= 0    \label{def: gauge invariance} \eeq}
 \item{The choice of the two-particle correlation function
  \beq \label{def: thermal state canonical}  \left( \begin{array}{cc} \initialres \left[ a^*(\varphi)
a(\varphi') \right]
& \initialres \left[a^*(\varphi) a^*(\varphi') \right]  \\
\initialres \left[ a(\varphi)a(\varphi')\right]& \initialres \left[ a(\varphi) a^*(\varphi')
\right]
\end{array} \right)  =  \left(\begin{array}{cc}
 \langle \varphi' | T_{\beta} \varphi \rangle & 0    \\
0 & \langle \varphi |(1+ T_{\beta}) \varphi'\rangle
\end{array}\right) \eeq
 }
\item{The state   $\initialres $ is quasifree. This means that the  higher correlation functions are related to the two-particle correlation function via Wick's theorem
\beq \label{eq: gaussian property}
\initialres \left[a^{\#}(\varphi_1)  \ldots a^{\#}(\varphi_{2n}) \right]  =   \sum_{  \pi \in \caP_n} \prod_{(i,j) \in \pi}   \initialres \left[a^{\#}(\varphi_i) a^{\#}(\varphi_j) \right]
\eeq
where  $a^{\#}$ stands for either $a^*$ or $a$, and
$\caP_n$  is the set of pairings $\pi$,  partitions of $\{1,\ldots,2n \}$ into $n$ pairs  $(r,s)$. By convention, we  fix the order within the pairs such that  $r<s$. }
\een

The reason why it suffices to specify the state on $\frC$ has been explained in many places, see e.g.\ \cite{bratellirobinson, froehlichmerkli, derezinski1}

\subsubsection{Assumptions on the reservoir}

 Next, we state our main assumption restricting the type of reservoir and the dimensionality of space.

\begin{assumption}[Relativistic reservoir and $d\geq 4$]\label{ass: space} \emph{
We assume that the
\beq
 \textrm{dimension of space} \, \, d  \geq 4
 \eeq
Further, we assume the dispersion law of the reservoir particles to be linear; 
\beq
\om(q) := \str q \str
\eeq
}
\end{assumption}

For simplicity, we will assume that the form factor $\phi$ is rotationally symmetric and we write
\beq \label{eq: rotation symmetry form factor}
\phi(q) \equiv  \phi(\str q \str), \qquad  q \in \bbR^d 
\eeq
We define the "effective squared form factor"  as
\beq \label{def: psi}
\hat\psi (\om) :=    \str \om \str^{(d-1)}  \left\{ \begin{array}{ll}     \frac{1  }{1-\e^{-\be \om}}    \, \str\phi( \str\om\str )\str^2  \     &    \om \geq 0 \\[2mm] 
\frac{1  }{\e^{-\be \om}-1}    \, \str\phi( \str\om\str )\str^2     &    \om<0  \end{array} \right.
\eeq
where we  are abusing the notation by letting $\om$  denote  a variable in $\bbR$. Previously, $\om$ was the  energy operator on the one-particle Hilbert space and as such, it could assume only positive values.   Indeed, at positive temperature, the function $\hat\psi (\om) $ plays a similar role as $\str\phi( \str\om\str )\str^2$ at zero-temperature: It describes the intensity of the coupling to the reservoir modes of frequency $\om$. Modes with $\om<0$ appear only at positive temperature and they correspond physically to "holes".  One checks that  $\frac{\hat\psi(\om)}{\hat\psi(-\om)}= \e^{\be \om}$, which is Einstein's emission-absorption law (i.e.\ detailed balance).
This particle-hole point of view can be incorporated into the formalism by  the Araki-Woods representation, see e.g.\ \cite{bratellirobinson, froehlichmerkli, derezinski1}.

The next assumption restricts the  ``effective squared form factor'' $\hat \psi$. 
\begin{assumption}[Analytic form factor] \label{ass: analytic form factor} \emph{
Let the form factor be rotation-symmetric $\phi(q)\equiv \phi(\str q \str)$, as in \eqref{eq: rotation symmetry form factor}, and let $\hat\psi$ be defined as in \eqref{def: psi}. 
We assume that $\hat\psi(0)=0$ and that  the  function $\om  \rightarrow     \hat\psi (\om)  $ has  an analytic extension to  a neighborhood of the strip  $ \bbR + \i[\delta_\res, \delta_\res] $, for some $\delta_\res>0$,  such that
 \beq\label{eq: hardy class}
 \mathop{\sup}\limits_{-\delta_\res  \leq  \chi \leq \delta_\res} \mathop{\int}\limits_{\bbR+\i \chi} \d \om   \str    \hat \psi (\om)  \str   < \infty.
\eeq
}
\end{assumption}
We note that  Assumption \ref{ass: analytic form factor}  is satisfied (in $d \geq 4$) if one chooses:
\beq
\phi(\str q \str) := \frac{1}{\sqrt{\str q \str}} \vartheta(\str q \str) \eeq
with $\vartheta$ a function on $\bbR$ with $\vartheta(-\om)=\vartheta(\om)$ and analytic in the strip of width $\de_\res$, and such that \eqref{eq: hardy class} holds with $\str\vartheta(\om)\str^2$ substituted for $\str\hat\psi(\om)\str$.

The motivation for Assumptions \ref{ass: space} and \ref{ass: analytic form factor} will become clear in Section \ref{sec: reservoir correlation function},  where we discuss the reservoir space-time correlation function $\psi(x,t)$. 

The last assumption is a Fermi Golden Rule condition that ensures that the spin degrees of freedom are effectively coupled to the reservoir. To state it, we need the following operators 
\beq \label{def: Wa}
         W_{a}  : =  \mathop{\sum}\limits_{\footnotesize{\left. \begin{array}{c}  e,e' \in \sp Y  \\ e-e'=a   \end{array}\right. }}  1_{e'} (Y)  W 1_e(Y), \qquad  a \in \sp (\adjoint(Y)) 
\eeq
Note that the variable $a$ labels the Bohr-frequencies of the internal degrees of freedom of the particle.
\begin{assumption}[Fermi Golden Rule] \label{ass: fermi golden rule} \emph{
Recall the function $\hat\psi$ as defined in \eqref{def: psi}. The set of matrices
\beq
\scrB_{W} := \left\{     \hat\psi(a) W_{a}, a  \in \sp (\adjoint(Y))   \right\}  \subset \scrB(\scrS)
\eeq
generates the complete algebra $\scrB(\scrS)$. This means that any $S \in \scrB(\scrS)$ which commutes with all operators in $\scrB_{W} $ is necessarily a multiple of the identity. 
We also require the following non-degeneracy condition
\begin{itemize}
\item 
Every eigenvalue of $Y$ is nondegenerate (multiplicity $1$)
\item
For all eigenvalues $e,e',e'',e'''$ of $Y$ such that $e\neq e'$, we have
\beq
e'-e= e'''-e'' \qquad  \Rightarrow    \qquad   \left( e'=e'''  \,  \textrm{and} \,   e''=e   \right)
\eeq
\end{itemize}
This condition implies in particular that all eigenvalues of $\adjoint(Y)$ are nondegenerate, except for the eigenvalue $0$, whose multiplicity is given by $\dim \scrS$. }
\end{assumption} 
 
 The strict nondegeneracy condition on $Y$, in contrast to the condition on $\scrB_{W} $, is not crucial to our technique of proof, but it allows us to be more concrete in some stages of the calculation. In particular, the matrices  $W_{a \neq 0}$, introduced above in   \eqref{def: Wa}, can be rewritten as
 \beq 
W_{a} =       \langle e', W e \rangle  \times  \str e'\rangle \langle e \str,
 \eeq
 where $e,e'$ are the unique eigenvalues s.t.\,  $e-e'=a \neq 0$, and we have denoted the corresponding eigenvectors by the same symbols $e,e'$ (cfr.\ \eqref{def: ket bra notation}). The condition that $\scrB_{W}$ generates the complete algebra, can then be rephrased as follows: 
Consider an undirected graph with vertex set $\sp Y$ and let the vertices $e$ and $e'$ be connected by an edge if and only if \beq  \hat \psi(e'-e) \str\langle e, W e' \rangle\str^2 \neq 0   \label{def: connectedness} \eeq (note that this condition is indeed symmetric in $e,e'$, as long as $\be < \infty$).
Then Assumption \ref{ass: fermi golden rule} is satisfied if and only if this graph is connected. 

Assumptions of the type above have their origin in a criterion for ergodicity of quantum master equations due to \cite{spohnapproach, frigerio}, that is the noncommutative analogue of the Perron-Frobenius theorem. In our analysis, too, Assumption \ref{ass: fermi golden rule} is used to ensure that the Markovian semigroup $\La_t$ (to be introduced in Section \ref{sec: lindblad}) has good ergodic properties. This can be seen in Section \ref{sec: app proof of proposition} in Appendix \ref{app: caM}.

\subsection{The dynamics of the coupled system} \label{sec: dynamics of coupled system}

Consider the Hilbert space $ \scrH:= \scrH_\sys \otimes \scrH_\res$. The  Hamiltonian $H_\la$ (with coupling constant $\la$) on $\scrH$ is (formally) given by
\beq \label{def= Hlambda}
H_\la:= H_\sys + H_\res+ \la H_{\sys\res}
\eeq
If the following condition is satisfied
\beq \label{eq: condition for relative boundedness}
\langle \phi, \om^{-1} \phi \rangle_{\frh}   < \infty,
\eeq
then $H_{\sys\res}$ is a relatively bounded perturbation of $H_\sys + H_\res$ and hence $H_\la$ is a self-adjoint operator. One easily checks  that \eqref{eq: condition for relative boundedness} is implied by Assumptions  \ref{ass: space} and \ref{ass: analytic form factor}.

 For the purposes of our analysis, it is important to understand the dynamics of the coupled system at positive temperature. To this end, we introduce the reduced dynamics of the quantum particle. 

By  a slight abuse of notation, we use $\initialres$ to denote the conditional expectation $\scrB(\scrH_\sys) \otimes \frC \to \scrB(\scrH_\sys) $, given by 
\beq
 \initialres [S \otimes R]= S \initialres[R]
\eeq
where $\initialres(R)$ is defined by (\ref{def: gauge invariance}-\ref{def: thermal state canonical}-\ref{eq: gaussian property}) for $R \in \frC$, i.e.\, a  polynomial in creation and annihilation operators. 

Formally, the reduced dynamics in the Heisenberg  picture is given by 
\beq \label{def: z}
\caZ_{t}^{\star} (S) :=     \initialres \left[  \e^{\i t H_\la} \,  ( S \otimes 1)  \,  \e^{-\i t H_\la} \right].
\eeq
However, this definition does not make sense a priori, since   $ \e^{\i t H_\la} \,  ( S \otimes 1)  \,  \e^{-\i t H_\la}  \notin \scrB(\scrH_\sys) \otimes \frC$ in general. A mathematically precise definition of $\caZ_{t}^{\star}$ is the subject of the upcoming Lemma \ref{lem: definition dynamics}. 

Since both the initial reservoir state $  \initialres $ and the Hamiltonian $H_\la$ are translation-invariant, we expect that the reduced evolution $\caZ_{t}^{\star}$ is also translation invariant in the sense 
that 
\beq  \label{eq: expected translation invariance}
\caT_z \caZ_{t}^{\star}  \caT_{-z} =  \caZ_{t}^{\star} , \qquad \textrm{where}   \,\, (\caT_z S)(x_{\links}, x_{\rechts}):=  S(x_{\links}+z, x_{\rechts}+z)
\eeq
By the requirement $\varepsilon(k)=\varepsilon(-k)$ in Assumption \ref{ass: analytic dispersion} and the requirement that $\phi(q)=\phi(-q)$ in Assumption \ref{ass: analytic form factor}, the Hamiltonian $H_\la$ is also invariant with respect to space-inversion $x \mapsto -x$, or, equivalently, $k \mapsto -k$.  Since the initial reservoir state is also invariant with respect to space inversion (this follows from the fact that $\om(q)=\om(-q)$), we expect that
\beq    \label{eq: expected inversion invariance}
\caT_E \caZ_{t}^{\star}  \caT_{E} =  \caZ_{t}^{\star} , \qquad \textrm{where}   \,\, (\caT_E S)(x_{\links}, x_{\rechts}):=  S(-x_{\links}, -x_{\rechts})
\eeq
Finally, the unitarity of the microscopic time-evolution implies that 
\beq    \label{eq: expected self-adjointness}
\caT_J \caZ_{t}^{\star}  \caT_{J} =  \caZ_{t}^{\star} , \qquad \textrm{where}   \,\, (\caT_J S)(x_{\links}, x_{\rechts}):=  S^*(x_{\rechts}, x_{\links})
\eeq
where the $^*$ in $S^*(\cdot, \cdot)$ is the Hermitian conjugation on $\scrB(\scrS)$.

\begin{lemma}\label{lem: definition dynamics}
Assume Assumptions \ref{ass: analytic dispersion}, \ref{ass: space} and \ref{ass: analytic form factor}, and let
\beq
H_0:=  H_\sys+ H_\res, \qquad   H_{\sys\res} (t) :=   \e^{\i t H_0}   H_{\sys\res}    \e^{-\i t H_0}   
\eeq
The Lie-Schwinger series
    \baq \label{eq: lie schwinger}
 \caZ_t^{\star}(S)&  :  =&\mathop{\sum}_{n\in \bbN  }    (\i\la)^{n}   
 \mathop{\int}\limits_{\scriptsize{\left.\begin{array}{c}0 \leq t_1 \leq \ldots \leq t_n \leq t   \end{array}\right.}    } 
   \d t_1 \ldots \d t_n \,    \\[1mm]   &&  \initialres \left(   \adjoint(H_{\sys\res} (t_1)) \adjoint(H_{\sys\res} (t_2)) \ldots   \adjoint(H_{\sys\res} (t_n))  \,  \e^{\i  t \adjoint( H_0)}  \,   (S \otimes 1)  \right)
  \nonumber
    \eaq
 is well-defined for all $\la,t \in \bbR$, that is, the RHS is a norm convergent family of operators and $\caZ_t^{*}$ has the following  properties
    \ben
    \item{$\caZ_t^{\star}(1)=1$.}
    \item{ $\caT_z \caZ_{t}^{\star}  \caT_{-z} =  \caZ_{t}^{\star} $\,  with $\caT_z$ as defined in \eqref{eq: expected translation invariance}.}
            \item{$ \caT_E \caZ_{t}^{\star}  \caT_{E} =  \caZ_{t}^{\star}  $ \, with $\caT_E$ as defined in \eqref{eq: expected inversion invariance}.  }
                 \item{$ \caT_J \caZ_{t}^{\star}  \caT_{J} =  \caZ_{t}^{\star}  $ \, with $\caT_J$ as defined in \eqref{eq: expected self-adjointness}.  }
      \item{$ \norm \caZ_t^{\star}(S) \norm_{\infty} \leq \norm S \norm_\infty$. }
      \item{$  \caZ_t^{\star}(S) \geq 0$ for $S \geq 0$   }

        \item{ For $S \in \scrB_2(\scrH_\sys)$, the map $S \mapsto  \caZ_t^{\star}(S)$ is  continuous in $t$ in the Hilbert-Schmidt norm $\norm \cdot \norm_{2}$.}
    \een
\end{lemma}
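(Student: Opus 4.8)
The plan is to prove the convergence of the Lie--Schwinger series \eqref{eq: lie schwinger} by a Dyson-expansion argument, estimating each term using the analyticity propagation bound \eqref{eq: bound on matrix elements laplacian} for the particle and Wick's theorem \eqref{eq: gaussian property} for the reservoir, and then to read off properties 1--7 either termwise or from an equivalent closed formula.

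First I would establish norm convergence. The interaction Hamiltonian $H_{\sys\res}$ is linear in creation/annihilation operators, so after expanding the $n$-fold commutator $\adjoint(H_{\sys\res}(t_1))\cdots\adjoint(H_{\sys\res}(t_n))$ and applying $\initialres$, only terms with $n$ even survive, and each contracted term is a sum over pairings $\pi \in \caP_{n/2}$. Each pairing produces a product of reservoir two-point functions; in momentum space a contraction of the pair $(t_i,t_j)$ carries a factor
\[
\int \d q \, \str\phi(q)\str^2 \, \e^{\pm\i (t_i-t_j)\om(q)} \, (1+T_\beta(q))^{\#} \, \e^{\i q\cdot(X(t_i)-X(t_j))},
\]
which, after the rotational symmetry \eqref{eq: rotation symmetry form factor} and the definition \eqref{def: psi}, is exactly the reservoir space-time correlation function $\psi(x,t_i-t_j)$ promised in Section \ref{sec: qmextended}. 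At this stage Assumption \ref{ass: analytic form factor} gives $\int \str\hat\psi(\om)\str \d\om < \infty$, hence $\sup_{x}\str\psi(x,t)\str$ is a bounded, integrable function of $t$ (the dispersive estimate stated in the introduction), so each reservoir contraction is bounded uniformly in the particle positions. Combined with the fact that $\adjoint(X)$ and $\e^{\i s\adjoint(H_0)}$ are bounded on $\scrB_\infty(\scrH_\sys)$ — using \eqref{eq: combes thomas}/\eqref{eq: bound on matrix elements laplacian} with $\Im\nu = 0$ — each term of the series is bounded by $C^n \str\la\str^n \frac{t^n}{n!}$ times the number of pairings $(n-1)!!$ of $\caP_{n/2}$, which still gives an entire function of $\la t$. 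This proves the RHS is a norm-convergent family of operators on $\scrB_\infty(\scrH_\sys)$, uniformly for $t$ in compacts; continuity in $t$ in the Hilbert--Schmidt norm (property 7) follows from term-by-term continuity and dominated convergence, restricting $S$ to $\scrB_2$.

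The remaining properties follow more structurally. Properties 1 ($\caZ_t^\star(1)=1$) and 5 ($\norm\caZ_t^\star(S)\norm_\infty \le \norm S\norm_\infty$) and 6 (positivity preservation) are cleanest to obtain by first identifying $\caZ_t^\star$ with the genuine Heisenberg-picture conditional expectation in a GNS/Araki--Woods representation of the reservoir: there $H_\la$ is a bona fide self-adjoint operator (self-adjointness via \eqref{eq: condition for relative boundedness}, as noted after \eqref{def= Hlambda}), $S \mapsto \initialres[\e^{\i t H_\la}(S\otimes 1)\e^{-\i t H_\la}]$ is manifestly unital, completely positive and $\norm\cdot\norm_\infty$-contractive, and its Dyson expansion coincides with \eqref{eq: lie schwinger}. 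The three symmetries (properties 2, 3, 4) are inherited from the corresponding invariances of $H_\la$ and $\initialres$: translation invariance because $\caT_z$ commutes with $H_\sys$, $H_\res$ and $H_{\sys\res}$ and fixes $\initialres$ (hence each integrand in \eqref{eq: lie schwinger} is $\caT_z$-invariant); space inversion because $\ve(k)=\ve(-k)$ (Assumption \ref{ass: analytic dispersion}), $\phi(q)=\phi(-q)$ (Assumption \ref{ass: analytic form factor}) and $\om(q)=\om(-q)$; and the $\caT_J$-relation because $H_\la^* = H_\la$ so that $(\e^{-\i t H_\la}(\,\cdot\,)\e^{\i t H_\la})$ intertwines with the $*$-operation, and $\initialres$ is a state hence $*$-preserving. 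Each of these can alternatively be checked termwise in \eqref{eq: lie schwinger} by a change of variables.

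\textbf{Main obstacle.} The delicate point is the first step: making the reservoir contractions and their position-dependence rigorous despite $\initialres$ not being a density matrix, i.e.\ justifying that the formal expansion \eqref{def: z} really equals \eqref{eq: lie schwinger} in the thermodynamic limit, and bounding $\sup_x\str\psi(x,t)\str$ from Assumption \ref{ass: analytic form factor} together with the linear dispersion and $d\ge 4$ of Assumption \ref{ass: space}. Everything else is bookkeeping with the Dyson series, but controlling the reservoir correlation function uniformly in the (unbounded) particle position operators $X(t_i)$ — using that $\e^{\i q\cdot X}$ is unitary so only the scalar envelope $\psi(x,t)$ matters — is where the $d\ge 4$ hypothesis and the analyticity of $\hat\psi$ genuinely enter, and it is the technical heart of the lemma.
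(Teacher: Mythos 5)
Your proposal is correct and matches the route the paper itself indicates: the paper gives no detailed proof of this lemma, remarking only that it can be proven either by direct estimates on the Lie--Schwinger series (for which "the estimates given in the present paper amply suffice") or by realizing the dynamics in the Araki--Woods representation, and your sketch fleshes out exactly this combination (Wick contractions plus the pairing count $(n-1)!!/n!$ for convergence, the GNS picture for unitality, contractivity and positivity, and the invariances of $H_\la$ and $\initialres$ for the symmetries). The only nitpick is the phrase "$\adjoint(X)$ is bounded" — what you actually need, and correctly use later, is that $\e^{\i q\cdot X}\otimes W$ has norm at most $1$.
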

These  properties  of $\caZ_{t}^{\star}$ should not come as a surprise, they hold true trivially if one pretends that the initial reservoir state $\initialres$ is a density matrix and  $\caZ_{t}^{\star}$ is obtained by taking the partial trace over the reservoir space, as in \eqref{def: formal reduced evolution}. 
One can prove this lemma, under  much less restrictive conditions than the stated assumptions,  by estimates on the RHS. For this purpose, the estimates given in the present paper amply suffice. However, one can also define the system-reservoir dynamics as a dynamical system on a von Neumann algebra through the Araki-Woods representation and this is the usual approach in the mathematical physics literature, see e.g.\ \cite{bachfrohlichreturn, derezinskijaksicreturn, derezinski1,froehlichmerkli, jaksicpillet2}.

We also define $\caZ_{t}: \scrB_1(\scrH_\sys) \to \scrB_1(\scrH_\sys)$, the reduced dynamics in the Schr{\"o}dinger representation,  by  duality 
\beq \label{eq: duality}
\Tr  [ S  \caZ_{t}^{\star} (S') ]  =     \Tr  [ \caZ_{t}(  S ) S'   ]
\eeq
Physically, $ \caZ_{t}^{\star} $ is the reduced dynamics on observables of the system and $\caZ_t$ is the reduced dynamics on states.

\subsection{Translation invariance and the fiber decomposition} \label{sec: translation invariance and fiber decomposition}

In this section, we introduce concepts and notation that will prove useful in the analysis of the reduced evolution $\caZ_t$. These concepts will be used in Section \ref{sec: asymptotic density matrix}. However, Section \ref{sec: diffusion, decoherence and equipartition}, which contains the main results, can be understood without the concepts introduced in the present section. 

Consider the space of Hilbert-Schmidt operators  \beq \scrB_2(\scrH_\sys)\sim \scrB_2(l^2(\lat) \otimes \scrS)  \sim L^2(\bbT^d \times \bbT^d, \scrB_2(\scrS), \d k_{\links}\d k_{\rechts}) \eeq and define 
\beq
    \hat S     (k_{\links},k_{\rechts}) :=\sum_{x_{\links},x_{\rechts} \in \lat}    S(x_{\links},x_{\rechts})  \e^{- \i (x_{\links} k_{\links}-x_{\rechts} k_{\rechts}) }      , \qquad  S \in    \scrB_2(l^2(\lat) \otimes \scrS) .
%     S=(S_{k_{\links},k_{\rechts}})_{ k_{\links},k_{\rechts} \in \bbT^d}  \quad    \norm S \norm^2_2= \int_{\tor} \d k_{\links} \int_{\tor}  \d k_{\rechts} \str S_{k_{\links},k_{\rechts}} \str^2
\eeq
Note the asymmetric normalization of the Fourier transform, which serves to eliminate factors of $2 \pi$ in the bulk of the paper.
In what follows, we will write $S$ for $\hat S$ to keep the notation simple, since the arguments $x \leftrightarrow k$ will indicate whether we are dealing with $S$ or $\hat S$.
To deal conveniently with the translation invariance of our model, we  change  variables, see also Figure \ref{fig: Brioullin}. 
\beq
k= \frac{k_{\links}+k_{\rechts}}{2}, \qquad  p=k_{\links}-k_{\rechts}, \qquad  k, p \in \tor
\eeq
and,   for a.e.\ $p \in \tor$, we obtain a  function $ S_p \in L^2(\bbT^d, \scrB_2(\scrS))$ by putting
\beq \label{def: Sfiber}
(S_p)(k) := S (k+\frac{p}{2},k-\frac{p}{2}).
\eeq
This follows from the fact that the Hilbert space  $\scrB_2(\scrH_\sys) \sim L^2(\bbT^d \times \bbT^d,   \scrB_2(\scrS),   \d k_{\links}\d k_{\rechts})$ can be represented as a direct integral
\beq \label{def: fiber decomposition}
\scrB_2(\scrH_\sys) = \mathop{\int}\limits_{\bbT^d}^{\oplus} \d p \,    \scrG^p , \qquad     S =  \mathop{\int}\limits_{\bbT^d}^{\oplus} \d p \, S_p,
%\scrB_2(\scrH_\sys) = \mathop{\int}\limits_{\bbT^d}^{\oplus}    \caH^p \caF_p(\scrB_2(\scrH_\sys)) , \qquad     S =  \mathop{\int}\limits_{\bbT^d}^{\oplus}  \caF_p(S)
\eeq
where each `fiber space' $\scrG^p$ is naturally identified with $ \scrG \equiv L^2(\bbT^d, \scrB_2(\scrS))$.
Elements of $\scrG$ will often be denoted by $\xi,\xi'$  and the scalar product is 
\beq
\langle \xi, \xi' \rangle_{\scrG} :=   \mathop{\int}\limits_{\tor} \d k \,   \Tr_{\scrS} [\xi^*(k)  \xi'(k)]
\eeq
with $\Tr_{\scrS}$ the trace over the space of internal degrees of freedom $\scrS$.
  
\begin{figure}[h!] 
\vspace{0.5cm}
  \centering
\psfrag{kl}{ $k_{\links}$}
\psfrag{kr}{ $k_{\rechts}$}
\psfrag{k}{ $k$}
\psfrag{p}{$p$}
\psfrag{kp}{$k=\pi$}
\psfrag{km}{$k=-\pi$}
\psfrag{u}{$p=\pi$}
\psfrag{pm}{$p=-\pi$}
 \includegraphics[width = 12cm, height=12cm]{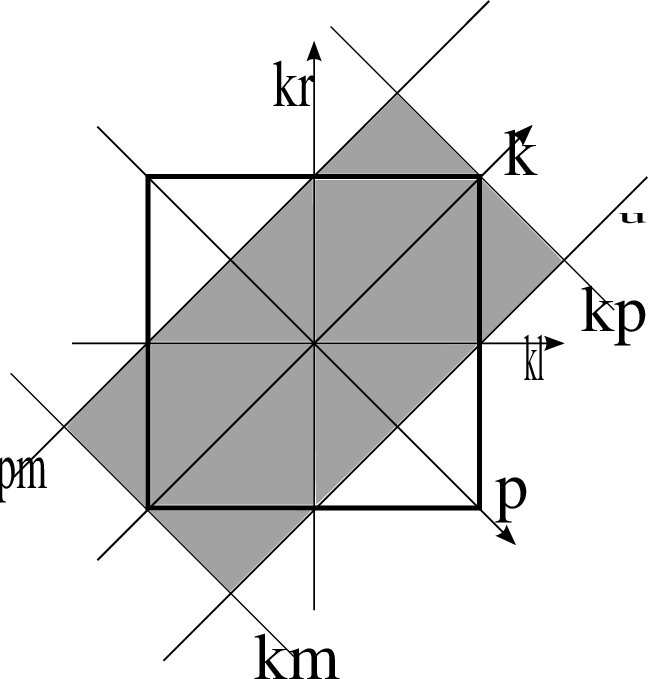}    
\caption{ \footnotesize{The thick black square $[-\pi, \pi] \times [-\pi, \pi] $ is the momentum space $\tor \times \tor$ (drawn here for $d=1$), with $k_{\links}, k_{\rechts} \in \tor$. After changing variables to $(k,p) \in \tor \times \tor$, the momentum space is transformed into the gray rectangle.  On sees that the four triangles which lie inside the square but ouside the rectangle, are identified with the four triangles inside the rectangle but outside the square.}  }
\label{fig: Brioullin}
\end{figure}

 Let $\caT_z, z \in \lat$ be the lattice translation defined in \eqref{eq: expected translation invariance}. In momentum space,
\beq 
(\caT_z S)_p  =  \e^{-\i pz}S_p, \qquad  S \in \scrB_2(\scrH_\sys).
\eeq
Since $H_\la$ and $\initialres$ are translation invariant, it follows that
\beq \label{translation inv of Z}
   \caT_{-z} \caZ_{t} \caT_z =  \caZ_{t}.
\eeq

Let $\caW \in \scrB(\scrB_2(\scrH_\sys))$ be translation invariant in the sense  that $  \caT_{-z} \caW \caT_z=\caW$ (cf.\   \eqref{translation inv of Z}).
Then it follows that, in the representation  \eqref{def: fiber decomposition}, $\caW$ acts diagonally in $p$, i.e.\ 
  $(\caW S)_p$ depends only on $ S_p $ and we define 
${\caW }_p$ by 
\beq
({\caW S})_p  = {\caW} _p   S_p, \qquad  S_p \in \scrG, \caW_p \in \scrB(\scrG)
\eeq
%and we will also write $\caW =  \mathop{\int}\limits_{\bbT^d}^{\oplus}  \widehat{\caW }_p $.
For the sake of clarity, we give  an explicit expression for ${\caW }_p$. 
Define the kernel $\caW_{x^{}_{\links},x^{}_{\rechts};x'_{\links},x'_{\rechts}}$ by 
\beq
  (\caW S)(x'_{\links},x'_{\rechts})=\sum_{x^{}_{\links},x^{}_{\rechts} \in \lat}  \caW_{x^{}_{\links},x^{}_{\rechts};x'_{\links},x'_{\rechts}}   S(x^{}_{\links},x^{}_{\rechts}), \qquad  x'_{\links},x'_{\rechts}  \in \lat.
\eeq
Translation invariance is expressed by 
\beq
\caW_{x^{}_{\links},x^{}_{\rechts};x'_{\links},x'_{\rechts}} = \caW_{x^{}_{\links}+z,x^{}_{\rechts}+z;x'_{\links}+z,x'_{\rechts}+z},   \qquad  z \in \lat,
\eeq
and, as an integral kernel,  $ {\caW}_p \in \scrB(L^2(\tor, \scrB_2(\scrS)))$  is given  by
\beq  \label{eq: position representation of caWp}
 {\caW}_p (k',k)=       \mathop{  \sum}\limits_{\left.\begin{array}{c}x^{}_{\rechts}, x'_{\links},x'_{\rechts} \in \lat  \\  x^{}_{\links}=0  \end{array}\right.} \e^{\i k(x^{}_{\links}-x^{}_{\rechts} ) -\i k' (x'_{\links}-x'_{\rechts}) }    \e^{-\i  \frac{p}{2}((x'_{\links}+x'_{\rechts})- (x^{}_{\links}+x^{}_{\rechts}))}   \caW_{x^{}_{\links},x^{}_{\rechts};x'_{\links},x'_{\rechts}} .
\eeq
To avoid confusion with other subscripts we will often write  
\beq
\left\{ S \right\}_p \,\,  \textrm{instead of}  \,\,   S_p \qquad  \textrm{and} \qquad  \left\{ \caW \right\}_p    \,\,   \textrm{instead of}  \,\,  \caW_p
\eeq

We also introduce the following transformations.  For $\nu \in \bbT^d$, let $U_{\nu}$ be the unitary operator acting on the fiber spaces $\scrG$ as 
\beq
(U_{\nu} \xi)(k)=\xi(k+\nu), \qquad   \xi \in \scrG   \label{def: Unu}
\eeq

Next, let  $\ka=(\ka^{}_{\links}, \ka^{}_{\rechts}) \in \bbC^d \times \bbC^d$ and define the operators $\caJ_\ka$ by
\beq  \label{def: caJ}
(\caJ_\ka S)(x^{}_{\links},x^{}_{\rechts}) : = \e^{\i \frac{1}{2} \ka^{}_{\links} \cdot x^{}_{\links}}  S(x^{}_{\links},x^{}_{\rechts})     \e^{-\i  \frac{1}{2}  \ka^{}_{\rechts} \cdot x^{}_{\rechts} } 
\eeq
Note that $\caJ_{\ka}$ is unbounded if $\ka \notin \bbR^d\times \bbR^d$.

The relation between  the operators $\caJ_{\ka}$ and the fiber decomposition is given by the relation
\baq \label{eq: relation kappa and fibers}
\left\{\caJ_{\ka}  \caW \caJ_{-\ka} \right\}_p &=&     U_{ -\frac{\ka^{}_{\links} +\ka^{}_{\rechts}}{4}} \{\caW\}_{p - \frac{\ka^{}_{\links} -\ka^{}_{\rechts}}{2} }     U_{ \frac{\ka^{}_{\links} +\ka^{}_{\rechts}}{4}},
\eaq
as follows from  \eqref{eq: position representation of caWp} and the definition   \eqref{def: caJ}.
 From \eqref{eq: position representation of caWp} and \eqref{eq: relation kappa and fibers}, we check that
\baq
p \qquad &\textrm{is conjugate to} & \qquad  \frac{1}{2}\left((x'_{\links}+x'_{\rechts})- (x_{\links}+x_{\rechts}) \right)  \label{eq: conjugate1}  \\
\nu \qquad &\textrm{is conjugate to} & \qquad (x_{\links}-x_{\rechts})- (x'_{\links}-x'_{\rechts})    \label{eq: conjugate2}. 
\eaq
We state an important lemma on the fiber decomposition.
\begin{lemma} \label{lem: fiber decomposition}
Let $S \in \caB_1(L^2(\tor, \scrS))$. Then,  $S_p$ is well-defined, for every $p$, as a function in $L^1(\tor,\scrB_2(\scrS))$ and 
\beq \label{eq: first statement of fibered lemma}
\Tr \caJ_{\ka} S= \sum_{x \in \lat} \e^{-\i p x} S(x,x) = \langle 1, S_p \rangle_{\scrG}, \qquad \textrm{with}\,p= - \frac{\ka^{}_{\links} -\ka^{}_{\rechts}}{2} \, \, \textrm{and}\, \ka = (\ka^{}_{\links}, \ka^{}_{\rechts})
\eeq
where $1$ stands for the constant function on $\tor$ with value $1 \in \scrB(\scrS)$.
If, moreover, $\caJ_\ka S $ is a Hilbert-Schmidt operator for $\str \Im\ka^{}_{\links,\rechts} \str \leq \delta' $, then the function 
\beq
  \tor \mapsto \scrG: \qquad  p \mapsto S_p,
\eeq
as defined in \eqref{def: Sfiber}, is well-defined for all $p \in \tor$ and has a bounded-analytic extension to the strip $\str \Im p  \str < \delta' $. 
\end{lemma}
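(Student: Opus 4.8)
The plan is to verify the two assertions of Lemma \ref{lem: fiber decomposition} by reducing everything to the asymmetric Fourier transform convention fixed above and then applying a standard Paley--Wiener argument for the analyticity part. First I would establish the trace identity \eqref{eq: first statement of fibered lemma}. For $S \in \caB_1(L^2(\tor,\scrS))$ the operator $\caJ_\ka S$ is again trace-class when $\ka \in \bbR^d \times \bbR^d$ (it is $S$ composed with a bounded multiplication operator on each side), and its trace can be computed in the position representation: $\Tr \caJ_\ka S = \sum_{x,x' \in \lat} \Tr_{\scrS}\big[ (\caJ_\ka S)(x,x') \cdot \mathbbm{1}[x=x'] \big]$. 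Here I must be careful about what "diagonal" means; the cleanest route is to use the momentum representation, where $\Tr \caJ_\ka S = \int_{\tor} \d k\, \Tr_{\scrS}[ (\caJ_\ka S)(k,k)]$, then pass the phases $\e^{\i \ka^{}_{\links}\cdot x^{}_{\links}/2}$, $\e^{-\i\ka^{}_{\rechts}\cdot x^{}_{\rechts}/2}$ through the Fourier transform to see that $(\caJ_\ka S)(k,k) = S(k + \ka^{}_{\links}/2, k - \ka^{}_{\rechts}/2)$ evaluated appropriately; combined with the change of variables $k,p$ and the definition \eqref{def: Sfiber} this yields $\Tr \caJ_\ka S = \int_{\tor}\d k\, \Tr_{\scrS}[ (S_p)(k)] = \langle 1, S_p\rangle_{\scrG}$ with $p = -(\ka^{}_{\links} - \ka^{}_{\rechts})/2$, which is exactly the claim. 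The middle equality $\sum_x \e^{-\i p x} S(x,x)$ is then just the $x$-space form of the same computation, using the asymmetric normalization of the Fourier transform so that no $2\pi$ appears.

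Next I would address well-definedness of $p \mapsto S_p$ for every $p$ (not just a.e.\ $p$) and its bounded-analytic extension. The point is that for generic $S \in \scrB_2(\scrH_\sys)$ the restriction $(S_p)(k) = S(k+p/2, k-p/2)$ is only defined for a.e.\ $(k,p)$, so fixing $p$ is not meaningful. The extra hypothesis $\caJ_\ka S \in \scrB_2(\scrH_\sys)$ for $\str \Im \ka^{}_{\links,\rechts}\str \leq \delta'$ cures this: using \eqref{eq: relation kappa and fibers}, one has $\{\caJ_\ka S\}_p = U_{-(\ka^{}_{\links}+\ka^{}_{\rechts})/4}\, \{S\}_{p - (\ka^{}_{\links}-\ka^{}_{\rechts})/2}\, U_{(\ka^{}_{\links}+\ka^{}_{\rechts})/4}$ (interpreting the $\caW$'s as rank-considerations do not matter since $S$ itself is already a fixed operator, not an operator on operators --- here I would instead apply the analogous fiber identity directly to the vector $S$, which is the $\caW$-free specialization of \eqref{eq: relation kappa and fibers}). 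Choosing $\ka^{}_{\links} = \ka^{}_{\rechts} = 2\nu$ with $\nu$ real shows the fiber $S_p$ is unitarily conjugate (by $U_\nu$) to the fiber of a Hilbert-Schmidt operator, hence well-defined in $\scrG$ for every $p \in \tor$; choosing $\ka^{}_{\links} = -\ka^{}_{\rechts}$ shifts $p$, so that the whole family $\{S_p\}$ is obtained from the single Hilbert-Schmidt vector $\caJ_\ka S$ by real translations and is therefore continuous in $p$. For analyticity I would take $\ka^{}_{\links} = -\ka^{}_{\rechts} = \eta \in \i\bbR^d$ with $\str\eta\str \leq \delta'$: then $S_{p_0 + \eta}$ equals (up to the unitaries $U$, which are analytic in the shift parameter) the $p_0$-fiber of $\caJ_{(\eta,-\eta)}S$, which by hypothesis is a genuine element of $\scrB_2(\scrH_\sys)$, hence has a well-defined $\scrG$-valued fiber depending analytically on the complex parameter $\eta$ by Morera/weak analyticity — differentiating under the integral in \eqref{def: Sfiber} is justified because the Hilbert-Schmidt bound is uniform for $\str\Im\ka\str \leq \delta'$. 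Standard vector-valued analyticity (weak analyticity plus local boundedness implies strong analyticity, by a Morera argument applied to $\langle \xi, S_{\,\cdot}\,\rangle_{\scrG}$ for each $\xi \in \scrG$) then gives the bounded-analytic extension to $\str \Im p\str < \delta'$.

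The main obstacle, and the step I would spend the most care on, is the first one: making the manipulation "$(\caJ_\ka S)(k,k) = $ (shifted evaluation of $S$)" rigorous when $S$ is merely trace-class and the fibers are a.e.-defined objects. The safe way is to prove \eqref{eq: first statement of fibered lemma} first for a dense class — say $S$ a finite-rank operator with smooth integral kernel — where every expression is a convergent sum of smooth functions and the phase-shift identities are elementary, and then extend to all of $\caB_1$ by continuity: both sides of $\Tr \caJ_\ka S = \langle 1, S_p\rangle_{\scrG}$ are continuous in the trace norm of $S$ (the left side because $\caJ_\ka$ is bounded for real $\ka$ and $\Tr$ is $\scrB_1$-continuous; the right side because, on the dense class, $\str \langle 1, S_p\rangle_{\scrG}\str \leq \str\tor\str^{1/2}\, \sup_p \norm S_p \norm_{\scrG} \leq C \norm S\norm_{\scrB_1}$, which is exactly the first sentence of the lemma's conclusion and itself needs the same density argument). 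Everything else is bookkeeping with the conventions \eqref{def: caJ}, \eqref{def: Sfiber}, \eqref{def: Unu} and \eqref{eq: relation kappa and fibers}.
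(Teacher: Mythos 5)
Your overall strategy coincides with the paper's (very terse) proof: the trace identity is obtained by reducing to a nice dense class — the paper does this via the singular-value decomposition of $S$, you via finite-rank smooth kernels plus trace-norm continuity — and the second statement is a Paley--Wiener argument. The first half is sound: the estimate $\sum_{x}\norm S(x,x)\norm_{\scrB_1(\scrS)}\leq \norm S \norm_1$, equivalently $\norm S_p\norm_{L^1(\tor,\scrB_2(\scrS))}\leq \norm S\norm_1$ uniformly in $p$ (Cauchy--Schwarz applied to the singular-value decomposition), is exactly what makes both sides of \eqref{eq: first statement of fibered lemma} continuous in $\norm\cdot\norm_1$, and this is the content the paper extracts from the SVD.

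The second half contains one step that fails as written. You claim that $S_p$ is well-defined for \emph{every} $p$ because, after conjugating with $U_\nu$ (real $\nu$) or using the vector analogue of \eqref{eq: relation kappa and fibers} with real $\ka$, it is exhibited as ``the fiber of a Hilbert-Schmidt operator''. But fibers of a Hilbert-Schmidt operator are defined only for almost every $p$ — this is precisely the ambiguity the paper points out after the lemma (one may alter the kernel on the measure-zero line $k_{\links}-k_{\rechts}=p$ without changing the operator) — and $S$ is already Hilbert-Schmidt to begin with, so real conjugations gain nothing; likewise, relabelling fibers by a real translation of $p$ does not yield continuity in $p$. The hypothesis that must actually be used is that $\caJ_\ka S\in\scrB_2$ for \emph{complex} $\ka$ with $\str\Im\ka^{}_{\links,\rechts}\str\leq\delta'$: this is exponential $\ell^2$-decay of the kernel $S(x^{}_{\links},x^{}_{\rechts})$ at rate $\delta'$, hence by Cauchy--Schwarz exponential $\ell^1$-decay at any rate below $\delta'/2$, so the Fourier series defining $\hat S(k^{}_{\links},k^{}_{\rechts})$ converges absolutely and uniformly on a complex multistrip. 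This produces a genuinely pointwise-defined, bounded analytic kernel, which can then be restricted to the line $(k+p/2,\,k-p/2)$ for each fixed complex $p$ with $\str\Im p\str<\delta'$, giving both the everywhere-definedness and the bounded-analytic extension of $p\mapsto S_p$ at once; your Morera step then becomes superfluous. With this repair the argument is complete and agrees with the paper's intended proof.
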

The first statement of the lemma follows from the singular-value decomposition for trace-class operators. In fact, the correct statement asserts that one can \emph{choose} $S_p$ such that \eqref{eq: first statement of fibered lemma} holds. Indeed, one can change the value of the kernel  $S(k^{}_{\links},k^{}_{\rechts})$ on the line $k^{}_{\links}-k^{}_{\rechts}=p$ without changing the operator $S$, and hence $S_p$ in \eqref{eq: first statement of fibered lemma} can not be defined via \eqref{def: Sfiber}  in general, if the only condition on $S$ is $S \in \scrB_1$.

 The second statement of Lemma \ref{lem: fiber decomposition} is the well-known relation between exponential decay of functions and analyticity of their Fourier transforms.   Since we will always demand the initial density matrix $\rho_0$ to be such that $\norm \caJ_\ka \rho_0\norm_2 $ is finite for $\ka$ in a complex domain, we will mainly need the second statement of Lemma \ref{lem: fiber decomposition}.

By employing  Lemma \ref{lem: fiber decomposition}  and the properties of $\caZ^*_t$ listed in Lemma \ref{lem: definition dynamics},  it is easy to show that the function
\beq
k \mapsto \left\{ \caZ_t \rho_0 \right\}_0 (k)  \in \scrB(\scrS) \eeq
takes  values in the positive matrices on $\scrS$ and is normalized, i.e., 
\beq \label{eq: normalization zero fiber}
\int \d k  \Tr_{\scrS} [ \left\{ \caZ_t \rho_0 \right\}_0 (k)]  = \langle 1, \left\{ \caZ_t \rho_0 \right\}_0  \rangle_{\scrG}=1
\eeq
Further, the space-inversion symmetry  and self-adjointness of the density matrix (the third and fourth property in Lemma \ref{lem: definition dynamics} ) imply that 
\beq\label{eq: space inversion zero fiber}
 E \left\{ \caZ_t \right\}_{p} E  =   \left\{ \caZ_t \right\}_{-p}, \qquad \textrm{where}\, \, (E \xi)(k) :=  \xi(-k), \quad \textrm{for} \,\, \xi \in \scrG.
\eeq
\beq\label{eq: hermiticity  zero fiber}
 J \left\{ \caZ_t \right\}_{p} J  =   \left\{ \caZ_t \right\}_{-p}, \qquad \textrm{where}\, \, (J \xi)(k) :=  (\xi(k))^*, \quad \textrm{for} \,\, \xi \in \scrG.
\eeq
where the $^*$ on $\xi(k)$ is the Hermitian conjugation on $\scrB(\scrS)$. 

\section{Results} \label{sec: result}

In this section, we describe our main results. In Section \ref{sec: diffusion, decoherence and equipartition}, we state the results in a direct way, emphasizing  the physical phenomena. 
In Section \ref{sec: asymptotic density matrix}, we describe  more general statements that imply all the results stated in Section \ref{sec: diffusion, decoherence and equipartition}.

\subsection{Diffusion, decoherence and equipartition} \label{sec: diffusion, decoherence and equipartition}

We choose the initial state of the particle to be a density matrix $\rho \in \scrB_1(\scrH_\sys)$  satisfying 
\beq \label{eq: conditions initial state}
 \rho > 0, \qquad   \Tr [\rho]=1 \qquad    \norm \caJ_{\ka} \rho \norm_2 < \infty,
\eeq
for $\ka  $ in some neighborhood of $0 \in \bbC^d \times \bbC^d$. The condition $ \norm \caJ_{\ka} \rho \norm_2 < \infty$ reflects the fact that, at time $t=0$, the particle is exponentially localized near the origin.

Our results describe  the time-evolved density matrix $\rho_t:= \caZ_t \rho$.  Note that $\rho_t$  depends on $\la$, too.
First, we state that the particle exhibits diffusive motion. 

 Define  the probability density  $\mu_t\equiv \mu_t^\la$, depending on the initial state $\rho \in \scrB_1(\scrH_\sys)$, by
 \beq \label{def: density}
 \mu_t(x) :=     \Tr_{\scrS} \left[    \rho_t (x,x) \right].
 \eeq
 It  is easy to see that
 \beq
 \mu_t(x) \geq 0, \qquad    \sum_{x \in \lat} \mu_t(x) =  \Tr [\rho_t]= 1 .\eeq
 
 The following theorem states that the family of probability densities $\mu_t(\cdot)$ converges in distribution and in the sense of moments to a Gaussian, after rescaling space as $x \rightarrow \frac{x}{\sqrt{t}}$.

\bet [Diffusion]\label{thm: diffusion}
Assume Assumptions \ref{ass: analytic dispersion}, \ref{ass: space}, \ref{ass: analytic form factor} and \ref{ass: fermi golden rule}.
 Let the initial state $\rho$ satisfy  condition \eqref{eq: conditions initial state} and let $\mu_t$ be as defined in \eqref{def: density}. 
 
 There is a  positive constant $\la_0$ such that, for $0 <\str\la\str \leq \la_0$,
 \beq \label{eq: diffusion}
      \sum_{x \in \lat}   \mu_t(x)  \e^{- 
     \frac{ \i }{\sqrt{t}} q \cdot x}  \qquad   \mathop{\longrightarrow}\limits_{t \nearrow \infty} \qquad     \e^{- \frac{1}{2}q \cdot D_\la q}
 \eeq
 with the \emph{diffusion matrix} $D_\la$  given by
\beq
D_\la = \la^2 (D_{rw}+o(\la))
\eeq
where $D_{rw} $ is  the diffusion matrix of the Markovian approximation to our model, to be defined in Section \ref{sec: lindblad}. Both $D_\la$ and $D_{rw}$ are strictly positive matrices (i.e., all eigenvalues are strictly positive) with real entries.

The convergence of $\mu_t(\cdot)$ to a Gaussian also holds in the sense of moments:
For any natural number $\ell \in \bbN$, we have 
 \beq \label{eq: convergence of moments}
   (\nabla_q)^{\ell} \left(  \sum_{x \in \lat}   \mu_t(x)  \e^{- 
     \frac{ \i }{\sqrt{t}} q \cdot x} \right) \Bigg\str_{q=0}  \qquad     \mathop{\longrightarrow}\limits_{t \nearrow \infty} \qquad      (\nabla_q)^{\ell} \left( \e^{- \frac{1}{2}q \cdot D_\la q}\right) \Big\str_{q=0},
 \eeq
In particular, for $\ell=2$, this means that 
 \beq \label{eq: convergence of moments: second moment}
  \frac{1}{t} \sum_{x \in \lat}  \,   x_i x_j  \,  \mu_t(x)    \qquad     \mathop{\longrightarrow}\limits_{t \nearrow \infty} \qquad (D_\la)_{i,j}
 \eeq

\eet

Our next result describes the asymptotic 'state' of the particle. Not all observables reach a stationary value as $t \nearrow \infty$. For example, as stated in Theorem \ref{thm: diffusion}, the position diffuses.  The asymptotic state applies to the internal degrees of freedom of the particle and to functions of its momentum. Hence, we look at observables of the form 
\beq \label{eq: class of observables}
F(P) \otimes A, \qquad   F=\overline{F} \in L^{\infty}(\tor), \,\, A=A^* \in \scrB(\scrS).
\eeq
with $P=P\otimes1$ the lattice momentum operator defined in Section \ref{sec: particle}.
Such observables can be represented as elements of the Hilbert space $L^2(\tor) \otimes \scrB_2(\scrS) \sim L^2(\tor, \scrB_2(\scrS))=\scrG $ (recall that $\scrS$ is finite-dimensional) by the obvious mapping
\beq
F(P) \otimes A  \mapsto F \otimes A 
\eeq
since $L^{\infty}(\tor) \subset L^{2}(\tor) $.
 Consequently, the asymptotic state is not described by a density matrix on $\scrH_\sys$, but by a functional on the Hilbert space $\scrG$.  This functional is called  $\xi^{eq}\equiv \xi^{eq}_{\la}$ ('eq' for equilibrium) and we identify it with an element of $\scrG$. The  asymptotic expectation value of $F \otimes A$ is given by
 \beq \label{eq: mapping observables inside fibers}
 \langle F \otimes A, \xi^{eq} \rangle_{\scrG} =  \int_{\tor} \d k  \,  F(k)  \Tr_{\scrS}\left[ \xi^{eq}(k) A \right]  
 \eeq

We also state a result on decoherence: Equation \eqref{eq: decoherence} expresses that the  off-diagonal elements of  $\rho_t$ in position representation are exponentially damped in the distance from the diagonal. Note that this is not in contradiction with Theorem \ref{thm: diffusion} as the latter speaks about diagonal elements of $\rho_t$.

\bet[Equipartition and decoherence] \label{thm: equipartition}
Assume Assumptions \ref{ass: analytic dispersion}, \ref{ass: space}, \ref{ass: analytic form factor} and \ref{ass: fermi golden rule}.
Let the same conditions on the coupling constant $\la$ and the initial  state $\rho$ be satisfied as in Theorem \ref{thm: diffusion}. 
Let $A,F$ be as defined above. Then
 \beq \label{eq: equiparition}
       \Tr[\rho_t (   F(P ) \otimes A  )] =  \langle F \otimes A, \xi^{eq} \rangle_{\scrG}   + O( \e^{-g \la^2 t }) , \qquad  t  \nearrow \infty
 \eeq
  for some decay rate $g>0$.  The function $\xi^{eq} \equiv  \xi^{eq} _{\la}  \in \scrG$ is given by
 \beq
 \xi^{eq}(k) =   \frac{1}{Z(\be)}  \e^{-\be Y} + o(\str \la \str^0), \qquad \textrm{for all} \, k \in \tor, \qquad     \la \searrow 0 
 \eeq
 with the normalization constant 
$
 Z(\be) :=  (2\pi)^d \Tr (\e^{-\be Y})
$. 

  Further, there is a decoherence length $(\ga_{dch})^{-1} >0$ such that 
 \beq \label{eq: decoherence}
      \norm  \rho_t (x,y) \norm_{\scrB(\scrS)}  \leq C \e^{-\ga_{dch}  \str x-y\str }      + O(\e^{-g \la^2 t }) , \qquad    t  \nearrow \infty
 \eeq

\eet 
 In particular, Theorem \ref{thm: equipartition} implies that the inverse decoherence length $\ga_{dch} $ remains strictly positive  as $\la \searrow 0$.
Theorems \ref{thm: diffusion} and \ref{thm: equipartition} are derived from more general statements in the next section.

\subsection{Asymptotic form of the reduced evolution} \label{sec: asymptotic density matrix}

In the following theorem, we present a more general statement about the asymptotic form of the reduced evolution $\caZ_t$. The two previous results, Theorems \ref{thm: diffusion} and \ref{thm: equipartition}, are in fact immediate consequences of this more general statement. 

As argued in Section \ref{sec: translation invariance and fiber decomposition}, the operator $\caZ_t$ is translation invariant and hence it can be decomposed along the fibers,
\beq 
\caZ_t= \mathop{\int}_{\tor}^{\oplus} \d p   \left\{   \caZ_t \right\}_p, \qquad     \left\{   \caZ_t \right\}_p \in \scrB(\scrG)
 \eeq
 
 The next result, Theorem \ref{thm: main}, lists some long-time properties of the operators $\left\{\caZ_t \right\}$ and $
 U_{\nu}  \left\{   \caZ_t \right\}_p   U_{-\nu}
$
with $U_{\nu}$ as defined in  \eqref{def: Unu}. To fix the domains of the parameters $p$ and $\nu$, we define
\baq
 \frD^{\footnotesize{low}} & :=&     \left\{ p \in \tor+\i \tor, \nu   \in \tor+\i \tor \Big\str \, \str  \Re p \str < p^*,\,  \str \Im p \str  <\de, \,    \str \Im \nu \str < \de     \right\}   \label{def: domain low} \\ [2mm]
 \frD^{\footnotesize{high}} &:=&   \left\{ p \in \tor+\i \tor, \nu   \in \tor+\i \tor \Big\str \, \str  \Re p \str > p^*/2,\,  \str \Im p \str <
  \de, \,    \str \Im \nu \str < \de     \right\}    \label{def: domain high} 
\eaq
depending on some positive constants $p^*, \de >0$.
  
  \bet[Asymptotic form of reduced evolution] \label{thm: main}
Assume Assumptions \ref{ass: analytic dispersion}, \ref{ass: space}, \ref{ass: analytic form factor} and \ref{ass: fermi golden rule}, and
let the same conditions on the coupling constant $\la$ and the initial state  $\rho$ be satisfied as in Theorem \ref{thm: diffusion}. Then there are  positive constants $p^*>0$ and $\de>0$, determining the sets $\frD^{low},\frD^{high}$ above, such that the following properties hold:

\ben
\item{ For small fibers $p$, i.e., such that  $(p,0) \in  \frD^{\footnotesize{low}}$, there are 
 rank-1 operators  $P(p,\la)$,  bounded operators $R^{low}(t,p,\la)$  and numbers $f(p,\la)$, analytic in $p$ on $ \frD^{\footnotesize{low}}$  and satisfying 
\baq
\mathop{\sup}\limits_{(p,\nu) \in  \frD^{\footnotesize{low}}}  \norm  U_{\nu}P(p,\la) U_{-\nu} \norm &<&C   \label{eq: bound P}\\[3mm]
\mathop{\sup}\limits_{  (p,\nu) \in \frD^{\footnotesize{low}} } \sup_{t \geq 0}  \norm U_{\nu} R^{low}(t,p,\la) U_{-\nu} \norm & <&C    \label{eq: bound Rlow}
\eaq
such that
\baq
 & \left\{   \caZ_t \right\}_p    = \e^{ f(p,\la)t} P(p,\la)      + R^{\footnotesize{low}}(t,p,\la) \e^{-(\la^2 g^{\footnotesize{low}}) t }&   \label{eq: asymptotic expression caZ} 
%  \\[2mm]
%&\mathop{\sup}\limits_{(p,0) \in \frD^{\footnotesize{low}}}  \Re f(p,\la)  >   - \la^2 g^{\footnotesize{low}}&   \label{eq: f smaller than gap}
\eaq
for a positive rate $g^{\footnotesize{low}} >0$.

}
\item{ For large fibers $p$, i.e., such that  $(p,0) \in  \frD^{\footnotesize{high}}$, there are   bounded operators $ R^{\footnotesize{high}}(t,p,\la)$, analytic in $p$ on $ \frD^{\footnotesize{high}}$  and satisfying
\beq  \label{eq: bound Rhigh}
\mathop{\sup}\limits_{(p, \nu) \in \frD^{\footnotesize{high}}  } \sup_{t \geq 0}  \norm U_{\nu} R^{\footnotesize{high}}(t,p,\la) U_{-\nu} \norm =O(1), \qquad  \la \searrow 0 
\eeq
and
\beq
 \left\{   \caZ_t \right\}_p  =  R^{\footnotesize{high}}(t,p,\la) \e^{-(\la^2 g^{\footnotesize{high}}) t }, \qquad  t \nearrow \infty   
\eeq
for some positive rate $g^{high}>0$.

}
\item{The function $f(p,\la)$ and rank-1 operator $P(p,\la)$ satisfy
\baq
\mathop{\sup}\limits_{(p, 0) \in \frD^{\footnotesize{low}}} \left\str f(p,\la)- \la^2 f_{rw}(p) \right\str &=& o(\str\la\str^2) \label{eq: f and projector close to random walk1} \\[2mm] 
\mathop{\sup}\limits_{(p, \nu) \in \frD^{\footnotesize{low}}} 
 \left\norm  U_{\nu}P(p,\la) U_{-\nu}- U_{\nu} P_{rw}(p)U_{-\nu} \right\norm &=&o(\str\la\str^0), \qquad \la \searrow 0 \label{eq: f and projector close to random walk2} \eaq
where the function $ f_{rw}(p)$ and the projection operator $P_{rw}(p)$ are defined in Section \ref{sec: lindblad}. 
%Note that the functions $f(p,\la)$ and $f_{rw}(p)$ are independent of the parameter $\nu$, which has been replaced by $0$ in \eqref{eq: f and projector close to random walk1}. 
}
\een
\eet
The main conclusion of this theorem is presented in Figure \ref{fig: spectrum of the real model}. Let $\caR(z)$ be the Laplace transform of the reduced evolution $\caZ_t$ and $  \left\{\caR(z) \right\}_p$ its fiber decomposition, i.e.,
\beq
\caR(z) := \mathop{\int}\limits_{\bbR^+} \d t \,  \e^{-t z}  \caZ_t \qquad  \textrm{and} \qquad     \caR(z)= \mathop{\int}_{\tor}^{\oplus} \d p\,    \left\{\caR(z) \right\}_p.
\eeq
The figure shows  the singular points, $z=f(p,\la)$, of $ \left\{\caR(z) \right\}_p$. Those singular points  determine the large time asymptotics. If we had not integrated out the reservoirs, i.e., if $\caZ_t$ were the unitary dynamics, then one could identify $f(p,\la)$ with resonances of the generator of $\caZ_t$. 
\begin{figure}[h!] 
\vspace{0.5cm}
  \centering
\psfrag{gaplow}{ $g^{low}$}
\psfrag{gaphigh}{ $g^{high}$}
\psfrag{functionf}{ $f(p,\la)$}
\psfrag{spectrumM}{$\Re z$}
\psfrag{momentump}{fiber $p$}
\psfrag{pcritical}{ $p^*$}
 \includegraphics[width = 12cm, height=8cm]{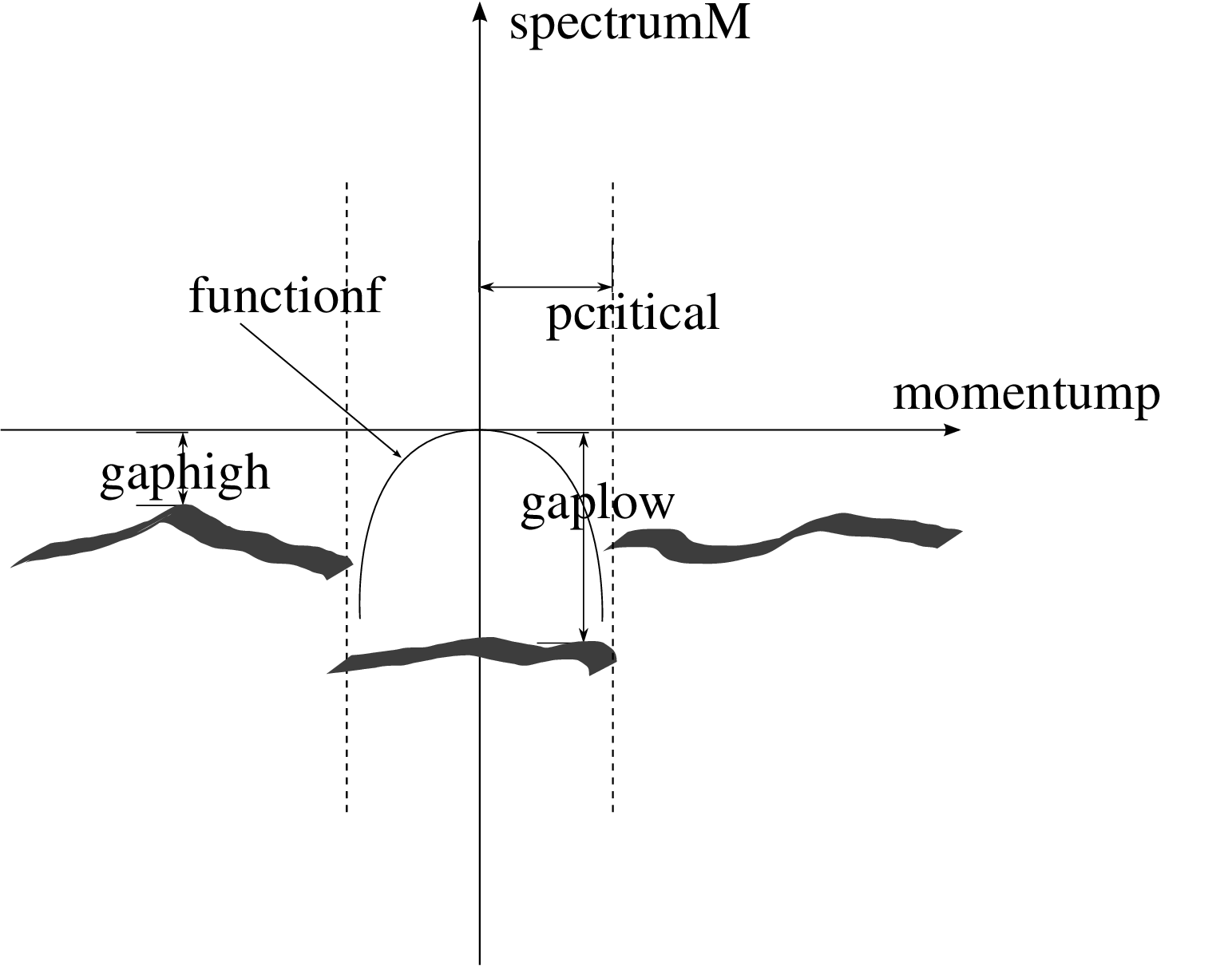}    
\caption{ \footnotesize{The singular points of $\left\{\caR(z) \right\}_p$ as a function of the fiber momentum $p$. Above the irregular black line, the only singular points are given by  $f(p,\la)$, in every small fiber $p$.  Below the irregular black lines, we have no control.}  }
\label{fig: spectrum of the real model}
\end{figure}

The proof of  Theorem \ref{thm: main} forms the bulk of the present paper. 

\subsection{Connection between Theorem \ref{thm: main} and the results in Section \ref{sec: diffusion, decoherence and equipartition} }

In this section, we show how to derive Theorems \ref{thm: diffusion} and \ref{thm: equipartition} from Theorem \ref{thm: main}. 

Since  $P(p,\la) $ is a rank-1 operator, we can write
\beq P(p,\la)    = \big\str \xi(p,\la)\big\rangle \big\langle   \tilde\xi(p,\la) \big \str, \qquad \textrm{for some} \,\, \xi(p,\la), \tilde\xi(p,\la) \in \scrG
\eeq
using the notation introduced in \eqref{def: ket bra notation}. 
We derive a  bound on the   eigenvectors $\xi(p,\la)$ and $\tilde\xi(p,\la)$, analytically continued in the coordinate $k$. This bound follows from  the  analyticity and uniform boundedness of $ U_{\nu} P(p,\la) U_{-\nu} $  on $\frD^{low} $ and  straightforward symmetry arguments;
\begin{lemma}\label{lem: boundedness eigenvectors}
The vectors $U_{\nu}\xi(p,\la)$ and $U_{\nu}\tilde\xi(p,\la)$  can be chosen bounded-analytic on $\frD^{low}$. In other words, the operator 
$P(p,\la)$ has a kernel 
\beq   \label{eq: rank one as kernel}
P(p,\la)(k,k')  =  \left\str \phantom{\tilde I}\!\!  \xi(p,\la)(k)  \right\rangle  \left\langle \tilde\xi(p,\la)(k') \right\str  
\eeq
which is bounded-analytic in both $k$ and $k'$ in the domain $\str \Im k  \str, \str \Im k'  \str  < \delta $
\end{lemma}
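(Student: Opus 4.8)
The plan is to read the two vectors off the rank-one operator $P(p,\la)$ by fixing, analytically in $p$, the scalar ambiguity in the factorization $P(p,\la)=|\xi(p,\la)\rangle\langle\tilde\xi(p,\la)|$, and then to obtain the analytic continuation in the momentum $k$ — hence the kernel formula \eqref{eq: rank one as kernel} — for free from the joint analyticity and uniform boundedness of $(p,\nu)\mapsto U_\nu P(p,\la)U_{-\nu}$ on $\frD^{low}$ supplied by Theorem \ref{thm: main}. The pair $(\xi,\tilde\xi)$ is unique up to $\xi\mapsto c(p,\la)\xi$, $\tilde\xi\mapsto \overline{c(p,\la)}{}^{-1}\tilde\xi$, and for \emph{real} $\nu$ one has $U_\nu P(p,\la)U_{-\nu}=|U_\nu\xi(p,\la)\rangle\langle U_\nu\tilde\xi(p,\la)|$ since $U_\nu$ is then unitary with $U_\nu^*=U_{-\nu}$. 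Hence, once $p\mapsto\xi(p,\la)$ has been chosen bounded-analytic on the slice $\{(p,0)\in\frD^{low}\}$, the family $(p,\nu)\mapsto U_\nu\xi(p,\la)$ is the unique analytic continuation in $\nu$ of a bounded-analytic family, so it is bounded-analytic on all of $\frD^{low}$; taking $\nu=\i\chi$ with $|\chi|<\de$ yields $(U_{\i\chi}\xi(p,\la))(k)=\xi(p,\la)(k+\i\chi)$, i.e. the analytic continuation of $k\mapsto\xi(p,\la)(k)$ to the strip $|\Im k|<\de$, with the norm bound inherited from \eqref{eq: bound P}. The kernel identity \eqref{eq: rank one as kernel} is then just the elementary fact that the integral kernel of $U_\nu P(p,\la)U_{-\nu}$ at $(k,k')$ equals $P(p,\la)(k+\nu,k'+\nu)$.

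The first genuine step is thus to choose $\xi(p,\la)$ bounded-analytic in $p$ on the $p$-slice. Here the plan is to exploit closeness to the explicit Markovian data: by \eqref{eq: f and projector close to random walk2}, $U_\nu P(p,\la)U_{-\nu}$ differs from $U_\nu P_{rw}(p)U_{-\nu}$ by $o(|\la|^0)$ uniformly on $\frD^{low}$, and $P_{rw}(p)$, being the rank-one spectral projection of the explicit random-walk generator of Section \ref{sec: lindblad}, has a right eigenvector that at small $p$ is close to the stationary state $\propto \e^{-\be Y}$ of the Markovian semigroup. One fixes a reference vector $\eta_0\in\scrG$ — the constant function $\tfrac1Z\e^{-\be Y}$ will do — for which $\langle\tilde\xi_{rw}(p),\eta_0\rangle$ is bounded away from $0$ uniformly on $\frD^{low}$; then for $|\la|$ small enough $P(p,\la)\eta_0\neq 0$ throughout $\frD^{low}$, and one \emph{defines} $\xi(p,\la):=P(p,\la)\eta_0$. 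Analyticity and the bound \eqref{eq: bound P} for this $\xi$ are immediate, and since $\eta_0$ is a constant it is fixed by every $U_\nu$, so the $\nu$-conjugated statement is automatic. (If one prefers not to invoke the Markovian comparison, the same conclusion follows from the triviality of holomorphic line bundles over the bounded convex — hence Stein and contractible — domain $\{p\in\tor+\i\tor:\ |\Re p|<p^*,\ |\Im p|<\de\}\subset\bbC^d$, applied to the range line bundle of $P(\cdot,\la)$; the $\nu$-direction carries no further obstruction because the $\nu$-dependence is rigidly through $U_\nu$.)

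It remains to produce $\tilde\xi(p,\la)$. With $\xi$ fixed, the left object is the linear functional $\ell_{p,\la}$ determined by $P(p,\la)\psi=\ell_{p,\la}(\psi)\,\xi(p,\la)$; choosing $\eta_0'\in\scrG$ with $\langle\eta_0',\xi(p,\la)\rangle$ bounded away from $0$ on $\frD^{low}$ (again via the comparison with $P_{rw}$, or locally and then patched, the result being independent of $\eta_0'$), one has $\ell_{p,\la}(\psi)=\langle\eta_0',P(p,\la)\psi\rangle/\langle\eta_0',\xi(p,\la)\rangle$, which is bounded-analytic in $p$. This functional is $\langle\tilde\xi(p,\la),\cdot\rangle$. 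The only point needing care is that a linear functional depends \emph{anti}-holomorphically on its Riesz representative, so to exhibit $\tilde\xi$ itself as a bounded-analytic vector one routes through the conjugation symmetry \eqref{eq: hermiticity zero fiber} and the space-inversion symmetry \eqref{eq: space inversion zero fiber}: matching leading exponentials in \eqref{eq: asymptotic expression caZ} these force $P(-p,\la)=J\,P(p,\la)\,J=E\,P(p,\la)\,E$, which identifies $\tilde\xi(p,\la)$ (as a kernel in $k'$) with the $J$- and $E$-images of $\xi$ at the fiber $-p$; bounded-analyticity of $\xi$ in $k$ then gives the remaining half of \eqref{eq: rank one as kernel} and the consistency of the normalizations. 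Collecting the bounds from \eqref{eq: bound P}, with the normalization fixed e.g. by $\|\xi(0,\la)\|=1$, completes the argument.

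The step I expect to be the main obstacle is making the normalization choice \emph{global} on $\frD^{low}$ — ruling out that the inner products used to pin down $\xi$ and the functional $\ell_{p,\la}$ vanish somewhere — since for a general analytic rank-one family this is a real topological issue; it is precisely what the uniform closeness of $P(p,\la)$ to the explicit $P_{rw}(p)$, or alternatively the Stein/contractibility of the $p$-domain, is there to kill. The holomorphic-versus-antiholomorphic nature of the left vector $\tilde\xi$ is a secondary, bookkeeping-level annoyance, handled by the symmetries \eqref{eq: space inversion zero fiber}–\eqref{eq: hermiticity zero fiber}.
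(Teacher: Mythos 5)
Your route is genuinely different from the paper's, and it is worth comparing. The paper does not anchor the normalization to a reference vector at all. Instead it observes that the symmetries of $\caZ_t$ force $(JE)P(p,\la)(JE)^{-1}=P(p,\la)$, so $\xi$ and $\tilde\xi$ can be chosen $JE$-invariant; this yields $\norm U_{\nu}\xi\norm=\norm U_{-\nu}\xi\norm$, and then the elementary inequality $\norm\xi\norm\leq\norm(U_{\i\Im\nu}+U_{-\i\Im\nu})\xi\norm\leq 2\norm U_{\nu}\xi\norm$ gives a \emph{lower} bound $\norm U_\nu\xi\norm\geq\tfrac12\norm\xi\norm$ (same for $\tilde\xi$). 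Since $\norm U_\nu P U_{-\nu}\norm=\norm U_\nu\xi\norm\,\norm U_\nu\tilde\xi\norm$ is uniformly bounded above by \eqref{eq: bound P}, each factor is then bounded above by dividing. The point of this detour is precisely to avoid what you yourself identify as your main obstacle: the paper never needs any inner product $\langle\eta_0',\xi(p,\la)\rangle$ or $\langle\tilde\xi_{rw}(p),\eta_0\rangle$ to be bounded away from zero on all of $\frD^{low}$. In your version that nonvanishing is load-bearing (it is what makes $\norm U_\nu\xi\norm$ bounded below, hence $\norm U_{-\bar\nu}\tilde\xi\norm=\norm U_\nu PU_{-\nu}\norm/\norm U_\nu\xi\norm$ bounded above), and reducing it via \eqref{eq: f and projector close to random walk2} only transfers the burden to the unproven statement that $\langle 1,P_{rw}(p)\eta_0\rangle$ is bounded away from zero on the whole of $\frD^{low}_{rw}$, not just near $p=0$. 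That can presumably be arranged by shrinking $p^*,\de$, but it is an extra step the paper's symmetry argument makes unnecessary. On the plus side, your choice $\xi(p,\la):=P(p,\la)\eta_0$ with $U_\nu\eta_0=\eta_0$ does give the upper bound on $\norm U_\nu\xi\norm$ immediately from \eqref{eq: bound P}, and it handles the holomorphic-normalization (line-bundle) question more explicitly than the paper does.

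There is one genuine error: the symmetries \eqref{eq: space inversion zero fiber}--\eqref{eq: hermiticity zero fiber} give $JP(p,\la)J=EP(p,\la)E=P(-p,\la)$, which maps $\xi(p)\mapsto\xi(-p)$ and $\tilde\xi(p)\mapsto\tilde\xi(-p)$ \emph{separately}; they do not intertwine $P$ with $P^{*}$ and therefore cannot ``identify $\tilde\xi(p,\la)$ with the $J$- and $E$-images of $\xi$ at the fiber $-p$'' as you claim. (Such an identification would amount to a reversibility/self-adjointness of the reduced dynamics that does not hold.) Fortunately you do not need it: the bound on $\tilde\xi$ follows from your functional $\ell_{p,\la}(U_{-\nu}\psi)=\langle\eta_0',\,U_\nu P(p,\la)U_{-\nu}\psi\rangle/\langle\eta_0',\xi(p,\la)\rangle$ (using that $\eta_0'$ is constant, hence $U_\nu$-invariant), and the anti-holomorphy of $k'\mapsto\langle\tilde\xi(p,\la)(k'),\cdot\rangle$ in the Riesz representative is exactly what makes the kernel \eqref{eq: rank one as kernel} holomorphic in $k'$ --- no symmetry is required for that bookkeeping. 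So: delete the symmetry step, and either close the global nonvanishing of $\langle 1,P_{rw}(p)\eta_0\rangle$ on $\frD^{low}_{rw}$ (or shrink the domain), or switch to the paper's lower-bound-by-symmetry argument, which renders the whole normalization issue moot.
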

Note that for fixed $k,k'$, the RHS of \eqref{eq: rank one as kernel} belongs to $  \scrB(\scrB_2(\scrS))$.
\begin{proof}
Since $P(p,\la)$ is the dominant contribution to $\{\caZ_t\}$ for large $t$, the properties (\ref{eq: space inversion zero fiber}- \ref{eq: hermiticity  zero fiber}) imply  that 
\beq
(J E) P(p,\la) (JE) =P(p,\la)  
\eeq
 (note that $JE$ is an anti-unitary involution).  Consequently, the eigenvectors   $ \xi(p,\la)$ and $ \tilde\xi(p,\la)$ can  be chosen such that 
$J E   \xi(p,\la)=  \xi(p,\la)$ and $J E  \tilde \xi(p,\la)= \tilde \xi(p,\la)$. Then 
\beq
\norm U_{\nu}  \xi(p,\la) \norm =   \norm  U_{\nu}  JE   \xi(p,\la) \norm =  \norm JE  U_{-\nu}  \xi(p,\la) \norm  =     \norm U_{-\nu}  \xi(p,\la) \norm 
\eeq
Since  $U_\nu=\e^{-\i \nu \nabla_k}$, we have also
\beq
\norm \xi(p,\la)  \norm \leq  \norm 2\cosh ( \Im\nu  \nabla_k) \xi(p,\la)  \norm =  \norm (U_{\i \Im\nu }+ U_{-\i \Im\nu})\xi(p,\la)  \norm \leq 2   \norm U_{\nu}\xi \norm, \qquad \textrm{for any} \, \nu \in \bbC^d
\eeq
The same relation holds for  $\tilde\xi(p,\la) $ and hence none of the factors on the RHS of
\beq  \left\norm U_{\nu} P(p,\la) U_{-\nu} \right\norm_{\scrB(\scrG)} =  \left\norm \big\str U_\nu \xi(p,\la)\big\rangle \big\langle  U_\nu \tilde\xi(p,\la) \big \str \right\norm_{\scrB(\scrG)} = \norm U_\nu \xi(p,\la) \norm_{\scrG}   \, \norm  U_\nu \tilde\xi(p,\la) \norm_{\scrG}   
\eeq
can become small as $\nu$ varies. The lemma now follows from the uniform boundedness of $ U_{\nu} P(p,\la) U_{-\nu} $. 
\end{proof}

For $p=0$, the vectors $\xi(p,\la)$ and $\tilde\xi(p,\la)$ play a distinguished role, and we rename them as
\beq
\xi^{eq}= \xi_{\la}^{eq}:= \xi(p=0,\la)   , \qquad   \tilde \xi^{eq}= \tilde \xi_{\la}^{eq}:=  \tilde\xi(p=0,\la),
\eeq
Note that  $\xi^{eq}$ was already referred to in Theorem \ref{thm: equipartition}.

By exploiting symmetry and positivity properties of the reduced evolution $\caZ_t$, we can infer some further properties of the function $f(p,\la)$ and the operator $P(p,\la)$.

\begin{proposition} \label{prop: symmetries of f}
The function $f(p,\la)$, defined for all $p$ with $(p,0) \in \frD^{low}$,  has a negative real part, $ \Re f(p,\la) \leq 0 $, and satisfies
 the following properties
 \beq  f(p=0,\la)=0, \qquad  \textrm{and} \qquad    \nabla_p f(p,\la) \big\str_{p=0}=0   \label{eq: equalities f} \eeq 

\beq  \textrm{The Hessian} \quad D_\la:= (\nabla_p)^2 f(p,\la)\big\str_{p=0}  \quad \textrm{ has real entries and is strictly positive}  \eeq

The functions  $ \xi^{eq}  $ and $ \tilde \xi^{eq} $ can be chosen such that  
\beq \label{eq: normalization invariant state}
     \tilde \xi^{eq} =1, \qquad  \xi^{eq} (k) \geq 0, \qquad   \int_{\tor} \d k  \Tr_{\scrS} \left[\xi^{eq} (k) \right]=   \langle 1, \xi^{eq}  \rangle=1
\eeq
where $1 \in \scrG$ is the constant function on $\tor$ with value $1 \in \scrB_2(\scrS)$.
Moreover, it satisfies the space inversion symmetry $\left(\xi^{eq} \right)(k)=\left(\xi^{eq} \right)(-k)$.

\end{proposition}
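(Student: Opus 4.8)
\emph{Approach.}
The plan is to read off every assertion from the asymptotic decomposition of Theorem~\ref{thm: main}, $\left\{ \caZ_t \right\}_p = \e^{f(p,\la)t}P(p,\la)+R^{low}(t,p,\la)\e^{-\la^2 g^{low}t}$, interpreted so that, for each large $t$, the number $\e^{f(p,\la)t}$ is an isolated simple eigenvalue of $\left\{ \caZ_t \right\}_p$ with rank-one spectral projection $P(p,\la)=\big\str \xi(p,\la)\big\rangle\big\langle\tilde\xi(p,\la)\big\str$, the remainder of the spectrum lying in a disc of radius $\leq C\e^{-\la^2 g^{low}t}<1$. Throughout I would use the bounded-analytic representatives of $\xi(p,\la)$ and $\tilde\xi(p,\la)$ furnished by Lemma~\ref{lem: boundedness eigenvectors}, and combine them with the structural properties of $\caZ_t$ in Lemma~\ref{lem: definition dynamics} (trace preservation, positivity, contractivity) and the symmetries \eqref{eq: space inversion zero fiber} and \eqref{eq: hermiticity  zero fiber}. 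The hard part is precisely this opening reading: since $\caZ_t$ is not a semigroup, the ``isolated simple eigenvalue with projection $P(p,\la)$'' picture has to be extracted carefully from \eqref{eq: asymptotic expression caZ}--\eqref{eq: f smaller than gap} (equivalently, from the simple-pole structure of $\left\{ \caR(z)\right\}_p$ at $z=f(p,\la)$), and it is this input on which every later step rests; one clean variant is to argue throughout via the Laplace transform $\caR(z)$, where $f(0,\la)=0$ becomes the statement that the relevant pole sits at $z=0$.

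\emph{Step 1: $f(0,\la)=0$, $\tilde\xi^{eq}=1$, $\langle 1,\xi^{eq}\rangle=1$, and $\Re f\leq 0$.}
First I would use trace preservation of $\caZ_t$ (dual to $\caZ_t^{\star}(1)=1$, property~1 of Lemma~\ref{lem: definition dynamics}) together with the identity $\Tr S=\langle 1,\left\{S\right\}_0\rangle$ from Lemma~\ref{lem: fiber decomposition}: this gives $\langle 1,\left\{\caZ_t S\right\}_0\rangle=\langle 1,\left\{S\right\}_0\rangle$ for all $S$, i.e.\ the constant function $1\in\scrG$ is a left eigenvector of $\left\{\caZ_t\right\}_0$ with eigenvalue $1$. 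By the spectral gap, $1$ cannot lie in the small disc, so $\e^{f(0,\la)t}=1$ for large $t$, hence $f(0,\la)=0$; and the corresponding left eigenspace being one-dimensional (spanned by $\tilde\xi^{eq}$) forces $\tilde\xi^{eq}\propto 1$, which I normalize to $\tilde\xi^{eq}=1$. Letting $t\uparrow\infty$ in the same identity, using $f(0,\la)=0$ and $R^{low}(t,p,\la)\e^{-\la^2 g^{low}t}\to 0$, yields $\langle 1,P(0,\la)\eta\rangle=\langle 1,\eta\rangle$ for all $\eta\in\scrG$, hence $\langle 1,\xi^{eq}\rangle=1$. For $\Re f(p,\la)\leq 0$ at real $p$, I would note that $\left\{\caZ_t\right\}_p$ stays bounded uniformly in $t$ (a consequence of the contraction properties of $\caZ_t$ in Lemma~\ref{lem: definition dynamics}), so $\str\e^{f(p,\la)t}\str$ stays bounded and therefore $\Re f(p,\la)\leq 0$.

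\emph{Step 2: symmetries, evenness, real Hessian, and space inversion of $\xi^{eq}$.}
Next I would exploit \eqref{eq: space inversion zero fiber} and \eqref{eq: hermiticity  zero fiber}, which read $E\left\{\caZ_t\right\}_p E^{-1}=\left\{\caZ_t\right\}_{-p}$ and $J\left\{\caZ_t\right\}_p J^{-1}=\left\{\caZ_t\right\}_{-p}$, with $E$ a unitary involution and $J$ an antiunitary involution on $\scrG$. Hence $\sp\left\{\caZ_t\right\}_{-p}=\sp\left\{\caZ_t\right\}_p=\overline{\sp\left\{\caZ_t\right\}_p}$; applied to the isolated simple dominant eigenvalue this gives $f(-p,\la)=f(p,\la)$ and, since $\overline{\e^{f(p,\la)t}}$ also lies in the spectrum and must coincide with the dominant eigenvalue, $f(p,\la)\in\bbR$ for real $p$. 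By analyticity (Schwarz reflection) the Taylor coefficients of $f(\cdot,\la)$ at $p=0$ are real; evenness then gives $\nabla_p f(p,\la)\big\str_{p=0}=0$, and reality gives that $D_\la:=(\nabla_p)^2 f(p,\la)\big\str_{p=0}$ has real entries. Specializing to $p=0$, $E$ and $J$ commute with $\left\{\caZ_t\right\}_0$ and so fix its dominant projection $P(0,\la)$; since $E1=1$ and $J1=1$, matching the ket parts forces $E\xi^{eq}=\xi^{eq}$ and $J\xi^{eq}=\xi^{eq}$, i.e.\ $\xi^{eq}(k)=\xi^{eq}(-k)$ (and, as a bonus, $\xi^{eq}(k)=\xi^{eq}(k)^{*}$).

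\emph{Step 3: positivity of $\xi^{eq}$ and strict positivity of $D_\la$.}
For an initial density matrix $\rho\geq0$ with $\Tr\rho=1$, positivity preservation of $\caZ_t$ (property~6 of Lemma~\ref{lem: definition dynamics}, dualized) makes $\caZ_t\rho$ a positive trace-class operator, so its diagonal kernel $\left\{\caZ_t\rho\right\}_0(k)$ is $\geq0$ for a.e.\ $k$; and since $f(0,\la)=0$ and $P(0,\la)\left\{\rho\right\}_0=\xi^{eq}\langle 1,\left\{\rho\right\}_0\rangle=\xi^{eq}$, the asymptotic decomposition gives $\left\{\caZ_t\rho\right\}_0\to\xi^{eq}$ in $\scrG$. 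Passing to an a.e.-convergent subsequence and invoking continuity of the bounded-analytic representative of $\xi^{eq}$ (Lemma~\ref{lem: boundedness eigenvectors}) then yields $\xi^{eq}(k)\geq0$ for every $k$. Finally, for the Hessian I would invoke \eqref{eq: f and projector close to random walk1}: $f(p,\la)=\la^2 f_{rw}(p)+o(\la^2)$ uniformly on $\frD^{low}$, so Cauchy estimates on the second derivatives at $p=0$ give $D_\la=\la^2\big(D_{rw}+o(1)\big)$ as $\la\searrow0$; since $D_{rw}$ is strictly positive (Section~\ref{sec: lindblad}), $D_\la$ is real and strictly positive for $\str\la\str$ small, completing the proof.
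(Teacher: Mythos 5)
Your proof is correct and takes essentially the same route as the paper: every claim is read off from the asymptotic form \eqref{eq: asymptotic expression caZ} together with the normalization \eqref{eq: normalization zero fiber}, the inversion and self-adjointness symmetries \eqref{eq: space inversion zero fiber} and \eqref{eq: hermiticity  zero fiber}, positivity preservation, and the convergence \eqref{eq: f and projector close to random walk1} to $\la^2 f_{rw}(p)$ for the strict positivity of $D_\la$ (your $J$-symmetry argument for the reality of $D_\la$ is equivalent to the paper's appeal to the reality of $\mu_t(x)$ and the convergence \eqref{eq: diffusion}). The only point to keep straight is the caveat you yourself raise: $P(p,\la)$ is not a spectral projection of $\left\{\caZ_t\right\}_p$, so the symmetry and uniqueness statements should be obtained by comparing the two asymptotic expansions of $\left\{\caZ_t\right\}_{\pm p}$ and invoking \eqref{eq: f smaller than gap} (or by working with the poles of $\left\{\caR(z)\right\}_p$, as you suggest), rather than by literally matching isolated eigenvalues of $\left\{\caZ_t\right\}_p$.
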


The fact that  $f(p=0,\la)=0$, $\tilde \xi^{eq}=1$ and \eqref{eq: normalization invariant state} follow in a straightforward way from \eqref{eq: normalization zero fiber} and the asymptotic form \eqref{eq: asymptotic expression caZ}. The symmetry property $\xi^{eq} (k)=\xi^{eq} (-k)$ and $ \nabla_p f(p,\la) \big\str_{p=0}=0$  follow from \eqref{eq: space inversion zero fiber} and \eqref{eq: asymptotic expression caZ}.
The fact that $D_\la$ has real entries follows from $f(p,\la)=\overline{f(-p,\la)}$ which in turn follows from the reality of the probabilities $\mu_t(x)$ and the convergence \eqref{eq: diffusion}.

To derive the strict positivity of $D_{\la}$, we use the claim (in Proposition \ref{prop: properties of m}) that $D_{rw}$, the Hessian of $f_{rw}(p)$ at $p=0$, is strictly positive. By the convergence \eqref{eq: f and projector close to random walk1}  and the analyticity of $f_{rw}(p)$, it follows that $ \str D_{\la} - \la^2 D_{rw} \str  \searrow 0$ as $\la \searrow 0$. Indeed,  if a sequence of analytic functions  is uniformly  bounded on some open set and converges pointwise on that set, then all derivatives converge as well.

\subsubsection{Diffusion}\label{sec: diffusion connection}

We outline the derivation of Theorem \ref{thm: diffusion}.

Let $p$ be such that $(p,0) \in \frD^{low}$. We  calculate the logarithm of the characteristic function:
\baq
\log  \sum_x    \e^{- \i p x } \mu_t(x)   &=&   \log  \sum_{x}   \e^{ -\i p x }  \Tr_{\scrS} \rho_t(x,x)  \nonumber \\ [2mm]
   &=&  \log  \langle 1, \left\{\rho_t\right\}_{p} \rangle  \nonumber\\ [2mm]
 &=&      \log  \langle 1,  \left\{\caZ_t\right\}_{p}  \left\{\rho_0\right\}_{p} \rangle  \nonumber\\ [2mm]
     &=&      \log      \left( \e^{f(p,\la)t}  \langle 1,P(p,\la) \left\{\rho_0\right\}_{p}  \rangle +      \e^{-\la^2 g^{low}t}   \langle 1, R^{low}(t,p,\la)  \left\{\rho_0\right\}_{p} \rangle  \right) \nonumber\\ [2mm]
        &=&      \log   \e^{   f(p,\la)t}     \left(  \langle 1,P(p,\la)\left\{\rho_0\right\}_{p}  \rangle  +      \e^{-(\la^2 g^{low}- f(p,\la))  t } C \norm 1 \norm \norm \left\{\rho_0\right\}_{p} \norm     \right)  \nonumber \\ [2mm]
                &=&     f(p,\la) t +   \log  \left(  \langle 1,P(p,\la) \left\{\rho_0\right\}_{p} \rangle  +       \e^{-(\la^2 g^{low}- f(p,\la))  t }C   \norm 1 \norm \norm \left\{\rho_0\right\}_{p} \norm     \right)  
\eaq
where the scalar product $\langle \cdot, \cdot \rangle$ and $\norm \cdot \norm$ refer to the Hilbert space $\scrG$.
The second equality follows from Lemma \ref{lem: fiber decomposition}, the fourth  from \eqref{eq: asymptotic expression caZ} and the fifth from \eqref{eq: bound Rlow}. 
The second term between brackets in the last line vanishes as $t \nearrow \infty $, for $\str p \str$ small enough, such that $\la^2 g^{low}- f(p,\la) >0$.
To conclude the calculation, we need to check that the expression in $\log\left( \cdot\right)$ does not vanish.  We note that
\beq
 \langle 1,P(p=0,\la) \left\{\rho_0\right\}_{0}   \rangle =  \langle 1, \xi^{eq} \rangle    \langle \tilde \xi^{eq},  \left\{\rho_0\right\}_{0}   \rangle  =1
\eeq 
as follows from the fact that $\tilde \xi^{eq}=1$ and the normalization of $\xi^{eq}$ in \eqref{eq: normalization invariant state}.
Hence,  for  $p$ in a complex neighborhood of $0$, the expression $\langle 1,P(p,\la)  \left\{\rho_0\right\}_{p}  \rangle$ is bounded away from $0$ by analyticity in $p$. Consequently,
\beq \label{eq: convergence moment generating function}
\lim_{t \nearrow \infty}      \frac{1}{t}\log  \sum_x    \e^{-\i  p x } \mu_t(x)   =f(p,\la).
\eeq
Next, we remark that, for $\i p $ real, the LHS of \eqref{eq: convergence moment generating function} is a large deviation generating function for the family of probability densities $(\mu_t(\cdot))_{t \in \bbR^+}$. A classical result \cite{bryc} in large deviation theory states  that the analyticity of the large deviation generating function in a neighborhood of $0$ implies a central limit theorem for the variable $\frac{x}{\sqrt{t}}$, both in distribution, see \eqref{eq: diffusion},  as in the sense of moments, see \eqref{eq: convergence of moments}.

\subsubsection{Equipartition} \label{sec: equipartition connection}

To derive the result on equipartition in Theorem \ref{thm: equipartition}, we consider $F,A$ as in \eqref{eq: class of observables}. 
Since $\rho_t (F(P) \otimes A )$ is a trace-class operator, Lemma \ref{lem: fiber decomposition} implies that
\beq
\Tr \left[ (F(P) \otimes A ) \rho_{t} \right]=     \langle 1,  \left\{  (F(P) \otimes A ) \rho_{t} \right\}_0 \rangle_{\scrG}  =  \langle F \otimes A, \{\rho_{t} \}_0 \rangle_{\scrG}
\eeq
where, as in \eqref{eq: mapping observables inside fibers}, $F \otimes A$ stands for the function $k \mapsto F(k) A$ in $L^2(\tor, \scrB_2(\scrS))$.

Using Theorem \ref{thm: main} for the fiber $p=0$, we obtain
\baq
  \langle F \otimes A, \{ \rho_{t}\}_0 \rangle &=  &   \e^{ f(0,\la) t}  \langle F \otimes A,  P(p=0,\la) \{ \rho_0 \}_0 \rangle  +  \e^{-(\la^2 g^{low} )t} \langle F \otimes A,  R^{low}(t,p=0,\la) \{ \rho_0 \}_0 \rangle   \nonumber   \\[1mm]
   &=  &       \langle F \otimes A,  \xi^{eq}  \rangle+    C  \e^{-(\la^2 g^{low} )t}   \norm F \otimes A \norm_{\scrG}   \norm \{ \rho_{0} \}_0 \norm_{\scrG }    \label{eq: proof of equiparitition}
\eaq
To obtain the second equality, we have used the uniform boundedness of the operators $R^{low}(t,p=0,\la)$ (Statement 1) of Theorem \ref{thm: main}),  the fact that $f(p=0,\la)=0$ (Proposition \ref{prop: symmetries of f}) and the identities
\beq
 P(p=0,\la) \{ \rho_0 \}_0 =      \langle \tilde\xi^{eq}, \{ \rho_0 \}_0 \rangle     \xi^{eq}  =    \langle 1, \{ \rho_0 \}_0 \rangle  \xi^{eq}  =  \xi^{eq} 
 \eeq
 
Hence, from \eqref{eq: proof of equiparitition},  we obtain the asymptotic expression  \eqref{eq: equiparition} by choosing $g \leq g^{low}$.

\subsubsection{Decoherence} \label{sec: decoherence connection}

In this section, we derive  the bound \eqref{eq: decoherence} in Theorem \ref{thm: equipartition}. 
We decompose $\rho_{t}$ as follows, using Theorem \ref{thm: main}:
\baq
\rho_{t}&:=&  \mathop{\int}\limits_{\tor}^{\oplus} \d p  \,  \left\{ \rho_{t} \right\}_p  \\[2mm]
&=&  \underbrace{\int_{\str p \str \leq p^*}^{\oplus} \d p \,    \e^{\la^2 f(p,\la) t}  P(p,\la)  \left\{ \rho_0  \right\}_p}_{ =: A_1}    +   \e^{-\la^2 g^{low}t }  \underbrace{\int_{\str p \str \leq p^*}^{\oplus} \d p  \,    R^{low}(t,p,\la)  \left\{ \rho_0  \right\}_p}_{ =: A_2}      \\[2mm]
&+&    \e^{-\la^2 g^{high}t } \underbrace{\int_{\str p \str > p^*}^{\oplus} \d p   \,   R^{high}(t,p,\la)  \left\{ \rho_0  \right\}_p }_{ =: A_3}     \\
\eaq
The terms $A_2$ and $A_3$ are bounded  by 
\beq
\norm A_{2,3} \norm_2^2 \leq
C \int_{\tor} \d p  \,  \norm \left\{ \rho_0  \right\}_p \norm^{2}_{\scrG}   = C \norm \rho_0 \norm_2^2    \leq C \norm \rho_0 \norm_1^2   
\eeq
where the first inequality follows from the bounds \eqref{eq: bound Rlow} and \eqref{eq: bound Rhigh}.
Hence, for our purposes, it suffices to consider the first term $A_1$. 
To calculate the operator $A_1$ in position representation, we use the kernel expression \eqref{eq: rank one as kernel} for $P(p,\la)$ to obtain
\beq
A_1(x_{\links},x_{\rechts})= \int_{\str p \str \leq p^*} \d p     \e^{\i \frac{p}{2} \cdot (x_{\links}+x_{\rechts}) }     \e^{ f(p,\la) t}    \langle \tilde\xi(p,\la), \left\{ \rho_0  \right\}_p \rangle \int_{\tor}    \d k  \,   \xi(p,\la)(k) \e^{\i k \cdot (x_{\links}-x_{\rechts})}   
\eeq

We now shift the path of integration (in $k$) into the complex plane, using that the function $\xi(p,\la)(\cdot)$ is bounded-analytic in a strip of width $\delta$.
This yields exponential decay in $(x_{\links}-x_{\rechts})$.  Using also that $\Re f(p,\la) \leq 0$, for $\str p \str \leq p^* $ (see Proposition \ref{prop: symmetries of f}), we obtain the bound
\beq
\norm A_1(x^{}_{\links},x^{}_{\rechts}) \norm_{\scrB_2(\scrS)} \leq C  \e^{-\ga \str x^{}_{\links}-x^{}_{\rechts} \str }, \qquad  \textrm{for} \, \,  \ga < \de
\eeq
Combining the bounds on $A_1$ and $A_2,A_3$, we obtain 
\beq
\norm \rho_t(x^{}_{\links},x^{}_{\rechts}) \norm_{\scrB_2(\scrS)} \leq C  \e^{-\ga \str x^{}_{\links}-x^{}_{\rechts} \str }  + C' \e^{-(\la^2 g )t}, \qquad  \textrm{for} \, \, \ga < \de
\eeq
with $g:= \min\left( g^{low},g^{high} \right)$.
The fact that this bound is valid for any $\ga < \de$,  confirms the claim that the inverse decoherence length  $\ga_{dch}$ can be chosen uniformly in $\la$ as $\la \searrow 0$.

% %%%%%%%%%%%%%%%%%         % %%%%%%%%%%%%%%%%%         % %%%%%%%%%%%%%%%%%         % %%%%%%%%%%%%%%%%%         % %%%%%%%%%%%%%%%%%         
%
%
% %%%%%%%%%%%%%%%%%                                   ÊÊÊÊÊÊÊÊÊÊÊÊÊÊÊÊÊ                                       MARKOV APPROXIMATION
%
%
% %%%%%%%%%%%%%%%%%         % %%%%%%%%%%%%%%%%%         % %%%%%%%%%%%%%%%%%         % %%%%%%%%%%%%%%%%%         % %%%%%%%%%%%%%%%%% 

 \section{The Markov approximation}  \label{sec: lindblad}
 
  For small coupling strength $\la$ and times of order $\la^{-2}$, one can approximate the reduced evolution $\caZ_t$ by a  "quantum Markov semigroup" $\La_t$ which is of the form 
 \beq \label{eq: first appearance caM}
 \La_t = \e^{ t \left(-\i \adjoint(Y) +  \la^2 \caM \right) }
 \eeq
where $Y=1\otimes Y$ is the Hamiltonian of the internal degrees of freedom, and  $\caM$  is a Lindblad generator, see e.g.\ \cite{alickifannesbook}.   Lindblad generators, and especially the semigroups they generate, have received a lot of attention lately in quantum information theory. 
The operator $\caM$ has the additional property of being translation-invariant.  Translation-invariant Lindbladians have been classified in \cite{holevonote} and, recently,  studied in a physical context;  see \cite{hornbergervacchinireview} for a review.
  In Section \ref{sec: construction of semigroup}, we construct $\caM$ and we state its relation with $\caZ_t$.  We also describe heuristically how $\caM$ emerges from time-dependent perturbation theory in $\la$ as a lowest order approximation to $\caZ_t$.
   In Section \ref{sec: momentum representation of caM}, we discuss the momentum representation of $\caM$ (the derivation of this representation is however deferred to Appendix \ref{app: caM}), and we recognise that the evolution equation generated by $\caM$ is a mixture of a linear Boltzmann equation for the translational degrees of freedom and a Pauli master equation for the internal degrees of freedom. In Section \ref{sec: properties of semigroup}, we discuss spectral properties of $\caM$, which are largely proven in Appendix \ref{app: caM}.  Finally, in Section \ref{sec: bound on markov in position representation}, we derive  bounds on the long-time behavior of $\La_t\rho$, for any density matrix $\rho \in \scrB_1(\scrH_\sys)$.

 \subsection{Construction of the semigroup} \label{sec: construction of semigroup}

First, we define the operator $\hat\caL^{\star}(t)$ on $\scrB(\scrH_\sys)$:
\beq
 \hat  \caL^{\star}(t)(S) =   -    \initialres  \left( \adjoint (H_{\sys\res})  \,   \e^{\i t \adjoint (Y+H_\res)} \,  \adjoint (H_{\sys\res}) (S \otimes 1)    \right)   \\  
\eeq
This definition makes sense since  the conditional expectation $ \initialres$ is applied to an element of $\scrB(\scrH_\sys) \otimes \frC$, see Section \ref{sec: dynamics of coupled system}.
Then we consider the Laplace transform of $ \hat  \caL^{\star}(t)$, i.e.\ 
\beq
 \caL^{\star}(z)= \mathop{\int}\limits_{\bbR^+} \d t \, \e^{-t z}  \,  \hat  \caL^{\star}(t), \qquad  \Re z >0,
 \label{eq: first definition caL}
\eeq
%
% and $ \caL^{*}(z)$ on $\scrB(\scrH_\sys)$, for $z \in \bbC,$:
% \baq
% \caL^{*}(z)(S) & :=&  - \  \initialres  \left( \adjoint (H_{\sys\res})  \,   \e^{\i t \adjoint (Y+H_\res)} \,  \adjoint (H_{\sys\res}) (S \otimes 1)    \right)   \\  
% &=&  -   \initialres  \left( \adjoint (H_{\sys\res})  \,   \left(z -\i \adjoint(Y)-\i \adjoint (H_\res)\right)^{-1} \,  \adjoint (H_{\sys\res})  (S \otimes 1)      \right) , \qquad   S \in  \scrB(\scrH_\sys) 
% \eaq
and, finally, we let $ \caL(z)$ be the dual operator to $\caL^{\star}(z)$, acting on $ \scrB_1(\scrH_\sys)$, see \eqref{eq: duality}. Then the operator  $\caM$ is obtained  from $\caL$ by ``spectral averaging"  and adding the ``Hamiltonian" term $-\i \adjoint(\ve(P))$: 
 \beq \label{eq: def of M}
  \caM:= - \i \adjoint(\ve(P)) + \sum_{a \in \sp(\adjoint (Y))}   1_a (\adjoint(Y)) \caL(-\i a) 1_a (\adjoint(Y)) 
 \eeq
 For now, this definition is formal, since it involves  \eqref{eq: first definition caL}  with $\Re z=0$.

The following proposition provides a careful definition of $\caM$ and collects some basic properties of the semigroup evolution
$\La_t$. 

\begin{proposition}\label{prop: wc1} Assume Assumptions \ref{ass: analytic dispersion}, \ref{ass: space} and \ref{ass: analytic form factor}. Then, the operators $\caL(z)$, defined above, can be continued from $\Re z>0$ to a continuous function in the region $\Re z \geq 0$ and
\beq  \label{eq: caL defined on real axis}
\sup_{\Re z \geq 0}     \norm \caJ_{\ka}    \caL(z)  \caJ_{-\ka}  \norm   <\infty, \qquad  \textrm{for} \, \,   \ka \in \bbC^d \times \bbC^d
\eeq
(In fact, $\caJ_{\ka}    \caL(z)  \caJ_{-\ka} = \caL(z)$).

 The operator $\caM$, as defined in \eqref{eq: def of M}, is bounded both  on $\scrB_1(\scrH_\sys)$ and $\scrB_2(\scrH_\sys)$.
 Recall the constants $q_{\ve}(\ga), \ga >0$, defined in Assumption \ref{ass: analytic dispersion}. Then 
 \beq  \label{eq: caM has only propagation in hamiltonian}
\norm  \caJ_{\ka} \caM \caJ_{-\ka} -\caM  \norm    \leq   q_{\ve}(\str \Im \ka^{}_{\links} \str ) + q_{\ve}(\str \Im\ka^{}_{\rechts} \str ) ,  \qquad \str \Im\ka^{}_{\links,\rechts} \str \leq \de_{\ve}
 \eeq
 where the norm $\norm \cdot \norm$ refers to the operator norm on $\scrB(\scrB_2(\scrH_\sys))$.
 
The family of operators $\La_t$, defined in \eqref{eq: first appearance caM},
\beq
\La_t = \e^{t (-\i \adjoint(Y) +\la^2\caM)}, \qquad  t \in \bbR^+
\eeq
is a  ``quantum dynamical semigroup". This means\footnote{Most authors include "complete positivity" as a  property of quantum dynamical semigroups, see e.g.\  \cite{alickifannesbook}. Although the operators $\La_t$ satisfy complete positivity,  we do not stress this since it is not important for our analysis.}:
\beq
\left. \begin{array}  {llcr}   i)&   \La_{t_1}\La_{t_2}=\La_{t_1+t_2}  \qquad    & \textrm{for all} \, t_1,t_2 \geq 0  &     \textrm{(semigroup property)}     \\[1mm]  
ii) &  \La_{t}\rho \geq 0  \qquad  \qquad & \textrm{for any} \,  0 \leq\rho \in \scrB_1(\scrH_\sys)    & \textrm{(positivity preservation)}  \\[1mm]  
iii) & \Tr \La_{t}\rho=\Tr \rho  \qquad& \textrm{for any} \,  0 \leq\rho \in \scrB_1(\scrH_\sys)  &  \textrm{(trace preservation)}
 \end{array}  \right.
\eeq
\end{proposition}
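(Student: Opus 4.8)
The plan is to establish the three claimed facts about $\caL(z)$, then use them to define $\caM$ and derive the semigroup properties. \textbf{Step 1: Analyticity and boundedness of $\caL(z)$ up to the imaginary axis.} Write $\hat\caL^\star(t)$ explicitly by applying the conditional expectation $\initialres$ to the double commutator $\adjoint(H_{\sys\res})\e^{\i t\adjoint(Y+H_\res)}\adjoint(H_{\sys\res})(S\otimes 1)$. Using the quasi-freeness of $\initialres$ and the explicit form of $H_{\sys\res}$ (linear in $a^\#_q$), the partial trace produces a kernel that is a sum of terms of the form $\e^{\i q\cdot X}W\,S\,e^{\mp\i t\adjoint(Y)}(\cdots)\,e^{\pm\i q\cdot X}W$ paired against the reservoir two-point function, which after integrating out the reservoir momenta becomes the space-time correlation function $\psi(x,t)$. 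Then $\caL(z)$ is essentially the Laplace transform $\int_0^\infty\d t\,\e^{-tz}\psi(x,t)(\cdots)$, and Assumption \ref{ass: analytic form factor} (i.e.\ $\hat\psi$ in a Hardy-type class on a strip, with $\hat\psi(0)=0$) is exactly what makes $\widehat\psi(\om)$ integrable so that the Laplace transform extends continuously to $\Re z\geq 0$; this is the content of \eqref{eq: caL defined on real axis}. The translation invariance of $H_{\sys\res}$ and of $\initialres$ gives $\caJ_\ka\caL(z)\caJ_{-\ka}=\caL(z)$ directly from the kernel, since the operators $\e^{\pm\i q\cdot X}$ appearing left and right cancel in the combination that defines the kernel.

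\textbf{Step 2: Boundedness of $\caM$ and its propagation estimate.} From \eqref{eq: def of M}, $\caM=-\i\adjoint(\ve(P))+\sum_{a}1_a(\adjoint(Y))\caL(-\i a)1_a(\adjoint(Y))$. The spectral projections $1_a(\adjoint(Y))$ are bounded (finite-dimensional $\scrS$), and by Step 1 each $\caL(-\i a)$ is bounded on both $\scrB_1$ and $\scrB_2$, so the sum is bounded; $\adjoint(\ve(P))$ is bounded because $\ve$ is a bounded function of $P$. For \eqref{eq: caM has only propagation in hamiltonian}, conjugating by $\caJ_\ka$ leaves the $\caL$-part invariant (by Step 1) and the projections $1_a(\adjoint(Y))$ invariant (they act on the internal index only, commuting with $\caJ_\ka$), so the only change is in $-\i\adjoint(\ve(P))$; the bound $\norm\caJ_\ka\adjoint(\ve(P))\caJ_{-\ka}-\adjoint(\ve(P))\norm\leq q_\ve(\str\Im\ka^{}_\links\str)+q_\ve(\str\Im\ka^{}_\rechts\str)$ then follows from \eqref{eq: combes thomas}, writing $\caJ_\ka\adjoint(\ve(P))\caJ_{-\ka}$ as left multiplication by $\ve(P+\tfrac12\ka^{}_\links)$ minus right multiplication by $\ve(P+\tfrac12\ka^{}_\rechts)$ and estimating each shift by the modulus of continuity $q_\ve$ of $\Im\ve$.

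\textbf{Step 3: Quantum dynamical semigroup properties.} Since $-\i\adjoint(Y)+\la^2\caM$ is a bounded operator on $\scrB_1(\scrH_\sys)$ (Step 2), $\La_t:=\e^{t(-\i\adjoint(Y)+\la^2\caM)}$ is a norm-continuous one-parameter semigroup, giving (i) immediately. For (ii) and (iii), the cleanest route is to verify that the generator is of GKSL (Lindblad) form. Carrying out the partial trace in $\hat\caL^\star(t)$ and performing the spectral average produces, after collecting terms, an expression $\caM(\rho)=-\i[\Lambda_{LS},\rho]+\sum_a\hat\psi(a)\big(e^{\i q\cdot X}W_a\,\rho\,W_a^* e^{-\i q\cdot X}-\tfrac12\{W_a^*W_a,\rho\}\big)$-type structure (integrated over $q$ with the appropriate form-factor weight and Bohr-frequency decomposition using the $W_a$ from \eqref{def: Wa}); the positivity of $\hat\psi(a)$ (guaranteed by its definition in \eqref{def: psi}) ensures this is a genuine Lindbladian, so by the Lindblad theorem $\La_t$ is completely positive, giving (ii). Trace preservation (iii) follows because $\Tr\caM(\rho)=0$: the Hamiltonian part is a commutator, and the dissipative part has the form $\sum_a\hat\psi(a)(\Tr[W_a\rho W_a^*]-\Tr[W_a^*W_a\rho])=0$ by cyclicity; equivalently $\caM^\star(1)=0$, which one can also read off from $\caZ_t^\star(1)=1$ (Lemma \ref{lem: definition dynamics}) since $\caM$ is the second-order coefficient in the expansion of $\caZ_t^\star$.

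The main obstacle is \textbf{Step 1}: making the Laplace transform continuous up to $\Re z=0$ requires the delicate integrability in Assumption \ref{ass: analytic form factor}, and one must carefully justify exchanging the time integral, the reservoir momentum integral, and the Wick contraction — in particular checking that the $\om\to 0$ behavior (where $\hat\psi(0)=0$ is used) and the large-$\str x\str$, large-$t$ behavior of $\psi(x,t)$ are both controlled, which is precisely where Assumptions \ref{ass: space} and \ref{ass: analytic form factor} enter. The identification of the spectral-averaged generator with a GKSL form in Step 3 is conceptually standard (Davies weak-coupling) but requires bookkeeping of the Bohr frequencies $a\in\sp(\adjoint(Y))$ and the matrices $W_a$.
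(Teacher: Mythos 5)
Your proposal is correct and follows essentially the same route as the paper's proof in Appendix \ref{app: caM}: represent $\caL(z)$ through the space--time correlation function $\psi(x,t)$ so that continuity up to $\Re z=0$ and the bound \eqref{eq: caL defined on real axis} reduce to the time-integrability of $\sup_x\str\psi(x,t)\str$ (Lemma \ref{lem: integrability}), reduce \eqref{eq: caM has only propagation in hamiltonian} to the $\adjoint(\ve(P))$ term via translation invariance of $\caL$, and obtain (i)--(iii) by exhibiting the canonical GKSL form \eqref{eq: general form generator} with a completely positive $\Psi$. The only cosmetic difference is that the paper makes the Kraus decomposition \eqref{eq: explicit Psi kraus} explicit, whereas you assert the Lindblad structure from the positivity of $\hat\psi(a)$; both amount to the same computation.
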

We postpone the proof of this proposition to Appendix \ref{app: caM}.  

\subsubsection{Motivation of the semigroup $\La_t$}

The connection of the semigroup $\La_t$ with the reduced evolution $\caZ_t$ is that, for any $T<\infty$,
\beq\label{eq: davies result}
\mathop{\sup}_{ 0 <t < \la^{-2} T } \norm \caZ_{t}  - \La_t  \norm_{\scrB(\scrB_2(\scrH_\sys))} = o(\la^0),  \qquad   \la \searrow 0
\eeq

Results in the spirit of \eqref{eq: davies result} have been advocated by \cite{vanhove} and first proven, for confined (i.e.\ with no translational degrees of freedom) systems, in \cite{davies1}. They go under the name ``weak coupling limit" and they have given rise to extended mathematical studies, see e.g. \cite{lebowitzspohn1,derezinskifruboesreview}.   In our model, \eqref{eq: davies result} will be implied by our proofs but we will not state it explicitly in the form given above.  In fact, statements like \eqref{eq: davies result} can be proven under much weaker assumptions than those in our model; see 
 \cite{deroeckspehner} for a proof which holds in all dimensions $d>1$. 
 
 Here, we restrict ourselves to a short and heuristic sketch of the way the Lindblad generator $\caM$ emerges from the full dynamics. 
 First, we consider the Lie-Schwinger series  \eqref{eq: lie schwinger} in the interaction picture with respect to the free internal degrees of freedom, i.e.\ we consider $ \caZ_t^{\star}\e^{-\i  t \adjoint( Y)}$ instead of $ \caZ_t^{\star}$. 
 Keeping only terms  up to second order in $\la$ in  \eqref{eq: lie schwinger} and substituting our definition for $\hat\caL^{\star}(t)$ we obtain
 \baq  \label{eq: three terms of perturbation}
   \caZ_t^{\star} \e^{-\i  t \adjoint( Y)} &= &    1+ \,    \i \la^2 \, t   \adjoint( \varepsilon(P) ) + \,  \la^2 \mathop{\int}\limits_{0 <t_1 < t_2 <t}  \d t_1 \d t_2        \,     \e^{ \i (t-t_2)  \adjoint( Y)}  \hat \caL^{\star}(t_2-t_1)      \e^{-\i(t-  t_1) \adjoint( Y)}   +O(\la^4)
 \eaq
where we have also used $[ \varepsilon(P) , Y]=0$ to simplify the second term on the RHS.  It is useful to rewrite the third term by splitting $t_2=t_1+ (t_2-t_1)$ and to insert the spectral decomposition of unity corresponding to the operator $\adjoint(Y)$: we get
\beq
\la^2 \mathop{\sum}\limits_{a,a' \in \sp (\adjoint(Y))}  \int_{0}^{t} \d t_1  \,  \e^{\i (t-t_1)  (a-a')}   \, 1_a(\adjoint(Y)) \left( \int_0^{t-t_1} \d u  \,  \e^{-i u  a} \hat\caL^{\star}(u )  \right) 1_{a'}(\adjoint(Y))    \label{eq: third term spectral averaging}
\eeq

Next, we analyze the RHS of \eqref{eq: three terms of perturbation} for long times; we choose $t =\la^{-2}\frt$, and we  argue that, in the limit $\la \searrow 0$,  it reduces to 
\beq
1 + \i \adjoint( \varepsilon(P) ) \frt  \,  +  \left( \sum_a 1_a(\adjoint(Y))\caL^{\star}(\i a ) 1_{a}(\adjoint(Y))  \right)  \frt   \label{eq: limiting form three terms}
\eeq
This limit can be straightforwardly justified if the function $t \to \caL^*(t)$ is (norm-)integrable, which will follow from Assumptions \ref{ass: space} and \ref{ass: analytic form factor} in our case. Indeed, if  $t \to \caL^*(t)$ is integrable, then the integral $\int_0^{\la^{-2}\frt-t_1} \d u \ldots$ in  \eqref{eq: third term spectral averaging} converges to $\int_0^{\infty} \d u \ldots$ for fixed $t_1$, as $\la \to 0$. This yields the Laplace transform $\caL^{\star}(\i a )$.  The restriction $a=a'$ appears then because 
\beq
\la^2  \int_{0}^{\la^{-2 } \frt} \d t_1  \,  \e^{ \i (t - t_1)  (a-a')}  \qquad \mathop{\rightarrow}\limits_{\la \to 0}  \qquad \delta_{a,a'} \, \frt  
\eeq
and one finishes the argument by invoking dominated convergence. Comparing with \eqref{eq: def of M} and using that $(1_a(\adjoint(Y)))^{\star} =1_{-a}(\adjoint(Y))$, one checks that  \eqref{eq: limiting form three terms} is  equal to $1+\frt \caM^{\star}$, with $\caM^{\star}$  the dual to $\caM$.  This is the beginning of a series defining the semigroup $\e^{\frt \caM^{\star}}$ (we got only the first two terms because we kept only terms of order $\la^0$ and $\la^2$ in the original Lie-Schwinger series).

 \subsection{Momentum space representation of  $\caM$} \label{sec: momentum representation of caM}
 
 In this section, we give an explicit and intuitive expression for the operator $\caM$. As  $\caM$ is translation covariant, i.e., $\caT_{z} \caM \caT_{-z}=\caM$, as in   \eqref{translation inv of Z}, we have the fiber decomposition, 
\beq \label{eq: decomposition of caM}
\caM = \mathop{\int}\limits_{\tor}^{\oplus} \d p  \, \caM_p
\eeq
where the notation is as introduced in Section \ref{sec: translation invariance and fiber decomposition}.
We describe $\caM_p$ explicitly as an operator on  $ \scrG$. It is of the form
 \beq \label{def: caMp}
( \caM_p \xi)(k) = -\i [\Upsilon, \xi(k)] - \i (\varepsilon(k+\frac{p}{2})-\varepsilon(k-\frac{p}{2})) \xi(k) + (\caN \xi)(k), \qquad \xi \in \scrG
 \eeq
 where $\ve$ is the dispersion law of the particle, see Section \ref{sec: particle}, and $\Upsilon$ is a self-adjoint matrix in $\scrB(\scrS)$ whose only relevant property is that it commutes with $Y$, i.e., $[Y,\Upsilon]=0$.
 Physically, it describes the Lamb-shift of the internal degrees of freedom due to the coupling to the reservoir and its explicit form is given in Appendix \ref{app: caM}. The operator $\caN$ is given, for $\xi \in \caC(\tor, \scrB_2(\scrS))$, by
 \beq \label{def: caN}
 (\caN \xi)(k) =  \sum_{a \in \sp (\adjoint(Y))}  \int \d k'  \left( r_a(k',k)   W_{a}   \xi(k')  W^*_{a} -  \frac{1}{2}r_a(k,k') \left( \xi(k) W^*_{a} W_{a}+ W^*_{a} W_{a} \xi(k)   \right)  \right)  
 \eeq
 with the (singular) jump rates 
 \beq 
 r_{a}(k,k') :=2 \pi \mathop{\int}\limits_{\bbR^d} \d q  \,  \str\phi(q)\str^2   \,   \left\{ \begin{array}{lr}   \frac{ 1}{1-\e^{-\be \om(q)}}      \delta(\om(q)-a) \delta_{\tor}(k-k'-q  )  & \quad  a \geq 0   \\[3mm]
                     \frac{1}{\e^{\be \om(q)}-1}      \delta(\om(q) +a) \delta_{\tor}( k-k'+q  )  & \quad  a < 0  \end{array} \right.    \label{def: singular jump rates}
 \eeq
where $\phi$ is the form-factor, see Section \ref{sec: reservoir}, and $\delta_{\tor}(\cdot)$ is  a sum of  Dirac delta distributions on the torus;
\beq
\delta_{\tor}(\cdot) := \sum_{q_0=0 + (2\pi \bbZ)^d} \delta(\cdot-q_0), \qquad 
\eeq
 Note that  $ r_{a}(\cdot,\cdot) $ vanishes at $a=0$, due to the fact that the 'effictive squared form factor' $\hat \psi(\cdot)$ vanishes at $0$, see Assumption \ref{ass: analytic form factor}.

Equation \eqref{def: caMp} is most easily checked starting from the expressions for $\caM$ in Section \ref{sec: construction of caM}. In particular, the three terms in \eqref{def: caN} correspond to the fiber decompositions of the operators $\Phi(\rho),-  \frac{1}{2} \Phi^{\star}(1) \rho,  -  \frac{1}{2}\rho \Phi^{\star}(1)$ in \eqref{eq: general form generator}, and the first two terms on the RHS of \eqref{def: caMp} correspond to the commutator with $\Upsilon$ and $\varepsilon(P)$ in \eqref{eq: general form generator}.

We already stated that $\caM$ is translation-invariant, hence it commutes with $\adjoint(P)$. However, the operator $\caM$ also commutes with  $\adjoint (Y)$, as can be easily checked starting from the expressions \eqref{def: caMp} and \eqref{def: caN}  and employing the definitions of $W_a$ in \eqref{def: Wa} and the fact that $[Y, \Upsilon]=0$.  

 We can therefore  construct the double decomposition
 \beq
 \caM= \mathop{\oplus}\limits_{a \in \sp( \adjoint(Y))}  \int_{\tor}^{\oplus}\d p\,    \caM_{p,a}     
 \eeq
 where 
 \beq
\caM_{p,a}     :=     1_{a}(\adjoint(Y))   \caM_{p}     1_{a}(\adjoint(Y))
 \eeq
To proceed, we make use of our strong nondegeneracy condition in Assumption \ref{ass: fermi golden rule}.
Indeed, the operators $ \caM_{p,a}$ act on functions $\xi \in \scrG $ that satisfy the constraint
\beq  \label{eq: spectral constraint a}
\xi(k) = 1_a(\adjoint(Y)) \xi(k) =  \sum_{e,e' \in \sp Y, e-e'=a}  1_e(Y)   \xi(k)     1_{e'}(Y), \qquad   \xi \in \scrG \sim  L^2(\tor, \scrB_2(\scrS)) 
\eeq
Due to the  non-degeneracy assumption, the sum on the RHS contains only one non-zero term for $a \neq 0$,  i.e., there are unique eigenvalues $e,e'$ such that $a=e-e'$.
Let us denote the eigenvector in the space $\scrS$ of  the operator $Y$ with eigenvalue $e$ by $e$ as well (cfr.\ the discussion following Assumption \ref{ass: fermi golden rule}) , then this unique term in \eqref{eq: spectral constraint a} can be written as
\beq
1_e(Y)   \xi(k)     1_{e'}(Y)  =   \left(\langle e, \xi(k) e' \rangle\right) \,    \str e\rangle \langle e' \str, \qquad    e-e'=a
  \eeq
It follows that  the matrix valued function $\xi(k)$ satisfying \eqref{eq: spectral constraint a} can be identified with the $\bbC$-valued function
\beq \varphi(k)\equiv \langle e, \xi(k) e' \rangle_{\scrS} 
\label{def: varphi}
 \eeq 
For $a=0$,  a function $\xi(k)$ satisfying \eqref{eq: spectral constraint a} is necessarily diagonal in the basis of eigenvectors of $Y$. In that case, we can identify $\xi$ with
 \beq
\varphi(k,e) \equiv  \langle e, \rho(k,k) e \rangle_{\scrS}  \label{eq: identification with prob measure}
 \eeq
Hence, we  can identify
 $\caM_{p,a \neq 0}$ with an operator on $L^2(\tor)$ and $\caM_{p,0}    $ with an operator on $L^2(\tor \times \sp Y)$. 
 
 A careful analysis of these operators is performed in Appendix \ref{app: caM}. Here, we discuss the operator $\caM_{0,0}$ because it is crucial for understanding our model.
 
 \subsubsection{The Markov generator $\caM_{0,0}    $}
Let us choose $\varphi \in \caC(\tor \times \sp Y)$. Then, by the formulas given above, the operator $\caM_{0,0}    $ acts as
\beq \label{eq: explicit caM00}
\caM_{0,0} \varphi(k,e)   :=   \int_{\tor} \d k' \sum_{e' \in \sp Y}    \left(   r(k',e';k,e)    \varphi(k',e')  -      r(k,e;k',e')  \varphi(k,e)  \right)
\eeq
where $r(k,e;k',e')$ are (singular) \emph{transition rates}  given explicitly by 
\beq
r(k,e;k',e')   :=    r_{e-e'}(k,k')\str \langle e', W e \rangle \str^2
\eeq 
In formula \eqref{eq: explicit caM00}, one recognizes the structure of a Markov generator, acting on  densities of absolutely continuous probability measures (hence $L^1$-functions) on $\tor \times \sp Y$.
The numbers
\beq
j(e,k) := \int_{\tor} \d k' \sum_{e' \in \sp Y}     r(k,e;k',e')   \label{def: escape rates}
\eeq
are called \emph{escape rates} in the context of Markov processes. 
Let $\norm \varphi \norm_1= \int_{\tor} \d k \sum_{e\in \sp Y} \str \varphi(k,e) \str $. Using that $r(k,e;k',e')  \d k' $ is a positive measure, we get
\beq
\norm \caM_{0,0} \varphi \norm_1  \leq   2  \left(\sup_{e,k}  j(e,k)  \right)    \norm  \varphi \norm_1, 
\eeq
which implies that $\caM_{0,0}$ is bounded on $L^1(\tor \times \sp Y)$. In particular, this means that $\caM_{0,0}$  is a bonafide Markov generator (i.e.\ it generates a strongly continuous (in our case even norm-continuous) semigroup)  and 
 $\e^{t \caM_{0,0}}\varphi$ is a probability density for all $t\geq 0$.
 Physically speaking, the probability density $\varphi$  is read off from the diagonal part of the density matrix $\rho$, see \eqref{eq: identification with prob measure}.

We note that the transition rates $r(k,e;k',e') $ satisfy the 'detailed balance property' at inverse temperature $\be$ for the internal energy levels $e,e'$, and at infinite temperature for the momenta $k,k'$;
\beq \label{eq: detailed balance}
r(k,e;k',e')   =    \e^{\be(e-e')} r(k,'e';k,e) 
\eeq 
Physically, we would expect overall detailed balance at inverse temperature $\be$, i.e.\ 
\beq
   r(k,e;k',e')   =     \e^{\be( E(e ,k)-E(e' ,k')) }  r(k,'e';k,e) 
\eeq
where the energy $E(k,e)$ should depend on both $e$ and $k$. 
To understand why $E$ does not depend on $k$ in \eqref{eq: detailed balance}, we recall that the kinetic energy of the particle is assumed to be of order $\la^2$;  hence, the total energy is $ e +  \la^2\ve(k)$ which reduces to $e$ in  zeroth order in $\lambda$. 

One can  associate an intuitive picture with the operator $\caM_{0,0}$.  It describes the stochastic evolution of a particle with momentum $k$ and energy $e$. The state of the particle changes from $(k,e)$ to $(k',e')$ by emitting and absorbing reservoir particles with momentum $q$ and energy $\om(q)$, such that total momentum and total energy (which does not include any contribution from $k,k'$) are conserved, see Figure \ref{fig: markov}.

\begin{figure}[h!] 
\vspace{0.5cm}
  \centering
%\psfrag{label0}{$0$}
%\psfrag{labelt}{$t$}
%\psfrag{label12}{  $\scriptstyle{(t_1,x_1, l_1)}$}
%\psfrag{label11}{ $\scriptstyle{(t_2,x_2,l_2)}$}
%\psfrag{label10}{ $\scriptstyle{(t_3,x_3, l_3)}$}
%\psfrag{label9}{ $\scriptstyle{(t_4,x_4, l_4) }$}
%\psfrag{label8}{ $\scriptstyle{(t_5,x_5, l_5) }$}
%\psfrag{label7}{ $\scriptstyle{(t_6,x_6, l_6) }$}
\psfrag{inmomentum}{ $k$}
\psfrag{outmomentum}{ $k'$}
\psfrag{emittedmomentum}{ $q$}
\psfrag{absorbedmomentum}{ $q$}
\psfrag{inlevel}{ $e$}
\psfrag{outlevel}{ $e'$}
  \subfloat[Emission of a boson, $k=k'+q$ and $e=\om(q)+e'$.] {\label{fig: markov 1} \includegraphics[width = 7cm, height=3.5cm]{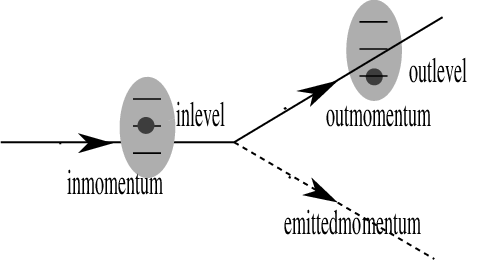}     } \hspace{0.5cm}
 \subfloat[Absorption of a boson, $k+q=k'$ and $\om(q)+ e=e'$.]   {\label{fig: markov 2} \includegraphics[width = 7cm, height=3.5cm]{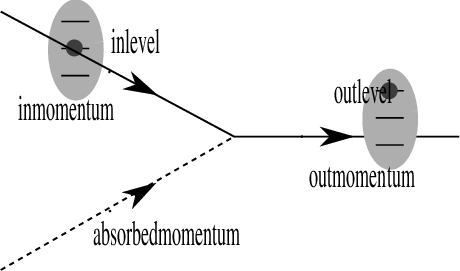}    }
\caption{ \footnotesize{The processes contributing to the gain term (the first term on the RHS in \eqref{eq: explicit caM00} ) of the operator $\caM_{0,0}$. Emission corresponds to $e>e'$ and absorption to $e<e'$. }  }
\label{fig: markov}

\end{figure}

It is clear from the collision rules in Figure \ref{fig: markov} that,  in the absence of internal degrees of freedom, the particle can only emit or absorb bosons with momentum $q=0$, and hence it cannot change its momentum.  This means that without the internal degrees of freedom, the semigroup $\La_t$ would not exhibit any diffusive motion. This is indeed the reason why we introduced these internal degrees of freedom.

\subsection{Asymptotic properties of the semigroup} \label{sec: properties of semigroup}
The following Proposition \ref{prop: properties of m} states some spectral results on the Lindblad operator $\caM$ and its restriction to momentum fibers $\caM_p  \in \scrB(\scrG)$.  These results are stated in a way that mirrors, as closely as possible, the statements of Theorem \ref{thm: main}.  

 These results are useful for two purposes. First of all, they show that our main physical  results, Theorems \ref{thm: diffusion} and \ref{thm: equipartition},  hold true if one replaces the reduced evolution $\caZ_t$ by the semigroup $\La_t$ (see the remark following Proposition \ref{prop: properties of m}).   
Second, a bound which follows directly from  Proposition \ref{prop: properties of m} will be a crucial ingredient in the proof of our main result Theorem \ref{thm: main}.  This bound is stated in \eqref{eq: bounds on markov dynamics} in Section \ref{sec: bound on markov in position representation}.   

We introduce the following sets (cf.\ (\ref{def: domain low}-\ref{def: domain high})) 
\baq
 \frD^{\footnotesize{low}}_{rw}  :=     \left\{ p \in \tor+\i \tor, \nu  \in \tor+\i \tor \Big\str \, \str  \Re p \str < p^*_{rw} ,\,  \str \Im p \str  \leq \de_{rw} , \,    \str \Im \nu \str \leq \de_{rw}      \right\}    \label{def: set D low markov}\\ [2mm]
 \frD^{\footnotesize{high}}_{rw} :=   \left\{ p \in \tor+\i \tor, \nu  \in \tor+\i \tor \Big\str \, \str  \Re p \str >  \frac{1}{2}p^*_{rw} ,\,  \str \Im p \str  \leq \de_{rw} , \,    \str \Im \nu \str \leq \de_{rw}      \right\}  \label{def: set D high markov},
\eaq
depending on positive parameters $p^*_{rw}>0$ and $\de_{rw}  >0$. The subscript  `$rw$' stands for `random walk' and it will always be used for objects related to $\La_t$.
  
 \begin{proposition}\label{prop: properties of m}
Assume Assumptions \ref{ass: analytic dispersion}, \ref{ass: space}, \ref{ass: analytic form factor} and \ref{ass: fermi golden rule}. There are positive constants $p^*_{rw} >0$ and $\de_{rw}>0$, determining  $ \frD^{\footnotesize{low}}_{rw} , \frD^{\footnotesize{high}}_{rw} $ above, such that the following properties hold 
\ben
\item{ For small fibers $p$, i.e., such that  $(p,\nu) \in  \frD^{\footnotesize{low}}_{rw} $, the  operator 
$
U_\nu\caM_p U_{-\nu} 
$ is bounded and 
has a simple eigenvalue $f_{rw}(p)$, independent of $\nu$, 
 \beq
     \sp ( U_\nu\caM_p U_{-\nu})= \{f_{rw}(p) \} \cup \Om_{p,\nu} 
 \eeq

The eigenvalue $f_{rw}(p)$ is elevated above the rest of the spectrum, uniformly in $p$, i.e., 
there is a positive $g_{rw} ^{\footnotesize{low}}  >0$ such that 
 \beq
\mathop{\sup}\limits_{(p,\nu) \in  \frD^{\footnotesize{low}}_{rw}}  \Re  \Om_{p,\nu}    <-   g_{rw} ^{\footnotesize{low}}  <   \mathop{\inf}\limits_{(p,\nu) \in  \frD^{\footnotesize{low}}_{rw}}  \Re f_{rw}(p) \leq 0
 \eeq
The one-dimensional spectral projector $U_\nu P_{rw}(p) U_{-\nu}$  corresponding to the eigenvalue $f_{rw}(p)$, is uniformly bounded:
\beq  \label{eq: bounds on projector markov}
\mathop{\sup}\limits_{ (p,\nu) \in  \frD^{\footnotesize{low}}}  \norm  U_{\nu}P_{rw}(p) U_{-\nu} \norm \leq C
\eeq
}
\item{ For large fibers $p$, i.e., such that  $(p,0) \in  \frD^{\footnotesize{high}}_{rw}$, the operator  $U_{\nu} \caM_p U_{-\nu} $  is bounded and its spectrum lies entirely below the real axis, i.e.,
\beq
\mathop{\sup}\limits_{(p, \nu) \in \frD^{\footnotesize{high}}_{rw}   }  \Re \sp \left(U_{\nu} \caM_p U_{-\nu} \right) < -g^{\footnotesize{high}}_{rw}, \qquad \textrm{for some}\, \,  g^{\footnotesize{high}}_{rw}>0
\eeq

}
\item{The function $f_{rw}(p)$, defined for all $p$ such that $(p,0) \in \frD_{rw}^{low}$, has a negative real part,  $ \Re f_{rw}(p) \leq 0$, and  satisfies 
 \beq  f_{rw}(p=0)=0, \qquad  \textrm{and} \qquad    \nabla_p f_{rw}(p)) \big\str_{p=0}=0   \label{eq: equalities f rw} \eeq 
\beq  \textrm{The Hessian} \quad D_{rw}:= (\nabla_p)^2 f_{rw}(p)\big\str_{p=0}  \quad \textrm{has real entries and is strictly positive}  \eeq

The spectral projector $P_{rw}(p=0)$ is given by
\beq
 P_{rw}(p=0) = \str \tilde\xi^{eq}_{rw} \rangle  \langle  \xi^{eq}_{rw} \str ,
\eeq
with
 \beq
\tilde\xi^{eq}_{rw}(k)=1_{\scrB(\scrS)}, \qquad \textrm{and} \qquad \xi^{eq}_{rw} (k)=  \frac{1}{(2\pi)^d} \frac{\e^{-\be Y} }{\Tr(\e^{-\be Y})}, \qquad \qquad  k \in \tor
 \eeq
 }
\een
\end{proposition}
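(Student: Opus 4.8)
The plan is to analyze the fibered Lindblad operator $\caM_p$ in two regimes — small fibers and large fibers — using the explicit momentum-space representation \eqref{def: caMp}--\eqref{def: singular jump rates}, together with the double decomposition $\caM = \oplus_a \int^\oplus_\tor \d p\, \caM_{p,a}$. The starting observation is that $\caM_{0,0}$ is a bonafide Markov generator on $L^1(\tor\times\sp Y)$ whose transition rates $r(k,e;k',e')$ are strictly positive whenever the connectedness condition \eqref{def: connectedness} holds; by Assumption \ref{ass: fermi golden rule} the associated jump process is irreducible on $\tor\times\sp Y$. Hence, by a Perron--Frobenius / ergodicity argument for Markov semigroups (of the Spohn--Frigerio type cited after Assumption \ref{ass: fermi golden rule}), $\caM_{0,0}$ has a simple eigenvalue $0$ at the top of its spectrum, with left eigenvector the constant function $1$ and right eigenvector the product measure $\frac{1}{(2\pi)^d}\frac{\e^{-\be Y}}{\Tr \e^{-\be Y}}$ (the detailed-balance stationary state, which one verifies directly from \eqref{eq: detailed balance}); the rest of its spectrum has strictly negative real part, establishing a spectral gap $g_{rw}^{low}$ at $p=0,\nu=0$. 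The remaining blocks $\caM_{0,a}$ for $a\neq 0$ carry a damping term $-\frac12(\xi W_a^*W_a + W_a^*W_a\xi)$ with no compensating gain term at $p=0$ of the right "diagonal'' type, so by a similar positivity estimate their spectrum lies strictly in the left half-plane; this is where the ``completeness'' half of Assumption \ref{ass: fermi golden rule} (that $\scrB_W$ generates $\scrB(\scrS)$) enters, ensuring there is no nontrivial stationary subspace off the diagonal.

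Next I would promote these facts to a neighborhood of $(p,\nu)=(0,0)$ by analytic perturbation theory. The operator $U_\nu \caM_p U_{-\nu}$ is, by \eqref{eq: relation kappa and fibers} and \eqref{def: caMp}, an analytic (indeed entire in $\nu$, analytic in $p$ on the relevant strip by Assumptions \ref{ass: analytic dispersion} and \ref{ass: analytic form factor}) family of bounded operators, and its restriction to the zero-$a$ fiber depends on $p$ only through the multiplication by $\i(\ve(k+\tfrac p2)-\ve(k-\tfrac p2))$ and through the conjugation. Since the eigenvalue $0$ of $\caM_{0,0}$ is isolated and simple, standard Kato perturbation theory produces an analytic simple eigenvalue $f_{rw}(p)$ and a rank-one spectral projector $P_{rw}(p)$ for small $\str\Re p\str$, independent of $\nu$ (because conjugation by $U_\nu$ does not move the spectrum — $U_\nu$ is a similarity), with uniform bounds \eqref{eq: bounds on projector markov}, \eqref{eq: bounds on projector markov} on a domain $\frD^{low}_{rw}$ of the stated form; shrinking $p^*_{rw},\de_{rw}$ if necessary preserves the gap, giving $\Re\Omega_{p,\nu}<-g_{rw}^{low}<\Re f_{rw}(p)\le 0$. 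The inequality $\Re f_{rw}(p)\le 0$ for real $\str p\str$ small follows because $\e^{t\caM_p}$ is, fiberwise, a contraction in a suitable sense inherited from the $L^1$-contractivity of $\La_t$ (Proposition \ref{prop: wc1}); $f_{rw}(0)=0$ and $\nabla_p f_{rw}(0)=0$ follow from the space-inversion symmetry \eqref{eq: space inversion zero fiber} applied to $\caM$ (i.e.\ $E\caM_p E = \caM_{-p}$, which forces $f_{rw}(p)=f_{rw}(-p)$), and the identification of $P_{rw}(0)$ is just the Perron--Frobenius data above.

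For large fibers, $(p,0)\in\frD^{high}_{rw}$, the argument is a direct dissipativity estimate rather than perturbation theory: one shows $\Re\sp(U_\nu\caM_p U_{-\nu})<-g_{rw}^{high}$ by exploiting that for $\str\Re p\str$ bounded away from $0$ the oscillatory factor $\e^{\i(\ve(k+p/2)-\ve(k-p/2))t}$ — equivalently, the multiplication operator $\i(\ve(k+\tfrac p2)-\ve(k-\tfrac p2))$ in $\caM_p$ — has a symbol that is nonconstant in $k$ (this uses the Assumption \ref{ass: analytic dispersion} hypothesis that $\upsilon\cdot\nabla\ve$ never vanishes identically and that $\ve$ has no smaller periodicity), so that the gain term of the Markov part cannot reconstruct a zero mode; a quantitative version of the "return to equilibrium fails to be stationary'' estimate, combined with the strict negativity of the $a\neq 0$ blocks, yields the uniform gap. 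The strict positivity of the Hessian $D_{rw}=(\nabla_p)^2 f_{rw}(0)$ is then obtained by second-order perturbation theory: $D_{rw}$ is expressed (à la Kubo) as an integral of the velocity autocorrelation $\nabla\ve(k)\cdot\nabla\ve(k')$ against the resolvent of $\caM_{0,0}$ restricted to the complement of its stationary state, and one checks this quadratic form is nondegenerate precisely because $\nabla\ve$ is nonconstant on $\tor$ and the jump process mixes momenta (the collision kernel $r$ spreads $k$ over a set generating $\tor$ by Assumption \ref{ass: fermi golden rule}). The main obstacle I anticipate is the large-fiber estimate: unlike the small-fiber case, there is no single isolated eigenvalue to track, so one must control the \emph{whole} spectrum of a non-self-adjoint operator uniformly in $p$ over a strip, and the singular nature of the jump rates $r_a(k,k')$ (Dirac deltas supported on spheres) means the required resolvent bounds have to be extracted carefully — presumably by writing $\caM_p$ as "free part plus relatively compact collision part'' and combining a Neumann-series argument for large $\Re z$ with a careful exclusion of the imaginary axis using the ergodicity input. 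This is exactly the analysis deferred to Appendix \ref{app: caM}.
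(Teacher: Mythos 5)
Your overall strategy coincides with the paper's: Perron--Frobenius for $\caM_{0,0}$, the observation that the $a\neq 0$ blocks are uniformly damped, analytic perturbation theory in $(p,\nu)$ near the isolated simple eigenvalue $0$, symmetry for $\nabla_p f_{rw}(0)=0$, and a Kubo-type second-order formula whose nondegeneracy rests on $\upsilon\cdot\nabla\ve\not\equiv 0$. Those parts are fine. The genuine gap is the large-fiber estimate, which you yourself flag as "the main obstacle" and then only describe programmatically ("free part plus relatively compact collision part", "Neumann series for large $\Re z$", "careful exclusion of the imaginary axis"). That is not yet an argument, and the difficulty you anticipate — controlling the whole spectrum of a non-self-adjoint operator uniformly in $p$ — is dissolved in the paper by one device you do not use: the similarity transformation $A\mapsto \hat A = \e^{\be Y/2}A\e^{-\be Y/2}$, which makes $\hat\caM_{0,0}=\hat G+L$ \emph{self-adjoint} while leaving the purely imaginary multiplication operator $K_p$ and the loss operator $L$ unchanged. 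Then for any eigenvector $\hat\varphi_p$ of $\hat\caM_{p,0}=\hat\caM_{0,0}+K_p$ one has $\Re m_p\,\|\hat\varphi_p\|^2=\Re\langle\hat\varphi_p,\hat\caM_{0,0}\hat\varphi_p\rangle\le 0$, with equality forcing $\hat\varphi_p=\hat\varphi^{eq}$ and hence $K_p\hat\varphi^{eq}=0$, i.e. $\ve(k+\tfrac p2)=\ve(k-\tfrac p2)$ for all $k$, which is excluded for $p\neq 0$ by the no-smaller-periodicity condition \eqref{eq: no smaller periodicity} (not, as you write, by $\upsilon\cdot\nabla\ve\not\equiv 0$, which is reserved for the Hessian). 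The essential spectrum is controlled by Weyl's theorem — $\hat G$ is compact because its kernel is a function of $k-k'$ whose Fourier coefficients decay for $d>1$, and $L$ is a strictly negative multiplication operator since the escape rates are bounded away from zero — and the uniformity of the gap over $\{\str\Re p\str>p^*_{rw}/2\}$ follows from compactness of $\tor$ and lower semicontinuity of the spectrum, not from any quantitative resolvent bound.

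Two smaller points. First, your justification that $f_{rw}(p)$ is independent of $\nu$ "because $U_\nu$ is a similarity" only works for real $\nu$; for the complex $\nu$ required in $\frD^{low}_{rw}$, $U_\nu$ is unbounded and $U_\nu\caM_pU_{-\nu}$ is a genuinely different bounded operator whose spectrum agrees with that of $\caM_p$ only by analytic continuation of the isolated eigenvalue (using \eqref{eq: caM has only propagation in hamiltonian}); the statement is true but needs that argument. Second, for the $a\neq 0$ blocks you should note that under the strict nondegeneracy condition of Assumption \ref{ass: fermi golden rule} each $\caM_{p,a\neq 0}$ is literally a multiplication operator in $k$ with real part $-\tfrac12(j(k,e)+j(k,e'))\le -\tfrac12\inf j<0$; no appeal to the "completeness" of $\scrB_W$ is needed there (that hypothesis enters only through the irreducibility of the $a=0$ Markov process).
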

The  conclusion of Proposition \ref{prop: properties of m}  is sketched in Figure \ref{fig: spectrum of M}. The proof of this proposition is very analogous to the proof in \cite{clarkderoeckmaes} (which, however, does not consider internal degrees of freedom). For completeness, we reproduce the proof in Appendix \ref{app: caM}.
\begin{figure}[h!] 
\vspace{0.5cm}
  \centering
\psfrag{gaplow}{ $g_{rw}^{low}$}
\psfrag{gaphigh}{ $g_{rw}^{high}$}
\psfrag{functionf}{ $f_{rw}(p)$}
\psfrag{spectrumM}{spectrum of $\caM_p$ (real part)}
\psfrag{momentump}{fiber $p$}
\psfrag{pcritical}{ $p_{rw}^*$}
 \includegraphics[width = 12cm, height=8cm]{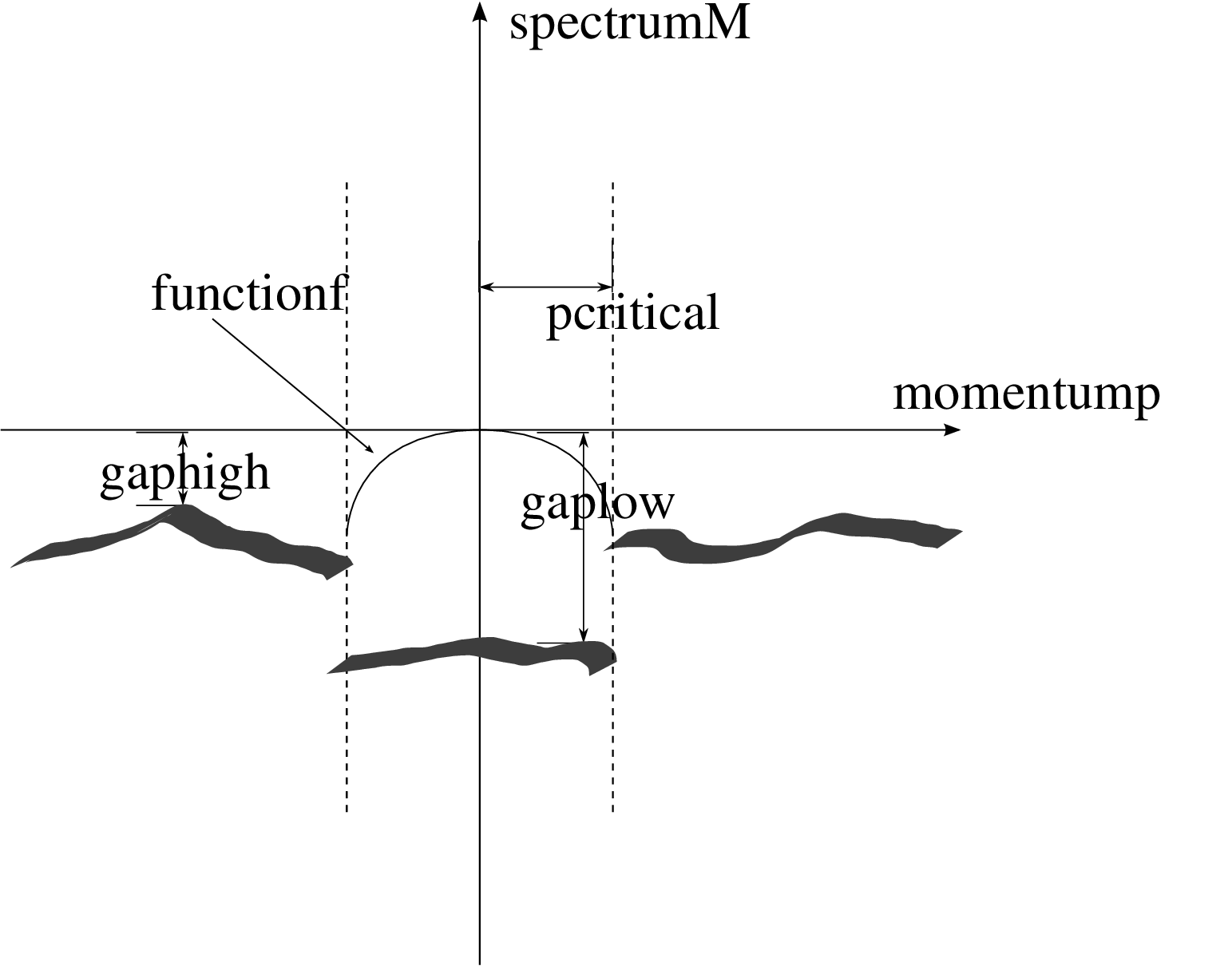}    
\caption{ \footnotesize{The spectrum of $\caM_p$ as a function of the fiber momentum $p$. Above the irregular black line, the only spectrum consists of the isolated eigenvalue $f_{rw}(p)$, in every small fiber $p$. Below the irregular black lines, we have no control. }  }
\label{fig: spectrum of M}
\end{figure}

 From Proposition \ref{prop: properties of m}, one can derive that the semigroup $\e^{t (-\i \adjoint(Y)+\caM)}$ exhibits diffusion, decoherence and equipartition. This follows by analogous reasoning as   in Sections
\ref{sec: diffusion connection}, \ref{sec: equipartition connection} and \ref{sec: decoherence connection}, but starting from Proposition \ref{prop: properties of m} instead of Theorem  \ref{thm: main}. The matrix $D_{rw}$ is the diffusion constant, the inverse decoherence length has to be chosen smaller than $\de_{rw}$ and the function $\xi^{eq}_{rw}$ is the 'equilibrium state'. 
We do not state these properties explicitly as they are not necessary for the proof of our main results.

\subsubsection{Bound on $\La_t$ in position representation}   \label{sec: bound on markov in position representation}
 By virtue of Proposition \ref{prop: properties of m}, we can write
 \baq
\left\{\La_t\right\}_p &=&     P_{rw}(p)   \e^{\la^2 f_{rw}(p)t }   +       R^{\footnotesize{low}}_{rw} (t,p)    \e^{-\la^2 g^{\footnotesize{low}}_{rw}  t }, \qquad   (p,\nu) \in \frD_{rw} ^{\footnotesize{low}}     \label{eq: expressions for Lambda1} \\[2mm]
 \left\{\La_t\right\}_p   &=&    R^{\footnotesize{high}}_{rw}(p,t)    \e^{-\la^2 g^{\footnotesize{high}}_{rw}  t }, \qquad  \qquad  \qquad  (p,\nu) \in \frD_{rw} ^{\footnotesize{high}},     \label{eq: expressions for Lambda2}  
 \eaq
 with $P_{rw}(p)$ as defined above, satisfying \eqref{eq: bounds on projector markov},  and the  operators $R^{\footnotesize{low}}_{rw} , R^{\footnotesize{high}}_{rw} $ satisfying
 \baq
\mathop{ \sup}\limits_{ (p,\nu) \in \frD_{rw} ^{low}}  \mathop{ \sup}\limits_{t \geq 0}   \norm   U_{\nu}R^{\footnotesize{low}}_{rw}(t,p)  U_{-\nu}  \norm < C   \label{eq: bound on Rlow markov}  \\[2mm]
\mathop{ \sup}\limits_{(p,\nu) \in \frD_{rw} ^{high}}  \mathop{ \sup}\limits_{t \geq 0}   \norm   U_{\nu}R^{\footnotesize{high}}_{rw}(t,p)  U_{-\nu}  \norm  <C \label{eq: bound on Rhigh markov} 
 \eaq
The appearance of the factor $\la^2$ is due to the fact that $\la^2 $ multiplies $\caM$ in the definition of the semigroup $\La_t$.  

Next, we derive estimates on $\La_t$ (see e.g.\  the bound \eqref{eq: bounds on markov dynamics} below) starting from (\ref{eq: expressions for Lambda1}-\ref{eq: expressions for Lambda2}) and (\ref{eq: bound on Rlow markov}-\ref{eq: bound on Rhigh markov}), without using explicitly the semigroup property of $\La_t$. This is important since in the proof of Lemma \ref{lem: bounds on cutoff dynamics}, we will  carry out an analogous derivation for objects which are not semigroups.

We choose $\ka =(\ka^{}_{\links}, \ka^{}_{\rechts}) \in \bbC^d \times \bbC^d $ such that $\Re\ka^{}_{\links}= \Re \ka^{}_{\rechts}=0 $ and we calculate, using  relation \eqref{eq: relation kappa and fibers}, 
\beq
 \caJ_{\ka}\La_t \caJ_{-\ka} =  \mathop{\int}\limits_{\tor }^\oplus \d p \, U_{\nu} \left\{\La_t\right\}_{p+\Delta p} U_{-\nu}, \qquad \textrm{with}\, \,   \Delta p:= \frac{\ka^{}_{\links}-\ka^{}_{\rechts}}{2}   \quad  \nu:= \frac{\ka^{}_{\links}+\ka^{}_{\rechts}}{4}
\eeq
 where we use the analyticity in $(p,\nu)$, see \eqref{eq: bounds on projector markov} and (\ref{eq: bound on Rlow markov}-\ref{eq: bound on Rhigh markov}). 
 Recall that $\{ \La\}_p$ acts on $\scrG^p \sim \scrG \sim L^2(\tor, \scrB(\scrS))$.  Our choice for $\ka$ ensures that $\Delta p$ and $\nu$ are purely imaginary.
 
 Next, we split the integration over $p \in \tor$ into small fibers ($\str p\str < p_{rw}^*$)  and large fibers ($\str p\str \geq p_{rw}^*$) by defining
 \beq
 I^{low}  :=  \{ p+ \Delta p \, \big\str    p \in \tor,  \str p\str < p_{rw}^*   \},  \qquad    I^{high}  :=   \{ p+ \Delta p \, \big\str    p \in \tor,  \str p\str \geq p_{rw}^*   \} \eeq 
 
Using the relations \eqref{eq: expressions for Lambda1} and \eqref{eq: expressions for Lambda2}, we obtain
 \baq
 \caJ_{\ka}\La_t \caJ_{-\ka} &=&   \underbrace{\mathop{\int}\limits_{I^{low} }^\oplus  \d p \,   \e^{ \la^2 f_{rw}(p) t } \,  U_{\nu}P_{rw}(p)  U_{-\nu}  }_{=:B_1} \quad +\quad   \e^{-\la^2 g^{low}_{rw}   t } \underbrace{\mathop{\int}
\limits_{I^{low} }^\oplus \d p \,   U_{\nu}R^{low} _{rw} (p,t)  U_{-\nu}  }_{=:B_2}  \label{def: B1 and B2}  \\[2mm]
& +&  \quad \e^{-\la^2 g^{high}_{rw}  t } \underbrace{\mathop{\int}\limits_{I^{high} }^\oplus \d p \, U_{\nu}R^{high} _{rw} (p,t)   U_{-\nu}   }_{=:B_3} \nonumber
 \eaq
 
We  establish decay properties of the operators $B_{1,2,3}$ in position representation.  For $B_2$ and $B_3$ we proceed as follows. Recall the duality (\ref{eq: conjugate1}-\ref{eq: conjugate2}). 
By varying $p$ and $\nu$, and using the bounds (\ref{eq: bound on Rlow markov}-\ref{eq: bound on Rhigh markov}), we obtain
\beq
 \norm (B_{2,3})_{x^{}_{\links},x^{}_{\rechts};x'_{\links},x'_{\rechts}}\norm_{\scrS} \leq C \e^{-  \frac{\ga}{2} \left\str (x'_{\links}+x'_{\rechts})- (x_{\links}+x_{\rechts})\right\str}  \e^{- \ga \left\str (x_{\links}-x_{\rechts})- (x'_{\links}-x'_{\rechts})  \right\str }
\eeq
for any $\ga < \de_{rw}$. 

For $B_1$, we need a better bound, which is attained by exploiting the fact that $P_{rw}(p)$ is a rank-1 operator with a  kernel of the form (recall the notation of \eqref{def: ket bra notation})
\beq
P_{rw}(p)(k,k') =   \Big\str \left(\xi_{rw}(p)\right)(k) \Big\rangle \Big\langle (\tilde\xi_{rw}(p))(k') \Big\str,   \qquad \textrm{for some} \quad \xi_{rw}(p), \, \tilde \xi_{rw}(p)   \in \scrG  \label{eq: rank one as kernel markov}
\eeq
where both $ \xi_{rw}(p), \tilde \xi_{rw}(p)$ are bounded-analytic functions of $k,k'$, respectively, in a strip of width $\de_{rw} $.  This follows from boundedness and analyticity of $P_{rw}(p)$ by the same reasoning as in Lemma \ref{lem: boundedness eigenvectors}.
By the definition of $B_1$ in \eqref{def: B1 and B2}  and \eqref{eq: position representation of caWp}, \eqref{eq: relation kappa and fibers},
\baq
(B_1)_{x^{}_{\links},x^{}_{\rechts};x'_{\links},x'_{\rechts}} &=& \int_{I^{low}} \d p  \,     \e^{  \la^2 f_{rw}(p) t}   \e^{-\i  \frac{p}{2}((x'_{\links}+x'_{\rechts})- (x_{\links}+x_{\rechts}))}   \\[2mm]
&& \int_{\tor+\nu}\d k  \int_{\tor+\nu} \d k'      \e^{-\i k(x_{\links}-x_{\rechts} ) +\i k' (x'_{\links}-x'_{\rechts}) }    P_{rw}(p)(k,k')   
\eaq
By  the analyticity of $P_{rw}(p)(\cdot,\cdot) $ in $k,k',p$, we derive, for $\ga < \de_{rw}$,
\beq \label{eq: bound on B 1}
 \norm (B_1)_{x_{\links},x_{\rechts};x'_{\links},x'_{\rechts}}\norm_{\scrS} \leq  C  \e^{  r_{rw}(\ga,\la) t}    \e^{-  \frac{\ga}{2} \left\str (x'_{\links}+x'_{\rechts})- (x_{\links}+x_{\rechts})\right\str} \e^{- \ga \left\str x_{\links}-x_{\rechts}  \right\str }   \e^{- \ga \left\str x'_{\links}-x'_{\rechts}  \right\str }
\eeq
where the function $r_{rw}(\ga,\la) $ is given by
\beq
r_{rw}(\ga,\la) :=  \la^2 \mathop{\sup}\limits_{\str \Im p \str \leq \ga, \str \Re p \str \leq p_{rw} ^*} \max\left( \Re f_{rw}(p), 0 \right).
\eeq
Note that
\beq \label{eq: bound on r markov}
r_{rw}(\ga,\la) := O(\la^2)O(\ga^2), \qquad \la \searrow 0, \ga \searrow 0, 
\eeq
as follows from $\Re f_{rw}(p) \leq 0$.  The bound \eqref{eq: bound on r markov} will be used to argue that  the exponential blowup in time, given by $\e^{ r_{rw}(\ga,\la)  t} $ can be compensated by the decay $\e^{- \la^2 g_{rw} t}$ by choosing $\ga$ small enough, see Section \ref{sec: strategy}.

Putting the bounds on $B_{1,2,3}$ together, we arrive at 
\baq 
\norm (\La_t)_{x_{\links},x_{\rechts};x'_{\links},x'_{\rechts}} \norm &\leq&   
 C  \e^{  r_{rw}(\ga,\la) t}  \e^{- \frac{\ga}{2} \left\str (x'_{\links}+x'_{\rechts})- (x_{\links}+x_{\rechts})\right\str}   \e^{- \ga\left\str x_{\links}-x_{\rechts}  \right\str }   \e^{- \ga \left\str x'_{\links}-x'_{\rechts}  \right\str }   \nonumber\\[2mm]
  &+&      C'  \e^{-\la^2 g_{rw} t }   \e^{-  \frac{\ga}{2} \left\str (x'_{\links}+x'_{\rechts})- (x_{\links}+x_{\rechts})\right\str}  \e^{- \ga \left\str (x_{\links}-x_{\rechts})- (x'_{\links}-x'_{\rechts})  \right\str }
 \label{eq: bounds on markov dynamics}
\eaq
for  $\ga <\de_{rw} $ and with the rate $g_{rw} := \min{(g^{\footnotesize{low}}_{rw}  ,g^{\footnotesize{high}}_{rw}  )}$.    The bound \eqref{eq: bounds on markov dynamics} is the main result of the present section and it will be used in a crucial way in the proofs.    The importance of this bound is explained in Section \ref{sec: exp decay for dressed correlation}.

 For completeness, we note that a bound like 
 \beq
 \norm (\La_t)_{x_{\links},x_{\rechts};x'_{\links},x'_{\rechts}}  \norm \leq  
  \e^{2 \la^2 q_{\ve}(\ga)  t}   \e^{- \ga \left\str x'_{\links}- x_{\links}\right\str}  \e^{- \ga \left\str x'_{\rechts}-x_{\rechts}  \right\str }
 \eeq
  can be derived simply from the fact that $\caJ_{\ka} \caM \caJ_{\ka}$ is bounded for complex $\ka$, see \eqref{eq: caM has only propagation in hamiltonian},  since
  \baq
\ka^{}_{\links} \qquad &\textrm{is conjugate to} & \qquad (x'_\links- x^{}_\links)  \\
\ka^{}_{\rechts} \qquad &\textrm{is conjugate to} & \qquad (x'_\rechts- x^{}_\rechts),
\eaq

\section{Strategy of the proofs} \label{sec: strategy}

In this section, we outline our strategy for proving the results in Section \ref{sec: result}. We start by introducing and analyzing the space-time reservoir correlation function $\psi(x,t)$. Then we introduce a perturbation expansion for the reduced evolution $\caZ_t$ (which involves the reservoir correlation function).  Afterwards, we describe and motivate the temporal cutoff that we will put into the expansion. Finally, a plan of the proof is given.

\subsection{Reservoir correlation function} \label{sec: reservoir correlation function}

A quantity that will play an important role  in our analysis is the free reservoir correlation function $\psi(x,t)$, which we define next.
Let 
\beq
I_{\sys\res}^{x}:=   \int \d q    \left( \phi(q) \e^{\i q \cdot x} 1_x  \otimes  a_q+    \overline{\phi(q)}  \e^{-\i q \cdot x}  1_x \otimes a^*_q  \right)
\eeq
where $1_x=1_x(X)$ is the projection on $\scrH_\sys$, acting as $(1_x \varphi)(x')=\delta_{x,x'} \varphi(x)$ for $\varphi \in l^2(\bbZ^d, \scrS)$.  The operator $I_{\sys\res}^{x}$ is the part of the system-reservoir coupling that acts at site $x$ after setting the matrix $W\in \scrB(\scrS)$ equal to $1$ (recall that the matrix $W$ describes the coupling of the internal degrees of freedom to the reservoir). 
We also define the time-evolved interaction term, with the time-evolution given by the free reservoir dynamics
\beq
 I_{\sys\res}^{x}(t) := \e^{\i t H_\res}  I_{\sys\res}^{x} \e^{- \i t H_\res}  
\eeq
  The reservoir correlation function $\psi$ is then defined as 
\baq
\psi(x,t) &:=&    \initialres  \left[    I_{\sys\res}^{x}(t)  I_{\sys\res}^{0}(0)   \right], \qquad  \qquad    \nonumber  \\
%&=&   \initialres  \left[     a^{*}( \e^{\i t H_\res} \phi) a(\phi) + a( \e^{-\i t H_\res} \phi) a^*(\phi)    \right]  \\
&=&   \langle  \phi^x , T_\beta \e^{\i t \omega } \phi \rangle_{\frh}   +      \langle\phi^x , (1+T_\beta) \e^{- \i t \omega } \phi\rangle_{\frh}   \nonumber  \\
&=&   \mathop{\int}\limits_\bbR \d \om \hat \psi(\om)     \e^{\i \om t}    \mathop{\int}\limits_{\bbS^{d-1}} \d s \,  \e^{\i \om s \cdot x}    \label{def: correlation function}
\eaq
where $(\phi^x)(q):= \e^{\i q\cdot x}\phi(q)$ and $\bbS^{d-1}$ is the $d-1$-dimensional hypersphere of unit radius. The `effective squared form factor' $\hat \psi$ was introduced in \eqref{def: psi}, and the density operator $T_\be$ in Section \ref{sec: thermal states}.

Assumptions \ref{ass: space} and  \ref{ass: analytic form factor} imply certain properties of the correlation function that will be primary ingredients of the proofs. We state these properties as lemmata. In fact, one could treat these properties as the very assumptions of our paper, since, in practice,  Assumptions \ref{ass: space} and  \ref{ass: analytic form factor}   will only be used to guarantee these properties, Lemmata \ref{lem: exponential decay}  and \ref{lem: integrability}. The straightforward proofs of Lemmata \ref{lem: exponential decay}  and \ref{lem: integrability} are postponed to Appendix \ref{app: reservoirs}.
 
The following lemma states that the free reservoir has exponential decay in $t$ whenever  $\str x\str/t$ is smaller than some speed $v_*$.
\begin{lemma}[Exponential decay at `subluminal' speed] \label{lem: exponential decay}
Assume Assumptions \ref{ass: space} and \ref{ass: analytic form factor}. Then there are positive constants $v_*>0, g_{\res}>0$ such that 
\beq \label{eq: decay at subluminal speed}
\str \psi(x,t)  \str   \leq   C \exp{ ( - g_\res \str t \str ) }  , \quad \textrm{if} \quad  \frac{\str x \str}{t}  \leq   v_*, \qquad  \textrm{for some constant}\, C.
\eeq	
\end{lemma}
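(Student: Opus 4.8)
The plan is to establish the bound directly from the integral representation \eqref{def: correlation function}, exploiting the analyticity of $\hat\psi$ granted by Assumption \ref{ass: analytic form factor} together with stationary-phase type estimates on the angular integral over $\bbS^{d-1}$. Write
\beq
\psi(x,t) = \mathop{\int}\limits_\bbR \d \om \, \hat\psi(\om) \, \e^{\i\om t} \, G(\om x),\qquad G(y) := \mathop{\int}\limits_{\bbS^{d-1}} \d s \, \e^{\i s\cdot y},
\eeq
so that $G$ is (up to constants) a Bessel function, $G(y) = c_d \str y\str^{-(d-2)/2} J_{(d-2)/2}(\str y\str)$, which is bounded and decays like $\str y \str^{-(d-1)/2}$ as $\str y\str\to\infty$. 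The key point is that $G$ extends to an entire function of $y \in \bbC^d$ with the bound $\str G(y)\str \leq C\e^{\str \Im y\str}$ (Paley–Wiener, since $G$ is the Fourier transform of the surface measure on the unit sphere).

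First I would shift the $\om$-contour. For fixed $x,t$ with $\str x\str \leq v_* t$ and $t>0$, replace the integration over $\bbR$ by integration over $\bbR - \i\chi$ for a small $\chi>0$ with $\str\chi\str\leq\de_\res$. By Assumption \ref{ass: analytic form factor}, $\hat\psi$ is analytic in the strip and satisfies the uniform $L^1$-bound \eqref{eq: hardy class}, so the shift is justified (the contributions at $\Re\om\to\pm\infty$ vanish because $\hat\psi$ is integrable on every horizontal line and $G$ grows only like $\e^{\str\Im\om\str \str x\str}$, which is controlled once we also track $\e^{\i\om t}$). On the shifted contour $\om = \om' - \i\chi$ one gets
\beq
\str\psi(x,t)\str \leq \e^{-\chi t} \mathop{\int}\limits_{\bbR - \i\chi} \d\om \, \str\hat\psi(\om)\str \, \str G(\om x)\str \leq C \, \e^{-\chi t} \, \e^{\chi \str x\str} \mathop{\sup}\limits_{-\de_\res \leq \chi \leq \de_\res}\mathop{\int}\limits_{\bbR+\i\chi}\d\om\,\str\hat\psi(\om)\str,
\eeq
using $\str G(\om x)\str \leq C\e^{\str\Im\om\str\str x\str} = C\e^{\chi\str x\str}$. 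Hence $\str\psi(x,t)\str \leq C\e^{-\chi(t - \str x\str)}$. Now choose $\chi = \de_\res/2$, say, and set $v_* := 1/2$ and $g_\res := \de_\res/4$: whenever $\str x\str \leq v_* t = t/2$ we have $t - \str x\str \geq t/2$, so $\str\psi(x,t)\str \leq C\e^{-(\de_\res/2)(t/2)} = C\e^{-g_\res t}$. The case $t<0$ follows by the symmetry $\psi(x,-t) = \overline{\psi(-x,t)}$ visible from \eqref{def: correlation function} (or by shifting the contour the other way), and $\str x\str$ may be replaced by $\str t\str$ throughout since on the relevant region $\str x\str \leq \str t\str$.

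The main obstacle is making the contour shift rigorous: one must verify that $G(\om x)$ really does obey the exponential bound $\str G(y)\str\leq C\e^{\str\Im y\str}$ uniformly (this is a clean Paley–Wiener statement but needs the observation that the surface measure on $\bbS^{d-1}$ is compactly supported, giving entireness of exponential type $1$), and one must control the decay of the integrand as $\Re\om\to\pm\infty$ to discard the arcs connecting the two contours — here the uniform integrability bound \eqref{eq: hardy class} does the job, possibly after an elementary truncation argument. A secondary point worth stating is that $\hat\psi(0)=0$ (also from Assumption \ref{ass: analytic form factor}) is not needed for this particular lemma but ensures the representation \eqref{def: correlation function} is well-defined at the origin; for the decay estimate it is enough that $\hat\psi$ is analytic and has the Hardy-type bound in the strip. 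Everything else is routine, and the detailed verification is what gets deferred to Appendix \ref{app: reservoirs}.
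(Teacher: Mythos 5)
Your argument is correct and is essentially the paper's proof with the order of integration reversed: the paper does the $\om$-integral first, writing $\psi(x,t)=\int_{\bbS^{d-1}}\d s\, q(t+s\cdot x)$ with $q$ the Fourier transform of $\hat\psi$, whose exponential decay $\str q(u)\str \leq C\e^{-\de_\res \str u\str}$ is precisely your contour shift justified by \eqref{eq: hardy class}, while you do the angular integral first and then shift the contour of the remaining $\om$-integral using the exponential-type bound on the Fourier transform of the sphere measure. The one slip is the direction of the shift: for $t>0$ the contour must be moved to $\bbR+\i\chi$ (not $\bbR-\i\chi$) so that $\e^{\i\om t}$ yields $\e^{-\chi t}$; with that corrected, your bound $\str\psi(x,t)\str\leq C\e^{-\chi(\str t\str - \str x\str)}$ and the choice $v_*=1/2$ agree with Appendix~\ref{app: reservoirs}.
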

 Property \eqref{eq: decay at subluminal speed} is  satisfied if the reservoir is 'relativistic', i.e., if the dispersion law $\om(q)$ of the reservoir particles is linear in the momentum $\str q\str$, temperature $\be^{-1}$ is positive and the form factor $\phi$ satisfies the infrared regularity condition  that $k \mapsto \str\phi(k) \str^2  \str k \str$ is analytic in a strip around the real axis.
The speed $v^*$ has to be chosen strictly smaller than the propagation speed of the reservoir modes given by the slope of $\om$.  In fact, the decay rate $g_{\res}$ vanishes when  $v^*$ approaches the propagation speed of the reservoir modes.  Lemma \ref{lem: exponential decay} does \textbf{not} depend on the fact that the dimension $d \geq 4$.
  
\begin{lemma}[Time-integrable correlations]  \label{lem: integrability}
Assume Assumptions \ref{ass: space} and \ref{ass: analytic form factor}. Then
\beq
   \int_{\bbR^+} \d t  \sup_{x \in \lat}\str \psi(x,t)  \str    < \infty
   \eeq
\end{lemma}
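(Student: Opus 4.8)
The plan is to start from the explicit formula \eqref{def: correlation function} for $\psi(x,t)$, namely
\beq
\psi(x,t) = \int_{\bbR} \d\om\, \hat\psi(\om)\, \e^{\i\om t} \int_{\bbS^{d-1}} \d s\, \e^{\i\om s\cdot x},
\eeq
and to split the time integral at the light cone, i.e.\ according to whether $\str x\str \leq v_* t$ or $\str x\str > v_* t$, where $v_*$ is the ``subluminal'' speed provided by Lemma \ref{lem: exponential decay}. Actually it is cleaner to split $\bbR^+$ into the region $\{t \geq \str x\str/v_*\}$ and $\{t < \str x\str/v_*\}$ \emph{for each fixed} $x$, then take the supremum over $x$ afterwards; but since we need $\sup_x$ inside the $t$-integral, the right way is to bound $\sup_x \str\psi(x,t)\str$ pointwise in $t$ by two contributions and integrate each.

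First I would handle the ``inside the cone'' part. For $x$ with $\str x\str \leq v_* t$, Lemma \ref{lem: exponential decay} gives $\str\psi(x,t)\str \leq C\e^{-g_\res t}$, which is manifestly integrable in $t$ over $\bbR^+$; this part contributes a finite constant to $\int_{\bbR^+}\d t\,\sup_{\str x\str\leq v_*t}\str\psi(x,t)\str$. The real work is the complementary region, where $t < \str x\str/v_*$, i.e.\ $\str x\str > v_* t$ is large compared with $t$. Here I would use the dispersive decay of the wave kernel: the inner integral $\int_{\bbS^{d-1}}\d s\, \e^{\i\om s\cdot x}$ is (a constant times) a Bessel function $\str x\str^{-(d-2)/2} J_{(d-2)/2}(\om\str x\str)$ up to lower-order terms, and for $\om\str x\str \gtrsim 1$ it is bounded by $C(\om\str x\str)^{-(d-1)/2}$. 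Combined with Assumption \ref{ass: analytic form factor} (which, via \eqref{eq: hardy class}, makes $\hat\psi \in L^1(\bbR)$ and in particular controls $\int \str\hat\psi(\om)\str\,\str\om\str^{-(d-1)/2}\d\om$ away from $\om=0$, while $\hat\psi(0)=0$ handles the low-frequency end), one obtains a bound of the form $\str\psi(x,t)\str \leq C\str x\str^{-(d-1)/2}$, uniformly in $t$, valid once $\str x\str$ exceeds some fixed constant. On the other hand, in the region $t < \str x\str/v_*$ we also get the improvement from integrating by parts in $\om$ using analyticity of $\hat\psi$: shifting the contour $\bbR \to \bbR + \i\,\mathrm{sgn}(t)\,\delta_\res$ and using \eqref{eq: hardy class} produces an extra factor $\e^{-\delta_\res t}\,\e^{\delta_\res s\cdot x}$; since $\str x\str > v_* t$ is the bad region, one must be careful that $\e^{\delta_\res\str x\str}$ does not ruin things — the resolution is that we only need the contour shift together with the stationary-phase/Bessel bound, and the genuinely uniform-in-$t$ estimate $\sup_x\str\psi(x,t)\str \leq C\,t^{-(d-1)/2}$ (coming from the dispersive estimate $\sup_x\str\psi(x,t)\str \leq C t^{-(d-1)/2}$ stated in Section \ref{sec: qmextended}, which requires only smoothness and decay of $\hat\psi$, already guaranteed by Assumption \ref{ass: analytic form factor}). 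Since $d \geq 4$ by Assumption \ref{ass: space}, we have $(d-1)/2 \geq 3/2 > 1$, so $\int_1^\infty t^{-(d-1)/2}\,\d t < \infty$, and the integral over $t \in [0,1]$ is finite because $\sup_x\str\psi(x,t)\str$ is bounded there (indeed $\str\psi(x,t)\str \leq \norm\phi^x\norm\,\norm(1+2T_\be)\phi\norm \leq \langle\phi,(1+2T_\be)\phi\rangle < \infty$ uniformly, using that $T_\be$ is a bounded multiplication operator away from $\om=0$ and $\hat\psi(0)=0$ controls the soft end).

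Putting the two pieces together: write $\sup_{x}\str\psi(x,t)\str \leq \sup_{\str x\str\leq v_*t}\str\psi(x,t)\str + \sup_{\str x\str > v_*t}\str\psi(x,t)\str \leq C\e^{-g_\res t} + C't^{-(d-1)/2}$ for $t \geq 1$, and $\leq C''$ for $t \leq 1$; integrating over $\bbR^+$ gives the claim. The main obstacle is establishing the uniform-in-$x$ dispersive bound $\sup_x\str\psi(x,t)\str \leq C t^{-(d-1)/2}$ cleanly — this is a stationary-phase estimate for the free wave evolution and needs the precise regularity of $\hat\psi$ from Assumption \ref{ass: analytic form factor} (smoothness plus the Hardy-class integrability \eqref{eq: hardy class}) together with $\hat\psi(0)=0$ to kill the $\om\to 0$ contribution where the phase $\om(t + s\cdot x)$ has no stationary point but the amplitude is not integrable without the vanishing at the origin; everything else is routine and, as the text notes, these bounds are proved in detail in Appendix \ref{app: reservoirs}.
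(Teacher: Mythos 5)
Your decomposition is exactly the paper's: split at the light cone, use Lemma \ref{lem: exponential decay} for $\str x\str/t\le v^*$, and prove a dispersive bound outside the cone that converts decay in $\str x\str$ into decay in $t$ via $\str x\str > v^* t$. Where you diverge is in how the dispersive bound is obtained. The paper writes $\psi(x,t)=\int_{-1}^1 \d\eta\, q(t+\eta\str x\str)\,a(\eta)$ with $a(\eta)=\mathrm{Vol}(\bbS^{d-2})(1-\eta^2)^{(d-3)/2}$ and $q$ the Fourier transform of $\hat\psi$, and performs a single integration by parts in $\eta$: the boundary terms vanish because $a(\pm1)=0$ precisely when $d>3$ (this is where $d\ge4$ enters), and the antiderivative $Q$ of $q$ is bounded, $C^1$ and exponentially decaying because $\hat\psi(0)=0$ plus analyticity makes $\hat\psi(\om)/\om$ analytic in a strip. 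Splitting the resulting $\zeta$-integral into an interior piece (bounded by $\norm Q\norm_1\cdot\norm \str x\str^{-1/2}a'(\cdot/\str x\str)\norm_\infty$) and two boundary pieces (bounded by $\norm Q\norm_\infty\cdot\norm\str x\str^{-1/2}a'(\cdot/\str x\str)\norm_1$) gives $\sup_x\str\psi(x,t)\str\le C(1+\str t\str)^{-3/2}$, which is integrable and suffices for all $d\ge4$. Your route via the Bessel asymptotics $\str\int_{\bbS^{d-1}}\e^{\i\om s\cdot x}\d s\str\lesssim(\om\str x\str)^{-(d-1)/2}$ is the standard alternative and can be made to work, but as written it is the weakest part of your argument: you ultimately invoke the bound $\sup_x\str\psi(x,t)\str\le Ct^{-(d-1)/2}$ by reference to Section \ref{sec: qmextended}, which in the paper is only a heuristic remark, not a proven estimate; and the low-frequency integrability of $\str\hat\psi(\om)\str\,\str\om\str^{-(d-1)/2}$ that your route requires is genuinely delicate for $d\ge5$ (you need to cut the $\om$-integral at $\om\sim\str x\str^{-1}$ and use the trivial bound on the sphere integral below that scale, rather than relying on $\hat\psi(0)=0$ alone). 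Also, the role of $\hat\psi(0)=0$ is slightly misstated: $\hat\psi\in L^1$ already follows from \eqref{eq: hardy class}, so the amplitude is integrable without the vanishing at the origin; what $\hat\psi(0)=0$ actually buys (in the paper's proof) is a decaying antiderivative $Q$ of $q$, i.e.\ the ability to integrate by parts once. The contour-shift idea you start and then abandon is indeed a dead end outside the cone, as you recognize. In short: right skeleton, but the one estimate that carries the lemma is left as a sketch with a couple of soft spots; the paper's $\eta$-integration-by-parts closes it in three lines.
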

 This property is  satisfied for non-relativistic reservoirs, with $\om(q) \propto \str q\str^2$, in $d\geq 3$ and for relativistic reservoirs, with $\om(q) \propto \str q \str$, in $d\geq 4$, provided that we choose the coupling to be sufficiently regular in the infrared. 

\qed

\subsection{The Dyson expansion}\label{sec: dyson expansion}

In this section, we set up a convenient notation to handle the Dyson expansion introduced in Lemma \ref{lem: definition dynamics}.

We define the group $\caU_{t}  $   on  $\scrB(\scrH_\sys)$ by
\beq \label{def: caU}
\caU_{t}S := \e^{-\i t H_\sys} S \e^{\i t H_\sys} , \qquad  S \in \scrB(\scrH_\sys),
\eeq
and the operators $\caI_{x,l}$, with $x \in \lat $ and $l \in \{\links, \rechts \}$ ($\links,\rechts$ stand for '\emph{left}' and '\emph{right}'), as
\beq \label{def: operatorI}
\caI_{x,l}S:= \left\{ \begin{array}{rll} -\i & (1_x \otimes W)  S & \qquad \textrm{if} \quad l= \links \\[1mm]  \i&
  S   (1_x  \otimes W)  & \qquad \textrm{if} \quad l = \rechts \\   \end{array} \right. \qquad   S \in \scrB(\scrH_\sys).
\eeq
where the operators $1_x \equiv 1_x (X)$ are projections on a lattice site $x \in \lat$, as  used in Section \ref{sec: reservoir correlation function}.

We write $(t_i,x_i,l_i), i=1,\ldots, 2n$ to denote  $2n$ triples in $\bbR \times \lat \times \{ \links, \rechts \}$ and we assume them to be ordered by the time coordinates, i.e., $t_i <t_{i+1}$. 
 We evaluate the Lie-Schwinger series \eqref{eq: lie schwinger}  using the properties (\ref{def: gauge invariance}-\ref{def: thermal state canonical}-\ref{eq: gaussian property}), and we arrive at
\beq\label{eq: Z with pairings}
 \caZ_{t}=     \sum_{n \in \bbZ^+}    \sum_{\pi \in \caP_n}   \,  \mathop{\int}\limits_{0 < t_1 < \ldots < t_{2n} <t } \d t_1 \ldots \d t_{2n}    \zeta(\pi,  (t_i,x_i,l_i)_{i=1}^{2n}   )  \caV_t(  (t_i,x_i,l_i)_{i=1}^{2n} )
\eeq
where $\pi \in \caP_n$ are pairings, as in (\ref{eq: gaussian property}), and we define
\beq \label{def: cav}
\caV_t( (t_i,x_i,l_i)_{i=1}^{2n}):=       \caU_{t-t_{2n}}     \caI_{x_{2n},l_{2n} }    \ldots \caI_{x_2,l_2}   \caU_{t_2-t_1} \caI_{x_1,l_1}  \caU_{t_1}
\eeq
with $\caU_t$ as in \eqref{def: caU} and 
\beq \label{def: zeta}
\zeta(\pi,  (t_i,x_i,l_i)_{i=1}^{2n})  := \prod_{(r,s) \in \pi}  \la^2    \left\{  \begin{array}{cr}   \psi (x_s-x_r, t_s-t_r )   & \qquad  l_{r}=l_s= \links \\ [1mm]  
\overline{\psi}(x_s-x_r, t_s-t_r )    & l_{r}=l_s= \rechts \\ [1mm]  
    \overline{ \psi} (x_s-x_r, t_s-t_r )  &    l_{r}= \links, l_s= \rechts \\ [1mm]  
\psi (x_s-x_r, t_s-t_r )   &    l_{r}= \rechts, l_s= \links \\ [1mm]  
  \end{array}\right.
\eeq
with the correlation function $\psi$ as defined in \eqref{def: correlation function}. We recall the convention $r<s$ for each element of a pairing $\pi$. \\
For $n=0$, the integral in \eqref{eq: Z with pairings} is meant to be equal to $\caU_t$.  
In Section \ref{sec: expansions},  we will introduce some combinatorial concepts to deal with the pairings $\pi \in \caP_n$ that are used in \eqref{eq: Z with pairings}. For convenience, we will replace the variables $(\pi, (t_i,x_i,l_i)_{i=1}^{2n}) \in \caP_n \times ([0,t]  \times \lat \times \{\links, \rechts \})^{2n}$ by a single variable $\si $ that carries the same information.

\begin{figure}[h!] 
\vspace{0.5cm}
\begin{center}
\psfrag{links}{$\links$ (left)}
\psfrag{rechts}{$\rechts$ (right)}
\psfrag{label0}{$0$}
\psfrag{labelt}{$t$}
\psfrag{label1}{  $\scriptstyle{(t_1,x_1, l_1)}$}
\psfrag{label2}{ $\scriptstyle{(t_2,x_2,l_2)}$}
\psfrag{label3}{ $\scriptstyle{(t_3,x_3, l_3)}$}
\psfrag{label4}{ $\scriptstyle{(t_4,x_4, l_4) }$}
\psfrag{label5}{ $\scriptstyle{(t_5,x_5, l_5) }$}
\psfrag{label6}{ $\scriptstyle{(t_6,x_6, l_6) }$}
\psfrag{label7}{ $\scriptstyle{(t_7,x_7, l_7) }$}
\psfrag{label8}{ $\scriptstyle{(t_8,x_8, l_8) }$}
\psfrag{label9}{ $\scriptstyle{(t_9,x_9, l_9) }$}
\psfrag{label10} { $\scriptstyle{(t_{10}, x_{10},, l_{10} )}$ }
\psfrag{label11} { $\scriptstyle{(t_{11}, x_{11}, l_{11} )}$ }
\psfrag{label12} { $\scriptstyle{(t_{12}, x_{12}, l_{12} )}$ }
\includegraphics[width = 18cm, height=6cm]{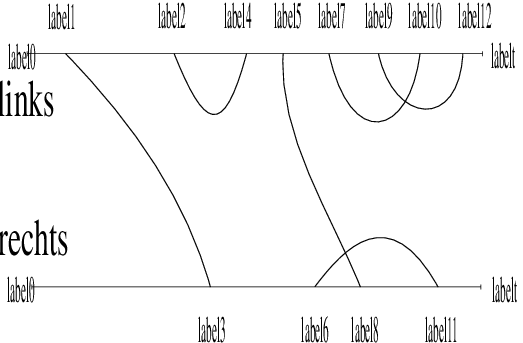}
\end{center}
\caption{ \footnotesize{Graphical representation of a  term  contributing to the RHS of \eqref{eq: Z with pairings} with $\pi=\{ (1,3), (2,4), (5,8 ), (6,10), (7,11), (9,12) \} \in \caP_6$. The times $t_i$ correspond to the position of the points on the horizontal axis.} \label{fig: explanation V} }
   \vspace{2mm}
     \begin{minipage}{18cm}
   \footnotesize{
 Starting from this graphical representation, we  can reconstruct the corresponding term in \eqref{eq: Z with pairings} - an operator on $\scrB_2(\scrH_\sys)$)-  as follows
\begin{itemize}
\item To each straight line between the points $(t_i,x_i,l_i)$ and $(t_{j},x_{j},l_{j})$, one associates the operators  $\e^{  \pm  \i     (t_{j}-t_i) H_\sys }$, with $\pm$ being $-$ for $l_{i}=l_{j}=\links$ and $+$ for $l_{i}=l_{j}=\rechts$. 
\item To each point $(t_i,x_i,l_i)$, one associates the operator $ \la^2 \caI_{x_i,l_i}$, defined in \eqref{def: operatorI}.
\item To each curved line between  the points $(t_r,x_r,l_r)$ and $(t_s,x_s,l_s)$, with $r<s$, we associate the factor $    \psi^{\#} (x_s-x_r, t_s-t_r)  $
with  $\psi^{\#}$ being $\psi$  or $\overline{\psi}$, depending on $l_r,l_s$, as prescribed  in \eqref{def: zeta}.
\end{itemize}
Rules like these are commonly called "Feynman rules" by physicists.
}
 \end{minipage}

\end{figure}

\subsection{The cut-off model} \label{sec: cut off model}
In our model, the space-time correlation function $\psi(x,t)$  does not decay exponentially in time, uniformly in space, i.e,  
\beq \label{eq: no exp decay correlation function}
\textrm{\it{there is no}} \quad g>0 \quad \textrm{ \it{such that}} \qquad   \sup_{x \in \lat}  \str \psi(x,t)   \str  \leq C  \e^{-g \str t\str } 
\eeq
\normalfont
The impossibility of choosing the form factor $\phi$ or any other model parameter such that one has exponential decay is a fundamental   consequence of  local momentum conservation, as explained in Section \ref{sec: related results}.

However, if the correlation function $\psi(x,t) $ did decay exponentially, we could set up a perturbation expansion for $\caZ_t$ around the Markovian limit $\La_t$. Such a scheme was implemented in  \cite{deroeckfrohlichpizzo}, building on an expansion introduced in \cite{deroeck}.

In the present section, we modify our model by introducing a cutoff time $\tau$ into the correlation function $\psi(x,t)$. More concretely, we modify the perturbation expansion  for $\caZ_t$ by replacing
\beq
\psi(x,t), \qquad   \longrightarrow \qquad     1_{\str t \str \leq \tau } \psi(x,t)
\eeq
The cutoff time $\tau$ will be chosen as a function of $\la$ satisfying 
\beq  \tau(\la) \to \infty,  \qquad  \tau(\la) \la  \to 0, \qquad  \textrm{as}\, \,  \la \searrow 0  \label{def: condition on tau} \eeq
However, we will take care to keep $\tau$ explicit in the estimates, until Section \ref{sec: bounds on long pairings} where the $\tau$-dependence will often be hidden in generic constants $\ctwo, \cthree$. 
 With the cut-off in place,  the  correlation function $\psi(x,t)$  decays exponentially, uniformly in $x \in \lat$, i.e., obviously,
\beq \label{eq: exp decay cutoff function}
\sup_{x \in \lat}    1_{\str t \str \leq \tau}  \str \psi(x,t) \str    \leq  C  \e^{-\frac{\str t  \str }{\tau}}.  \qquad  
\eeq
The modified reduced dynamics obtained in this way will be called  $\caZ^{\tau}_{t}$. 

%An important observation is that, for $\Re z \geq - (\tau)^{-1}$, 
%\baq
%    \int \d t  \, \e^{- t z} \sup_{x} \str \psi_\tau (x,t)\str   &\leq   &   c    \norm  \psi_\tau  \norm_1   \\
%      \int \d t \d s  \, \e^{- t z} \sup_{x} \str \psi_\tau (x,s+t)\str   &\leq   &   c   \tau   \norm  \psi_\tau  \norm_1 
%\eaq

 That is;  
\beq\label{eq: Z cutoff with pairings}  
\caZ^{\tau}_{t}=     \sum_{n \in \bbZ^+}    \sum_{\pi \in \caP_n}   \,  \mathop{\int}\limits_{0 < t_1 < \ldots < t_{2n} <t } \d t_1 \ldots \d t_{2n} \,  \zeta_\tau(\pi, (t_i,x_i,l_i)_{i=1}^{2n} )  \,  \caV_t((t_i,x_i,l_i)_{i=1}^{2n}  )
\eeq
with
\beq \label{def: zeta cutoff}
\zeta_{\tau}(\pi, (t_i,x_i,l_i)_{i=1}^{2n} )  := \left(\prod_{(r,s) \in \pi} 1_{\str t_s -t_r \str \leq \tau } \right)     \zeta(\pi, (t_i,x_i,l_i)_{i=1}^{2n} )
  \eeq

If $\tau$ is chosen to be  independent of $\la$ then one can analyze $\caZ^{\tau}_{t}$ by the technique deployed in  \cite{deroeckfrohlichpizzo}.  It turns out that for a $\la$-dependent $\tau$, one can still analyze the cutoff model by the same techniques  as long as $\la^{2} \tau(\la) \searrow 0 $ as $\la \searrow 0$, which is satisfied by our choice \eqref{def: condition on tau}. The analysis of $\caZ_t^\tau$ is outlined in Lemma \ref{lem: polymer model cutoff},  in Section \ref{sec: polymer model for cutoff dynamics}.  The main conclusion of the treatment of the cutoff model is that
\beq
\textrm{\it{The cutoff reduced dynamics}} \, \,  \caZ_t^\tau \, \,\textrm{\it{is 'close' to}} \, \, \textrm{\it{the semigroup}}  \, \,\La_t
\eeq
\normalfont
This conclusion is partially embodied in Lemma \ref{lem: bounds on cutoff dynamics}.  For the sake of this explanatory chapter, one can identify $\caZ_t^\tau$ with $\La_t$.

The reason why it is useful to treat the cutoff model  first, is that  we will perform a renormalization step, effectively replacing the free evolution $\caU_t$ in the expansion \eqref{eq: Z with pairings} by the cutoff reduced dynamics $\caZ_t^\tau$. The benefit of such a replacement is explained in Section \ref{sec: exp decay for dressed correlation}.

\subsection{Exponential decay for the renormalized correlation function} \label{sec: exp decay for dressed correlation}

\subsubsection{The joint system-reservoir correlation function} \label{sec: joint system reservoir correlation}
We recall that the free reservoir correlation function $\psi(x,t)$ does not decay exponentially in $t$, uniformly in $x$.   This was mentioned already in Section \ref{sec: cut off model} and it motivated the introduction of the temporal cutoff  $\tau$. 

In the perturbation expansion for the reduced evolution $\caZ_t$, the correlation function $\psi(x,t) $ models the propagation of reservoir modes over a space-time 'distance' $(x,t)$ and it occurs  together with terms describing the propagation of the particle.  Let us look at the lowest-order terms in the expansion of $\caZ_t$, introduced in Section \ref{sec: dyson expansion} above;
\beq
\caZ_t =  \la^2 \mathop{\int}\limits_{0<t_1<t_2 <t} \d t_2 \d t_1 \mathop{\sum}\limits_{x_1,x_2,l_1,l_2}  \,  \psi^{\#}(x_2-x_1,t_2-t_1)  \quad  \caU_{t-t_2} \caI_{x_2,l_2}   \caU_{t_2-t_1}  \caI_{x_1,l_1} \caU_{t_1} + \textrm{higher orders in $\la$}
\eeq 
with $ \psi^{\#}$ being $\psi$  or $\overline{\psi}$, as prescribed by the rules in \eqref{def: zeta}. 
It  is natural to ask whether the 'joint correlation function' 
\beq\label{eq: correlation with bare liouville}
 \psi^{\#}(x_2-x_1,t_2-t_1)  \quad   \caI_{x_2,l_2}   \caU_{t_2-t_1}  \caI_{x_1,l_1}
\eeq
has better decay properties than $\psi(x,t)$ by itself.  In particular, we ask whether \eqref{eq: correlation with bare liouville}
is exponentially decaying in $t_2-t_1$, uniformly in $x_2-x_1$. This turns out to be the case only if $l_1=l_2$ since in that case, the question essentially amounts to bounding
\beq \label{eq: correlation with bare propagator}
\left\str \psi(x_2-x_1,t_2-t_1) \right\str \times \left\norm \left(\e^{\pm \i (t_2-t_1) \la^2  \ve(P)} \right) (x_1,x_2)\right\norm
\eeq
The expression \eqref{eq: correlation with bare propagator} has exponential decay in time because 
\begin{itemize}
\item  For speed $\left\str \frac{x_2-x_1}{t_2-t_1} \right\str$ greater than some $v>0$, we estimate
\beq   \label{eq: decay for propagator at high speed}
 \left \norm  \left(\e^{\pm \i (t_2-t_1) \la^2  \ve(P)} \right) (x_1,x_2) \right\norm \leq  \e^{- (\ga v -\la^2 q_{\ve}(\ga)   )  \str t_2-t_1 \str },  \qquad \textrm{for} \, 0< \ga \leq \delta_{\ve}
\eeq 
with $\delta_{\ve}, q_{\ve}(\cdot)$ as in \eqref{eq: combes thomas} and Assumption \ref{ass: analytic dispersion}.  Hence, for fixed $v$, one can choose $\ga$ so as to make the exponent on the RHS of \eqref{eq: decay for propagator at high speed} negative, for $\la$ small enough.
\item For speeds $\left\str \frac{x_2-x_1}{t_2-t_1} \right\str$ smaller than $v^*>0$, the reservoir correlation function $\psi(x_2-x_1,t_2-t_1)$  decays with rate $g_\res$, as asserted in Lemma  \ref{lem: exponential decay} with  $v^*$ as defined therein.
\end{itemize}

%
%$\left(\e^{\pm \i (t_2-t_1) \la^2  \ve(P)} \right)(x_1,x_2)$ decays exponentially in $t_2-t_1$ as long as the 'speed' is bigger than a fixed value. In particular, for any $$, we have 

%This follows immediately from the Combes-Thomas bound . One could also get a decay rate of order $1$ instead of $\la^2$ by requiring that the speed is greater than a value of order $1$. 

% It is clear that the union of the regions of exponential decay of the two terms in \eqref{eq: correlation with bare propagator} contains all of spacetime. This is also illustrated in the figures \ref{fig: decay1} and \ref{fig: decay2},

%\begin{figure}[h!] 
%\vspace{0.5cm}
%\begin{center}
%%\psfrag{label11}{ $\scriptstyle{(t_2,x_2,l_2)}$}
%%\psfrag{label10}{ $\scriptstyle{(t_3,x_3, l_3)}$}
%%\psfrag{label6}{ $\scriptstyle{(t_7,x_7, l_7) }$}
%%\psfrag{label5}{ $\scriptstyle{(t_8,x_8, l_8) }$}
%\caption{ \footnotesize{Exponential decay in the region $\str (x_2-x_1)/ (t_2-t_1) \str \leq v^* $. The rate of decay is order $O(1)$ as $\la \searrow 0$.} \label{fig: decay1} }
%\caption{ \footnotesize{Exponential decay in the region $\str (x_2-x_1)/ (t_2-t_1) \str \geq v(\la)$  and with $v(\la) >O(\la^2)$. The rate of decay is order $O(1)$ as $\la \searrow 0$ and if $v^{**}$ is chosen of order $O(1)$.  This decay holds only if $l_2=l_2$, which is not represented on the picture} \label{fig: decay2} }
%\end{center}
%\end{figure}

%

When $l_1 \neq l_2$ in \eqref{eq: correlation with bare liouville},  there is no decay at all from $\caU_{t_2-t_1} $, in other words, the 'matrix element' 
\beq
\left(\caU_{t_2-t_1} \right)_{x^{}_{\links},x^{}_{\rechts};x'_{\links},x'_{\rechts}}
\eeq
is obviously not  decaying in the variables $x^{}_{\links}-x'_{\rechts}$ or $x^{}_{\rechts}-x'_{\links}$, since it is a function of $x^{}_{\links}-x'_{\links}$ and $x^{}_{\rechts}-x'_{\rechts}$ only. Hence, for $l_2 \neq l_1$, the joint correlation function \eqref{eq: correlation with bare liouville} has as poor decay properties as the reservoir correlation function $\psi(x,t)$. 

The situation is summarized in the following table
\begin{center}
\begin{tabular}{|c|c|c|}
\hline 
 \multicolumn{3}{|c|}{  joint $\sys-\res$ correlation fct.}  \\
 \hline 
  \multicolumn{2}{|c|}{  $\str x\str /t > v^*$}  &     $\str x\str/t \leq v^*$ \\
  \hline
   $ l_1 \neq l_2$ & $ l_1 = l_2$ &  $l_1, l_2 $ arbitrary  \\
\hline
   \parbox{3cm}{ No exp.\ decay  } &   \parbox{3cm}{\vspace{1mm} exp.\ decay from \\   $\left\norm \e^{\pm \i t H_\sys }(0,x)\right\norm $ \vspace{1mm} }&     \parbox{3cm}{  exp.\ decay from $\psi(x,t)$ } \\  
   \hline
\end{tabular}
\end{center}

\subsubsection{Renormalized joint correlation function}\label{sec: renormalized joint system reservoir correlation}

The bad decay property of the joint correlation function \eqref{eq: correlation with bare liouville} suggests
to perform a renormalization step, replacing the free propagator $\caU_t$ by the cutoff reduced dynamics $\caZ_t^\tau$, for which   \eqref{eq: correlation with bare liouville} has exponential decay when $l_1 \neq l_2$. The cutoff reduced dynamics $\caZ_t^\tau$ was introduced in Section \ref{sec: cut off model}, where we argued that it is well approximated by  the Markov semigroup $\La_t$. 
Hence, 
we   replace the group $\caU_t$ by the semigroup $\La_t$  in \eqref{eq: correlation with bare liouville}, thus obtaining a 'renormalized joint system-reservoir correlation function'. We then check that the so-defined renormalized correlation function has exponential decay in time, uniformly in space: For $\la$ small enough,
\beq \label{eq: bound for off-diagonal decay}
\left \str \psi (x'_{l_2}-x^{}_{l_1},t_2-t_1)\right\str  \times  \left\norm   \left( \La_{t_2-t_1} \right)_{x^{}_{\links},x^{}_{\rechts};x'_{\links},x'_{\rechts}}  \right\norm   \leq   \e^{- t \la^2   g_{rw}     } , \qquad  \textrm{for}  \qquad l_1,l_2 \in \left\{ \links, \rechts \right\}     \eeq
with the decay rate $g_{rw}$ as in \eqref{eq: bounds on markov dynamics}. 
To verify \eqref{eq: bound for off-diagonal decay}, we assume for concreteness  that $l_1=\links$ and $l_2=\rechts$, and we estimate by the triangle inequality
\beq
\left\str x'_\rechts -  x^{}_\links \right\str  \leq   \frac{1}{2} \left \str x'_\rechts -  x'_\links  \right\str+  \frac{1}{2}  \left\str( x'_\rechts +   x'_\links )-( x^{}_\rechts +   x^{}_\links )    \right\str  +\frac{1}{2} \left \str x^{}_\rechts -  x^{}_\links  \right\str \label{eq: triangle inequality distances}
\eeq
We note that the three terms on the RHS of \eqref{eq: triangle inequality distances} correspond (up to factors $ \frac{1}{2} $) to the three spatial arguments multiplying $\ga$ in the first line of \eqref{eq: bounds on markov dynamics}. By  \eqref{eq: triangle inequality distances}, at least one of these terms is larger than $\frac{1}{3} \left\str x'_\rechts -  x^{}_\links \right\str $. Hence we dominate  \eqref{eq: bounds on markov dynamics} by replacing that particular term by $\frac{1}{3} \left\str x'_\rechts -  x^{}_\links \right\str $. Setting all other spatial arguments in \eqref{eq: bounds on markov dynamics} equal to zero, we obtain
\beq \label{eq: bound markov left right}
\norm (\La_t)_{x_{\links},x_{\rechts};x'_{\links},x'_{\rechts}} \norm        \leq   C  \e^{  r_{rw}(\ga,\la) t}  \e^{- \frac{\ga}{6}   \left\str x'_\rechts -  x^{}_\links \right\str} +C' \e^{- (\la^2 g_{rw}) t}
\eeq
Assuming that $ \left\str x'_\rechts -  x^{}_\links \right\str \geq v^* \str t_2-t_1 \str$ and using that $ r_{rw}(\ga,\la)=O(\ga^2)O(\la^2)$, see \eqref{eq: bound on r markov}, we  choose $\ga$ such that the first term of \eqref{eq: bound markov left right} decays exponentially in $t_2-t_1$ with a rate or order $1$. Hence, at high speed ($\geq v^*$) \eqref{eq: bound for off-diagonal decay} is satisfied. 
 At low speed ($\leq v^*$), \eqref{eq: bound for off-diagonal decay} holds by the exponential decay of $\psi$ and the bound $\norm \La_t \norm \leq C\e^{ O(\la^2) t}$, which is easily derived from \eqref{eq: bounds on markov dynamics}.
   
For $l_1=l_2$, we can  apply the same reasoning, and hence \eqref{eq: bound for off-diagonal decay} is proven in general.   However, in the case $l_1=l_2$, the proof is actually simpler. We can follow the same strategy as used for bounding \eqref{eq: correlation with bare propagator}, but replacing the propagation estimate \eqref{eq: combes thomas} for $\caU_t$ by the analogous estimate \eqref{eq: caM has only propagation in hamiltonian} for $\La_t$.  
 Indeed, the  exponential decay in the case  $l_1=l_2$ was already present without the coupling to the reservoir, as explained in Section \ref{sec: joint system reservoir correlation}, whereas the decay in the case $l_1 \neq l_2$ is a nontrivial consequence of the decoherence induced by the reservoir.

\begin{center}
\begin{tabular}{|c|c|c|}
\hline
 \multicolumn{3}{|c|}{ renormalized $\sys-\res$ correlation fct.}  \\
 \hline
  \multicolumn{2}{|c|}{$\str x\str /t > v^*$}  &   $\str x\str/t \leq v^*$ \\
  \hline
   $ l_1 \neq l_2$ & $ l_1 = l_2$ &  $l_1, l_2 $ arbitrary  \\
\hline
 \parbox{3cm}{\vspace{1mm} exp.\ decay from \\  decoherence  of $ \La_t $ \vspace{1mm} } &   \parbox{3cm}{\vspace{1mm} exp.\ decay from   $\left\norm \e^{\pm \i t H_\sys }(0,x)\right\norm $ \vspace{1mm} }&     \parbox{3cm}{  exp. decay from $\psi(x,t)$ } \\  
   \hline
\end{tabular}
\end{center}

Along the same line, we note that the decay rate in \eqref{eq: bound for off-diagonal decay} cannot be made greater than $O(\la^2) $, since the effect of the reservoir manifests itself only after a time $O(\la^{-2})$.
This should be contrasted with the decay rate for  \eqref{eq: correlation with bare propagator}, which can be chosen to be independent of $\la$.

\subsection{The renormalized model} \label{sec: the renormalized model}
 
 We have argued in the previous section  that it makes sense to evaluate the perturbation expansion \eqref{eq: Z with pairings} in two steps by introducing a cutoff $\tau$ for the temporal arguments of the correlation function $\zeta$. 
 The  resulting cutoff reduced evolution $\caZ_t^\tau$ was described in Section \ref{sec: cut off model}. 
By reordering the perturbation expansion, we are able to rewrite the reduced evolution $\caZ_t$ approximatively as
 \beq \label{eq: renormalized dyson series}
\caZ_t \approx \la^2 \mathop{\int}\limits_{\footnotesize{\left.\begin{array}{c} 0< t_1<t_2 <t \\  \str t_2 -t_1 \str > \tau \end{array}\right.}  } \d t_2 \d t_1 \mathop{\sum}\limits_{x_1,x_2,l_1,l_2}  \,  \psi^{\#}(x_2-x_1,t_2-t_1)  \quad   \caZ^\tau_{t-t_2} \caI_{x_2,l_2}   \caZ^\tau_{t_2-t_1}  \caI_{x_1,l_1}  \caZ^\tau_{t_1} \, + \,  \textrm{higher orders in $\la$}
\eeq 
where the restriction that $t_2-t_1 > \tau$ reflects the fact that the short diagrams have been resummed.  Note that it is somewhat misleading to call the remainder of the perturbation series 'higher order in $\la$', since $\tau$ will be $\la$-dependent, too.

The main tools in dealing with the renormalized model are
\ben
\item{ The exponential decay of the renormalized joint correlation function, as outlined in Section \ref{sec: exp decay for dressed correlation}.  This property holds true thanks to the decoherence in the Markov semigroup $\La_t$ and the exponential decay for low (`subluminal') speed of the bare reservoir correlation function. The latter  is a consequence of the fact that the dispersion law of the reservoir modes is linear (see Lemma \ref{lem: exponential decay}).  The necessity of the exponential decay of the renormalized joint correlation function for the final analysis will become apparent in Lemma \ref{lem: bound min irr}.}
\item{ The integrability in time of the correlation function,  uniformly in space, as stated in Lemma \ref{lem: integrability}.  This property allows us to sum up all subleading diagrams in the renormalized model. This will be made more explicit in Section \ref{sec: from irreducible long to minimal}, in particular in Lemma \ref{lem: bound irr by min irr}.
}
 \een
 The most convenient  description of the renormalized model will be reached at the end of Section \ref{sec: bounds on long pairings} and the beginning of Section \ref{sec: model only long}, where a representation  in the spirit of \eqref{eq: renormalized dyson series} is discussed. The treatment of the renormalized model is contained in Section \ref{sec: model only long}.

 \subsection{Plan of the proofs}

 In Section  \ref{sec: polymer models}, we present the analysis of the cutoff reduced dynamics $\caZ_t^\tau$ and the full reduced dynamics $\caZ_t$, starting from bounds that are obtained in later sections. The main ingredient of this analysis is spectral perturbation theory, contained in Appendix \ref{app: spectral}. 
 
 In Section \ref{sec: expansions}, we introduce Feynman diagrams and we use them to derive convenient expressions for  the cutoff reduced dynamics $\caZ_t^\tau$ and the full reduced dynamics $\caZ_t$. We will distinguish between \emph{long} and \emph{short} diagrams. The cutoff reduced dynamics contains only short diagrams.
 
 Section \ref{sec: bounds on long pairings} contains the analysis of the short diagrams. In particular, we prove the bounds on $\caZ_t^\tau$, which were used in Section \ref{sec: polymer models}.
 
In Section \ref{sec: model only long}, we deal with the long diagrams. In particular, we prove the bounds on $\caZ_t$ from Section \ref{sec: polymer models}.
At the end of the paper, in Section \ref{sec: list of parameters}, we collect the most important constants and parameters of our analysis.
  A flow chart of the proofs is presented in Figure \ref{fig: planofproof1}.
 \begin{figure}[h!] 
\vspace{0.5cm}
\begin{center}
\psfrag{spectral}{\begin{tabular}{c}  \parbox{3cm}{Spectral perturbation theory:  Lemma \ref{lem: spectral abstract}   } \end{tabular} }
\psfrag{analysisLindblad}{\begin{tabular}{c}  \parbox{3cm}{ Spectral analysis $\caM$: Proposition \ref{prop: properties of m}  } \end{tabular} }
\psfrag{controlexcutoff}{\begin{tabular}{c}  \parbox{3cm}{ Control of short diagrams: Lemma \ref{lem: polymer model cutoff}  } \end{tabular} }
\psfrag{analysiscutoff}{\begin{tabular}{c}  \parbox{3cm}{Analysis of the cutoff model: Lemma \ref{lem: bounds on cutoff dynamics}  } \end{tabular} }
\psfrag{controlexful}{\begin{tabular}{c}  \parbox{3cm}{ Control of long diagrams: Lemma \ref{lem: polymer model full} } \end{tabular} }
\psfrag{analysisfull}{\begin{tabular}{c}  \parbox{3cm}{Analysis of the full model: Theorem \ref{thm: main} } \end{tabular} }
%\psfrag{straight}{\begin{tabular}{c}  \parbox{3cm}{Sections \ref{sec: diffusion connection}, \ref{sec: equipartition connection} and \ref{sec: decoherence connection} } \end{tabular} }
\psfrag{mainresult}{\begin{tabular}{c}  \parbox{3cm}{Main results: Theorems \ref{thm: diffusion} and \ref{thm: equipartition} } \end{tabular} }
\psfrag{integrability}{\begin{tabular}{c}  \parbox{3cm}{Properties of $\psi(x,t)$: Lemmata \ref{lem: integrability} and \ref{lem: exponential decay}} \end{tabular} }
\includegraphics[width = 16cm, height=12cm]{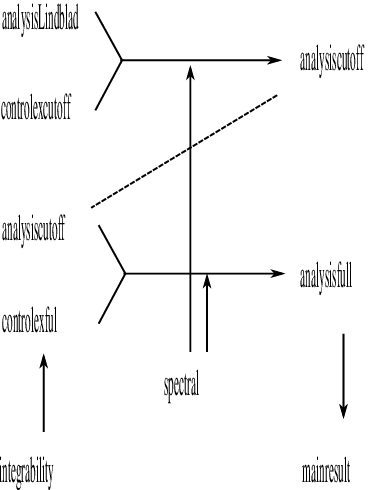}
\vspace{0.4cm}
\caption{A flow chart of the proofs. The arrows mean ``implies". The arrows pointing to arrows specify the proof of the implication.} \label{fig: planofproof1} 
\end{center}

\end{figure}

\section{Large time analysis of the reduced evolution $\caZ_t$ and the cutoff reduced evolution $\caZ_t^\tau$} \label{sec: polymer models}

In this section, we analyze the evolution operators  $\caZ_t$ and $\caZ_t^\tau$ starting from bounds on their Laplace transforms 
\beq
    \caR^{\tau}(z) := \mathop{\int}\limits_{\bbR^+} \d t  \, \e^{-tz }\caZ^{\tau}_{t}   \label{def: first laplace trafo cutoff}
\eeq
and
\beq
    \caR(z) := \mathop{\int}\limits_{\bbR^+} \d t  \, \e^{-tz }\caZ_{t}.    \label{def: first laplace trafo}
\eeq
These bounds are proven by diagrammatic expansions  in Sections  \ref{sec: expansions}, \ref{sec: bounds on long pairings} and \ref{sec: model only long}.
However, the present section is written in such a way that one can ignore these diagrammatic expansions and consider the bounds on $\caR^\tau(z)$ and $\caR(z)$ as an abstract starting point. Our results, Lemma \ref{lem: bounds on cutoff dynamics}, and Theorem \ref{thm: main}, follow from these bounds by an application of the inverse Laplace transform and  spectral perturbation theory.   For convenience, these tools are collected in  Lemma \ref{lem: spectral abstract}  in Appendix \ref{app: spectral}.

  \subsection{Analysis of $\caZ_t^\tau$} \label{sec: polymer model for cutoff dynamics}
  
  Our main tool in the study of $    \caR^{\tau}(z) $ is Lemma \ref{lem: polymer model cutoff} below.  Loosely speaking, the important consequence of this lemma is the fact  that we can represent the Laplace transform $\caR^{\tau}$, defined in \eqref{def: first laplace trafo cutoff}, as 
\beq  \label{eq: essence polymer cutoff}
 \caR^{\tau}(z) = (z - (- \i \adjoint(Y) +\la^2\caM+ \caA^\tau(z)))^{-1},
\eeq
where  the operator $ \caA^{\tau}(z)   $ is  ``small" wrt.\ $\la^2 \caM$, in a sense specified by the theorem.  Note that if we set $\caA^\tau(z)=0$, then the RHS of \eqref{eq: essence polymer cutoff} is the Laplace transform of the Markov semigroup 
$\La_t$.  This is consistent with the claim that $\caZ_t^\tau$ is 'close to' $\La_t$.

The subscripts 'ld' and 'ex', introduced below, stand for "ladder" and "excitations", respectively. These subscripts will acquire an intuitive meaning in  Section \ref{sec: expansions} when the diagrammatic representation of the expansion is introduced. The (sub)superscript $\tau$ indicates the dependence on the cutoff $\tau$, but sometimes we will also use the (sub)superscript $c$. This will be done for quantities that are designed for the cut-off model but that do not necessarily change when $\tau$ is varied.     Lemma \ref{lem: polymer model cutoff} can be stated for any $\tau$, but, as announced, it will be used for a $\la$-dependent $\tau$.

\begin{lemma}\label{lem: polymer model cutoff}
For $\la$ small enough, there are operators $ \caR_{\mathrm{ex}}^{\tau}(z)$ and  $\caR_{\mathrm{ld}}^{\tau}(z)$ in $\scrB(\scrB_2(\scrH_\sys))$, depending on $\la$ and $\tau$, satisfying the following properties:
\ben
\item{ For   $\Re z$ sufficiently large, the integral in \eqref{def: first laplace trafo cutoff} converges absolutely in $\scrB(\scrB_2(\scrH_\sys))$ and
\beq \label{eq: expansion for resolvent}
 \caR^{\tau}(z) = ( z -(-  \i \adjoint(H_\sys) + \caR_{\mathrm{ld}}^{\tau}(z) +\caR_{\mathrm{ex}}^{\tau}(z) ) )^{-1}.
\eeq 
}
\item{ The operators  $\caR_{\mathrm{ld}}^{\tau}(z),  \caR_{\mathrm{ex}}^{\tau}(z) $  are analytic in $z$ in the domain $\Re z > -\frac{1}{2 \tau}$. Moreover, there is a positive constant $\delta_{1} >0$ such that
\beq
 \mathop{\sup}\limits_{ \str \Im\ka^{}_{\links,\rechts} \str \leq \delta_1, \Re z >-\frac{1}{2\tau} }  \left\{ \begin{array}{clll}    
  \norm    \caJ_{\ka}  \caR_{\mathrm{ex}}^{\tau}(z) \caJ_{-\ka}  \norm   & = &O(\la^2) O(\la^2  \tau), & \qquad \la^2  \tau \searrow 0, \la \searrow 0   \\[3mm]
 \norm  \caJ_{\ka}\caR_{\mathrm{ld}}^{\tau}(z)\caJ_{-\ka} \norm   &  \leq &  \la^2 C  &      \end{array} \right.
     \label{eq: bounds on ladders and excitations cutoff} 
    \eeq
%    where $ c_1(\la)=o(\la^2)$ and  $c_2(\la^2  \tau)= C$.
}

\item{ Recall the operator $\caL(z)$, introduced in Section \ref{sec: construction of semigroup}.  It satisfies 
\beq
 \mathop{\sup}\limits_{ \str \Im\ka_{\links,\rechts} \str \leq \delta_1, \Re z \geq 0 }  \left \norm  \caJ_{\ka} \left( \caR_{\mathrm{ld}}^{\tau}(z )  - \la^2 \caL(z) \right) \caJ_{-\ka} \right \norm   \leq   \la^2 C   \int_{\tau}^{+\infty} \d t  \sup_{x}  \str\psi(x,t)\str+ \la^4 \tau C'  \label{eq: closeness of ladders and caL} 
\eeq
 }

\een

\end{lemma}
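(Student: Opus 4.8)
## Proof plan for Lemma 4.4 (the polymer-model representation of $\caR^\tau$)

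The plan is to build the Dyson expansion for $\caZ^\tau_t$ from \eqref{eq: Z cutoff with pairings}, organize the pairings into a polymer-gas representation, and read off the three claimed statements from convergence of that gas. First I would set up the notation for \emph{short} diagrams: a pairing $\pi$ together with its time/space/side data $(t_i,x_i,l_i)$ in which every pair $(r,s)\in\pi$ satisfies $|t_s-t_r|\le\tau$. Because of the cutoff \eqref{eq: exp decay cutoff function}, the weight $\zeta_\tau$ factorizes over pairs with an exponential time-decay $\e^{-|t_s-t_r|/\tau}$, and by Lemma \ref{lem: integrability} the spatial sums converge. The operator factors $\caU_{t_i-t_{i-1}}\caI_{x_i,l_i}$ from \eqref{def: cav} are controlled through the Combes--Thomas bound \eqref{eq: combes thomas}/\eqref{eq: bound on matrix elements laplacian}, so that introducing the conjugation $\caJ_\ka$ with $|\Im\ka_{\links,\rechts}|\le\delta_1$ only costs a factor $\e^{q_\ve(\delta_1)|t|}$ per free-evolution segment, which can be absorbed into the cutoff decay provided $\delta_1$ is small enough relative to $1/\tau$ (this is where the analyticity Assumption \ref{ass: analytic dispersion} enters). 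The role of $\delta_1$ is fixed here once and for all.

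Next I would decompose each short diagram into \emph{irreducible} components. Say a time $u\in(0,t)$ is a \emph{cut} if no pair $(r,s)\in\pi$ straddles it, i.e.\ there is no pair with $t_r<u<t_s$; the maximal sub-diagrams between consecutive cuts are the irreducible pieces. This is the standard cluster/polymer decomposition: $\caZ^\tau_t$ becomes a sum over ordered sequences of irreducible pieces interspersed with free evolutions $\caU$. Taking the Laplace transform turns the convolution structure into a geometric series, giving precisely \eqref{eq: expansion for resolvent} with $\caR^\tau_{\mathrm{ld}}(z)+\caR^\tau_{\mathrm{ex}}(z)+\i\adjoint(H_\sys)\cdot 0$ playing the role of the ``self-energy'' $-\i\adjoint(H_\sys)+(\text{irreducible blocks})$; the $-\i\adjoint(H_\sys)$ term is the $n=0$ contribution $\caU_t$. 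I would then \emph{define} $\caR^\tau_{\mathrm{ld}}(z)$ to be the Laplace transform of the sum of the single ``ladder'' rung — the irreducible piece consisting of exactly one pair — and $\caR^\tau_{\mathrm{ex}}(z)$ the Laplace transform of the sum of all irreducible pieces containing two or more pairs (``excitations''). Each such piece carries at least $\la^2$ per pair; the ladder rung carries exactly $\la^2$, and any excitation carries at least $\la^4$. Summing the geometric series of irreducible blocks requires smallness, which is exactly $\la^2\tau\searrow 0$: each extra rung beyond the first contributes a factor $O(\la^2)\int_0^\infty \d t\,\sup_x|\psi(x,t)|\cdot(\text{length in time})$, and the length is capped by $\tau$ because of the cutoff, giving the $O(\la^2)O(\la^2\tau)$ in \eqref{eq: bounds on ladders and excitations cutoff}. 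Analyticity in $z$ for $\Re z>-\tfrac{1}{2\tau}$ follows because each irreducible piece is supported on a time interval of length $\le$ (number of pairs)$\times\tau$ but weighted by $\e^{-(\text{number of pairs})|t|/\tau}$ through $\zeta_\tau$; the net exponential decay rate in the total time of a block is $\ge \tfrac{1}{2\tau}$ after absorbing the Combes--Thomas loss, so the Laplace integral converges for $\Re z>-\tfrac{1}{2\tau}$. Convergence of \eqref{def: first laplace trafo cutoff} for large $\Re z$ is immediate from the trivial bound $\norm\caZ^\tau_t\norm\le C\e^{C'\la^2 t}$.

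For statement (3), I would compare $\caR^\tau_{\mathrm{ld}}(z)$ with $\la^2\caL(z)$. Recall from Section \ref{sec: construction of semigroup} that $\la^2\caL(z)$ is (the dual of) the Laplace transform of $-\la^2\initialres(\adjoint(H_{\sys\res})\,\e^{\i t\adjoint(Y+H_\res)}\,\adjoint(H_{\sys\res})(\cdot\otimes1))$, which is nothing but the single-rung irreducible block \emph{without} the time cutoff and with the \emph{free internal} evolution $\e^{-\i t\adjoint Y}$ rather than the full $\caU_t=\e^{-\i t\adjoint H_\sys}$ on the rung — but on a single rung $H_\sys=Y+\la^2\ve(P)$ and the $\la^2\ve(P)$ part only produces higher-order corrections. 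Hence $\caR^\tau_{\mathrm{ld}}(z)-\la^2\caL(z)$ splits into two errors: (i) the tail $\int_\tau^\infty$ coming from restoring/removing the cutoff $1_{|t|\le\tau}$ on the single rung, which is bounded by $\la^2 C\int_\tau^\infty\d t\,\sup_x|\psi(x,t)|$; and (ii) the replacement of $\e^{-\i t\adjoint Y}$ by $\caU_t$ on the rung and the Laplace-transform bookkeeping, each factor of $\la^2\ve(P)$ costing an extra $\la^2$ and each rung being of time-length $\le\tau$, giving $\la^4\tau C'$. That is exactly \eqref{eq: closeness of ladders and caL}. The boundedness $\norm\caJ_\ka\caR^\tau_{\mathrm{ld}}(z)\caJ_{-\ka}\norm\le\la^2 C$ is then immediate from $\norm\caJ_\ka\caL(z)\caJ_{-\ka}\norm=\norm\caL(z)\norm<\infty$ of Proposition \ref{prop: wc1} (equation \eqref{eq: caL defined on real axis}) plus the $\la^4\tau$ correction.

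The main obstacle — and the place where the argument is genuinely delicate rather than bookkeeping — is proving the \emph{uniform} convergence of the polymer (cluster) expansion after the $\caJ_\ka$ conjugation, i.e.\ showing that the sum over all irreducible diagrams with $\ge 2$ pairs, with the spatial arguments deformed into the complex domain $|\Im\ka_{\links,\rechts}|\le\delta_1$, is $O(\la^4\tau)$ uniformly in $z$ with $\Re z>-\tfrac1{2\tau}$. This needs: (a) the exponential low-speed decay of $\psi$, Lemma \ref{lem: exponential decay}, to beat the Combes--Thomas growth $\e^{q_\ve(\delta_1)|t|}$ on the free segments inside a block for the ``diagonal'' ($l_r=l_s$) rungs; (b) the time-integrability, Lemma \ref{lem: integrability}, to perform the $2n$-fold time integrals at the cost of a factor $(C\int\sup_x|\psi|)^n$ per $n$-rung block; (c) a combinatorial estimate on the number of irreducible pairings on $2n$ points, which is $\le (2n)!/(2^n n!)\cdot(\text{const})^n$ and is killed by the $1/(2n)!$ from the time-ordered integration together with the smallness $\la^{2n}\tau^{n-1}$. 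Carrying this out carefully — in particular getting the $\tau$-dependence to be exactly $O(\la^2\tau)$ per extra rung and not worse — is the technical heart, and it is precisely what Sections \ref{sec: expansions} and \ref{sec: bounds on long pairings} are devoted to; here I would only state the estimate and defer its proof there, which is consistent with the structure announced in the excerpt.
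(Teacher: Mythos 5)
Your proposal is correct and follows essentially the same route as the paper: decompose the cutoff Dyson series into irreducible blocks, sum the geometric series in the Laplace variable to get \eqref{eq: expansion for resolvent}, define ladders as one-pair blocks and excitations as blocks with $\geq 2$ pairs, extract the factor $O(\la^2)O(\la^2\tau)$ from the fact that each pair beyond the first is pinned to a time window of length $\leq\tau$ (this is the content of Lemmata \ref{lem: bound on small pairings} and \ref{lem: combinatorics of irreducible diagrams}), and obtain \eqref{eq: closeness of ladders and caL} from the tail $\int_\tau^\infty\sup_x|\psi|$ plus the $\la^2\adjoint(\ve(P))$ correction on a rung of length $\leq\tau$. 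One phrasing should be corrected: the Combes--Thomas growth on a free segment is $\e^{\la^2 q_{\ve}(\delta_1)|t|}$ (the kinetic energy carries the factor $\la^2$), so it is beaten by the cutoff decay because $\la^2\tau\searrow 0$, not because $\delta_1$ is taken small relative to $1/\tau$ --- this is what allows $\delta_1$ to be a fixed positive constant as required by the statement.
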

The proof of this lemma is given in Section \ref{sec: bounds on long pairings}.

From Lemma \ref{lem: polymer model cutoff}, one  can deduce, by spectral methods, that $ \caZ^{\tau}_{t}$ inherits some of the   properties of the Markovian dynamics $\La_t$.
Instead of stating explicitly all possible results about $\caZ_t^{\tau}$, we restrict our attention to  Lemma \ref{lem: bounds on cutoff dynamics}, in particular, to the bound  \eqref{eq: bounds on cutoff dynamics}. This bound is the analogue of the bound \eqref{eq: bounds on markov dynamics} for the semigroup dynamics $\La_t$, and it will be used heavily in the analysis of $\caZ_t$ in Section \ref{sec: bounds on long pairings}. 

\begin{lemma}\label{lem: bounds on cutoff dynamics}
Let the cutoff reduced evolution $\caZ_t^\tau$ be as defined in Section \ref{sec: cut off model}, with the cutoff time $\tau=\tau(\la)$ satisfying \eqref{def: condition on tau}.  Then there
 are positive numbers $\delta_{c}>0$, $\la_c>0$ and $g_{c}>0$ such that, for $0< \str\la\str < \la_c$ and  $\ga \leq \delta_{c}$, 
\baq 
 \left\norm\left( \caZ^{\tau}_{t} \right)_{x^{}_{\links},x^{}_{\rechts};x'_{\links},x'_{\rechts}} \right\norm_{\scrB_2(\scrS)} &\leq&   c_{\caZ}^1 \,       \e^{ r_{\tau}(\ga,\la)  t}    \e^{- \frac{\ga}{2} \left\str (x'_{\links}+x'_{\rechts})- (x_{\links}+x_{\rechts})\right\str}   \e^{- \ga \left\str x^{}_{\links}-x^{}_{\rechts}  \right\str }   \e^{- \ga \left\str x'_{\links}-x'_{\rechts}  \right\str }  \nonumber  \\
 & + &      c_{\caZ}^2 \,   \e^{- \la^2 g_{c} t }    \e^{- \frac{\ga}{2} \left\str (x'_{\links}+x'_{\rechts})- (x^{}_{\links}+x^{}_{\rechts})\right\str}   \e^{- \ga \left\str (x^{}_{\links}-x^{}_{\rechts} )-( x'_{\links}-x'_{\rechts}  )  \right\str },   \label{eq: bounds on cutoff dynamics}
\eaq 
for constants $c_{\caZ}^1,c_{\caZ}^2 >0$,  and with
\beq
  r_\tau(\ga,\la) = O(\la^2)O(\ga^2) + o(\la^2) \qquad   \la \searrow 0, \ga  \searrow 0 \label{eq: bound on r tau}
\eeq
where the bound $o(\la^2)$ is uniform for $\ga \leq \delta_c$. 
\end{lemma}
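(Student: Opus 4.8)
\textbf{Proof plan for Lemma \ref{lem: bounds on cutoff dynamics}.}

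The plan is to deduce the position-space bound \eqref{eq: bounds on cutoff dynamics} from the resolvent representation \eqref{eq: expansion for resolvent} in Lemma \ref{lem: polymer model cutoff} together with spectral perturbation theory (Lemma \ref{lem: spectral abstract}) and the inverse Laplace transform, following the \emph{same} logic that was used to derive \eqref{eq: bounds on markov dynamics} for the genuine semigroup $\La_t$ in Section \ref{sec: bound on markov in position representation}. First I would conjugate by $\caJ_\ka$: using that $\caJ_\ka(-\i\adjoint(H_\sys))\caJ_{-\ka}$ differs from $-\i\adjoint(H_\sys)$ by an operator of norm $\leq q_\ve(\str\Im\ka_\links\str)+q_\ve(\str\Im\ka_\rechts\str)$ (cf.\ \eqref{eq: caM has only propagation in hamiltonian}), and that $\caJ_\ka\caR^\tau_{\mathrm{ld}}(z)\caJ_{-\ka}$, $\caJ_\ka\caR^\tau_{\mathrm{ex}}(z)\caJ_{-\ka}$ are bounded uniformly for $\str\Im\ka_{\links,\rechts}\str\leq\delta_1$, $\Re z>-\frac1{2\tau}$ by \eqref{eq: bounds on ladders and excitations cutoff}, I get an expression for $\caJ_\ka\caR^\tau(z)\caJ_{-\ka}$ as the resolvent of $\caJ_\ka(-\i\adjoint(H_\sys)+\la^2\caM+\caA^\tau(z))\caJ_{-\ka}$ where $\caA^\tau:=\caR^\tau_{\mathrm{ld}}-\la^2\caM+\caR^\tau_{\mathrm{ex}}$ is $o(\la^2)$ by combining \eqref{eq: bounds on ladders and excitations cutoff}, \eqref{eq: closeness of ladders and caL}, the defining relation between $\caM$ and $\caL$ in \eqref{eq: def of M}, Lemma \ref{lem: integrability} (so $\int_\tau^\infty\d t\sup_x\str\psi(x,t)\str\to0$), and $\la^4\tau=o(\la^2)$ under \eqref{def: condition on tau}.

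Next, passing to the fiber decomposition $\caR^\tau(z)=\int^\oplus\d p\,\{\caR^\tau(z)\}_p$ and using \eqref{eq: relation kappa and fibers} to replace conjugation by $\caJ_\ka$ with conjugation by $U_\nu$ and a shift $p\mapsto p+\Delta p$ (as in Section \ref{sec: bound on markov in position representation}), I would apply Lemma \ref{lem: spectral abstract}: for small fibers $(p,\nu)\in\frD^{low}_{rw}$ the operator $U_\nu\caM_p U_{-\nu}$ has the isolated simple eigenvalue $f_{rw}(p)$ separated by $g^{low}_{rw}$ from the rest of its spectrum (Proposition \ref{prop: properties of m}), so the $o(\la^2)$ perturbation $U_\nu\{\caA^\tau(z)\}_{p}U_{-\nu}$ only perturbs this picture by $o(\la^2)$ — yielding a perturbed simple eigenvalue $f^\tau(p,\la)$ with $\Re f^\tau(p,\la)\leq o(\la^2)$ near $p=0$ (the sign of the leading term being controlled by $\Re f_{rw}(p)\leq0$ and the symmetry/positivity properties of $\caZ^\tau_t$, exactly as in Proposition \ref{prop: symmetries of f}) and a rank-one spectral projector $P^\tau(p,\la)$ uniformly bounded and bounded-analytic; for large fibers $(p,0)\in\frD^{high}_{rw}$ the spectrum of $U_\nu\caM_p U_{-\nu}$ lies below $-g^{high}_{rw}$, and the $o(\la^2)$ perturbation keeps it below, say, $-\frac12\la^2 g^{high}_{rw}$. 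Contour-shifting the inverse Laplace transform $\caZ^\tau_t=\frac1{2\pi\i}\int\e^{tz}\caR^\tau(z)\d z$ past the relevant part of the spectrum then produces, fiber by fiber, the decomposition $\{\caZ^\tau_t\}_p=\e^{f^\tau(p,\la)t}P^\tau(p,\la)+R^{low,\tau}(t,p)\e^{-\la^2 g_c t}$ for small $p$ and $\{\caZ^\tau_t\}_p=R^{high,\tau}(t,p)\e^{-\la^2 g_c t}$ for large $p$, with uniformly bounded, bounded-analytic remainders, for a suitable $g_c\leq\frac12\min(g^{low}_{rw},g^{high}_{rw})$; here one must be slightly careful because $\caR^\tau(z)$ is only shown analytic for $\Re z>-\frac1{2\tau}$, but since $\la^2 g_c\ll\frac1{2\tau}$ under \eqref{def: condition on tau} this strip is wide enough to absorb the contour shift. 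Finally, writing $P^\tau(p,\la)$ via its rank-one kernel (a Lemma \ref{lem: boundedness eigenvectors}-type argument gives bounded-analyticity of $\xi^\tau(p,\la)(k)$, $\tilde\xi^\tau(p,\la)(k')$ in $\str\Im k\str,\str\Im k'\str<\delta_c$) and transcribing to position space by the triangle-inequality bookkeeping of Section \ref{sec: bound on markov in position representation} — shifting the $k$-contour gives the $\e^{-\ga\str x_\links-x_\rechts\str}\e^{-\ga\str x'_\links-x'_\rechts\str}$ factors for the $A_1$-type term, with the residual time-growth collected into $r_\tau(\ga,\la)=\la^2\sup_{\str\Im p\str\leq\ga,\str\Re p\str\leq p^*_{rw}}\Re f^\tau(p,\la)=O(\la^2)O(\ga^2)+o(\la^2)$ exactly as in \eqref{eq: bound on r markov}, while the $R^{low,\tau},R^{high,\tau}$-terms give the $\e^{-\la^2 g_c t}$ part — yields \eqref{eq: bounds on cutoff dynamics} and \eqref{eq: bound on r tau}.

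The main obstacle I expect is \emph{not} the spectral bookkeeping, which is essentially a rerun of Section \ref{sec: bound on markov in position representation} and Proposition \ref{prop: symmetries of f}, but controlling the error term $\caA^\tau(z)$ uniformly: one needs the perturbation to be $o(\la^2)$ \emph{in the $\caJ_\ka$-conjugated norm, uniformly in $z$ on the whole line $\Re z>-\frac1{2\tau}$}, and in particular one must check that the $o(\la^2)$ bound in \eqref{eq: bound on r tau} is uniform for $\ga\leq\delta_c$ — this is where the interplay between $\tau(\la)$, the tail $\int_\tau^\infty\d t\sup_x\str\psi(x,t)\str$ (Lemma \ref{lem: integrability}), and the $\la^4\tau$ term must be balanced, and where one has to make sure the perturbation does not spoil the separation of $f_{rw}(p)$ from the rest of the spectrum uniformly in the fiber. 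A secondary technical point is verifying that the reality/positivity/space-inversion symmetries of $\caZ^\tau_t$ (inherited from the analogues of Lemma \ref{lem: definition dynamics} for the cutoff model, which hold since truncating $\str t\str\leq\tau$ respects all the relevant symmetries of $\psi$) are strong enough to force $\Re f^\tau(p,\la)\leq0$ for $\str p\str\leq p^*_{rw}$ rather than merely $\Re f^\tau(p,\la)\leq o(\la^2)$; if only the weaker inequality is available, one absorbs the discrepancy into $r_\tau(\ga,\la)$, which is still $o(\la^2)$ and hence harmless.
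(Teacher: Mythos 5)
Your proposal follows essentially the same route as the paper: the resolvent representation of Lemma \ref{lem: polymer model cutoff}, fiberwise spectral perturbation theory (Lemma \ref{lem: spectral abstract}) around $\caM_p$ using Proposition \ref{prop: properties of m} for small and large fibers, contour shifting of the inverse Laplace transform, and the position-space transcription of Section \ref{sec: bound on markov in position representation}; in particular, your handling of the failure of $f_\tau(p=0,\la)=0$ for the non-trace-preserving cutoff dynamics, by absorbing the discrepancy into the $o(\la^2)$ part of $r_\tau(\ga,\la)$, is exactly what the paper does. The only slip is notational: since $f_\tau(p,\la)$ already carries the factor $\la^2$ (unlike $f_{rw}(p)$), the paper sets $r_\tau(\ga,\la)=\sup_{\str p\str\leq \ga}\Re f_\tau(p,\la)$ without the extra prefactor $\la^2$ that you wrote.
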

The constants $\delta_{c}>0$ and decay rate $g_{c}>0$ are in general smaller than the analogues $\delta_{rw}$ and $g_{rw}$ in the bound \eqref{eq: bounds on markov dynamics}. 

\begin{proof}

We  apply Lemma \ref{lem: spectral abstract} in Appendix B with $\epsilon:=\la^2$ and
\baq
V(t, \epsilon) & :=&  U_{\nu} \left\{\caZ_t^{\tau}\right\}_p    U_{-\nu}     \label{def: identification cutoff 1} \\[2mm]
 A_{1}(z, \epsilon)   & :=&    U_{\nu} \left( \left\{  \caR^\tau_{ex}(z) \right\}_p +   \left\{  \caR^\tau_{ld}(z) \right\}_p    -  \la^2 \i  \left\{ \adjoint(\varepsilon(P)) \right\}_p  \right) U_{-\nu}     \label{def: identification cutoff 2}  \\[2mm]
 N       & :=&      U_{\nu} \left\{\caM \right\}_p  U_{-\nu}   \label{def: identification cutoff 3}   \\[2mm]
  B     & :=&     -U_{\nu}  \left\{\adjoint (Y) \right\}_p    U_{-\nu}    \simeq  -\adjoint (Y)   \label{def: identification cutoff 4}
\eaq
and $(p,\nu)  \in   \frD^{low}_{c}$ with 
\beq
 \frD^{low}_{c} :=  \frD^{low}_{rw} \cap \left \{  \str\Im p \str  \leq \min(\delta_1,\de_{\varepsilon}),   \str \Im \nu \str \leq \frac{1}{2} \min(\delta_1,\de_{\varepsilon})     \right\}  
\eeq  The set $\frD^{low}_{rw}$ has been defined before Proposition \ref{prop: properties of m}, the bound on $p, \nu$ involving $\delta_1$ ensures that we can convert  the domain of analyticity in the variable $\ka$ in Lemma \ref{lem: polymer model cutoff} into a domain of analyticity in the variables $(p,\nu)$, via the relation \eqref{eq: relation kappa and fibers}.
 Similarly, the bound on $p, \nu$ involving $\delta_{\varepsilon}$ ensures that 
 \beq  \label{eq: free liouvillian bounded}
\sup_{(p,\nu) \in \frD_{c}^{low}}\norm U_{\nu} \left\{ \adjoint(\varepsilon(P)) \right\}_p   U_{-\nu}  \norm \leq C  
\eeq
as a consequence of the bound on $\caJ_{\ka}\adjoint(\varepsilon(P))\caJ_{-\ka}$ provided by Assumption \ref{ass: analytic dispersion} and eq.\ \eqref{eq: combes thomas}.

We now check, step by step, the conditions of Lemma  \ref{lem: spectral abstract}. First, the continuity of $V(t,\ep)$ and the bound \eqref{eq: bound on V} follow from Lemma \ref{lem: definition dynamics} and Statement 1) of Lemma \ref{lem: polymer model cutoff}.
The relation \eqref{def: expression A} is Statement 1) of Lemma \ref{lem: polymer model cutoff} \newline
\noindent \textbf{Condition 1)} of Lemma \ref{lem: spectral abstract} is trivially satisfied since $Y$ is a Hermitian matrix on a finite-dimensional space.  \newline

\noindent To check \textbf{Condition 2)} of Lemma \ref{lem: spectral abstract}, we choose $g_{A}$ as $g_A= 2 g_{rw}^{low}  $ and we 
will actually show that the bound \eqref{eq: conditions A2}, which is required in the region $\Re z > -\la^2 g_A$, holds in the region $\Re z > -1/(2\tau)$, as long as $\la^2 \tau$ is small enough. By Cauchy's formula, this implies  that 
\beq \frac{\partial}{\partial z} A_{1}(z,\la)= \frac{1}{2\pi i} \oint_{\caC_z} \d z'  \frac{A_{1}(z',\la)}{(z-z')^2} =   O(\la^2 \tau), \qquad \textrm{for}\, \Re z > -\la^2 g_A,  \label{eq: first application cauchy}  \eeq
with $\caC_z$ a circle of radius $O(1/\tau)$ centered at $z$. Hence \eqref{eq: conditions A3} follows. 

To check  \eqref{eq: conditions A2}, we use the bound \eqref{eq: free liouvillian bounded} for $\adjoint(\varepsilon(P))$.  The boundedness of  the other terms in $ A_{1}(z, \la)$ follows immediately from \eqref{eq: bounds on ladders and excitations cutoff} .

\noindent\textbf{Condition 3)} contains conditions on the spectrum of $\caM_p$ that are satisfied thanks to Proposition \ref{prop: properties of m}.  It remains to check \eqref{eq: app closeness A and N}. By the bound on $\caR_{ex}^\tau(z)$ in \eqref{eq: bounds on ladders and excitations cutoff}, it suffices to check that, for  any $a \in \sp (\adjoint(Y))$; 
\beq
 1_a(\adjoint(Y)) \caJ_{\ka}   \Big( \la^2 \caM -(  - \la^2 \i \adjoint(\varepsilon(P)) +  \caR^\tau_{ld}(-\i  a)   )   \Big) \caJ_{-\ka}    1_a(\adjoint(Y)) =  o(\la^2), \qquad \textrm{as}\,  \la  \to 0
\eeq
This follows by the estimate in \eqref{eq: closeness of ladders and caL} and the relation between $\caM$ and $\caL$ in \eqref{eq: def of M}. Note that we used  that $\tau(\la) \to \infty$ as $\la \to 0$ to get $o(\la^2)$ from the estimate \eqref{eq: closeness of ladders and caL}.

Hence, we can apply Lemma \ref{lem: spectral abstract} and we obtain 
 a number $f_\tau(p,\la)$, a rank-one projector $P_\tau(p,\la)$ and a family of operators $R_\tau^{low}(t,p,\la)$ such that 
 \beq
 \left\{   \caZ_t^\tau\right\}_p  =   \e^{ f_\tau(p,\la) t } P_\tau(p,\la)  +  \e^{-(\la^2 g^{\footnotesize{low}}_{c} ) t }  R^{low}_\tau(t,p,\la)
 \eeq 
 for some $g^{\footnotesize{low}}_{c}>0$ (which can be chosen arbitrarily close to $g^{\footnotesize{low}}_{rw}$  by taking $\str\la\str$ small enough),
 and such that    
 \baq
&\mathop{ \sup}\limits_{(p,\nu) \in \frD_{c} ^{low}}     \norm   U_{\nu}P_\tau(p,\la)  U_{-\nu}  \norm \leq C& \\[2mm]
& \mathop{ \sup}\limits_{(p,\nu) \in \frD_{c} ^{low}}   \mathop{ \sup}\limits_{t \geq 0}   \norm   U_{\nu}R^{\footnotesize{low}}_{\tau}(p,t)  U_{-\nu}  \norm  \leq C & 
 \eaq
  
 The above reasoning applies to small fibers, since we use the spectral analysis of Proposition \ref{prop: properties of m}.  We now establish a simpler result about the cut-off reduced evolution $\left\{\caZ_t^\tau\right\}_{p}$, for large fibers. 
 Let 
 \beq
  \frD^{high}_{c} :=  \frD^{high}_{rw} \cap \left \{  \str\Im p \str  \leq \min(\delta_1,\de_{\varepsilon}),   \str \Im \nu \str \leq \frac{1}{2} \min(\delta_1,\de_{\varepsilon})     \right\}  
 \eeq
 Although for $(p,\nu) \in   \frD^{high}_{c}$, we cannot apply Lemma \ref{lem: spectral abstract}, we can still apply Lemma \ref{lem: spectrum change small o} to conclude that, for $\la$ small enough, the singularities of $\left\{\caR^\tau(z) \right\}_p$ in the domain, say,  $\Re z >  -2\la^2 g^{\footnotesize{high}}_{rw}$ lie at a distance $o(\la^2)$ from $\sp \caM_p$. One can then easily prove that $\left\{\caR^\tau(z) \right\}_p$ is bounded-analytic in a domain of the form $\Re z >- \la^2 g^{\footnotesize{high}}_{rw}+  o(\la^2)$ and hence
 \beq
\left\{\caZ_t^{\tau}\right\}_p      =    R^{\footnotesize{high}}_{\tau}(p,  t)    \e^{-(\la^2 g^{\footnotesize{high}}_{c} ) t } 
 \eeq
 with a rate  $g^{\footnotesize{high}}_{c}>0$ (which can be chosen arbitrarily close to $g^{\footnotesize{high}}_{rw}$ by making $\la$ small enough) and
 \beq
 \sup_{(p,\nu) \in \frD_{c}^{high}}  \sup_{t \geq 0} \norm  U_{\nu}R^{\footnotesize{high}}_{\tau}(t,p, \la)    U_{-\nu}  \norm \leq   C
 \eeq
 
 Finally, we note that one can easily find a constant $\de_c$ such that $\frD_{c}^{low}$ and $\frD_{c}^{high}$ are of the form \eqref{def: set D low markov} and \eqref{def: set D high markov} with the parameters  $\delta_c$ instead of $\delta_{rw}$ (the parameter $p_{rw}^*$ does not need to be readjusted).

With the information on $\caZ_t^{\tau}$ obtained above, we are able to prove  the bound \eqref{eq: bounds on cutoff dynamics}  by the same reasoning as we employed in the lines following Proposition \ref{prop: properties of m} to derive the bound \eqref{eq: bounds on markov dynamics}.  

The function $r_\tau(\ga,\la)$ in the statement of Lemma \ref{lem: bounds on cutoff dynamics} is determined as 
\beq
r_\tau(\ga,\la) :=  \sup_{p \in \tor, \str p \str \leq \ga} \max\left( \Re f_{\tau}(p,\la), 0 \right)   
 \eeq
 and the bound \eqref{eq: bound on r tau} follows by \eqref{eq: bound on r markov} and 
 \beq 
 f_\tau(p,\la)-\la^2 f(p)=o(\la^2), \qquad  \la \searrow 0,
 \eeq
  which follows from \eqref{eq:  lemma bounds1 abstract}  in Lemma \ref{lem: spectral abstract}. The decay rate $g_c$ is chosen as  $g_c:= \min (g_{c}^{low},g_{c}^{high}) $.
 This concludes the proof of Lemma \ref{lem: bounds on cutoff dynamics}. 

 \end{proof}

We close this section with two remarks which are however not necessary for an understanding of the further stages of the proofs.

\begin{remark}
 As apparent from the bound \eqref{eq: closeness of ladders and caL}, one cannot take $\tau \equiv const$, since in that case, this bound becomes $O(\la^2)$ instead of $o(\la^2)$. This would mean that there is a 
 difference of $O(\la^2)$ between   $\caZ_t^{\tau}$ and $\La_t$, whereas the important terms in $\La_t$ are themselves of $O(\la^2)$.  This is however not an essential point: as one can see from the classification of diagrams in Section \ref{sec: expansions}, one could easily modify the definition of the cutoff model such that $\caZ_t^{\tau}$ is close to $\La_t$  even at $\tau \equiv const$. This can be achieved by  performing the cutoff on the \emph{non-ladder} diagrams only, which is a notion introduced in Section  \ref{sec: expansions}. The true reason why $\tau$ must diverge to $\infty$ when $\la \searrow 0$ will become clear in the proof of Lemma \ref{lem: polymer model full}, in Section \ref{sec: summing min irr pairings}.
 \end{remark}
 
 \begin{remark}
 One is tempted to say that any claim that is made about $\caZ_t$ in  Section \ref{sec: result} could be stated for $\caZ_t^{\tau}$ as well.  While this is correct for Theorem \ref{thm: main}, it fails for Proposition \ref{prop: symmetries of f}. The reason is that the identity $f(p=0,\la)=0$ follows from the fact that $\caZ_t$ conserves the trace of density matrices, as it is the reduced dynamics of a unitary evolution. 
 This is not true for  $\caZ_t^{\tau}$, and hence we cannot prove (or even expect) that $f_\tau(p=0,\la)=0$.
  \end{remark}

  \subsection{Spectral analysis of $\caZ_t$} \label{sec: polymer model full dynamics}

In this section, we state Lemma \ref{lem: polymer model full},  the $\tau=\infty$ analogue of Lemma \ref{lem: polymer model cutoff}.  This Lemma leads to our main result, Theorem \ref{thm: main},  via reasoning that is almost identical to the one that led from Lemma \ref{lem: polymer model cutoff} to Lemma \ref{lem: bounds on cutoff dynamics}. 

Essentially (and analogously to Lemma \ref{lem: polymer model cutoff}), Lemma \ref{lem: polymer model full} states that the Laplace transform $\caR(z)$, defined in \eqref{def: first laplace trafo}, is of the form
\beq
\caR(z) = (z -(-  \i \adjoint(Y) + \la^2 \caM + \caA(z)) )^{-1} 
\eeq
where $\caA(z) $ is  `small' w.r.t.\, $\la^2 \caM$.

\begin{lemma}\label{lem: polymer model full}
   There is an operator $ \caR_{\mathrm{ex}}(z)\in \scrB(\scrB_2(\scrH_\sys))$, depending on $\la$ and satisfying the following properties, for $\la$ small enough:
\ben
\item{ For   $\Re z$ sufficiently large, the integral in  \eqref{def: first laplace trafo} converges absolutely in $\scrB(\scrB_2(\scrH_\sys))$ and
\beq \label{eq: expansion for resolvent full model}
 \caR(z) = ( \caR^{\tau}(z)^{-1} - \caR_{\mathrm{ex}}(z)  )^{-1}
\eeq 
where $\caR^{\tau}(z)$ was introduced in  \eqref{def: first laplace trafo cutoff} and $\tau =\tau(\la)$ was defined in \eqref{def: condition on tau}.
}

\item{ There are positive constants $\delta_{ex},  g_{ex} $ such that the operator $\caR_{ex}(z)$ is analytic in $z$ in the domain $\Re z > -\la^2 g_{ex}$ and  
\beq
   \mathop{\sup}\limits_{ \str \Im\ka^{}_{\links,\rechts} \str \leq \delta_{ex}, \Re z > -\la^2 g_{ex}} \norm    \caJ_{\ka}\caR_{\mathrm{ex}}(z) \caJ_{-\ka}  \norm    =o(\la^2), \qquad \la \searrow 0. \label{eq: bound on caR excitations}
\eeq
}

\een
\end{lemma}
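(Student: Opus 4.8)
\textbf{Plan of the proof of Lemma \ref{lem: polymer model full}.}

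The plan is to run the same diagrammatic expansion that proves Lemma \ref{lem: polymer model cutoff}, but now keeping the \emph{long} diagrams (those containing a pairing $(r,s)$ with $t_s-t_r>\tau$), whereas the cutoff evolution $\caZ_t^\tau$ absorbed only the short ones. The identity \eqref{eq: expansion for resolvent full model} is essentially a resummation statement: starting from the representation \eqref{eq: Z with pairings} for $\caZ_t$ and the representation \eqref{eq: Z cutoff with pairings} for $\caZ_t^\tau$, one reorganizes the full Dyson series by first contracting every maximal block of consecutive short-paired segments into a factor $\caZ_{\cdot}^\tau$, so that what remains is an alternating product of blocks $\caZ^\tau$ and single ``excitation'' insertions carrying a long pairing. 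Taking Laplace transforms turns such an alternating geometric-type series into $(\caR^\tau(z)^{-1}-\caR_{\mathrm{ex}}(z))^{-1}$, with $\caR_{\mathrm{ex}}(z)$ collecting precisely the irreducible long-diagram contributions (the ``renormalized joint correlation function'' of Section \ref{sec: renormalized joint system reservoir correlation}, dressed by $\caZ^\tau$-propagators between the two endpoints of a long pairing). The combinatorics here — defining irreducibility correctly so the resummation is exact and does not double-count — is a bookkeeping exercise of the type already carried out in \cite{deroeckfrohlichpizzo,deroeck}, and this is what Sections \ref{sec: expansions}, \ref{sec: bounds on long pairings} and \ref{sec: model only long} are devoted to; for the present lemma I would simply invoke that machinery.

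The analytic content — Statement 2) — is where the real work lies, and it rests on combining the two properties of $\psi$ singled out in Section \ref{sec: the renormalized model}. A generic term in $\caR_{\mathrm{ex}}(z)$ has the schematic form
\beq
\la^2\sum_{l_1,l_2}\;\int_{|t_2-t_1|>\tau}\d(t_2-t_1)\;\e^{-z(t_2-t_1)}\;\psi^{\#}(x_2-x_1,t_2-t_1)\;\bigl(\caZ^\tau_{t_2-t_1}\caI_{x_2,l_2}\cdots\caI_{x_1,l_1}\bigr),
\eeq
possibly with further long pairings straddling the block, so one must insert the position-space bound \eqref{eq: bounds on cutoff dynamics} for $\caZ^\tau$ into the conjugated operators $\caJ_\ka(\cdot)\caJ_{-\ka}$. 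For $l_1=l_2$ one uses the propagation estimate inside $\caZ^\tau_{t_2-t_1}$ together with the subluminal decay of $\psi$ (Lemma \ref{lem: exponential decay}); for $l_1\neq l_2$ one uses instead the decoherence factor in the first line of \eqref{eq: bounds on cutoff dynamics}, exactly as in \eqref{eq: bound for off-diagonal decay}: the triangle inequality \eqref{eq: triangle inequality distances} shows that the product $|\psi(x,t)|\times\|(\caZ^\tau_t)_{\cdots}\|$ decays exponentially in $t$ uniformly in the spatial arguments, with rate $O(\la^2)$. Summing over the spatial labels is controlled by the exponential spatial decay just obtained (so the lattice sums converge, uniformly in $\ka$ for $|\Im\ka_{\links,\rechts}|\leq\delta_{ex}$ with $\delta_{ex}$ chosen smaller than $\delta_c$ and the various $\delta$'s from Assumptions \ref{ass: analytic dispersion} and \ref{ass: analytic form factor}), while the time integral $\int_\tau^\infty\d t\,\sup_x|\psi(x,t)|$ is finite by Lemma \ref{lem: integrability} and, crucially, tends to $0$ as $\tau\to\infty$. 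Since $\tau=\tau(\la)\nearrow\infty$ as $\la\searrow0$ by \eqref{def: condition on tau}, the prefactor $\la^2$ is multiplied by a quantity $o(\la^0)$, giving the bound $o(\la^2)$ in \eqref{eq: bound on caR excitations}; the exponential time decay with rate $O(\la^2)$ is what produces the analyticity strip $\Re z>-\la^2 g_{ex}$, and $\partial_z\caR_{ex}$ is controlled by the same bound via Cauchy's formula on a circle of radius $O(\la^2)$, exactly as in \eqref{eq: first application cauchy}.

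The main obstacle is not any single estimate but the \emph{combinatorial control of the sum over long diagrams}: once several long pairings can be present in one irreducible block, one must show that the number of such diagrams of a given ``length'' does not grow fast enough to beat the smallness $\la^2\int_\tau^\infty|\psi|$ gained per long pairing — i.e. one needs the smallness to survive summation over all diagram topologies. This is precisely the point where $\tau$ must diverge: a $\tau$-independent bound on the per-pairing cost would only give $O(\la^2)$, not $o(\la^2)$, and the geometric series over blocks would not converge with room to spare. I expect this to be handled (as flagged in the remark after Lemma \ref{lem: bounds on cutoff dynamics} and in Section \ref{sec: summing min irr pairings}) by the standard ``minimal irreducible pairing'' reorganization: bound an arbitrary irreducible long block by a \emph{single} distinguished long pairing times a product of freely-summable $\caZ^\tau$-propagators and time-integrable bare correlators, so that the full sum is dominated by $\la^2\,(\text{finite geometric factor})\times\int_\tau^\infty\d t\,\sup_x|\psi(x,t)|=o(\la^2)$. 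Granting that reorganization (Lemmata \ref{lem: bound irr by min irr} and \ref{lem: bound min irr}), Statement 2) follows, and then Theorem \ref{thm: main} is deduced from Lemma \ref{lem: polymer model full} by feeding it into Lemma \ref{lem: spectral abstract} in exactly the manner of the proof of Lemma \ref{lem: bounds on cutoff dynamics}, now with $A_1(z,\la)=\{\caR^\tau_{ex}\}_p+\{\caR^\tau_{ld}\}_p-\la^2\i\{\adjoint(\ve(P))\}_p+\{\caR_{ex}(z)\}_p$ and the extra term $o(\la^2)$ harmlessly absorbed into the ``smallness'' hypothesis of that spectral lemma.
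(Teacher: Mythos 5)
Your plan is correct and follows essentially the same route as the paper: Statement 1) via the definition $\caR_{\mathrm{ex}}(z)=\caR_{\mathrm{ir}}(z)-\caR^{\tau}_{\mathrm{ir}}(z)$ and the resolvent resummation of Section \ref{sec: ladders and excitations}, and Statement 2) by combining the exponential decay of the renormalized joint correlation function (decoherence of $\caZ^\tau$ for $l_1\neq l_2$, propagation plus subluminal decay for $l_1=l_2$) with the vanishing tail $\la^2\int_\tau^\infty\d t\,\sup_x|\psi(x,t)|=o(\la^2)$ and the reduction from irreducible to minimally irreducible long diagrams. The only detail to flag is that the paper's Lemma \ref{lem: bound min irr} does not isolate a \emph{single} distinguished long pairing: it splits a minimally irreducible diagram into two interleaved ladder subdiagrams (each point of $[0,t]$ is bridged by at most two pairings), bounds each and takes the geometric mean, which is why only half the rate $g_r$ survives and why every pairing still contributes a time-integrable factor $h_\tau$ to control the sum over diagram sizes.
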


The proof of Lemma \ref{lem: polymer model full} is contained in Section \ref{sec: bounds on long pairings}. Starting from  Lemma \ref{lem: polymer model full}, we can prove our main result, Theorem \ref{thm: main}, by the spectral analysis outlined in Appendix \ref{app: spectral}.\\
\vspace{0.2cm}

\noindent\emph{Proof of Theorem \ref{thm: main}}  We  apply Lemma \ref{lem: spectral abstract} with $\epsilon:= \la^2$ and
\baq
V(t, \epsilon) & :=&  U_{\nu} \left\{\caZ_t\right\}_p    U_{-\nu}      \label{def: identification 1} \\[2mm]
 A_{1}(z, \epsilon)   & :=&    U_{\nu} \left( \left\{     \caR_{ex}(z) \right\}_p+  \left\{     \caR^\tau_{ex}(z) \right\}_p +   \left\{  \caR^\tau_{ld}(z) \right\}_p    -  \la^2 \i  \left\{ \adjoint(\varepsilon(P)) \right\}_p  \right) U_{-\nu}   \label{def: identification 2}  \\[2mm]
 N       & :=&      U_{\nu} \left\{\caM \right\}_p  U_{-\nu}     \label{def: identification 3} \\[2mm]
  B     & :=&  -   U_{\nu}  \left\{\adjoint (Y) \right\}_p    U_{-\nu}    = - \adjoint (Y)   \label{def: identification 4}
\eaq
and 
\beq \label{eq: choice of domain in second step low}
(p,\nu) \in  \frD^{low}_{rw} \cap \left \{  \str\Im p \str  \leq \min(\delta_{ex},\de_{\varepsilon}),   \str \Im \nu \str \leq \frac{1}{2} \min(\delta_{ex},\de_{\varepsilon})     \right\}  
\eeq 

Hence, the only difference with the relations (\ref{def: identification cutoff 1}-\ref{def: identification cutoff 2}-\ref{def: identification cutoff 3}-\ref{def: identification cutoff 4}) is that we have added the term 
$
\left\{     \caR_{ex}(z) \right\}_p
$
in \eqref{def: identification 2}, we consider $\caZ_t$ instead of $\caZ_t^{\tau}$ in \eqref{def: identification 1}, and we replace $\de_1$ by $\de_{ex}$ in \eqref{eq: choice of domain in second step low}.
 This means that  we can copy  the proof of Lemma \ref{lem: bounds on cutoff dynamics}, except that, in addition,  we have to check  the bounds \eqref{eq: conditions A2} and  \eqref{eq: conditions A3} for the term $  \caR_{ex}(z)$.
We choose $g_A:= \frac{1}{2} g_{ex}$. Then  the bound  \eqref{eq: conditions A2} follows from \eqref{eq: bound on caR excitations}, and  \eqref{eq: conditions A3}  follows since, by the Cauchy integral formula and \eqref{eq: bound on caR excitations}, 
 \beq
  \sup_{\Re z \geq -\frac{1}{2} \la^2 g_{ex}}  \norm \frac{\partial}{ \partial z}   U_{\nu} \left\{\caR_{ex}(z) \right\}_p U_{-\nu} \norm  =  \left\str \Re z-(- \la^2 g_{ex} )\right \str^{-1}  o(\la^2)  =o(\str\la\str^{0}), \qquad  \la \searrow 0
  \eeq
  where we use the same argument as in \eqref{eq: first application cauchy}, but with a circle radius  of the order of $\left\str \Re z-(- \la^2 g_{ex} )\right \str$.
This application of the Cauchy integral formula is the reason for the factor 
$\frac{1}{2} $ into the definition of  $g_A $. 
Lemma \ref{lem: spectral abstract} yields the function $f(p,\la)$, the rank-one projector $P(p,\la)$ and the operator $R^{low}(t,p,\la)$ required in the small fiber statements of Theorem \ref{thm: main}.

For 
\beq  \label{eq: choice of domain in second step high}
(p,\nu) \in  \frD^{high}_{rw} \cap \left \{  \str\Im p \str  \leq \min(\delta_{ex},\de_{\varepsilon}),   \str \Im \nu \str \leq \frac{1}{2} \min(\delta_{ex},\de_{\varepsilon})     \right\},
\eeq
we can again apply Lemma \ref{lem: spectrum change small o} to derive the large fiber statements of Theorem \ref{thm: main}. 
As in the proof of Lemma \ref{lem: bounds on cutoff dynamics}, we can again choose parameters $\delta, p^*$ such that domains $\frD^{low},\frD^{high} $ as defined in (\ref{def: domain low}-\ref{def: domain high}), are included in the domains for $(p,\nu)$ specified by \eqref{eq: choice of domain in second step low} and \eqref{eq: choice of domain in second step high}.
\qed

\section{Feynman Diagrams} \label{sec: expansions}

In this section, we introduce  the expansion of the reduced evolution $\caZ_t$ and the cutoff reduced evolution $\caZ_t^\tau$ in amplitudes labelled by Feynman diagrams. These expansions will be the  main tool in the proofs of Lemmata \ref{lem: polymer model cutoff} and \ref{lem: polymer model full}.
We start by introducing a notation for the Dyson expansion of $\caZ_t$ which is more convenient than that of  Section \ref{sec: dyson expansion}.
\subsection{Diagrams $\si$}  \label{sec: diagrams}
Consider a pair of elements in $I \times \lat \times \{\links, \rechts \}$ with $I \subset \bbR^+$ a closed interval whose elements should be thought of as  times. The smaller time of the pair is called $u$ and the larger time is called $v$, and we require  that $u  \neq v$, i.e.\ $u<v$.  The set of pairs satisfying this constraint is called $\Si^1_{I}$. 
We define $\Si^n_I$ as the set of $n$ pairs of elements in $I \times \lat \times \{\links, \rechts \}$ such that no two times coincide. 
That is, each $\si \in \Si^n_I$ consists of $n$ pairs whose time-coordinates are parametrized by $(u_i,v_i)$, for $i=1,\ldots,n$, and with the convention that 
$u_i < v_i$ and $u_i < u_{i+1}$.  The elements $\si$ are called \emph{diagrams}.
As announced in Section \ref{sec: dyson expansion}, there is a one-to-one mapping between, on the one hand, a set of triples $(t_i(\dsi), x_i(\dsi), l_i(\dsi))_{i=1}^{2n}  $ with $t_i < t_{i+1}$ and $t_i \in  I$,  together with a pairing $\pi \in \caP_n$, and, on the other hand,  a diagram
 $\dsi \in \Si^n_{I} $ as defined above.\\
   To construct this mapping, proceed as follows: Choose from the pairing $\pi$ the pair $(r,s)$ for which $r=1$ and set $u_1=t_{r}, v_1=t_s$. The pair $((t_r,x_r,l_r), (t_s,x_s,l_s))$ becomes  the first pair in the diagram $\si$.  Then choose the pair $(r',s') \in \pi$ such that 
 \beq
 r' = \min \{ \{1,2, \ldots, 2n  \} \setminus \{r,s \} \}
 \eeq
 Set $u_2= t_{r'}, v_2=t_{s'}$. The pair  $((t_{r'},x_{r'},l_{r'}), (t_{s'},x_{s'},l_{s'}))$ becomes the second pair of $\si$.  Repeat this until one has $n$ pairs, each time picking the pair whose $r$ is the smallest of the remaining integers. The mapping is easily visualized in a picture, see Fig.\ \ref{fig: introdiagrams1}.\\
 
We also  use the notation $\underline{t}(\dsi),\underline{x}(\dsi),\underline{l}(\dsi)$ to denote the 'coordinates' of the diagram $\dsi$. Here, 
$\underline{t}(\dsi),\underline{x}(\dsi),\underline{l}(\dsi)$ are $2n $-tuples of elements in $I, \bbZ^d, \left\{ \links, \rechts \right\}$, respectively, and such that the $i$'th components of these $2n$-tuples constitutes the $i$'th triple $(t_i(\dsi),x_i(\dsi),l_i(\dsi))$. 

Note that the time-coordinates $ \underline{t} \equiv t_1(\dsi), \ldots, t_{2n}(\dsi)$ can also be defined as the ordered set of times containing the elements $\left\{ u_i,v_i, i=1,\ldots,n\right\}$. Evidently, the triples $(t_i(\dsi), x_i(\dsi), l_i(\dsi))_{i=1}^{2n} $ do not fix a diagram uniquely since the combinatorial structure that is encoded in $\pi$ is missing. That combinatorial structure is now encoded in the way the time coordinates $\underline{t}(\dsi)$ are partitioned into pairs $(u_i,v_i)$, see also Figure \ref{fig: introdiagrams1}.

We drop the superscript $n$ to denote the union over all $n \geq 1$, i.e.\
\beq
\Si_I  := \bigcup_{n \geq 1}  \Si^n_I
\eeq
and we write $\str \dsi\str =n $ to denote that $\si \in \Si_I^n$. 
%Except in a single case, the interval $I$ will be $I=[0,t]$ and in that case, we simply write
%\beq
%\Si_t \qquad \textrm{instead of} \qquad   \Si_{[0,t]}
%\eeq

 We define the \emph{domain} of a diagram as
\beq  \label{def: union over all sizes}
\Dom \dsi := \mathop{\bigcup}\limits_{i =1}^n [ u_i,v_i] , \qquad  \textrm{for} \, \,  \si \in \Si_I^n 
 \eeq

%If $\si_1  \in \Si^{n_1}_t$ and $ \si_2   \in \Si^{n_2}_t$, we write $\si_1 < \si_2$ whenever the interval  $\Dom \dsi_1$ lies entirely to the left of the interval $\Dom \dsi_2$, i.e., $\max\Dom\dsi_1 < \min  \Dom \dsi_2$.

 We call a diagram $\si \in \Si_I$  
  irreducible (notation: $\mathrm{ir}$) whenever its domain is a connected set (hence an interval). In other words, $\si$ is irreducible whenever there are no two (sub)diagrams $\si_1, \si_2  \in  \Si_I$ such that  
  \beq
  \si=\si_1 \cup \si_2, \qquad  \textrm{and} \qquad  \Dom\dsi_1 \cap \Dom\dsi_2 =\emptyset
  \eeq
 where the union refers to a union of pairs of elements in $I \times \lat \times \{\links, \rechts \}$.
  For any  $\si \in \Si_I$ that is not irreducible, we can thus find   a unique (up to the order) sequence  of (sub) diagrams $\si_1,\ldots, \si_m$  such that 
  \beq \si_{1},\ldots, \si_m \quad  \textrm{ are irreducible and } \quad    \si=\si_1 \cup \ldots \cup \si_m   \eeq  
  We fix the order of $ \si_{1},\ldots, \si_m $ by requiring that $\max \Dom \dsi_{i} \leq \min \Dom \dsi_{i+1} $ and 
  we call the sequence $(\si_1,\ldots,\si_m)$ obtained in this way the  decomposition of $\dsi$ into irreducible components.

 We let $\Si^{n}_{I}(\mathrm{ir}) \subset \Si_I^n$ stand for the set of irreducible diagrams $\dsi$ (with $n$ pairs) that satisfy $\Dom \dsi=I$, that is, $u_1=t_1(\dsi)=\min I$ and $\max_i  t_i(\dsi)= \max_i v_i=\max I$.\\

   A diagram $\si \in \Si_{I}(\mathrm{ir}) $  is called  \emph{minimally irreducible} in the interval $I$ whenever it has the following property: For any  subdiagram $\dsi' \subset \dsi$, the diagram $\dsi \setminus
  \dsi'$ does not  belong to $\Si_I({\mathrm{ir}})$. Intuitively, this means that either the diagram $ \dsi'$ contains a boundary point of $I$ as one of its time-coordinates, or the diagram $\dsi \setminus                       
   \dsi'$ is not irreducible. The set of minimally irreducible diagrams (with $n$ pairs) is denoted by $ \Si^{n}_{I}(\mathrm{mir})$.   See pictures \ref{fig: introdiagrams1} and \ref{fig: introdiagrams2}  for a graphical representation of the diagrams.   Since, up to now, most definitions depend solely on the time-coordinates, we only indicate  the time-coordinates in the pictures. In the terminology introduced below, we draw equivalence classes of diagrams $[\dsi]$ rather than  the diagrams $\dsi$ themselves.
    
\begin{figure}[h!] 
\vspace{0.5cm}
\begin{center}
\psfrag{initialtime2} {  $u_2$ }
\psfrag{initialtime1} { $u_1$}
\psfrag{finaltime2} {  $v_2$ }
\psfrag{finaltime1} { $v_1$}
\psfrag{initialtime3} {  $u_3$ }
\psfrag{initialtime4} { $u_4$}
\psfrag{finaltime3} {  $v_3$ }
\psfrag{finaltime4} { $v_4$}
\psfrag{initialtime5} { $u_5$}
\psfrag{finaltime5} {  $v_5$ }
\psfrag{finaltime} {  $I_+$ }
\psfrag{initialtime} { $I_-$}
\psfrag{time10}{  $t_{10}$}
\psfrag{time9}{  $t_9$}
\psfrag{time8} {  $t_8$ }
\psfrag{time7} {  $t_7$ }
\psfrag{time6} { $t_6$}
\psfrag{time5}{  $t_5$}
\psfrag{time4}{  $t_4$}
\psfrag{time3} {  $t_3$ }
\psfrag{time2} {  $t_2$ }
\psfrag{time1} { $t_1$}
\includegraphics[width = 12cm, height=4cm]{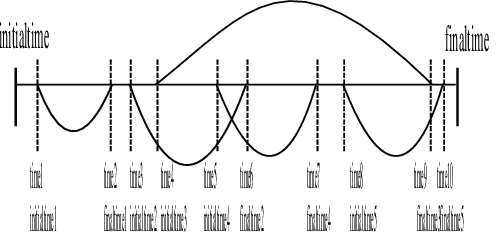}
\caption{ \footnotesize{ A diagram $\dsi \in \Si_I$ with $\str \dsi \str=5$. Its time coordinates are shown explicitly. Note that the parametrization by $u_i,v_i$ encodes the combinatorial structure (the way the times are connected by pairings), whereas the $t_i$ are ordered.  We consistenly draw the long pairings (see later) above the time-axis and the short ones below.} \label{fig: introdiagrams1} }
\includegraphics[width = 18cm, height=4cm]{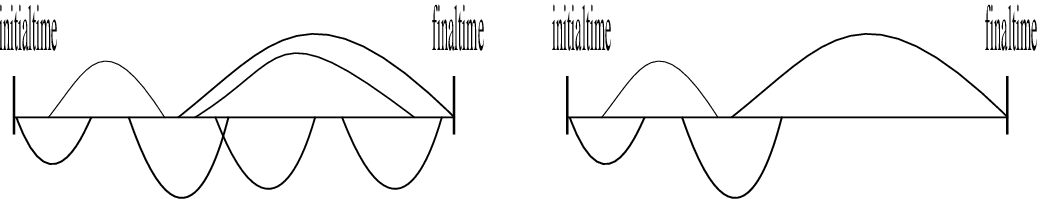}
\caption{ \footnotesize{The left figure shows an irreducible diagram $\dsi$ in the interval $I=[I_-,I_+]$ with $\str \dsi \str=7$. This diagram is not minimally irreducible. The right figure shows a minimally irreducible subdiagram. In this case, there is only one such minimally irreducible subdiagram, but this need not always be the case.} \label{fig: introdiagrams2} }
\end{center}

\end{figure}

  A diagram $\dsi$ in $\Si_I$  for which each pair of time coordinates $(u,v)$ satisfies $\str v-u\str \geq \tau $, or $\str v-u\str \leq \tau $, is called 
long, or short, respectively. The set of all long/small diagrams with $n$ pairs is denoted by $\Si_I^n(> \tau )$ / $\Si_I^n({< \tau} )$. 
Note that $\Si_I^n(> \tau ) \cup \Si_I^n({< \tau} )$ is strictly smaller than $\Si_I^n$ whenever $n>1$.

In addition to the sets  $\Si_I^n( \mathrm{ir}), \Si_I^n( \mathrm{mir}), \Si_I^n(> \tau)$, we will sometimes
use more than one specification ($\mathrm{adj}$) to denote a subset of $\Si_I$ or $\Si_I^n$, and we will drop the superscript $n$ to denote the union over all $\str \dsi \str$, as in \eqref{def: union over all sizes}, for example, 
\beq
\Si_I^n(< \tau, \mathrm{ir}), \qquad      \Si_I(> \tau,  \mathrm{mir})
\eeq
are the sets of short irreducible diagrams with $\str\dsi\str=n$ and long minimally irreducible diagrams, respectively.

  On  the set $\Si^{n}_{I}$, we define the ``Lebesgue measure" $\d \dsi$ by
    \beq \label{def: measure}
    \mathop{\int}\limits_{\Si^{n}_{I}}\d \dsi F(\dsi) :=    \mathop{\int}\limits_{I_{-} < u_1 < \ldots < u_n < I_{+}} \d u_1 \ldots \d u_n  \mathop{\int}\limits_{u_i <v_i } \d v_1 \ldots \d v_n   \sum_{\underline{x}(\dsi), \underline{l}(\dsi)}   F(\dsi)
  \eeq
  where $I=[I_-,I_+]$.
  
  Since $\Si^{n}_{I}(\mathrm{ir})$ is a zero-measure subset of $\Si_I^n$, the definition of the measure $\d \dsi$ on $\Si_I^n({\mathrm{ir}})$ has to modified in an obvious way: For all continuous (in the time coordinates $\underline{t}(\dsi)$) functions $F$ on $\Si_I^n$, we set
 \beq \label{def: measure irreducible}
\mathop{ \int}\limits_{\Si^{n}_{I}(\mathrm{ir})} \d \dsi F(\dsi)  =     \mathop{ \int}\limits_{\Si^{n}_{I}} \,  \d \dsi  \delta(\max{\underline{t}(\dsi)} -I_{+}) \delta(\min{\underline{t}(\dsi)} -I_{-}) F(\dsi)
 \eeq
 where the Dirac distributions $\delta(I_+-\cdot)$ and $\delta(I_--\cdot)$ are a priori ambiguous since $I_-, I_+$ are the boundary points of the interval $I$. They   are defined as
 \beq \label{def: delta functions on boundary}
 \delta(\cdot-I_+) := \lim_{s \mathop{\rightarrow}\limits_{<}I_+ }  \delta( \cdot-s), \qquad    \delta( \cdot-I_-) := \lim_{s \mathop{\rightarrow}\limits_{>}I_- }  \delta( \cdot-s)
 \eeq 
 We extend the definition of the measure $\d \dsi$ also to  $\Si_I$ and the various $\Si_I(\mathrm{adj})$ by setting
  \beq
\mathop{ \int}\limits_{\Si_I(\mathrm{adj})} \d \dsi F(\dsi)  :=   \mathop{\sum}_{ n \geq 1}  \mathop{ \int}\limits_{\Si^n_I(\mathrm{adj})} \,  \d \dsi F_n(\dsi)
 \eeq
 where $F_n$ is the restriction to $\Si^n_I(\mathrm{adj})$ of  a function $F$ on $\Si_I(\mathrm{adj})$.

We will often  encounter functions of $\dsi$ that are independent of the coordinates $\underline{x}(\dsi),\underline{l}(\dsi)$ and that must be integrated only over $\underline{t}(\dsi)$ and summed over $\str \dsi \str$.
To deal elegantly with such situations, we let $[\dsi]$ stand for an equivalence class of diagrams that is obtained by dropping the $\underline{x},\underline{l}$-coordinates. That is
\beq
[\dsi] =[\dsi']  \Leftrightarrow \left\{\begin{array} {c} \str \dsi \str = \str \dsi' \str \, \,    \\[2mm] \,  u_i(\dsi)=u_{i}(\dsi'),  v_i(\dsi)=v_{i}(\dsi'), \,\,  \textrm{for all} \, i=1,\ldots, \str \dsi \str    \end{array} \right.
\eeq
The set of such equivalence classes is denoted by  $\Pi_T \Si_I$ (the symbol $\Pi_T$ can be thought of as a projection onto the time coordinates) and we naturally extend the definition to $\Pi_T \Si_I(\mathrm{adj})$
 where $adj$ can again stand for $\mathrm{ir}, \mathrm{mir}, {>\tau}, {< \tau}$ . 
The integration over equivalence classes of diagrams 
is defined as above in \eqref{def: measure} and \eqref{def: measure irreducible}, but with $\sum_{\underline{x}(\dsi),\underline{l}(\dsi)}$ omitted, i.e., such that for all functions $\tilde F$ on $\Si_I(\mathrm{adj})$: 
\beq
  \mathop{\int}\limits_{\Pi_T\Si_I (\mathrm{adj} )  }  \d [ \dsi]   F([\dsi])     =   \mathop{\int}\limits_{\Si_I(\mathrm{adj} ) }  \d  \dsi   \tilde F(\dsi) , \qquad   \textrm{with} \, \,   F([\dsi])  =    \sum_{\underline{x}(\dsi),\underline{l}(\dsi)} \tilde F(\dsi)
\eeq

 Lemma \ref{lem: F and G} contains the main application of this construction.  It is in fact a simple $L^1-L^{\infty}$-bound. 
\begin{lemma}\label{lem: F and G}
 Let $F$ and $G$ be  positive, continuous functions  on $\Si_I$. Then
\beq
\mathop{\int}\limits_{\Si_I(\mathrm{adj})} \d \dsi  F(\dsi) G(\dsi) 
 \leq   \mathop{\int}\limits_{\Pi_T \Si_I(\mathrm{adj})}  \d [\dsi]  \left[ \sup_{\underline{x}(\dsi), \underline{l}(\dsi)} G(\dsi) \right]  \left[  \sup_{\underline{t}(\dsi)} \sum_{\underline{x}(\dsi), \underline{l}(\dsi)} F(\dsi) \right]    \label{eq: general inequality}
\eeq
where it is understood that the sum and sup over $\underline{x}(\dsi), \underline{l}(\dsi)$ are performed while keeping $\str \dsi \str$ and $\underline{t}(\dsi) $ fixed. 
\end{lemma}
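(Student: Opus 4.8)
\textbf{Proof plan for Lemma \ref{lem: F and G}.}

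The statement is a straightforward $L^1$--$L^\infty$ bound once one unwinds the definitions, so the plan is to reduce to the case of a fixed number of pairs and then separate the supremum from the integral. First I would fix $n = \str\dsi\str$ and work within $\Si_I^n(\mathrm{adj})$; by the definition of the measure $\d\dsi$ in \eqref{def: measure} (and its modification \eqref{def: measure irreducible} in the irreducible case, which only inserts $\delta$-functions fixing the endpoints), the integral $\int_{\Si_I^n(\mathrm{adj})} \d\dsi\, F(\dsi)G(\dsi)$ is an integral over the time coordinates $\underline t(\dsi)$ (constrained to a region determined by $\mathrm{adj}$) of the sum over the discrete coordinates $\underline x(\dsi), \underline l(\dsi)$ of the product $F(\dsi)G(\dsi)$. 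The key pointwise estimate is simply
\beq
\sum_{\underline x(\dsi), \underline l(\dsi)} F(\dsi)G(\dsi) \;\leq\; \Big[\sup_{\underline x(\dsi), \underline l(\dsi)} G(\dsi)\Big]\Big[\sum_{\underline x(\dsi), \underline l(\dsi)} F(\dsi)\Big],
\eeq
valid because $F,G \geq 0$ and the sup/sum are taken with $\str\dsi\str$ and $\underline t(\dsi)$ held fixed. Here one uses that $\Si_I^n(\mathrm{adj})$ is ``rectangular'' in the sense that the set of admissible $(\underline x, \underline l)$ does not depend on $\underline t$ — all the specifications $\mathrm{ir}, \mathrm{mir}, {>\tau}, {<\tau}$ constrain only the time coordinates, so the sum and the sup range over the same index set for every value of $\underline t$.

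Next I would bound the second bracket further: since $\int_{\cdots}\d\underline t(\dsi)$ of $\sum_{\underline x,\underline l} F(\dsi)$ only sees $\underline t(\dsi)$ through an expression that is at most $\sup_{\underline t(\dsi)}\sum_{\underline x(\dsi),\underline l(\dsi)} F(\dsi)$, but that is too lossy; rather, the clean way is to note that the first bracket $\sup_{\underline x, \underline l} G(\dsi)$ and the second factor $\sum_{\underline x, \underline l} F(\dsi)$ are both now functions of $[\dsi]$ alone (the equivalence class obtained by dropping $\underline x, \underline l$), so after performing the $\sum_{\underline x, \underline l}$ we are integrating a function of $[\dsi]$ over $\underline t(\dsi)$, which by the definition of $\int_{\Pi_T\Si_I(\mathrm{adj})} \d[\dsi]$ is exactly $\int_{\Pi_T\Si_I(\mathrm{adj})}\d[\dsi]$ of that function. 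To match the stated RHS precisely one then uses the elementary bound $\sum_{\underline x(\dsi),\underline l(\dsi)} F(\dsi) \leq \sup_{\underline t(\dsi)}\sum_{\underline x(\dsi),\underline l(\dsi)} F(\dsi)$ pointwise in $\underline t(\dsi)$ — this is where the $\sup_{\underline t(\dsi)}$ in \eqref{eq: general inequality} enters, absorbing the $\underline t$-dependence of the $F$-factor so that the remaining $\underline t$-integral acts only on $\sup_{\underline x,\underline l} G$. Finally I would sum over $n = \str\dsi\str \geq 1$, using that both sides are defined as such sums (monotone convergence for the nonnegative integrands), which gives \eqref{eq: general inequality}.

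The only point requiring a small amount of care — and the one I would flag as the ``main obstacle,'' though it is minor — is the handling of the irreducible case, where $\Si_I^n(\mathrm{ir})$ is a measure-zero subset of $\Si_I^n$ and $\d\dsi$ is defined via the limiting $\delta$-functions in \eqref{def: delta functions on boundary}. One must check that the factorization of sup and sum still goes through after inserting $\delta(\max\underline t(\dsi) - I_+)\,\delta(\min\underline t(\dsi) - I_-)$: this is immediate because those $\delta$-functions depend only on $\underline t(\dsi)$, so they commute past the $\sum_{\underline x, \underline l}$ and do not affect the index set over which the sup and sum are taken. With that observation the argument for $\mathrm{ir}$ (and hence for $\mathrm{mir}$, and for the combined specifications like $(<\tau, \mathrm{ir})$) is identical to the generic case. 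No other subtlety arises; the lemma is genuinely just Tonelli's theorem plus the trivial inequality $\sum fg \leq (\sup g)(\sum f)$ for nonnegative $f,g$.
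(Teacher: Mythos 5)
Your proof is correct and follows essentially the same route as the paper's: the paper's proof is exactly the $L^1$--$L^\infty$ inequality applied first to the sum over $\underline{x}(\dsi),\underline{l}(\dsi)$ and then to the integration over the time coordinates, followed by summation over $\str\dsi\str$. Your extra care about the $\delta$-functions in the irreducible case is harmless and consistent with the paper's (unstated) treatment.
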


In \eqref{eq: general inequality}, the sum/$\sup$, over ${\underline{x}(\dsi), \underline{l}(\dsi)}$  is in fact  a shorthand notation for the sum/$\sup$ over all $\dsi'$ such that $[\dsi']=[\dsi]$ for a given $\dsi$. Hence, $ \sup_{\underline{x}(\dsi), \underline{l}(\dsi)}   G(\dsi)$ is a function of $[\dsi]$ only, as required.  The supremum  $\sup_{\underline{t}(\dsi)}$ is  over $I^{2\str \dsi\str }$,  with  $\str \dsi \str$ fixed.   Hence, the second factor on the RHS of \eqref{eq: general inequality} is in fact a function of $\str \dsi \str$ only.

\begin{proof}
We start from the explicit expressions in \eqref{def: measure} or \eqref{def: measure irreducible}, and we use a $L^1-L^{\infty}$ inequality: first for the sum over $\underline{x}(\dsi), \underline{l}(\dsi)$ and then for the integration over $u_i,v_i$.
 \end{proof}

\subsubsection{Representation of the reduced evolution $\caZ_t$} \label{sec: representation of caZ}

Recall the operators $\caV_t((t_i,x_i,l_i)_{i=1}^{2n} )$ defined in  \eqref{def: cav}. Since,  by the above discussion,  there is a one-to-one correspondence between a diagram $\dsi$  and $(\pi, (t_i,x_i,l_i)_{i=1}^{2\str \dsi \str} )$  where $\pi \in \caP_{\str \dsi \str}$ and $t_i < t_{i+1}$,  we can write $\caV_t(\si)$ instead of  $\caV_t( (t_i,x_i,l_i)_{i=1}^{2\str \dsi \str} )$ and $\zeta(\dsi)$ instead of $\zeta(\pi, (t_i,x_i,l_i)_{i=1}^{2\str \dsi \str}  )$, i.e.\ 
\beq \label{def: cavdsi}
\caV_t(\dsi) :=   \caV_t(  (t_i(\dsi),x_i(\dsi),l_i(\dsi))_{i=1}^{2\str \dsi \str} )   
\eeq
and
\beq \label{def: zeta2}
 \zeta(\dsi)  := \prod_{ ((u,x,l), (v,x',l')) \in \dsi}   \la^2    \left\{  \begin{array}{cr}   \psi (x'-x,v-u )   & \qquad  l=l'= \links \\ [1mm]  
\overline{\psi}  (x'-x,v-u )    & l=l'= \rechts \\
    \overline{\psi}   (x'-x,v-u )   &    l= \links, l'= \rechts \\
\psi  (x'-x,v-u ) &    l= \rechts, l'= \links \\
  \end{array}\right.
\eeq
As a slight generalization of the operators 
$\caV_{t}(\dsi)$, we also define $\caV_{I}(\dsi)$ for a closed interval $I:=[I_{-},I_{+}]$ by
\beq \label{def: cavI}
\caV_{I}(\dsi) :=      \caU_{I_{+}-t_{2n}}     \caI_{x_{2n},l_{2n} }    \ldots \caI_{x_2,l_2}   \caU_{t_2-t_1} \caI_{x_1,l_1}  \caU_{t_1-I_{-}}, \qquad \textrm{for}\,\, \dsi \, \,  \textrm{such that}\,\,  \Dom \dsi \subset I
\eeq
The only difference with $\caV_{t}(\dsi)$ is in the time-arguments '$t$' of $\caU_{t}$ at the beginning and the end of the expression.
With this new notation, $\caV_{t}(\dsi)= \caV_{[0,t]}(\dsi)$. 
Next, we state the representation of the reduced evolution $\caZ_t$ as an integral over diagrams
\beq \label{eq: caZ as integral over diagrams}
\caZ_t = \caU_t +  \int_{\Si_{[0,t]}} \d \dsi \zeta(\dsi) \caV_{[0,t]}(\dsi)
\eeq
Similarly, the cutoff dynamics $\caZ^{\tau}_{t}$ is represented as
\beq \label{eq: caZ  cutoff as integral over diagrams}
\caZ^{\tau}_{t}  =  \caU_t+   \mathop{\int}\limits_{\Sigma_{[0,t]}({< \tau})}  \d \dsi   \zeta(\dsi)   \caV_{[0,t]}(\dsi)
\eeq
Formulas \eqref{eq: caZ as integral over diagrams} and \eqref{eq: caZ cutoff as integral over diagrams} are immediate consequences of \eqref{eq: Z with pairings} and \eqref{eq: Z cutoff with pairings}, respectively.

We use the notion of irreducible diagrams $\dsi$ to decompose the operators $\caV_{[0,t]}(\si)$ into products and to derive a new representation, \eqref{eq: z as polymer model}, for $\caZ_t$ and $\caZ_t^\tau$

Let  $(\dsi_1, \ldots, \dsi_p)$ be the decomposition of a diagram $\dsi \in \Si_{[0,t]}$ into irreducible components. Define the times $s_1, \ldots, s_{2p}$ to be the boundaries of the domains of the irreducible components, i.e.,  $[s_{2i-1}, s_{2i}]= \Dom \dsi_i$, for $i=1,\ldots, p$.   Then
\beq \label{factorization}
\caV_{I}(\dsi)=  \caU_{I_{+}-s_{2p}} \,  \caV_{[s_{2p-1},s_{2p}]} (\dsi_p) \, \caU_{s_{2p-1}-s_{2p-2}}   \ldots        \caU_{s_3-s_2} \,  \caV_{[s_1,s_2]}(\dsi_1) \,  \caU_{s_1- I_{-}} ,
\eeq
as can be checked from (\ref{def: zeta2}-\ref{def: cavI}).  Here, the essential observation is that all time coordinates of $\dsi_i$ are smaller than those of $\dsi_{i+1}$. We introduce 
\beq \label{def: irreducible evolutions}
\caZ^{\mathrm{ir}}_{t}:=    \mathop{\int}\limits_{{\Si}_{[0,t]}(\mathrm{ir}) }  \d \dsi   \zeta(\dsi) \caV_{[0,t]}(\dsi),  \qquad    \caZ^{\tau, \mathrm{ir}}_{t}:=    \mathop{\int}\limits_{{\Si}_{[0,t]}(< \tau,\mathrm{ir}) }  \d \dsi   \zeta(\dsi)\caV_{[0,t]}(\dsi)
\eeq
and we remark that the definitions in \eqref{def: irreducible evolutions}   allow for a shift of time on the RHS, that is 
\beq
\caZ^{\mathrm{ir}}_{t} =  \mathop{\int}\limits_{{\Si}_{I}(\mathrm{ir}) }  \d \dsi  \zeta(\dsi)  \caV_{I}(\dsi), \qquad \textrm{for any} \, I=[s,s+t], \quad s \in \bbR
\eeq
and similarly for $\caZ^{\tau,\mathrm{ir}}_{t} $.
By this  time-translation invariance, the  factorization property   \eqref{factorization} and the factorization property of the correlation function in \eqref{def: zeta2}, i.e., 
\beq
\zeta (\dsi_1 \cup \ldots  \cup \dsi_p)  = \prod_{i=1}^p  \zeta(\dsi_i),
\eeq
 we can rewrite the expression  \eqref{eq: caZ as integral over diagrams} as
\beq \label{eq: z as polymer model}
 \caZ_{t} =   \sum_{m \in 2 \bbZ^+}  \mathop{\int}\limits_{0 \leq s_1 \leq \ldots \leq s_{m} \leq t} \d s_1 \ldots \d s_{m}      \, \left(    \caU_{t-s_{m}}  \caZ^{\mathrm{ir}}_{s_{m}-s_{m-1}}   \ldots        \caU_{s_3-s_2} \caZ^{\mathrm{ir}}_{s_2-s_1} \caU_{s_1} \right).
\eeq
where the term on the RHS corresponding to $m=0$ is understood to be equal to $\caU_t$. 
The idea behind \eqref{eq: z as polymer model} is that, instead of summing over all diagrams, we  sum over all sequences of irreducible diagrams. 
An analogous formula holds with $ \caZ_{t} $ and  $\caZ^{\mathrm{ir}}_{t} $ replaced by $ \caZ_{t}^{\tau} $ and $\caZ^{\tau,\mathrm{ir}}_{t} $.

\subsection{Ladder diagrams and excitations} \label{sec: ladders and excitations}
We are  ready to identify the operators $\caR^{\tau}_{ex}(z)$ and $\caR_{\mathrm{ld}}^{\tau}(z)$, whose existence was postulated in Lemma \ref{lem: polymer model cutoff} and the operator $\caR_{ex}(z)$, which was postulated in Lemma \ref{lem: polymer model full}.

The Laplace transform,  $\caR(z)$, of $\caZ_t$ has been introduced in \eqref{def: first laplace trafo}.  We calculate $\caR(z)$ starting from  \eqref{eq: z as polymer model}
\baq
\caR(z) &=&  \int_{\bbR^+} \d t \,  \e^{-t z}    \caZ_{t}   \\[2mm]
  &=&  \sum_{m \geq 0}    \left[   (z + \i \ad(H_\sys))^{-1}     \int_{\bbR^+} \d t  \,  \e^{-t z}  \caZ^{\mathrm{ir}}_t   \right] ^m  (z+ \i \ad(H_\sys))^{-1}    \\[2mm]
  &=&   (z + \i \ad(H_\sys) - \caR_{\mathrm{ir}}(z))^{-1}, \qquad  \textrm{with}  \, \,    \caR_{\mathrm{ir}}(z) :=    \int_{\bbR^+} \d t  \,  \e^{-t z}  \caZ^{\mathrm{ir}}_t 
\eaq
The second equality follows by $ \int_{\bbR^+} \d t  \e^{-t z}    \caU_{t} = (z+ \i \ad(H_\sys))^{-1}  $ for $\Re z>0$. The third equality follows by summing the geometric series.

An identical computation yields
\beq \label{eq: laplace cutoff}
\caR^{\tau}(z) =   (z + \i \ad(H_\sys) - \caR^{\tau}_{\mathrm{ir}}(z))^{-1}, \qquad  \textrm{with}  \, \,    \caR^{\tau}_{\mathrm{ir}}(z) :=     \int_{\bbR^+} \d t \,   \e^{-t z}  \caZ^{\tau,\mathrm{ir}}_t 
\eeq
 The definition of $\caR^{\tau}_{ex}(z)$ and $\caR_{\mathrm{ld}}^{\tau}(z)$ relies on the following splitting of $\caR^{\tau}_{\mathrm{ir}}(z)$
\baq
\caR_{\mathrm{ld}}^{\tau}(z)  &: =&  \mathop{\int}\limits_{\bbR^+}  \d t \,  \e^{-t z}  \mathop{\int}\limits_{\Si_{[0,t]}( {< \tau}, \mathrm{ir} )} \zeta(\dsi)  1_{\str \dsi \str =1}   \caV_{[0,t]}(\dsi)    \label{def: car ladders}   \\
 \caR^\tau_{ex}(z) &: = &  \mathop{\int}\limits_{\bbR^+}  \d t  \,  \e^{-t z}    \mathop{\int}\limits_{\Si_{[0,t]}( {< \tau}, \mathrm{ir})} \zeta(\dsi) 1_{\str \dsi \str  \geq 2} \caV_{[0,t]}(\dsi)
\eaq
The subscripts refer to `'ladder'- and `excitation'-diagrams. The name `ladder' originates from the graphical representation of diagrams whose irreducible components consist of one pair (it is standard in condensed matter theory). 
Since obviously $\caR^{\tau}_{ex}(z)+\caR_{\mathrm{ld}}^{\tau}(z)= \caR^{\tau}_{\mathrm{ir}}(z)$, the relation \eqref{eq: laplace cutoff} implies Statement (1) of Lemma \ref{lem: polymer model cutoff}.

In the model without cutoff, we do not  disentangle ladder and excitation diagrams, since every diagram that contains a long pairing, is considered an excitation. We can thus define
\beq  \label{def: laplace excitations}
 \caR_{ex}(z) :=   \caR_{\mathrm{ir}}(z)-  \caR^{\tau}_{\mathrm{ir}}(z)
% \caR_{ex}(z) := \mathop{\int}\limits_{\Si_t(\mathrm{ir}} \d t \e^{-t z} \caV_t  \, \, - \,\,  \mathop{\int}\limits_{\Si_t(irr,{< \tau} )} \d t \e^{-t z} \caV_t  
\eeq
We will come up with a more constructive representation of $ \caR_{ex}(z)$  in formula \eqref{eq: representation of laplace ex with C}.

\subsubsection{The reduced evolution as a a double integral over long and short diagrams} \label{sec: representation of caZ as double integral}

We develop a new representation of $\caZ_t^{\mathrm{ir}}$ by fixing the long diagrams, i.e., those in $\Si_{[0,t]}(>\tau)$, and integrating  the short ones.

We define the \emph{conditional cutoff dynamics}, $\caC_t(\dsil)$, depending on a long diagram $\dsil \in  \Si_{[0,t]}(>\tau)$, as follows:
\beq \label{def: C}
\caC_t({\dsi_l})  =  1_{\dsi_l \in \Si_{[0,t]}(>\tau, \mathrm{ir})}  \caV_{[0,t]}(\dsi_l) +    \mathop{\int}\limits_{\footnotesize{\left.\begin{array}{c} \Sigma_{[0,t]}({< \tau} ) \\   \dsi_l \cup \dsi  \in \Sigma_{[0,t]}(\mathrm{ir})    \end{array}\right.  }  } \hspace{-0.3cm}  \d \dsi \,    \zeta(\dsi)    \caV_{[0,t]}(\dsi \cup \dsi_l)
\eeq
In words, $\caC_t({\dsi_l})$ contains contributions of  short diagrams $\dsi \in  \Sigma_t({< \tau} )$ such that $\dsi_l \cup \dsi$ is irreducible in the interval $[0,t]$.   Hence, if $\dsi_l$ is itself irreducible in the interval $[0,t]$, then there is a term without any short diagrams; this is the first term in \eqref{def: C}.  In general, $\dsi_l$ need not be irreducible.
Note that the constraint on $\dsi$ (in the domain of the integral) in the second term of \eqref{def: C} depends crucially on the nature of $\dsi_l$. In particular, if $\Dom\dsi_l$ does not contain the boundary points $0$ or $t$, then  $\dsi$ has to contain $0$ or $t$, and this introduces one or two delta functions into the constraint on $\dsi$.  
To relate $\caC_t(\dsi_l)$ to $\caZ_t^{\mathrm{ir}}$, we must explicitly add those $\dsi_l$ that contain one or both of the times $0$ and $t$. This is visible in the  following formula, which follows 
from \eqref{def: C} and the definition of $\caZ_t^{\mathrm{ir}}$ in \eqref{def: irreducible evolutions}.
\beq  \label{eq: difference of caZ and caZ cutoff} 
\caZ_t^{\mathrm{ir}}-\caZ_t^{\tau, \mathrm{ir}} =     \mathop{\int}\limits_{\Sigma_{[0,t]}(> \tau)} \d \dsi_l   \zeta(\dsi_l)   \caC_t( \dsi_l)  \,  \left[1+\delta(t_1(\dsil) \right]  \, \left[1+\delta(t_{2\str \dsil\str}(\dsil)-t) \right]
\eeq

We must subtract $\caZ_t^{\tau,\mathrm{ir}} $ on the LHS, since all contributions to the RHS involve at least one long diagram. The $\delta$-functions on the RHS are defined as in \eqref{def: delta functions on boundary}.

The following formula is an obvious consequence of  \eqref{def: laplace excitations}  and \eqref{eq: difference of caZ and caZ cutoff}:
\beq \label{eq: representation of laplace ex with C}
 \caR_{ex}(z)   =   \mathop{\int}\limits_{\bbR^+} \d t  \, \e^{-tz}  \mathop{\int}\limits_{\Sigma_{[0,t]}(> \tau)}   \d \dsil    \zeta(\dsil)   \caC_t(\dsil)  \,  \left[1+\delta(t_1(\dsil) \right]  \, \left[1+\delta(t_{2\str \dsil\str}(\dsil)-t) \right]
\eeq
All of Section \ref{sec: bounds on long pairings}  will be devoted to proving good bounds on $ \caR_{ex}(z) $, as claimed in  Lemma \ref{lem: polymer model full}.

\subsection{Decomposition of the conditional cutoff dynamics $\caC_t(\dsi_l) $} \label{sec: decomposition of caC}
Our next step is to decompose the conditional cutoff dynamics $\caC_t(\dsi_l) $, as defined in \eqref{def: C}, into components.   Since $\caC_t(\dsi_l) $ is defined as an integral over short diagrams $\dsi$, we can achieve this by classifying the short diagrams $\dsi$ that contribute to this integral.  The idea is to look at the irreducible components of $\dsi$ whose domain contains one or more of the time-coordinates of $\dsi_l$ (In our final formula, \eqref{formula caC}, these domains correspond to the intervals $[s^{\mathrm{i}}_k,s^{\mathrm{f}}_k]$). The irreducible components whose domain does not contain any of the time coordinates of $\dsi_l$ can be resummed right away, and they do not play a role in our classification (this corresponds to the operators $\caZ_{t}^{\tau}$ in \eqref{formula caC}).
We outline  the abstract decomposition procedure in Section \ref{sec: vertices and vertex partitions}, and we present an example (with figures) in Section \ref{sec: examples vertex partition}.

\subsubsection{Vertices and vertex partitions} \label{sec: vertices and vertex partitions}

Consider a long diagram $\dsi_l \in \Sigma_{[0,t]}(> \tau)$ with $\str \dsi_l \str=n$ and time-coordinates $\underline{t}(\dsi_l)= (t_1,\ldots, t_{2n})$. With this diagram, we will associate different 
 \emph{vertex partitions} $\frL$. First, we define \emph{vertices}.
% A vertex partition $\frL$  is a collection of \emph{vertices} $\frl_k, k=1,\ldots,\str \frL\str$ satisfying the conditions below.  
A vertex  $\frl$  is determined by a label, \emph{bare} or \emph{dressed},  and a vertex set $S(\frl)$, given by 
\beq
S(\frl) =\{ t_{j},t_{j+1} \ldots, t_{j+m-1}  \}, \qquad \textrm{for some} \, 1\leq j < j+m-1 \leq 2n
\eeq
Hence, the vertex set is a a subset of the times $\{t_1(\dsil), \ldots, t_{2n}(\dsil) \}$.
Moreover, a vertex $\frl$ with $\str S(\frl) \str >1$ is always dressed. Hence,  if $\frl$ is bare then  $S(\frl)$ is necessarily a singleton, i.e., $ S(\frl) =\{ t_j\}$ for some $j$.  

A vertex partition $\frL$ compatible with $\dsi_l$ (Notation: $\frL \sim \dsil$) is a collection of vertices $\frl_1,\ldots,\frl_m$ such that  
\begin{itemize}
\item
 The vertex sets $S(\frl_1), \ldots, S(\frl_p)$ form a partition of $\{t_1(\dsil), \ldots, t_{2n}(\dsil) \}$. By convention, we always number the vertices in a vertex partition such that the elements of $S(\frl_k)$ are smaller than those of $S(\frl_{k+1})$. The number, $p$, of vertices in a vertex partition is called the cardinality of the vertex partition and  is denoted by $\str\frL\str$.
\item
Any two consecutive times $t_j,t_{j+1}$ such that  $[t_j, t_{j+1}]  \not\subset \Dom \dsil$, belong to the vertex set $S(\frl_k)$ of one of the vertices $\frl_k$. Such a vertex $\frl_k$ is  necessarily dressed since its vertex set contains at least two elements.
\item If $t_1=0$, then $S(\frl_1)=\{ t_1 \}$ and $\frl_1$ is bare. If  $t_1 >0$, then $S(\frl_1) \ni t_1$ and $\frl_1$ is dressed.
\item If $t_{2n}=t$, then $S(\frl_m)=\{ t_{2n} \}$ and $\frl_m$ is bare. If  $t_{2n} < t$, then $S(\frl_m) \ni t_{2n}$ and $\frl_m$ is dressed.
\end{itemize}

The idea is to split
\beq \label{eq: decomposition of cac into pi}
\caC_t(\dsi_l)= \sum_{ \frL \sim \dsil} \caC_t(\dsi_l, \frL)
\eeq
where  the sum is over all $\frL$ compatible with $\dsil$ and
$\caC_t(\dsi_l, \frL)$ contains the contributions of all short pairings $\dsi$ that \emph{match} the vertex partition $\frL$; 

\beq
\dsi\,\,  \textrm{matches} \,   \, \frL \quad \Leftrightarrow  \quad \left\{\begin{array}{l} 
 \forall \,\,  \textrm{dressed}\,\, \frl_k :  \exists! \,  \textrm{irr. component}  \, \dsi_j \subset \dsi  \, \, \textrm{such that} \, \,  \, S(\frl_k) \subset \Dom \dsi_j   \\ \textrm{and} \, \, S(\frl_{k'}) \cap  \Dom \dsi_j =\emptyset \, \,   \textrm{for all} \,   k' \neq k
\\[2mm]
\forall  \,\,  \textrm{bare}\,\, \frl_k : S(\frl_k) \cap \Dom \dsi = \emptyset
\end{array} \right.
\eeq 

For the sake of completeness, we define  the operators $\caC_t(\dsi_l, \frL) $, below, but, in  Section \ref{sec: abstract definition of vertex operators}, we will provide a more constructive expression for them. 
First, assume that the vertex partition $\frL$  contains at least one dressed vertex. Then $\caC_t(\dsi_l, \frL) $ is defined by restricting the second integral in  \eqref{def: C} to those $\dsi$ that match $\frL$;
\beq  \label{def: caC for vertex partition}
\caC_t(\dsi_l, \frL) :=
 \mathop{\int}\limits_{\footnotesize{\left.\begin{array}{c} \Sigma_{[0,t]}({< \tau} ) \\   \dsi_l \cup \dsi  \in \Sigma_{[0,t]}(\mathrm{ir})    \end{array}\right.  }  }
\d \dsi  \zeta(\dsi) \caV_{[0,t]}(\dsi\cup \dsil)   
1_{\dsi\, \textrm{matches}  \, \frL} 
%\, \times \,  1_{\left\{ \begin{array}{c} \forall \,\,  \textrm{dressed}\,\, \frl_k : \\  \exists! \,  \textrm{irr. component}  \, \dsi_j \subset \dsi  \, \, \textrm{such that} \, \,  \, S(\frl_k) \subset \Dom \dsi_j   \\ \textrm{and} \, \, \left( \cup_{k' \neq k} S(\frl_{k'})\right)  \cap  \Dom \dsi_j =\emptyset  \end{array} \right\} }
\eeq
Next, we assume that the vertex partition $\frL$ contains only bare vertices. If $\dsi_l$ is irreducible in the interval $[0,t]$, i.e., $\dsil \in \Si_{[0,t]}(>\tau, \mathrm{ir})$, then the vertex partition  with only bare vertices is compatible with $\dsil$. If $\dsil \notin \Si_{[0,t]}(>\tau, \mathrm{ir})$, then this vertex partition is  not compatible with $\dsil$.  Hence, we assume that $\dsil \in \Si_{[0,t]}(>\tau, \mathrm{ir})$ and we define
\beq \label{def: caC for bare vertex partition}
\caC_t(\dsi_l, \frL) :=  \caV_{[0,t]}(\dsi_l)+  \mathop{\int}\limits_{\Si_{[0,t]}(<\tau)} \d \dsi \, \zeta(\dsi) \caV_{[0,t]}(\dsi\cup \dsil) \, \times \,  1_{ \left\{  \{ t_1,\ldots, t_{2 \str \dsi_l\str}   \}   \cap  \Dom \dsi =\emptyset  \right\}}
\eeq
The second term is the same as in \eqref{def: caC for vertex partition} (but specialized to the partition with only bare vertices) and the first term is a contribution without any short diagrams.  This first term equals the first term (on the RHS) of \eqref{def: C}. 
% 
%The indicator function in \eqref{def: caC for bare vertex partition} ensures that no irreducible component of $\dsi$ 'bridges' any of the time-coordinates $\underline{t}(\dsil)$ of $\dsil$. 

In Section \ref{sec: examples vertex partition}, we give examples of vertex partitions that are intended to render the above concepts more intuitive. 

%\textbf{To delete?----------------------------}
%To define $\caC_t(\dsi_l, \frL)$, we first  introduce the vertex operators for all $k=1,\ldots,\str\frL \str$. Additionaly, we associate to each dressed vertex two time-coordinates, $s^{\mathrm{i}},s^{\mathrm{f}}$ (initial and final times-  which are contstrained to satisfy
% \beq t_j  \in [s^{\mathrm{i}},s^{\mathrm{f}}]  \qquad  \Leftrightarrow \qquad   t_j \in S(\frl)  \eeq  
% In other words, the interval $[s^{\mathrm{i}},s^{\mathrm{f}}] $ should contain exactly those time-coordinates of $\dsi_l$ which are included in the vertex set $S(\frl)$. 
% For a bare vertex $\frl$ with $S(\frl)=j$, the time-coordinates are fixed by $s^{\mathrm{i}}=s^{\mathrm{f}}=t_j$. 
% 
% As indicated above, a vertex partition $\frL$ together with vertex coordinates $s^{\mathrm{i}}_k,s^{\mathrm{f}}_k$ determines a time-domain which is used to classify the small pairings.
% 
% The strategy is to associate to each $\dsi \in \Si_t({< \tau})$ such that $\dsi_l \cup \dsi \in \Si_{[0,t]}(\mathrm{ir} $ a vertex partition and vertex time-coordinates. Before outlining the procedure for doing so, we give some examples. \newline
% \textbf{----------------------------}
 
\subsubsection{Examples of vertex partitions} \label{sec: examples vertex partition}
 We choose a long diagram $\dsi_l \in \Si^3_{[0,t]}(>\tau)$ which consists of three pairs such that the time coordinates $(u_i,v_i)_{i=1}^3$ are ordered as
 \beq
  \left.\begin{array}{rccccccccccccccl}    &&  t_1 & &   t_2 &&   t_3 &&  t_4  &&   t_5  &&  t_6    \\
0 &<&  u_1 &<&   u_2 &<& v_1 &<& u_3 &<& v_2 &< & v_3  &=& t      \end{array}\right. \label{eq: times of a diagram}
 \eeq
 Hence, $\dsi_l$ is irreducible in the interval $[u_1,t]$, but not in the interval $[0,t]$, at least not if $t_1 \neq 0$.  
 
 \begin{figure}[h!] 
\vspace{0.5cm}
\begin{center}
\psfrag{t0}{ $\scriptstyle{0}$}
\psfrag{t1}{ $\scriptstyle{t_1}$}
\psfrag{t2}{ $\scriptstyle{t_2}$}
\psfrag{t3}{ $\scriptstyle{t_3}$}
\psfrag{t4}{ $\scriptstyle{t_4}$}
\psfrag{t5}{ $\scriptstyle{t_5}$}
\psfrag{t6}{ $\scriptstyle{t_6}$}
\psfrag{b1}{ $\scriptstyle{\frl_1}$}
\psfrag{b2}{ $\scriptstyle{\frl_2}$}
\psfrag{b3}{ $\scriptstyle{\frl_3}$}
\psfrag{b4}{ $\scriptstyle{\frl_4}$}
\psfrag{b5}{ $\scriptstyle{\frl_5}$}
\psfrag{b6}{ $\scriptstyle{\frl_6}$}
\psfrag{init1}{ $\scriptstyle{s_1}$}
\psfrag{init2}{ $\scriptstyle{s_2}$}
\psfrag{init3}{ $\scriptstyle{s_3}$}
\psfrag{init4}{ $\scriptstyle{s_4}$}
\psfrag{2init1}{ $\scriptstyle{s_1}$}
\psfrag{2init2}{ $\scriptstyle{s_2}$}
\psfrag{3init1}{ $\scriptstyle{s_1}$}
\psfrag{3init2}{ $\scriptstyle{s_2}$}
\psfrag{fint1}{ $\scriptstyle{s_1}$}
\psfrag{fint2}{ $\scriptstyle{s_2}$}
\psfrag{fint3}{ $\scriptstyle{s_3}$}
\psfrag{fint4}{ $\scriptstyle{s_4}$}
\psfrag{2fint1}{ $\scriptstyle{s_1}$}
\psfrag{2fint2}{ $\scriptstyle{s_2}$}
\psfrag{3fint1}{ $\scriptstyle{s_1}$}
\psfrag{3fint2}{ $\scriptstyle{s_2}$}
\includegraphics[width = 18cm, height=3.5cm]{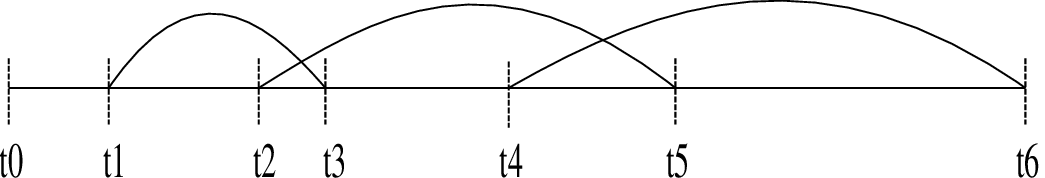}
\caption{ \footnotesize{A long diagram with time coordinates as in \eqref{eq: times of a diagram}} \label{fig: explanation 0} }
\end{center}
 \end{figure}

 Below we display three diagrams $ \dsi \in \Si_{[0,t]}({< \tau}) $ satisfying the condition $\dsi_l \cup \dsi \in \Si_{[0,t]}(\mathrm{ir} )$. 
  To assign to each of those diagrams a vertex partition, we proceed as follows. 
 Starting on the left, we look at the  time-coordinates $\underline{t}(\dsi_l)$ and we check whether these times are 'bridged' by a short pairing, i.e.,  whether they belong to the domain of a short diagram. If this is the case then such a time belongs to the vertex set of a dressed vertex. The vertex set of this vertex is the set of all time-coordinates that are connected to this point by short pairings. 
If this is not the case, i.e., if a time-coordinate of $\dsi_l$ is not 'bridged' by any short pairing,  than such a point constitutes a bare vertex, whose vertex set is just this one point.

 Actually, for the first time-coordinate (in our case $u_1$), this is particularly simple. Either the first time-coordinate  is \textbf{not} equal to $0$, in which case it has to be 'connected' by short pairings to $0$ (indeed, if this were not the case, then $\dsi_l \cup \dsi$ cannot be irreducible in $[0,t]$), or the first time-coordinate \textbf{is} equal to $0$, in which case it cannot be connected by short pairings to the second coordinate, because then the first time-coordinate of the short diagram would have to be $0$ as well, which is a  zero measure event (for this reason, we have excluded this case in the definition of the diagrams in Section \ref{sec: diagrams}).  In our example $u_1 \neq 0$, and  one checks that, in all three choices of $\dsi $, there are short diagrams connecting $u_1$ and $0$.

\begin{figure}[h!] 
\vspace{0.5cm}
\begin{center}
\psfrag{t0}{ $\scriptstyle{0}$}
\psfrag{t1}{ $\scriptstyle{t_1}$}
\psfrag{t2}{ $\scriptstyle{t_2}$}
\psfrag{t3}{ $\scriptstyle{t_3}$}
\psfrag{t4}{ $\scriptstyle{t_4}$}
\psfrag{t5}{ $\scriptstyle{t_5}$}
\psfrag{t6}{ $\scriptstyle{t_6}$}
\psfrag{b1}{ $\scriptstyle{\frl_1}$}
\psfrag{b2}{ $\scriptstyle{\frl_2}$}
\psfrag{b3}{ $\scriptstyle{\frl_3}$}
\psfrag{b4}{ $\scriptstyle{\frl_4}$}
\psfrag{b5}{ $\scriptstyle{\frl_5}$}
\psfrag{b6}{ $\scriptstyle{\frl_6}$}
\psfrag{init1}{ $\scriptstyle{s^{\mathrm{i}}_1}$}
\psfrag{init2}{ $\scriptstyle{s^{\mathrm{i}}_2}$}
\psfrag{init3}{ $\scriptstyle{s^{\mathrm{i}}_3}$}
\psfrag{init4}{ $\scriptstyle{s^{\mathrm{i}}_4}$}
\psfrag{2init1}{ $\scriptstyle{s^{\mathrm{i}}_1}$}
\psfrag{2init2}{ $\scriptstyle{s^{\mathrm{i}}_2}$}
\psfrag{3init1}{ $\scriptstyle{s^{\mathrm{i}}_1}$}
\psfrag{3init2}{ $\scriptstyle{s^{\mathrm{i}}_2}$}
\psfrag{fint1}{ $\scriptstyle{s^{\mathrm{f}}_1}$}
\psfrag{fint2}{ $\scriptstyle{s^{\mathrm{f}}_2}$}
\psfrag{fint3}{ $\scriptstyle{s^{\mathrm{f}}_3}$}
\psfrag{fint4}{ $\scriptstyle{s^{\mathrm{f}}_4}$}
\psfrag{2fint1}{ $\scriptstyle{s^{\mathrm{f}}_1}$}
\psfrag{2fint2}{ $\scriptstyle{s^{\mathrm{f}}_2}$}
\psfrag{3fint1}{ $\scriptstyle{s^{\mathrm{f}}_1}$}
\psfrag{3fint2}{ $\scriptstyle{s^{\mathrm{f}}_2}$}
\includegraphics[width = 18cm, height=4cm]{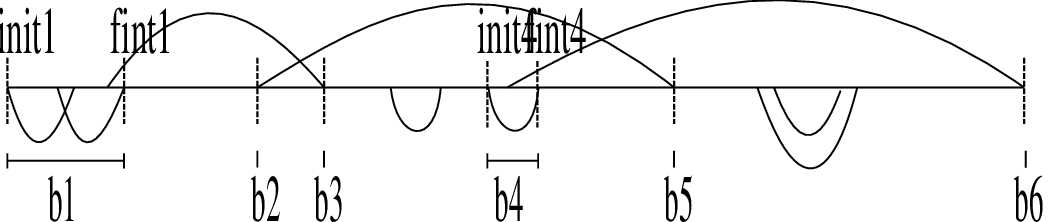} \\
\vspace{1cm}
\includegraphics[width = 18cm, height=4cm]{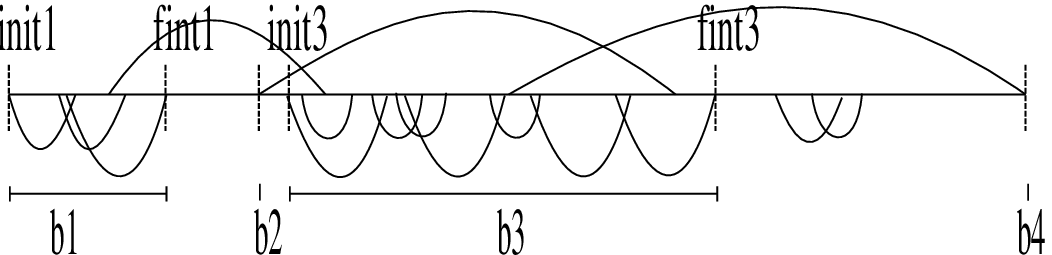}\\
\vspace{1cm}
\includegraphics[width = 18cm, height=4cm]{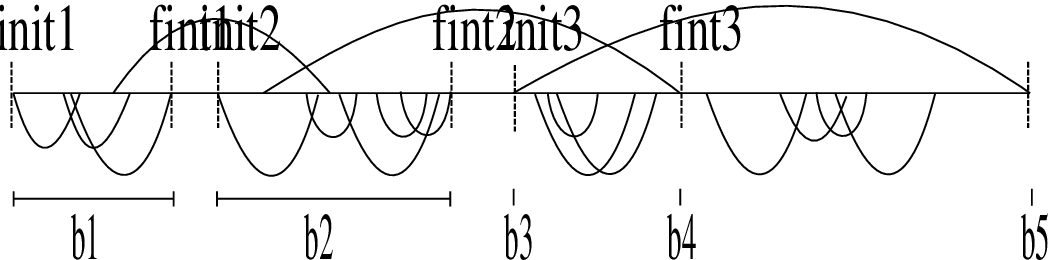}
\caption{ \footnotesize{The picture shows three different choices of short diagrams $ \dsi \in \Si_{[0,t]}({< \tau}) $.  Recall that short diagrams are drawn below the horizontal (time) axis. In each picture, we show the resulting vertex partition by listing the vertices $\frl_1,\frl_2, \ldots$. The dressed vertices are denoted by a horizontal bar whose endpoints represent the vertex time-coordinates $s^{\mathrm{i}}_k,s^{\mathrm{f}}_k$. The  bare vertices are denoted by a short vertical line whose position represents the (dummy) vertex time coordinates $s^{\mathrm{i}}_k =s^{\mathrm{f}}_k=t_j$. The time-coordinates of the bare vertices are not shown since they coincide with time-coordinates of long pairings.
 For example, in the bottom picture, $\frl_1,\frl_2$ are dressed and $\frl_3,\frl_4,\frl_5$ are bare. } \label{fig: vertexpartitions} }
\end{center}

\end{figure}

 Let us determine the vertices in the three displayed figures
 
 \begin{center}
\begin{tabular}{ccc}    vertex partition $1$  &     vertex partition $2$   &   vertex partition $3$   \\[2mm]
 \begin{tabular}{|c|c|c|}\hline 
      $\frl_1$&  $\{ t_1 \}$     &  dressed         \\[1mm] 
      \hline
       $\frl_2$&  $\{ t_2 \}$     &  bare       \\[1mm]         \hline
        $\frl_3$&  $\{ t_3 \}$     &  bare         \\[1mm]         \hline
          $\frl_4$&  $\{ t_4 \}$     &  dressed         \\[1mm] 
      \hline
       $\frl_5$&  $\{ t_5 \}$     &  bare        \\[1mm]       \hline
        $\frl_6$&  $\{ t_6 \}$     &  bare         \\[1mm] 
 \hline 
\end{tabular}   \qquad
&
 \begin{tabular}{|c|c|c|}\hline 
      $\frl_1$&  $\{ t_1 \}$     &  dressed         \\[1mm] 
      \hline
       $\frl_2$&  $\{ t_2 \}$     &  bare         \\[1mm]         \hline
        $\frl_3$&  $\{ t_3,  t_4, t_5 \}$     &  dressed         \\[1mm]         \hline
          $\frl_4$&  $\{ t_6 \}$     &  bare       \\[1mm] 
      \hline
\end{tabular}   \qquad
&
 \begin{tabular}{|c|c|c|}\hline 
      $\frl_1$&  $\{ t_1 \}$     &  dressed         \\[1mm] 
      \hline
       $\frl_2$&  $\{ t_2, t_3 \}$     &  dressed       \\[1mm]         \hline
        $\frl_3$&  $\{ t_4 \}$     &  bare         \\[1mm]         \hline
          $\frl_4$&  $\{ t_5 \}$     & bare       \\[1mm] 
      \hline
       $\frl_5$&  $\{ t_6 \}$     &  bare        \\[1mm]       \hline
\end{tabular} 
 \end{tabular}   \\ [2mm]
 \end{center}

 In the example displayed above, it is also very easy to determine which vertex partitions $\frL$  are compatible with $\dsi_l$ ($\frL \sim \dsi_l$). 
 Apart from the fact that the vertex sets $S(\frl_k)$ of the vertices in $\frL$ have to form a partition of $\{t_1, \ldots, t_6\}$, we need that $\frl_1$ is dressed and  $\frl_{\str\frL\str}$ (the last vertex in the partition) is bare. 
 
  To each vertex $\frl_k$ in the above examples, we can associate time coordinates $s^{\mathrm{i}}_k$ and $s^{\mathrm{f}}_k$ as the boundary times of the domains of irreducible diagrams bridging the times in the vertex. 
  Eventually, we intend to fix a vertex partition and associated time coordinates $\underline{s}^{i}$ and $\underline{s}^{f}$ and to integrate over all short diagrams that are irreducible in the interval $[{s}^{i}, {s}^{f}]$. This integration gives rise to the vertex operators, see eq.\ \eqref{def: vertex operator}. 
  To illustrate this, we zoom in on a part of a long diagram, shown in Figure \ref{fig: vertexoperator}. A formal definition is given in the next section.
 
    \begin{figure}[h!] 
\vspace{0.5cm}
\begin{center}
\psfrag{t5}{ $\scriptstyle{t_{j+4} }$}
\psfrag{t4}{ $\scriptstyle{ t_{j+3} }$}
\psfrag{t3}{ $\scriptstyle{ t_{j+2} }$}
\psfrag{t2}{ $\scriptstyle{ t_{j+1}  }$}
\psfrag{t1}{ $\scriptstyle{ t_{j}  }$}
    \psfrag{initialvertextime}{$s^{\mathrm{i}}$}   
       \psfrag{finalvertextime}{$s^{\mathrm{f}}$}   
          \psfrag{shortdiagrams}{Short diagrams $\dsi$}   
      \psfrag{operator}{$\caB(\frl, s^{\mathrm{i}},s^{\mathrm{f}}) =      \mathop{\int}\limits_{ \Si_{[s^{\mathrm{i}},s^{\mathrm{f}}]} (<\tau, \mathrm{ir} )    } \!  \! \d \dsi     $}   
      %\mathop{\Big\int}\limits_{ \Si_{[s^{\mathrm{i}},s^{\mathrm{f}}]}(<\tau, \mathrm{ir} ) } \d \dsi 
 \includegraphics[width = 12cm, height=3cm]{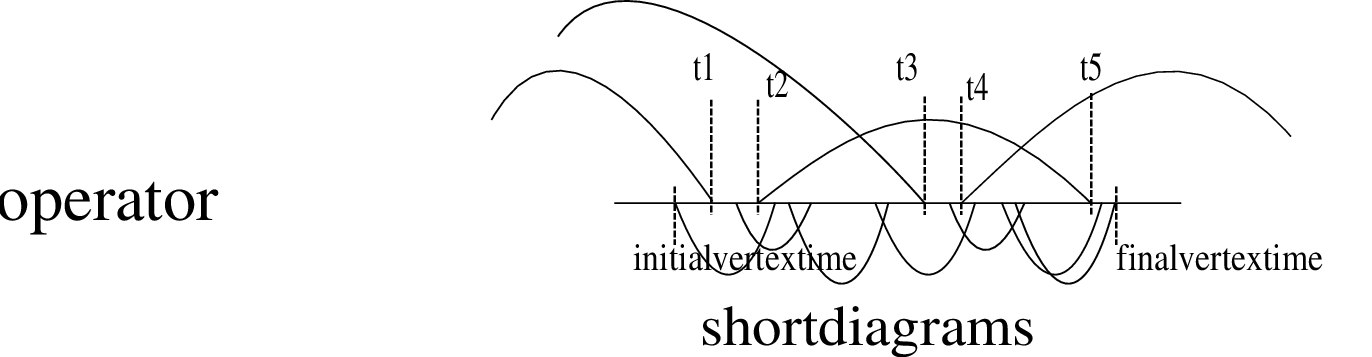}  
\caption{ \footnotesize{A part of a long diagram $\dsi_l \in \Si_{[0,t]} (> \tau)$ is shown, suggesting a dressed vertex $\frl$ with vertex set $ S(\frl)=\{ t_{j},\ldots, t_{j+4}\}$. The end points of the pairings that are 'floating' in the air are immaterial to this vertex, as long as they land on the time-axis outside the interval $[s^{\mathrm{i}},s^{\mathrm{f}}]$. 
The vertex operator $\caB(\frl, s^{\mathrm{i}},s^{\mathrm{f}})$ is obtained by integrating  all  diagrams in  $\Si_{[{s}^{i}, {s}^{f}]}({< \tau},\mathrm{ir} )$.
 } \label{fig: vertexoperator} }
\end{center}
\end{figure}

 \subsubsection{Abstract definition of the vertex operator} \label{sec: abstract definition of vertex operators}

   Let $\frL$ be a vertex partition compatible with $\dsil$, with vertices $\frl_k, k=1, \ldots, \str \frL \str$.
   In what follows, we focus on one particular vertex $\frl_k$ which we assume first to be dressed.    
   The vertex $\frl_k$ is assumed to have a vertex set $S(\frl_k) = \{t_{j}, t_{j+1}, \ldots, t_{j+m-1}\} $.  This means in particular that the time-coordinate $t_{j-1}$ belongs to the vertex set of the vertex $\frl_{k-1}$ (unless $j=1$) and the time-coordinate $t_{j+m}$ belongs to the vertex set of the vertex $\frl_{k+1}$ (unless $j+m-1=2\str \dsil\str$).
   We fix an initial time $s^{\mathrm{i}}_k$  and final time $s^{\mathrm{f}}_k$ such that 
   \beq
 t_{j-1}  \leq     s^{\mathrm{i}}_k \leq t_j    \leq   t_{j+m-1}  \leq     s^{\mathrm{f}}_k \leq t_{j+m}
   \eeq
where it is understood that $ t_{j-1}  =0$ if $j=1$ and $t_{j+m}=t$ if $j+m-1=2\str \dsil\str$. 
   The vertex operator  $\caB(\frl_k, s^{\mathrm{i}}_k,s^{\mathrm{f}}_k)  $ is defined by summing the contributions of all $\dsi \in {\Si_{[ s^{\mathrm{i}}_k,s^{\mathrm{f}}_k]}  ({< \tau},\mathrm{ir}) } $

 To write a formula for the vertex operator $\caB(\frl_k, s^{\mathrm{i}}_k,s^{\mathrm{f}}_k)  $, we  need to relabel the time-coordinates of $\dsi_l$ and  $\dsi \in {\Si_{[ s^{\mathrm{i}}_k,s^{\mathrm{f}}_k]}  ({< \tau},\mathrm{ir} ) } $.
 
Consider the  $m$ triples $(t_i(\dsil), x_i(\dsil),l_i(\dsil))$, for $i=j,\ldots,j+ m-1$, i.e.,  a subset of the $2 \str \dsil\str$ triples determined by the long diagram $\dsil$, and the $2 \str \dsi \str$ triples $t_i(\dsi), x_i(\dsi), l_i(\dsi)$ with $i=1,\ldots, 2 \str \dsi \str$ determined by  $\dsi \in {\Si_{[ s^{\mathrm{i}}_k,s^{\mathrm{f}}_k]}  ({< \tau},\mathrm{ir} ) } $.
We now define the triples
 $(t''_i, x''_i,l''_i)_{i}^{m+ 2 \str \dsi \str}$ by time-ordering (i.e.\ such that  $t''_i \leq t''_{i+1}$) of  the union of  triples 
 \beq (t_i(\dsi), x_i(\dsi), l_i(\dsi))_{i=1}^{2 \str \dsi\str} \quad  \textrm{and} \quad  (t_i(\dsil), x_i(\dsil),l_i(\dsil))_{i=j}^{j+m-1}. \eeq

The vertex operator $\caB(\frl_k, s^{\mathrm{i}}_k,s^{\mathrm{f}}_k) $ is then defined as follows 
\beq \label{def: vertex operator}
\caB(\frl_k, s^{\mathrm{i}}_k,s^{\mathrm{f}}_k)  := \mathop{\int}\limits_{\Si_{[ s^{\mathrm{i}}_k,s^{\mathrm{f}}_k]}  ({< \tau}, \mathrm{ir} ) }  \d \dsi  \,   \zeta(\dsi)   \caV_{[ s^{\mathrm{i}}_k,s^{\mathrm{f}}_k]}  \left((t''_i, x''_i,l''_i)_{i=1}^{m+ 2 \str \dsi \str}   \right)\eeq
where the dependence of the integrand on $\dsi$ is implicit in the above definition of the triples  $(t''_i, x''_i,l''_i)$.
The double primes in the coordinates $(t''_i, x''_i,l''_i)$ are supposed to render the comparison with later formulas easier.

We now treat the simple case in which the vertex  $\frl_k$ is bare. In that case, there is a $j$ such that $S(\frl_k)=\{ t_j\}$ and the vertex operator is simply defined as
\beq
 \caB(\frl_k, s^{\mathrm{i}}_k,s^{\mathrm{f}}_k)  :=   \caI_{x_j,l_j}  , \qquad     s^{\mathrm{i}}_k=s^{\mathrm{f}}_k=t_j
\eeq
Hence, in this case,  the vertex time-coordinates $s^{\mathrm{i}}_k,s^{\mathrm{f}}_k$ are dummy coordinates, see also Figure \ref{fig: vertexpartitions}.

\subsubsection{The operator $\caC_t(\dsil, \frL)$ as an integral over time-coordinates of vertex operators}\label{sec: integral over vertex operators}

 We are  ready to give a constructive formula for $\caC_t(\dsil, \frL)$, as announced in Section \ref{sec: vertices and vertex partitions}. 
First, we define the integration measure over the vertex time-coordinates $s^{\mathrm{i}}_k,s^{\mathrm{f}}_k$;
%\beq  \label{eq: integral over vertex operators}
%  \int \caD \underline{s}^{\mathrm{i}} \caD  \underline{t}^{\mathrm{f}} :=   \mathop{\int}\limits_{ \footnotesize{ \left.\begin{array}{c}   s^{\mathrm{i}}_k \geq s^{\mathrm{f}}_{k-1} \\ s^{\mathrm{f}}_k \leq s^{\mathrm{i}}_{k+1}    \end{array} \right. }  }   \mathop{\prod}\limits_{\footnotesize{ \left.\begin{array}{c}   k=1,\ldots,\str \frL \str \\  \frl_k \, \, \textrm{dressed}     \end{array} \right. }    }  \d s^{\mathrm{i}}_k\d s^{\mathrm{f}}_k   
%  \quad    \left\{ \begin{array}{ll}  \delta(s^{\mathrm{i}}_1)  &   \frl_1 \, \textrm{dressed} \\  1  &   \frl_1 \, \textrm{bare}      \end{array} \right\}  
%   \times 
%    \left\{ \begin{array}{ll} \delta(s^{\mathrm{f}}_{\str \frL \str}-t )  &   \frl_{\str \frL \str} \, \textrm{dressed} \\  1  &  \frl_{\str \frL \str}  \, \textrm{bare}      \end{array} \right\}  
%   \eeq
   \beq  \label{eq: integral over vertex operators}
\caD \underline{s}^{\mathrm{i}} \caD  \underline{s}^{\mathrm{f}} :=   \mathop{\prod}\limits_{\footnotesize{ \left.\begin{array}{c}   k=1,\ldots,\str \frL \str \\  \frl_k \, \, \textrm{dressed}     \end{array} \right. }    }  \d s^{\mathrm{i}}_k\d s^{\mathrm{f}}_k   
  \quad    \left\{ \begin{array}{ll}  \delta(s^{\mathrm{i}}_1)  &   \frl_1\, \, \textrm{dressed} \\  1  &   \frl_1 \,\, \textrm{bare}      \end{array} \right\}  
   \times 
    \left\{ \begin{array}{ll} \delta(s^{\mathrm{f}}_{\str \frL \str}-t )  &   \frl_{\str \frL \str} \,\,  \textrm{dressed} \\  1  &  \frl_{\str \frL \str} \, \, \textrm{bare}      \end{array} \right\}  
   \eeq

%   \beq
%    \chi_{\mathrm{order}}  =    \mathop{\prod}\limits_{\footnotesize{ \left.\begin{array}{c}   k=1,\ldots,\str \frL \str \\  \frl_k \, \, \textrm{dressed}     \end{array} \right. }    }   1_{}
%   \eeq

To understand this formula, we observe that only non-dummy vertex time coordinates need to be integrated over.  A dummy vertex time coordinate is a time coordinate whose value is a-priori fixed by  $\dsi_l$ and the vertex partition $\frL$.
The non-dummy times are the time coordinates of the dressed vertices, except at the temporal boundaries $0,t$, where such a time coordinate is also a dummy coordinate. 
The terms between $\{ \cdot \}$-brackets in formula \eqref{eq: integral over vertex operators} take care of this .
Finally, the formula for $   \caC_t(\dsil, \frL) $ is 
 \beq\label{formula caC}
   \caC_t(\dsil, \frL)    =   \mathop{\int}\limits_{\footnotesize{\left.\begin{array}{c}  0 <s^{\mathrm{i}}_k < s^{\mathrm{f}}_k < t  \\[1mm] 
   s^{\mathrm{f}}_k  < s^{\mathrm{i}}_{k'}    \,  \textrm{for} \,  k' >k   \end{array}\right.  }  }
    \caD  \underline{s}^{\mathrm{i}} \caD  \underline{s}^{\mathrm{f}}  \,\,    \caB(\frl_{\str \frL\str }, s^{\mathrm{i}}_{\str \frL\str },s^{\mathrm{f}}_{\str \frL\str }) 
       \caZ^{\tau}_{s^{\mathrm{i}}_{\str \frL\str }-s^{\mathrm{f}}_{{\str \frL\str }-1} }   \ldots     \caZ^{\tau}_{s^{\mathrm{i}}_3-s^{\mathrm{f}}_2}   \caB(\frl_2, s^{\mathrm{i}}_2,s^{\mathrm{f}}_2)  \caZ^{\tau}_{s^{\mathrm{i}}_2-s^{\mathrm{f}}_1}   \caB(\frl_1, s^{\mathrm{i}}_1,s^{\mathrm{f}}_1) 
 \eeq
where the indices $k,k'$ correspond to vertices $\frl_k,\frl_{k'}$: only the time-coordinates of dressed vertices are integrated over, even though all vertices appear on the RHS. 
This formula can be checked from the definition \eqref{def: caC for vertex partition} and the explicit expressions for the vertex operators $\caB(\cdot;\cdot,\cdot)$ above.  The cutoff reduced dynamics $\caZ_t^\tau$ in \eqref{formula caC} appears by summing the small diagrams between the vertices, using  formula \eqref{eq: caZ  cutoff as integral over diagrams}.

\section{The sum over "small" diagrams} \label{sec: bounds on long pairings}

In this section, we establish two results.  First, we analyze the cutoff-dynamics $\caZ_t^\tau$. The main bound is stated in Lemma \ref{lem: bound on small pairings}, and a proof of Lemma \ref{lem: polymer model cutoff} (concerning the Laplace transform of $\caZ^\tau_t$) is outlined immediately after Lemma \ref{lem: bound on small pairings}. 
Second, we resum the small subdiagrams within a general irreducible diagram: Recall that the conditional cutoff dynamics $\caC_t(\dsi_l)$ is defined as the sum over all irreducible  diagrams in $[0,t]$ containing the long diagram $\dsi_l$. In Lemma \ref{lem: irreducible bounds with caE}, we obtain a description of $\caC_t(\dsi_l)$ that does not involve any small diagrams.  In this sense, we have performed a \emph{blocking procedure}, getting rid of information on time-scales smaller than $\tau$.

Since this section uses parameters and constants that were introduced earlier in the paper, we encourage the reader to consult the overview tables in Section \ref{sec: list of parameters}.

\subsection{Generic constants} \label{sec: generic constants}

In Sections \ref{sec: bounds on long pairings} and \ref{sec: model only long}, we will state bounds that will depend in a crucial way on the parameters $\la, \ga$ and $\tau$. The parameter $\ga$ is a momentum-like variable used to bound matrix elements in position representation, see below in Section \ref{sec: bounding operators}. It appeared first in Section \ref{sec: bound on markov in position representation}.  To simplify the presentation, we introduce the following notation and conventions. 
\begin{itemize}
\item We write $\cone$ for functions of $\ga \geq 0$ with the property that $\cone$ is decreasing as $\ga \nearrow \infty$, and $\cone$ is finite, except, possibly, at $\ga=0$. It is understood that $\cone$  is independent of $\la$.
\item We write $\ctwo$ for functions of $\ga \geq 0$ and $\la \in \bbR $  that have the asymptotics
\beq
\ctwo = o(\ga^0)O(\la^2)+  o(\la^2), \qquad    \ga \to 0,\,  \la \to 0
\eeq
\item We write $\cthree$ for functions of $\ga \geq 0$ and $\la \in \bbR $  that have the asymptotics
\beq
\cthree =  o(\ga^0)O(\la^2)+   \cone  o(\la^2), \qquad    \ga \to 0,\,  \la \to 0
\eeq
\item The cutoff time $\tau=\tau(\la)$ is treated as an implicit function of $\la$, satisfying \eqref{def: condition on tau}.  In particular, $\ctwo$ and $\cthree$ can depend on $\tau$. 
\end{itemize}

\subsection{Bounds in the sense of matrix elements}\label{sec: bounds in the sense of me}

In Section \ref{sec: translation invariance and fiber decomposition} , we introduced  the kernel notation $\caA_{x^{}_\links,x^{}_{\rechts};x'_\links,x'_\rechts}$, for operators $\caA$ on $\caB_2(l^2(\lat,\scrS))$;  $\caA_{x^{}_\links,x^{}_{\rechts};x'_\links,x'_\rechts}$ is an element of $\scrB(\scrB_2(\scrS))$ such that
\beq
    \langle S, \caA S' \rangle_{}= \mathop{\sum}\limits_{x_\links,x_{\rechts};x'_\links,x'_\rechts}   \langle S(x_\links,x_{\rechts}), \caA_{x^{}_\links,x^{}_{\rechts};x'_\links,x'_\rechts}S'(x'_\links,x'_\rechts) \rangle^{}_{\scrB_2(\scrS)}
    \eeq
First, we introduce a notion that allows us to bound operators $\caA$ by their `matrix elements' $\caA_{x^{}_\links,x^{}_{\rechts};x'_\links,x'_\rechts}$. 
\begin{definition}\label{def: bounds me}
Let $\caA$ and $\tilde \caA$ be operators on $\caB_2(l^2(\lat,\scrS))$ and   $\caB_2(l^2(\lat))$, respectively.  We say that $\tilde \caA$ dominates $\caA$ in `the sense of matrix elements', denoted by
\beq
\caA \mathop{\leq}\limits_{m.e.} \tilde \caA,
\eeq
iff
\beq
\norm \caA_{x^{}_\links,x^{}_{\rechts};x'_\links,x'_\rechts}  \norm_{\scrB(\scrB_2(\scrS))} \leq  \tilde\caA_{x^{}_\links,x^{}_{\rechts};x'_\links,x'_\rechts}
\eeq
\end{definition}
Note that, if $\caA$ is an operator on $\caB_2(l^2(\lat))$,  the inequality $\caA \mathop{\leq}\limits_{m.e.} \tilde \caA$ literally means that the absolute values of the matrix elements of $\caA$ are smaller than the matrix elements of $\tilde \caA$.
We will need the following implication 
\beq \label{eq: meleq implies norm}
\caA \mathop{\leq}\limits_{m.e.} \tilde \caA   \qquad  \Rightarrow \qquad \norm \caA \norm  \leq  \norm \tilde \caA \norm 
\eeq
Indeed, for any $S \in \scrB_2(l^2(\lat) \otimes \scrS) \sim   l^2(\lat \times \lat,\scrB_2(\scrS))$, we construct 
\beq
\tilde S(x_\links, x_\rechts) :=   \norm S(x_\links, x_\rechts) \norm_{\scrB_2(\scrS)}
\eeq
such that $ \norm \tilde S \norm_{l^2(\lat \times \lat )} = \norm S \norm_{l^2(\lat \times \lat, \scrB_2(\scrS) )}$  and
\beq
         \str  \langle S ,  \caA S'  \rangle  \str          \leq     \langle \tilde S ,  \tilde \caA   \tilde S'  \rangle
\eeq 
from which \eqref{eq: meleq implies norm} follows. 

\subsection{Bounding operators} \label{sec: bounding operators}
We introduce operators on  $\scrB_2(l^2(\lat))$  that  will be used as upper bounds `in the sense of  matrix elements', as defined above.
These bounding operators will depend on the coupling constant $\la$, the conjugation parameter $\ga>0$ and the cutoff time $\tau =\tau(\la)$.  Let  the function $r_\tau(\ga,\la)$ and the constants $c_{\caZ}^1, c_{\caZ}^2$ be as defined in Lemma \ref{lem: bounds on cutoff dynamics} and, in addition, let
\beq
r_{\ve}(\ga,\la):= 2\la^2 q_{\ve}(2\ga), \qquad \textrm{for}\, \, 2 \ga \leq \de_{\ve},
\eeq
with $q_{\ve}(\cdot)$ and $\de_{\ve}$ as in Assumption \ref{ass: analytic dispersion}.
We define
\baq
(\tilde{\caI}_{x,l})_{x^{}_\links,x^{}_{\rechts};x'_\links,x'_\rechts}  &:=&   \delta_{x^{}_\links,x^{'}_\links}    \delta_{x^{}_{\rechts},x^{'}_{\rechts}} \left(    \delta_{l=\links}  \delta_{x^{}_\links=x} +  \delta_{l=\rechts}  \delta_{x^{}_\rechts=x}    \right)    \\[2mm]
(\tilde{\caZ}^{\tau,\ga}_t)_{x^{}_\links,x^{}_{\rechts};x'_\links,x'_\rechts} &:=&     c_{\caZ}^1 \,       \e^{ r_\tau(\ga,\la)  t}    \e^{- \frac{\ga}{2} \left\str (x'_{\links}+x'_{\rechts})- (x_{\links}+x_{\rechts})\right\str}   \e^{- \ga \left\str x_{\links}-x_{\rechts}  \right\str }   \e^{- \ga \left\str x'_{\links}-x'_{\rechts}  \right\str }   \nonumber   \\[2mm]
 & + &      c_{\caZ}^2 \,   \e^{- \la^2 g_{c}t }    \e^{- \frac{\ga}{2} \left\str (x'_{\links}+x'_{\rechts})- (x_{\links}+x_{\rechts})\right\str}   \e^{- \ga \left\str (x_{\links}-x_{\rechts} )-( x'_{\links}-x'_{\rechts}  )  \right\str } \label{def: tilde caZ} \\[2mm]
  (\tilde{\caU}^{\ga}_t)_{x^{}_\links,x^{}_{\rechts};x'_\links,x'_\rechts} &:=&    \e^{ r_{\ve}(\ga,\la)   t}   \e^{- \frac{\ga}{2} \left\str (x'_{\links}+x'_{\rechts})- (x_{\links}+x_{\rechts})\right\str}   \e^{- \ga \left\str (x_{\links}-x_{\rechts} )-( x'_{\links}-x'_{\rechts}  ) \right\str  }  \label{def: bounding op cau}
  \eaq
  In order for  definitions (\ref{def: tilde caZ}, \ref{def: bounding op cau}) to make sense, $\la$ and $\ga >0$ have to be sufficiently small, such that the functions  $r_{\ve}(\ga,\la)$ and  $r_{\tau}(\ga,\la)$ are well-defined.  In particular, we need  conditions  on $\la$ and $\ga$ such that  Lemma \ref{lem: bounds on cutoff dynamics} applies.

The operators $\tilde{\caI}_{z,l},\tilde{\caZ}^{\tau,\ga}_t, \tilde{\caU}^{\ga}_t$ inherit their notation from the operators they are designed to bound, as we have the following inequalities, for $\la,\ga$ small enough:\baq
\caI_{x,l}  & \mathop{\leq}\limits_{m.e.}&     \tilde{\caI}_{x,l}     \\
\caZ^{\tau}_{t}  & \mathop{\leq}\limits_{m.e.}&    \tilde{\caZ}^{\tau,\ga}_t    \\
\caU_{t}  & \mathop{\leq}\limits_{m.e.}&     \tilde{\caU}^{\ga}_t   
\eaq
The first inequality is obvious from the definition of $\caI_{x,l}$ in \eqref{def: operatorI} and the fact that $\norm W \norm_{\scrB(\scrS)} \leq 1$. Indeed, $ \tilde{\caI}_{x,l} $  can be obtained from ${\caI}_{x,l} $ by replacing $W$ by $1$.
The second inequality is the result of Lemma \ref{lem: bounds on cutoff dynamics} and  the third inequality follows from the bounds following Assumption  \ref{ass: analytic dispersion}.

  We start by stating obvious rules to multiply the operators $\tilde{\caZ}^{\tau,\ga}_t$ and $\tilde{\caU}^{\ga}_t$.
\begin{lemma} \label{lem: bounds with tilde}
%Define
%\beq
%b(\ga) :=\sum_{x\in \lat} \e^{-\ga \str x\str}, \qquad \textrm{for}  \quad \ga >0
%\eeq
For $\la,\ga$ small enough, the following bounds hold (with $\cone$ and $\ctwo$ as defined in Section \ref{sec: generic constants})
\begin{itemize}
\item
For all sequences of times $s_1,\ldots,s_n$ with $t=\sum_{i=1}^n s_i$,
\baq
\tilde\caU^{\ga}_{s_n} \ldots \tilde\caU^{\ga}_{s_2}\tilde\caU^{\ga}_{s_1} & \meleq&     [\cone]^{n-1} \,   \e^{ \ctwo t}    \,    \tilde \caU^{\frac{\ga}{2}}_t   \label{eq: inequality caucau cau}\\[2mm]
\tilde\caZ^{\tau,\ga}_{s_n}  \ldots \tilde\caZ^{\tau,\ga}_{s_2} \tilde\caZ^{\tau,\ga}_{s_1} 
 & \meleq&  [\cone]^{n-1} \, \e^{ \ctwo t}  \,   \tilde\caZ^{\tau,\frac{\ga}{2}}_{t}    \label{eq: inequality caecae cae}
 \eaq
 \item
 For all times $s <t$,
 \baq
\tilde \caZ^{\tau, \ga}_{ t-s}  \tilde\caU^{\ga}_{ s}  & \meleq& \,  \cone \e^{\frac{s}{2 \tau}}   \e^{ \ctwo t}   \tilde\caZ^{\tau,\frac{\ga}{2}}_{ t}    \label{eq: inequality caecau cae} \\[2mm]
 \tilde \caU^{\ga}_{ t-s}  \tilde\caZ^{\tau, \ga}_{ s}    & \meleq& \,  \cone   \e^{\frac{t-s}{2 \tau}} \e^{ \ctwo t}   \tilde\caZ^{\tau,\frac{\ga}{2}}_{ t}    \label{eq: inequality caucae cae}
\eaq
\end{itemize}
\end{lemma}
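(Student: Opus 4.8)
The four inequalities in Lemma~\ref{lem: bounds with tilde} are all of the same type: multiply a string of exponential kernels of the form appearing in \eqref{def: tilde caZ} and \eqref{def: bounding op cau} and re-bundle the resulting spatial exponentials into a single kernel of the same family, paying an $\cone$ for each junction between factors and collecting the temporal blow-up rates into a $\ctwo$. The plan is to treat \eqref{eq: inequality caucau cau} first in full detail and then indicate how the three other bounds follow by the same mechanism with minor bookkeeping changes. First I would write out the composition $\tilde\caU^\ga_{s_n}\cdots\tilde\caU^\ga_{s_1}$ as an iterated convolution over the intermediate lattice variables. At each intermediate vertex the two adjacent kernels contribute factors $\e^{-\frac{\ga}{2}|(x'_\links+x'_\rechts)-(x_\links+x_\rechts)|}$ and $\e^{-\ga|(x_\links-x_\rechts)-(x'_\links-x'_\rechts)|}$; the point is that the ``center-of-mass'' displacement $\frac{1}{2}((x'_\links+x'_\rechts)-(x_\links+x_\rechts))$ and the ``relative'' displacement $(x_\links-x_\rechts)-(x'_\links-x'_\rechts)$ are each \emph{additive} along the chain. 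Therefore by the triangle inequality the product of the spatial exponentials is bounded by $\e^{-\frac{\ga}{2}|\Delta_{\mathrm{cm}}|}\e^{-\ga|\Delta_{\mathrm{rel}}|}$ where $\Delta_{\mathrm{cm}},\Delta_{\mathrm{rel}}$ are the total displacements between the first and last arguments — which is exactly the spatial content of $\tilde\caU^{\ga}_t$ (in fact with $\ga$, not $\ga/2$; the halving of $\ga$ is what we spend to make the intermediate lattice sums converge). Concretely, one keeps a fraction of each intermediate exponential, say replaces $\ga\to\ga/2$ in the ``retained'' part that telescopes to the final kernel, and uses the remaining $\ga/2$-weight to perform the $n-1$ intermediate sums over $\bbZ^d$, each of which is bounded by a geometric series $\sum_{y\in\lat}\e^{-(\ga/2)|y|}\cdot(\text{const})$; this produces the factor $[\cone]^{n-1}$, with $\cone = C\,(\text{geometric sum}) $, manifestly decreasing in $\ga$ and independent of $\la$. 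The temporal factors $\e^{r_\ve(\ga,\la)s_i}$ simply multiply to $\e^{r_\ve(\ga,\la)t}$, and since $r_\ve(\ga,\la)=2\la^2 q_\ve(2\ga)=o(\ga^0)O(\la^2)$ (using $q_\ve(0)=0$, which holds because $\ve$ is bounded-analytic with $q_\ve$ continuous and $q_\ve(0)=\sup|\Im\ve(p)|$ over real $p$, which is $0$), we may absorb it into $\e^{\ctwo t}$. This establishes \eqref{eq: inequality caucau cau}.

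For \eqref{eq: inequality caecae cae} the argument is identical in structure but now each $\tilde\caZ^{\tau,\ga}_{s}$ is a \emph{sum of two} kernels — the ``$c^1_\caZ$'' term and the ``$c^2_\caZ$'' term — so the product of $n$ factors expands into $2^n$ terms. The key observation is that \emph{any} of these $2^n$ terms is, spatially, bounded by the same telescoping structure as above: the $c^1_\caZ$-term carries spatial weight $\e^{-\frac\ga2|\Delta_\mathrm{cm}|}\e^{-\ga|x_\links-x_\rechts|}\e^{-\ga|x'_\links-x'_\rechts|}$, and at a junction the $\e^{-\ga|x'_\links-x'_\rechts|}$ of one factor combines with the $\e^{-\ga|x_\links-x_\rechts|}$ of the next via $|x_\links-x_\rechts|+|x'_\links-x'_\rechts|\ge|(x_\links-x_\rechts)-(x'_\links-x'_\rechts)|$ (triangle inequality), which is exactly the relative-coordinate exponent of the $c^2_\caZ$-family; conversely a run of consecutive $c^2_\caZ$-terms telescopes directly in the relative coordinate. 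In all cases the final bundled kernel is of the $\tilde\caZ^{\tau,\ga/2}_t$ form (retaining the $c^1$ or $c^2$ shape according to whether the first/last factors were of type $1$ or type $2$, and one checks both are dominated by $\tilde\caZ^{\tau,\ga/2}_t$ since that operator is the sum of both pieces). Each of the $2^n$ terms is then bounded by $[\cone]^{n-1}\e^{\ctwo t}\tilde\caZ^{\tau,\ga/2}_t$ by the same geometric-summation argument, using $r_\tau(\ga,\la)=O(\la^2)O(\ga^2)+o(\la^2)$ from \eqref{eq: bound on r tau} and $\la^2 g_c\ge 0$ to control the temporal factors; summing the $2^n$ terms costs a further $2^n$, which is absorbed into a redefinition $\cone\to 2\cone$ (still decreasing in $\ga$, still $\la$-independent). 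The temporal blow-up: the $c^1$-pieces contribute $\e^{r_\tau(\ga,\la)\sum s_i}=\e^{r_\tau(\ga,\la)t}$ and the $c^2$-pieces contribute decay $\e^{-\la^2 g_c s_i}\le 1$; in a mixed term the worst case is all $c^1$, giving $\e^{r_\tau(\ga,\la)t}$, and $r_\tau(\ga,\la)\le\ctwo$ by definition of the generic constant, so the whole thing is $\le\e^{\ctwo t}\tilde\caZ^{\tau,\ga/2}_t$ after the bundling.

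The two mixed bounds \eqref{eq: inequality caecau cae} and \eqref{eq: inequality caucae cae} are the simplest: there is a single junction between a $\tilde\caZ$-kernel and a $\tilde\caU$-kernel. I would bound the $\tilde\caU^\ga_s$ factor's temporal weight $\e^{r_\ve(\ga,\la)s}$ trivially by $\e^{\ctwo t}$, bound its spatial weight by the $c^2_\caZ$-family shape (since $\tilde\caU$ has exactly the center-of-mass $\times$ relative structure of the $c^2$-piece, with a weaker constant), and then combine it with the adjacent $\tilde\caZ$-factor exactly as in the all-$c^2$ case of the previous paragraph, i.e.\ telescoping the relative coordinate. The only genuinely new feature is the appearance of the factor $\e^{s/(2\tau)}$ in \eqref{eq: inequality caecau cae} (resp.\ $\e^{(t-s)/(2\tau)}$ in \eqref{eq: inequality caucae cae}): this arises because, to make the single intermediate lattice sum converge and simultaneously keep the \emph{full} decay rate $\la^2 g_c$ of the surviving $\tilde\caZ$-kernel rather than losing it, one trades a little of the exponential time-decay — formally, one writes $c^2_\caZ\e^{-\la^2 g_c(t-s)}\le c^2_\caZ\e^{-\la^2 g_c t}\e^{\la^2 g_c s}\le c^2_\caZ\e^{-\la^2 g_c t}\e^{s/(2\tau)}$ for $\la$ small (since $\la^2 g_c\le \frac{1}{2\tau}$ eventually, as $\la^2\tau\to0$); the $c^1_\caZ$-piece has blow-up $\e^{r_\tau(\ga,\la)(t-s)}\le\e^{r_\tau(\ga,\la)t}\le\e^{\ctwo t}$ and needs no such trade. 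The geometric lattice sum at the junction again yields a single $\cone$, and halving $\ga$ provides the needed summability margin.

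\textbf{Main obstacle.} I expect the only delicate point to be the combinatorial bookkeeping in \eqref{eq: inequality caecae cae}: verifying that \emph{every one} of the $2^n$ terms in the expansion of $\prod_i\tilde\caZ^{\tau,\ga}_{s_i}$ — regardless of which $c^1$/$c^2$ pattern it has — telescopes spatially into a single kernel of the $\tilde\caZ^{\tau,\ga/2}$ family, with the center-of-mass displacement always additive and the relative-displacement (or the $|x_\links-x_\rechts|$/$|x'_\links-x'_\rechts|$ pair at type-$1$ endpoints) combining correctly across type-$1$/type-$2$ junctions via the triangle inequality. Once that spatial telescoping is checked on the level of exponents, the rest is routine: the lattice sums are geometric (convergence guaranteed by the retained $\ga/2$-weight), the temporal rates add, and the number of expansion terms and the number of junctions are both at most exponential in $n$ and hence absorbed into $[\cone]^{n-1}$. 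I would organize the write-up around a single ``spatial exponent telescoping'' sublemma stated once and invoked four times, to avoid repeating the computation.
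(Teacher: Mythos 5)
Your proposal is correct and follows essentially the same route as the paper: the core step is exactly the paper's template bound $\sum_{x}\e^{-\ga|x-x_1|}\e^{-\ga|x-x_2|}\leq \e^{-\frac{\ga}{2}|x_1-x_2|}\sum_{x}\e^{-\frac{\ga}{2}|x|}$ applied to the (additive) center-of-mass and relative coordinates at each junction, with the $\e^{s/(2\tau)}$ factor in the mixed bounds obtained, as in the paper, by dominating the leftover $\e^{O(\la^2)\cdot(\mathrm{time})}$ by $\e^{(\mathrm{time})/(2\tau)}$ using $\la^2\tau\to 0$. Your write-up is just a more explicit expansion of the paper's two-line argument.
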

\begin{proof}
Inequalities \eqref{eq: inequality caucau cau} and \eqref{eq: inequality caecae cae} are immediate consequences of the fact that
\beq \label{eq: template gamma bounds}
\sum_{x \in \lat} \e^{- \ga \str x-x_1\str}   \e^{- \ga \str x-x_2\str}  \leq e^{-\frac{\ga}{2} \str x_1-x_2\str} \sum_{x\in \lat} \e^{-\frac{\ga}{2} \str x\str}, \qquad \textrm{for any}\, \ga >0
\eeq
%Note that one could reduce the exponential factors on the RHS  of  (\ref{eq: inequality caucau cau}-\ref{eq: inequality caecae cae}) to
%\beq \e^{ (r_{\ve}(\ga,\la) -r_{\ve}(\frac{\ga}{2},\la)) t} \qquad  \textrm{and} \qquad   \e^{(r_{\tau}(\ga,\la) -r_{\tau}(\frac{\ga}{2},\la)) t}, \quad \textrm{respectively}. \eeq   
To derive inequalities \eqref{eq: inequality caecau cae}  and \eqref{eq: inequality caucae cae},  we use \eqref{eq: template gamma bounds} and  we dominate exponential factors $\e^{O(\la^2)t}$ on the RHS by $\e^{\frac{t}{2\tau}}$, using that $\tau \la^2 \to 0$ as $\la \searrow 0$. 

\end{proof}

 Lemma \ref{lem: integrating out pairings}, below, shows how the bounds of  Lemma \ref{lem: bounds with tilde} are used to integrate over diagrams.  This lemma  will be used repeatedly in the next sections, and,  since it is a crucial step, we treat the following  simple example in detail:
We attempt to bound the  expression
\beq
\caF:=  \mathop{\int}\limits_{s^{\mathrm{i}} < t_1 <t_2 < s^{\mathrm{f}}} \d t_1 \d t_2 \mathop{\sum}\limits_{x_1,x_2,l_1,l_2}  \zeta(\dsi)   \underbrace{ \tilde \caU^{\ga}_{s^{\mathrm{f}}-t_2 } \tilde \caI_{x_2,l_2}  \tilde \caU^{\ga}_{t_2-t_1}  \tilde \caI_{x_1,l_1} \tilde\caU^{\ga}_{t_1-s^{\mathrm{i}} } }_{=:\tilde\caA}
\eeq
in ``the sense of matrix elements", with $\dsi$ being the diagram in $\Si^1_{[s^{\mathrm{i}},s^{\mathrm{f}}]}$ consisting of the ordered pair $((t_1,x_1,l_1),(t_2,x_2,l_2) )$. 
We proceed as follows: 
\ben
\item{
We bound $\zeta(\dsi)$ by $\sup_{x_1,x_2,l_1,l_2} \str \zeta(\dsi) \str$. Note that the latter expression is a function of $t_2-t_1$ only.}
\item{Since the only dependence on $x_1,x_2,l_1,l_2$ is in the operators $\tilde\caI_{x_i,l_i}$, we perform the sum $\sum_{x_i,l_i}\tilde\caI_{x_i,l_i}=1$, for $i=1,2$.}
\item{Since the operators $\tilde\caI_{x_i,l_i}$ have disappeared, we can bound
\beq \label{eq: example bound caucau}
 \tilde \caU^{\ga}_{s^{\mathrm{f}}-t_2 }   \tilde \caU^{\ga}_{t_2-t_1} \tilde \caU^{\ga}_{t_1-s^{\mathrm{i}} }  \meleq    [\cone]^2 \e^{ \ctwo \str s^{\mathrm{f}}-s^{\mathrm{i}}  \str} \hspace{2mm} \tilde \caU^{\frac{\ga}{2}}_{s^{\mathrm{f}}-s^{\mathrm{i}} } 
\eeq
using Lemma \ref{lem: bounds with tilde}. 
}
\een
Thus
\beq \label{eq: toy bound final}
 \caF \meleq   \tilde \caU^{\frac{\ga}{2}}_{s^{\mathrm{f}}-s^{\mathrm{i}} } \, \e^{ \ctwo \str s^{\mathrm{f}}-s^{\mathrm{i}}  \str}  \,   \mathop{\int}\limits_{s^{\mathrm{i}} < t_1 <t_2 < s^{\mathrm{f}}} \d t_1 \d t_2  \,  [\cone]^2  \sup_{x_1,x_2,l_1,l_2} \str \zeta(\dsi) \str
\eeq
Note that $ \sup_{x_1,x_2,l_1,l_2} \str \zeta(\dsi) \str= \sup_x \str \psi(x,t_2-t_1)\str$ because $\str \dsi \str=1$.
The short derivation above can be considered to be an application of Lemma \ref{lem: F and G}, as we illustrate by writing
\beq
\caF_{x^{}_{\links},x^{}_{\rechts}; x'_{\links},x^{'}_{\rechts} }= \mathop{\int}\limits_{\Si^1_{[s^{\mathrm{i}},s^{\mathrm{f}}] }} \d \dsi   G(\dsi) F(\dsi),  \qquad \textrm{with} \, G(\dsi)=\zeta(\dsi), \,  F(\dsi):= \tilde \caA_{x^{}_{\links},x^{}_{\rechts}; x'_{\links},x^{'}_{\rechts} }
\eeq
and hence \eqref{eq: toy bound final} follows from Lemma \ref{lem: F and G} after applying \eqref{eq: example bound caucau}.

Lemma \ref{lem: integrating out pairings}  is a generalization of the bound  \eqref{eq: toy bound final} above.

\begin{lemma}\label{lem: integrating out pairings}
Fix an interval $I= [s^{\mathrm{i}},s^{\mathrm{f}}]$ and a set of $m$ triples $(t'_i,x'_i,l'_i)_{i=1}^m$ such that $t'_i \in I$ and $t'_i < t'_{i+1} $.
 For any $\dsi \in \Si_{I}(\mathrm{ir})$,  we define the  set of $n:= m+2\str \dsi\str$ triples 
$(t''_i,x''_i,l''_i)_{i=1}^{m+2 \str \dsi \str}$ by  time-ordering (i.e., such that $t''_i \leq t''_{i+1}$) of the union of  triples
\beq
(t'_i,x'_i,l'_i)_{i=1}^m, \qquad \textrm{and} \qquad  (t_i(\dsi),x_i(\dsi),l_i(\dsi))_{ i=1}^{2 \str \dsi \str}
\eeq
Then
\baq
&& \mathop{\int}\limits_{\Si_I(\mathrm{ir})} \d \dsi \,  \str\zeta(\dsi)\str    \, \,   \tilde\caI_{x''_n, l''_n}    \tilde\caU^{\ga}_{t''_n-t''_{n-1}}                \ldots  \tilde\caU^{\ga}_{t''_3-t''_2}  \tilde\caI_{x''_2, l''_2} \tilde\caU^{\ga}_{t''_2-t''_1}  \tilde\caI_{x''_1, l''_1}    \label{eq: integrating out pairings} \\[2mm]
& \meleq&   \left( \e^{\ctwo\str I \str}  \mathop{\int}\limits_{ \Pi_T \Si_I(\mathrm{ir})}  \, \d [\dsi]   \,   [\cone]^{2\str \dsi \str}  \mathop{\sup}\limits_{\underline{x}(\dsi), \underline{l}(\dsi)}  \str \zeta(\dsi)\str \right)   
\quad \times \quad  \tilde \caU^{\frac{\ga}{2}}_{s^{\mathrm{f}}-t'_m}  \tilde \caI_{x'_{m}, l'_{m}}       \tilde \caU^{\frac{\ga}{2}}_{t'_{m}-t'_{m-1}} \ldots   \tilde \caI_{x'_{2}, l'_{2}}  \tilde \caU^{\frac{\ga}{2}}_{t'_2-t'_1} \tilde \caI_{x'_{1}, l'_{1}}  \tilde \caU^{\frac{\ga}{2}}_{t_1'-s^{\mathrm{i}}}     \nonumber  
\eaq
Moreover, the statement remain true if one replaces  $\tilde \caU^{\ga}_t \rightarrow \tilde \caZ^{\tau,\ga}_t$ on the LHS and $\tilde \caU^{\frac{\ga}{2}}_t \rightarrow \tilde \caZ^{\tau,\frac{\ga}{2}}_t$ on  the RHS of \eqref{eq: integrating out pairings}.
 \end{lemma}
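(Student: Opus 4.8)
\textbf{Proof plan for Lemma \ref{lem: integrating out pairings}.}

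The plan is to reduce the statement to a repeated application of Lemma \ref{lem: F and G} followed by the multiplication rules of Lemma \ref{lem: bounds with tilde}, in direct generalization of the worked example preceding the lemma. First I would observe that the integrand on the LHS of \eqref{eq: integrating out pairings}, viewed as a matrix element, has exactly the structure $G(\dsi)F(\dsi)$ with $G(\dsi)=\str\zeta(\dsi)\str$ (a function of the time-coordinates $\underline t(\dsi)$ only, once we take $\sup_{\underline x(\dsi),\underline l(\dsi)}$) and $F(\dsi)$ the alternating product of $\tilde\caU^{\ga}$-factors and $\tilde\caI$-factors determined by the time-ordered merge of the fixed triples $(t'_i,x'_i,l'_i)$ with the triples of $\dsi$. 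Applying Lemma \ref{lem: F and G} with the specification $\mathrm{adj}=\mathrm{ir}$ splits the integral into the product of $\sup_{\underline x(\dsi),\underline l(\dsi)}G(\dsi)$, integrated over $\Pi_T\Si_I(\mathrm{ir})$, and the quantity $\sup_{\underline t(\dsi)}\sum_{\underline x(\dsi),\underline l(\dsi)}F(\dsi)$, which for fixed $\str\dsi\str$ is just a number times the ``skeleton'' operator built only from the fixed triples.

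Next I would bound $\sum_{\underline x(\dsi),\underline l(\dsi)}F(\dsi)$ in the sense of matrix elements. The $\underline x(\dsi),\underline l(\dsi)$-dependence of $F(\dsi)$ sits entirely in the operators $\tilde\caI_{x_i(\dsi),l_i(\dsi)}$ inserted at the interior times of $\dsi$; performing $\sum_{x,l}\tilde\caI_{x,l}=1$ at each such insertion makes these operators disappear, leaving a product of consecutive $\tilde\caU^{\ga}$-blocks separated only by the fixed $\tilde\caI_{x'_j,l'_j}$. Between two consecutive fixed $\tilde\caI$'s there is now a string $\tilde\caU^{\ga}_{s_k}\cdots\tilde\caU^{\ga}_{s_1}$ of length at most $2\str\dsi\str+1$ in total over the whole interval, to which I apply the collapse bound \eqref{eq: inequality caucau cau} of Lemma \ref{lem: bounds with tilde}, converting each maximal such string into a single $\tilde\caU^{\ga/2}$ over the corresponding subinterval, at the cost of a factor $[\cone]^{(\#\text{blocks})-1}$ and $\e^{\ctwo(\text{length})}$. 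Collecting these factors over all gaps gives a global prefactor bounded by $[\cone]^{2\str\dsi\str}\e^{\ctwo\str I\str}$ (the exact exponent of $\cone$ is at most $2\str\dsi\str$, which is all that is claimed), and the surviving skeleton operator is precisely the alternating product $\tilde\caU^{\ga/2}_{s^{\mathrm f}-t'_m}\tilde\caI_{x'_m,l'_m}\cdots\tilde\caI_{x'_1,l'_1}\tilde\caU^{\ga/2}_{t'_1-s^{\mathrm i}}$ appearing on the RHS. Moving the $\dsi$-independent skeleton outside the $\d[\dsi]$-integral then yields exactly \eqref{eq: integrating out pairings}.

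Finally, for the $\tilde\caZ^{\tau,\ga}$-version I would rerun the same argument, but now the strings between consecutive fixed $\tilde\caI$'s are products of $\tilde\caZ^{\tau,\ga}$-factors, collapsed by \eqref{eq: inequality caecae cae} instead of \eqref{eq: inequality caucau cau}; the structure of the prefactors ($[\cone]^{\#\text{blocks}-1}\e^{\ctwo\,\text{length}}$) is identical, so the final bound has the same form with $\tilde\caU\to\tilde\caZ^{\tau}$ throughout. (If one wanted the mixed statement with some $\tilde\caU$-factors and some $\tilde\caZ^{\tau}$-factors, the bounds \eqref{eq: inequality caecau cae}--\eqref{eq: inequality caucae cae} handle the junctions, absorbing the extra $\e^{s/(2\tau)}$ into $\e^{\ctwo\str I\str}$ since $\str I\str$ bounds $s$; but the stated lemma only needs the two pure cases.)

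\textbf{Main obstacle.} The genuinely delicate point is bookkeeping the exponent of $\cone$: each collapse step produces $[\cone]^{k-1}$ where $k$ is the number of $\tilde\caU$- (or $\tilde\caZ$-) blocks in that maximal string, and one must check that summing $k-1$ over all the gaps created by the $m$ fixed vertices and the $2\str\dsi\str$ interior vertices of $\dsi$ never exceeds $2\str\dsi\str$ — i.e. that the fixed vertices do not inflate the count. This holds because collapsing across a fixed $\tilde\caI_{x'_j,l'_j}$ is not allowed, so the fixed vertices only partition the strings without adding $\tilde\caU$-factors; the total number of $\tilde\caU$-factors is $m+2\str\dsi\str+1$ distributed into $m+1$ gaps, giving total collapse cost $\sum(k_g-1)=(m+2\str\dsi\str+1)-(m+1)=2\str\dsi\str$, exactly as claimed. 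Getting this combinatorial accounting precisely right, together with making sure every $\e^{O(\la^2)}$ that arises (e.g. from crossing the temporal boundary of a $\tilde\caZ^{\tau}$-block) is legitimately absorbed into a $\ctwo$ using $\la^2\tau\to0$, is the part that requires care rather than new ideas.
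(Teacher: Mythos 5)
Your proposal follows essentially the same route as the paper's proof, which is exactly the four-step recipe you describe: dominate $\str\zeta(\dsi)\str$ by its supremum over $\underline{x},\underline{l}$, sum out the interior $\tilde\caI$'s via $\sum_{x,l}\tilde\caI_{x,l}=1$, collapse the resulting $\tilde\caU^{\ga}$ (or $\tilde\caZ^{\tau,\ga}$) strings with Lemma \ref{lem: bounds with tilde}, and reinterpret the remaining integral as one over equivalence classes $[\dsi]$. Your combinatorial accounting of the $\cone$ exponent is slightly off (the LHS carries $m+2\str\dsi\str-1$ propagators, not $m+2\str\dsi\str+1$, since the boundary propagators are trivial for $\dsi$ irreducible in $I$), but this only makes the true cost smaller than $[\cone]^{2\str\dsi\str}$, so the claimed bound is unaffected.
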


\begin{proof}
The proof is a copy of the proof of the the bound \eqref{eq: toy bound final}. The steps are
\ben
\item{Dominate $\str\zeta(\dsi)\str$ by $\sup_{\underline{x}(\dsi),\underline{l}(\dsi)} \str\zeta(\dsi)\str$}
\item{Sum over  $\underline{x}(\dsi),\underline{l}(\dsi) $ by using $\sum_{x_i,l_i } \tilde\caI_{x_i,l_i}=1$}
 \item{Multiply the operators  $\tilde\caU^{\ga}_t$ or $\tilde \caZ_t^{\tau,\ga}$, using the  bound \eqref{eq: inequality caucau cau} or  \eqref{eq: inequality caecae cae}.}
  \item{Interpret the remaining sum over $\str \dsi \str$ and integration over $\underline{t}(\dsi)$ as an integration over equivalence classes $[\dsi]$.}
\een

\end{proof}

\subsection{Bound on short pairings and proof of Lemma \ref{lem: polymer model cutoff}}

We recall that the crucial result  in Lemma \ref{lem: polymer model cutoff}(see Statement 2 therein) is the bound 
\beq \label{eq: repetition crucial bound lemma polymer cutoff}
 \caJ_{\ka}\caR^{\tau}_{ex}(z)   \caJ_{-\ka} = \int_{\bbR^+} \d t  \, \e^{-t z} \mathop{\int}\limits_{\Si_{[0,t]}({< \tau}, \mathrm{ir})} \d\dsi\,  1_{\str \dsi \str \geq 2}   \caJ_{\ka}\caV_{[0,t]}(\dsi)  \caJ_{-\ka}= O(\la^2)O(\la^2\tau),
\eeq
uniformly for $\Re z \geq -\frac{1}{2\tau}$ and for $\str \Im \ka \str$ small enough. 

In the \textbf{first} step of the proof of \eqref{eq: repetition crucial bound lemma polymer cutoff}, we sum over the $\underline{x}(\dsi), \underline{l}(\dsi)$- coordinates of the diagrams in \eqref{eq: repetition crucial bound lemma polymer cutoff}. The strategy for doing this has been outlined in Sections  \ref{sec: bounds in the sense of me} and \ref{sec: bounding operators}.

\begin{lemma}\label{lem: bound on small pairings} For $\la,\ga$ smalll enough,
\beq  \label{bound tildevu}
\mathop{\int}\limits_{\Si_{[0,t]}({< \tau}, \mathrm{ir})}   \d\dsi\,  1_{\str \dsi \str \geq 2} \zeta(\dsi) \caV_{[0,t]}(\dsi)   \quad \meleq  \quad   \e^{ \ctwo  t}  \,\,  \tilde\caU^{\ga}_{t}  \,    \mathop{\int}\limits_{\Pi_T \Si_{[0,t]}({< \tau}, \mathrm{ir})}  \d [\dsi] \,  \cone^{\str \dsi \str}   \,  1_{\str \dsi \str \geq 2}       \sup_{\underline{x}(\dsi),\underline{l}(\dsi)} \str\zeta(\dsi) \str    \,     
\eeq
\end{lemma}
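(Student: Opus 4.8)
\textbf{Proof plan for Lemma \ref{lem: bound on small pairings}.}

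The plan is to apply Lemma \ref{lem: integrating out pairings} directly, with the distinguished interval taken to be $I=[0,t]$ and with an empty family of ``external'' triples ($m=0$). Concretely, for each short irreducible diagram $\dsi \in \Si_{[0,t]}(<\tau,\mathrm{ir})$ with $\str\dsi\str\geq 2$, the operator $\caV_{[0,t]}(\dsi)$ is, by its definition in \eqref{def: cavdsi}--\eqref{def: cavI}, precisely an alternating product of free propagators $\caU$ and vertex operators $\caI_{x_i,l_i}$ indexed by the $2\str\dsi\str$ time-coordinates of $\dsi$. First I would invoke the ``sense of matrix elements'' domination of Section \ref{sec: bounding operators}, namely $\caU_t\meleq\tilde\caU^{\ga}_t$ and $\caI_{x,l}\meleq\tilde\caI_{x,l}$, together with the elementary fact that passing to matrix-element bounds is compatible with composition of operators (each factor is dominated, hence so is the product). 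This reduces $\caV_{[0,t]}(\dsi)$ to the bounding expression of the form appearing on the LHS of \eqref{eq: integrating out pairings}, now with $n=2\str\dsi\str$ and no external triples, and $\str\zeta(\dsi)\str$ pulled out as the weight.

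Next I would integrate over $\dsi\in\Si_{[0,t]}(<\tau,\mathrm{ir})$, restricted to $\str\dsi\str\geq 2$, and apply Lemma \ref{lem: integrating out pairings} verbatim. Since $m=0$, the product on the RHS of \eqref{eq: integrating out pairings} collapses to the single factor $\tilde\caU^{\ga/2}_{s^{\mathrm{f}}-s^{\mathrm{i}}}=\tilde\caU^{\ga/2}_{t}$ (recall $s^{\mathrm{i}}=0$, $s^{\mathrm{f}}=t$; there are no $\tilde\caI$ factors left), and the combinatorial prefactor becomes
\beq
\e^{\ctwo t}\mathop{\int}\limits_{\Pi_T\Si_{[0,t]}(<\tau,\mathrm{ir})}\d[\dsi]\,[\cone]^{2\str\dsi\str}\,1_{\str\dsi\str\geq 2}\,\sup_{\underline{x}(\dsi),\underline{l}(\dsi)}\str\zeta(\dsi)\str.
\eeq
Writing $\cone$ for $[\cone]^{2}$ (which is again of the type $\cone$, being decreasing in $\ga$ and $\la$-independent), this is exactly the right-hand side of \eqref{bound tildevu}. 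The restriction $1_{\str\dsi\str\geq 2}$ is carried along untouched throughout, since it only shrinks the domain of integration and is respected by every step of Lemma \ref{lem: integrating out pairings}.

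The argument is essentially bookkeeping once Lemma \ref{lem: integrating out pairings} is in hand, so there is no serious obstacle here; the only point requiring a line of care is matching the conventions of Lemma \ref{lem: integrating out pairings} — in particular checking that for $m=0$ the ``time-ordered union'' of external and internal triples is just the $2\str\dsi\str$ triples of $\dsi$ itself, so that $\caV_{[0,t]}(\dsi)$ in \eqref{bound tildevu} coincides with the operator appearing on the LHS of \eqref{eq: integrating out pairings}. (One should also note, as elsewhere, that $\tilde\caZ^{\tau,\ga}$ could be used in place of $\tilde\caU^{\ga}$ by the final sentence of Lemma \ref{lem: integrating out pairings}, but the cruder $\tilde\caU^{\ga}$ bound suffices here and is what is stated.) Finally one absorbs the constants $\e^{\ctwo t}$ and $\tilde\caU^{\ga}_t$ exactly as written; no further estimate on the remaining $[\dsi]$-integral is attempted at this stage, since bounding that integral — using the time-integrability of $\psi$ from Lemma \ref{lem: integrability} and the $\str\dsi\str\geq 2$ restriction to extract the extra factor $O(\la^2\tau)$ — is the content of the subsequent steps leading to \eqref{eq: repetition crucial bound lemma polymer cutoff}.
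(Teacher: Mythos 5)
Your proposal is correct and follows essentially the same route as the paper's own proof: dominate $\caU_t\meleq\tilde\caU^{\ga}_t$, $\caI_{x,l}\meleq\tilde\caI_{x,l}$ inside $\caV_{[0,t]}(\dsi)$ and then apply Lemma \ref{lem: integrating out pairings} with $m=0$, carrying along the restrictions to $<\tau$, irreducibility, and $\str\dsi\str\geq2$. The only (harmless) loose end, shared with the paper, is the relabeling $\ga/2\to\ga$ and $[\cone]^{2\str\dsi\str}\to\cone^{\str\dsi\str}$ when matching the output of Lemma \ref{lem: integrating out pairings} to the stated form of \eqref{bound tildevu}.
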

\begin{proof}
In the definition of $\caV_I(\dsi)$, see e.g. \eqref{def: cavI}, we bound $\caI_{x,l}$ by $\tilde \caI_{x,l} $ and $\caU_{t}$ by $\tilde\caU^{\ga}_{t}$. Then, we use 
 the bound \eqref{eq: integrating out pairings} with $m=0$ to obtain \eqref{bound tildevu}.  Note that, since $m=0$,  the set of triples $(t''_i,x''_i,l''_i)_{i=1}^{m + 2 \str \dsi \str}$ is equal to the set of triples $(t_i(\dsi),x_i(\dsi),l_i(\dsi))_{i=1}^{2 \str \dsi \str}$. Note also that we use  \eqref{eq: integrating out pairings} with $\Si_I(<\tau,\mathrm{ir})$ instead of $\Si_I(\mathrm{ir})$, and with the restriction to $\str \dsi \str \geq 2$. However, this does not change the validity of \eqref{eq: integrating out pairings}, as one  easily checks. 
\end{proof}
To appreciate how Lemma \ref{lem: bound on small pairings} relates to the bound  \eqref{eq: repetition crucial bound lemma polymer cutoff}, we note already that the bound \eqref{bound tildevu} remains true if one puts left and right hand sides between $\caJ_\ka \cdot \caJ_{-\ka}$ for  purely imaginary $\ka$ (for general $\ka$ the matrix elements can become negative, which is not allowed by our definition of $\meleq$).  \\

In the \textbf{second} step of the proof, we estimate the Laplace transform of the integral over equivalence classes $[\dsi]$ appearing on the RHS of  \eqref{bound tildevu}. This estimate uses three important facts
\ben
\item{The correlation functions in \eqref{bound tildevu} decay exponentially with rate $1/\tau$, due to the cutoff.}
\item{The diagrams are restricted to $\str \dsi \str  \geq 2$, they are therefore  subleading with respect to a diagram with $\str \dsi \str=1$.}
\item{We allow the estimate to depend on $\ga$ in a non-uniform way. Indeed, $\ga$ will be fixed in the last step of the argument.}
\een

Concretely, we  show that, for $ 0 < a \leq \frac{1}{\tau}$ and for $\la$ small enough (depending on $\ga$) 
\beq \label{eq: bound combinatorics diagram cutoff}
\mathop{\int}\limits_{\bbR^+} \d t \,  \e^{a t} \mathop{\int}\limits_{\Si_{[0,t]}({< \tau}, \mathrm{ir})}   \d [\dsi] \,   1_{\str \dsi \str \geq 2}      \left( [\cone]^{2\str \dsi \str}  \sup_{\underline{x}(\dsi),\underline{l}(\dsi)} \str\zeta(\dsi) \str \right)  =   O( \la^2)  O(\la^2 \tau) \cone  , \qquad  \la \searrow 0, \la^2 \tau \searrow 0
\eeq
To verify \eqref{eq: bound combinatorics diagram cutoff}, we set
 \beq k(t):=  \la^2  \cone \,  1_{\str t \str \leq \tau} \sup_{x} \str\psi(x,t)\str  
 \eeq
and we calculate, by exploiting the cutoff $\tau$ in the definition of $k(\cdot)$, 
\beq
\norm  \e^{ \frac{1}{\tau} t } k\norm_1   <   \la^2 \cone, \qquad    \norm t \e^{ (\frac{1}{\tau}+\norm k \norm_1) t }k \norm_1 =  \tau O(\la^2)\cone    
\eeq
The norm $\norm \cdot \norm_1$  refers to the variable $t$, i.e.,  $\norm h \norm_1 = \int_{0}^{\infty} \d t \str h(t) \str $.
Hence \eqref{eq: bound combinatorics diagram cutoff} follows from  the bound \eqref{eq: bound irreducible}  in Lemma \ref{lem: combinatorics of irreducible diagrams}, in Appendix \ref{app: combinatorics}, after using that $\tau O(\la^2) <C$, as $\la \searrow 0$, and choosing $\la$ small enough.\\ 

 In the \textbf{third} step of the proof, we fix $\ga$. 
By using the explicit form \eqref{def: bounding op cau} and the relation \eqref{eq: relation kappa and fibers}, we check that
\beq \label{eq: proof cutoff polymer 1}
 \left \norm \left(\caJ_{\ka} \tilde \caU^{\ga}_t \caJ_{-\ka} \right)_{x^{}_\links,x^{}_{\rechts};x'_\links,x'_\rechts}  \right\norm \leq    \e^{ \ctwo t}, \qquad  \textrm{for any} \qquad \str\Im\ka^{}_{\links,\rechts}\str < \ga.
  \eeq
Next, we make use of the following general fact that can be easily checked (e.g., by the Cauchy-Schwarz inequality): If, for some $\ga>0$ and $C <\infty$,
\beq   \norm \left(\caJ_{\ka}\caA \caJ_{-\ka}\right)_{x^{}_\links,x^{}_{\rechts};x'_\links,x'_\rechts} \norm \leq C, \qquad \textrm{uniformly for $\ka^{}_{\links,\rechts}$ \, s.t.} \,\,  \str \Im\ka^{}_{\links,\rechts} \str \leq \ga,
 \eeq then 
 \beq \label{eq: conversion exponential bound norm bound 1}
  \norm \caA  \norm  \leq  c(\ga)
  \eeq 
  where the norms refer to the operator norm  on $\scrB(\scrB_2(l^2(\bbZ^d,\scrS)))$, as in Definition \ref{def: bounds me}.
    
   Hence, from \eqref{eq: proof cutoff polymer 1} we get
  \beq  \label{eq: banality proof polymer 1}
 \mathop{\sup}\limits_{\str \Im \ka^{}_{\links,\rechts}\str  < \ga/2}  \norm \caJ_{-\ka}  \tilde \caU^{ \ga}_t \caJ_{\ka}  \norm   \leq    \cone    \e^{ \ctwo t}.
  \eeq 
By the first equality in \eqref{eq: repetition crucial bound lemma polymer cutoff} and Lemma \ref{lem: bound on small pairings}, 
\baq
 \caJ_{\ka}\caR_{ex}(z)   \caJ_{-\ka} \meleq    \int_{\bbR^+} \d t \,  \e^{-t \Re z}     \caJ_{\ka} \tilde\caU^{\ga}_t  \caJ_{-\ka}   \e^{\ctwo t}   \mathop{\int}\limits_{\Si_{[0,t]}({< \tau}, \mathrm{ir})}\d [\dsi] \,  1_{\str \dsi \str \geq 2}       [\cone]^{2\str \dsi \str}  \sup_{\underline{x}(\dsi),\underline{l}(\dsi)} \str\zeta(\dsi) \str  \label{eq: start of statement 2}
\eaq
We combine  \eqref{eq: start of statement 2} and \eqref{eq: banality proof polymer 1}  with \eqref{eq: bound combinatorics diagram cutoff}, setting  
\beq 
a \equiv  \max{(-\Re z,0)} + \ctwo \eeq
 for $\la$ small enough such that $\ctwo  \leq \frac{1}{2\tau}$ and such that \eqref{eq: bound combinatorics diagram cutoff} applies. At this point, the parameter $\ga$ has been fixed and this choice determines the maximal value of $\str \Im \ka \str$.  This concludes the proof of  the bound in \eqref{eq: repetition crucial bound lemma polymer cutoff}.    The other statements of  Lemma \ref{lem: polymer model cutoff} are proven below. \\
\vspace{0.3cm} 

\noindent\emph{Proof of Lemma \ref{lem: polymer model cutoff} }
The claim about $\caR^{\tau}_{ld}(z)$ (Statement 2)  follows by a drastically simplified version of the above argument for $\caR^{\tau}_{ex}(z)$.\\

\noindent To establish the convergence claim in Statement 1) of Lemma \ref{lem: polymer model cutoff}, it suffices, by \eqref{eq: laplace cutoff}, to check that $\norm \caZ^{\tau,\mathrm{ir}}_t \norm \leq \e^{C t}$ for some constant $C$. This has been established in the proof of Statement 2),  above, since $\caR^{\tau}_{ld}(z)+\caR^{\tau}_{ex}(z) $ is the Laplace transform of $\caZ^{\tau,\mathrm{ir}}_t$. The identity \eqref{eq: expansion for resolvent}
was established in Section \ref{sec: ladders and excitations}. \vspace{0.3cm}

\noindent To check Statement 3), we employ  expression \eqref{eq: caL as perturbation} for $\caL(z)$ and \eqref{def: car ladders}  for $\caR^{\tau}_{ld}(z)$. The latter differs from  $\caL(z)$ in that it is the Laplace transform of a quanitity with a cutoff at $t=\tau$ and in the fact that it includes  the propagator $\caU_t = \e^{\i (\ad (Y) + \la^2\ad(\varepsilon)) t}$ whereas $\caL(z)$ includes only $\e^{\i (\ad (Y))t }$.
We observe that
\baq
 \left \norm \caJ_{\ka} \left( \caR_{\mathrm{ld}}^{\tau}(z )  - \la^2 \caL(z) \right) \caJ_{-\ka}   \right \norm &\leq&4\la^2 \int_{\tau}^{\infty} \d t      \sup_x\str \psi(x,t)\str  \left \norm \caJ_{\ka}  \e^{-\i \adjoint(Y)t}  \caJ_{-\ka}  \right \norm   \\
& + & 4\la^2  \int_{0}^{\tau} \d t      \,    \sup_x\str \psi(x,t)\str  \left \norm  \caJ_{\ka}   \e^{-\i \adjoint(Y)t} \left(    \e^{\i \la^2 \adjoint(\varepsilon(P))t}- 1    \right)  \caJ_{-\ka}   \right \norm
\eaq
where the factors '$4$' originate from the sum over $l_1,l_2$ and we use that $\Re z \geq 0$.  In the first term on the RHS,  $\left \norm \caJ_{\ka}  \e^{-\i \adjoint(Y)t}  \caJ_{-\ka}  \right \norm=1$ since $Y$ commutes with the position operator $X$.
The second term is bounded by 
\beq
  4 \la^2\int_0^\tau \d s  \, \sup_x\str \psi(x,s)\str  \times 
   \sup_{t \leq \tau} \left(   \la^2 t   \left\norm  \caJ_{\ka}  \adjoint(\varepsilon(P))   \e^{\i \la^2 \adjoint(\varepsilon(P))t} \caJ_{-\ka}   \right \norm     \right) \leq   \tau \la^4C 
   \eeq
where we have used  Lemma \ref{lem: integrability} and the bound \eqref{eq: combes thomas}. 
\qed

\subsection{Bound on the vertex operators $\caB(\frl, s^{\mathrm{i}},s^{\mathrm{f}}) $}

In this section, we  prove a  bound on the 'dressed vertex operators', which were introduced in Section \ref{sec: abstract definition of vertex operators}. 
Since such 'dressed vertex operators' contain an irreducible short diagram in the interval $[s^{\mathrm{i}},s^{\mathrm{f}}]$, we obtain a bound that is exponentially decaying in $\str s^{\mathrm{f}} -s^{\mathrm{i}} \str$.
In \eqref{bound tildebag}, this exponential decay resides in the function $w(\cdot)$ and it is made explicit through the calculation in \eqref{eq: fast decay for function w}.

   The proof of the next lemma parallels the proof of Lemma  \ref{lem: bound on small pairings} above. Consider $m$ triples $(t'_i,x'_i,l'_i)_{i=1}^m$ and let $\frl$ be a (dressed) vertex with vertex set $S(\frl)= \{ t'_1, \ldots, t'_m \}$. Let $s^{\mathrm{i}},s^{\mathrm{f}}$ be vertex time-coordinates associated to $\frl$, i.e., such that $s^{\mathrm{i}} < t'_1$ and $s^{\mathrm{f}} > t'_m$. 

\begin{lemma}\label{lem: bound on vertex operators}
For $\la,\ga$ small enough, the following bound holds:
\beq \label{bound tildebag}
\caB(\frl, s^{\mathrm{i}},s^{\mathrm{f}})   \meleq   w(s^{\mathrm{f}}-s^{\mathrm{i}}) \,    \tilde \caU^{\ga}_{s^{\mathrm{f}}-t'_{m}} \tilde{\caI}_{x'_{m},l'_{m}}    \tilde\caU^{\ga}_{t'_{m}-t'_{m-1}} \ldots        \tilde\caU^{\ga}_{t'_2-t'_1}  \tilde{\caI}_{x'_1,l'_1}   \tilde \caU^{\ga}_{t'_1-s^{\mathrm{i}}}  
\eeq
where
\beq
 w(  s^{\mathrm{f}}- s^{\mathrm{i}}) :=   \e^{  \la^2 C'  \str s^{\mathrm{f}}- s^{\mathrm{i}} \str } \mathop{\int}\limits_{\Pi_T\Si_{[s^{\mathrm{i}} , s^{\mathrm{f}}]}( {< \tau}, \mathrm{ir})}  \d [\dsi] \,   C^{\str \dsi \str}  \sup_{\underline{x}(\dsi),\underline{l}(\dsi)} \str\zeta(\dsi) \str   \label{def: w of an interval}
\eeq
The RHS of \eqref{def: w of an interval}  indeed depends only on $s^{\mathrm{f}}-s^{\mathrm{i}}$, since the correlation function $\zeta(\dsi)$ depends only on differences of the time-coordinates of $\dsi$.   The function $w(\cdot)$ depends on the coupling strength $\la$ via the correlation function $\zeta(\dsi)$, see \eqref{def: zeta2}.
\end{lemma}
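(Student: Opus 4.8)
\textbf{Proof plan for Lemma \ref{lem: bound on vertex operators}.} The plan is to mimic the three-step strategy that was used for Lemma \ref{lem: bound on small pairings}, adapted to the presence of the ``external'' triples $(t'_i,x'_i,l'_i)_{i=1}^m$ coming from the long diagram $\dsil$. Recall from \eqref{def: vertex operator} that
\beq
\caB(\frl,s^{\mathrm{i}},s^{\mathrm{f}}) = \mathop{\int}\limits_{\Si_{[s^{\mathrm{i}},s^{\mathrm{f}}]}({<\tau},\mathrm{ir})} \d\dsi \, \zeta(\dsi) \, \caV_{[s^{\mathrm{i}},s^{\mathrm{f}}]}\!\left((t''_i,x''_i,l''_i)_{i=1}^{m+2\str\dsi\str}\right),
\eeq
where the triples $(t''_i,x''_i,l''_i)$ are the time-ordered merge of the external triples $(t'_i,x'_i,l'_i)_{i=1}^m$ with the triples of $\dsi$. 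First I would replace, inside $\caV_{[s^{\mathrm{i}},s^{\mathrm{f}}]}(\cdot)$, every $\caU_t$ by its bounding operator $\tilde\caU^{\ga}_t$ and every $\caI_{x,l}$ by $\tilde\caI_{x,l}$, using the matrix-element dominations recorded in Section \ref{sec: bounding operators}; this is legitimate because $\meleq$ is preserved under products of operators with nonnegative kernels, and the signs of $\zeta(\dsi)$ are absorbed by passing to $\str\zeta(\dsi)\str$.

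Next I would invoke Lemma \ref{lem: integrating out pairings} directly: it is formulated precisely for this situation, with the interval $I=[s^{\mathrm{i}},s^{\mathrm{f}}]$ and the $m$ external triples $(t'_i,x'_i,l'_i)$ kept fixed while one integrates over $\dsi\in\Si_I(\mathrm{ir})$. Applying it (with the restriction to short diagrams, $\Si_I({<\tau},\mathrm{ir})$, which does not affect the validity of the bound, just as noted in the proof of Lemma \ref{lem: bound on small pairings}) yields
\beq
\caB(\frl,s^{\mathrm{i}},s^{\mathrm{f}}) \meleq \left(\e^{\ctwo\str s^{\mathrm{f}}-s^{\mathrm{i}}\str}\mathop{\int}\limits_{\Pi_T\Si_{[s^{\mathrm{i}},s^{\mathrm{f}}]}({<\tau},\mathrm{ir})}\d[\dsi]\,[\cone]^{2\str\dsi\str}\sup_{\underline x(\dsi),\underline l(\dsi)}\str\zeta(\dsi)\str\right)\times\tilde\caU^{\frac{\ga}{2}}_{s^{\mathrm{f}}-t'_m}\tilde\caI_{x'_m,l'_m}\tilde\caU^{\frac{\ga}{2}}_{t'_m-t'_{m-1}}\cdots\tilde\caI_{x'_1,l'_1}\tilde\caU^{\frac{\ga}{2}}_{t'_1-s^{\mathrm{i}}}.
\eeq
The prefactor in parentheses is exactly (a version of) the function $w(s^{\mathrm{f}}-s^{\mathrm{i}})$ in \eqref{def: w of an interval}: it depends only on the difference $s^{\mathrm{f}}-s^{\mathrm{i}}$ because $\zeta(\dsi)$ depends only on time-differences of the coordinates of $\dsi$, the exponential factor $\e^{\ctwo(s^{\mathrm{f}}-s^{\mathrm{i}})}$ is of the form $\e^{\la^2 C'\str s^{\mathrm{f}}-s^{\mathrm{i}}\str}$ by the definition of $\ctwo$ in Section \ref{sec: generic constants}, and the combinatorial constants $[\cone]^{2\str\dsi\str}$ can be absorbed into a single $C^{\str\dsi\str}$. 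Finally one notes that the operator string on the right-hand side is written with $\tilde\caU^{\ga/2}$, so a harmless relabeling $\ga/2\to\ga$ (valid after shrinking the admissible range of $\ga$ by a factor two, which is what ``for $\la,\ga$ small enough'' allows) puts it into the form stated in \eqref{bound tildebag}.

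The only genuinely non-routine point is bookkeeping rather than analysis: one must check that the merged triples $(t''_i,x''_i,l''_i)$ produced inside $\caB(\frl,s^{\mathrm{i}},s^{\mathrm{f}})$ in \eqref{def: vertex operator} coincide with those produced by Lemma \ref{lem: integrating out pairings}, so that the Lemma applies verbatim. This is immediate from the definitions, since both constructions are simply the time-ordered union of the same two families of triples. I do not expect the decay properties of $w(\cdot)$ to be an obstacle here: the claim of the present lemma is only the \emph{bound} \eqref{bound tildebag}, and the exponential smallness of $w(s^{\mathrm{f}}-s^{\mathrm{i}})$ in $\str s^{\mathrm{f}}-s^{\mathrm{i}}\str$ — which is where the constraint $\dsi\in\Si(\mathrm{ir})$ and the cutoff $\tau$ really do work, via the combinatorial estimate of Appendix \ref{app: combinatorics} — is deferred to the subsequent computation in \eqref{eq: fast decay for function w}. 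So the proof reduces, essentially in one line, to ``apply Lemma \ref{lem: integrating out pairings}.''
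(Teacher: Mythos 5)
Your proof follows the paper's argument essentially verbatim: dominate $\caU_t$ and $\caI_{x,l}$ by $\tilde\caU^{\ga}_t$ and $\tilde\caI_{x,l}$, apply Lemma \ref{lem: integrating out pairings} with the external triples $(t'_i,x'_i,l'_i)$ held fixed, and identify the resulting scalar prefactor with $w(s^{\mathrm{f}}-s^{\mathrm{i}})$. The only (cosmetic) divergence is the $\ga/2\to\ga$ step: the paper applies the bound at the fixed maximal value $\ga_1=\tfrac14\de_{\ve}$ and then uses $\tilde\caU^{\ga_1}_t\leq\e^{\ctwo t}\,\tilde\caU^{\ga_2}_t$ for $\ga_2\leq\ga_1$, which is what makes the constants $C,C'$ in \eqref{def: w of an interval} genuinely independent of $\ga$, whereas your relabeling leaves them $\ga$-dependent — harmless downstream, since $\ga$ is eventually fixed, but worth noting.
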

\begin{proof}
Starting from the definition of the vertex operator $\caB(\frl, s^{\mathrm{i}},s^{\mathrm{f}}) $  given in \eqref{def: vertex operator}, we bound the operators $\caI_{x,l}, \caU_t$ by $\tilde \caI_{x,l}, \tilde\caU^{\ga}_t$
and we apply Lemma \ref{lem: integrating out pairings} to obtain
\baq  
\caB(\frl, s^{\mathrm{i}},s^{\mathrm{f}})  & \meleq&   \e^{ \ctwo \str s^{\mathrm{f}}- s^{\mathrm{i}} \str  }  \mathop{\int}\limits_{\Pi_T\Si_{[s^{\mathrm{i}} , s^{\mathrm{f}}]}({< \tau}, \mathrm{ir})}  \d [\dsi] \,   [\cone]^{2\str \dsi \str}  \sup_{\underline{x}(\dsi),\underline{l}(\dsi)} \str\zeta(\dsi) \str      \nonumber   \\
&&  \qquad \hspace{2cm}    \tilde\caU^{\frac{\ga}{2}}_{s^{\mathrm{f}}-t'_{m}} \tilde{\caI}_{x'_{m},l'_{m}}  \tilde\caU^{\frac{\ga}{2}}_{t'_{m}-t'_{m-1}} \ldots        \tilde\caU^{\frac{\ga}{2}}_{t'_2-t'_1}  \tilde{\caI}_{x'_1,l'_1}   \tilde\caU^{\frac{\ga}{2}}_{t'_1-s^{\mathrm{i}}}       \label{eq: application of integration to vertex}
\eaq
 From the definition of $\tilde \caU^{\ga}_t$ in \eqref{def: bounding op cau}, we see that
\beq \label{eq: relation between caU for different gamma}
\tilde \caU^{\ga_1}_t  \leq   \e^{ \ctwo t }   \,   \tilde \caU^{\ga_2}_t , \qquad  \textrm{for}\,\,  \ga_2 < \ga_1
\eeq
We dominate the RHS of \eqref{eq: application of integration to vertex} by fixing $\ga/2= \ga_1$   and  applying \eqref{eq: relation between caU for different gamma}  for any 
 $\ga_2 \leq \ga_1$.  This yields \eqref{bound tildebag}, with the constant $C$ in \eqref{def: w of an interval} given by fixing $\ga =\ga_1$ in $\cone$. One sees that the maximal value we can choose for $\ga_1$ is $\ga_1=\frac{1}{4} \de_{\ve}$, with $\de_{\ve}$ as in Assumption \ref{ass: analytic dispersion}.
\end{proof} 
For later use, we note here that, for $\la$ sufficiently small and with $w(t)$ defined in  \eqref{def: w of an interval};
\baq 
\int_{\bbR^+} \d t \,  \str t\str  w(t) \e^{ \frac{t}{ 2\tau} } &   \leq   &  \tau C  \int_{\bbR^+} \d t   w(t) \e^{ \frac{t}{ \tau}}   \nonumber \qquad  \\[2mm]
&\leq&    O(\la^2   \tau^2), \qquad  \la \searrow 0, \la^2 \tau \searrow 0    \label{eq: fast decay for function w}
\eaq
where the second inequality follows by the bound \eqref{eq: bound irreducible classes all} in Lemma \ref{lem: combinatorics of irreducible diagrams}, with 
\beq
k(t):=    \la^2  C  \sup_{x}    \str \psi(x,t) \str  1_{t \leq \tau} \quad   \qquad   \textrm{and} \qquad  a := \frac{1}{\tau}
\eeq
for $\la$ such that $\la^2C' < 1/\tau$ with $C'$ as in the exponent of \eqref{def: w of an interval}.

\subsection{Bound on the conditional cutoff dynamics $\caC_t(\dsi_l) $} \label{sec: bounds on caC}

In this section, we state bounds on $   \caC_t(\dsil, \frL) $ and $\caC_t(\dsi_l) $, defined in Sections \ref{sec: representation of caZ as double integral} and \ref{sec: vertices and vertex partitions}, respectively.
Our bounds will follow in a straightforward way from Lemma \ref{lem: bound on vertex operators} and  formula \eqref{formula caC}, which we repeat here for convenience
 \beq\label{eq: formula caC repeated}
   \caC_t(\dsil, \frL)    =    \mathop{\int}\limits_{\footnotesize{\left.\begin{array}{c}  0 <s^{\mathrm{i}}_k < s^{\mathrm{f}}_k < t  \\[1mm] 
   s^{\mathrm{f}}_k  < s^{\mathrm{i}}_{k'}    \,  \textrm{for} \,  k' >k   \end{array}\right.  }  } \caD  \underline{s}^{\mathrm{i}} \caD  \underline{s}^{\mathrm{f}}  \,\,    \caB(\frl_{\str \frL\str }, s^{\mathrm{i}}_{\str \frL\str },s^{\mathrm{f}}_{\str \frL\str }) 
       \caZ^{\tau}_{s^{\mathrm{f}}_{\str \frL\str }-s^{\mathrm{i}}_{{\str \frL\str }-1} }   \ldots     \caZ^{\tau}_{s^{\mathrm{f}}_3-s^{\mathrm{i}}_2}   \caB(\frl_2, s^{\mathrm{i}}_2,s^{\mathrm{f}}_2)  \caZ^{\tau}_{s^{\mathrm{f}}_2-s^{\mathrm{i}}_1}   \caB(\frl_1, s^{\mathrm{i}}_1,s^{\mathrm{f}}_1) 
 \eeq

By inserting the bound from Lemma \ref{lem: bound on vertex operators} in \eqref{eq: formula caC repeated}, we obtain a bound on $\caC_t(\dsil, \frL)$ depending on the vertex time-coordinates $s^{\mathrm{i}},s^{\mathrm{f}}$. In the next bound, Lemma \ref{lem: bounds with cah}, we simply integrate out these coordinates.   To describe the result, it is convenient to introduce some taylor-made notation. Let  the times $(t_1, \ldots, t_{2n})$ be the time-coordinates of $\dsil$. We will now specify the effective dynamics between each of those times, depending on the vertex partition $\frL$.
 \begin{itemize}
 \item  If the times $t_i$ and $t_{i+1}$ belong to the  vertex set of the same vertex, then
 \beq \label{def: H1}
\tilde\caH^{\ga}_{t_{i+1}, t_i}:=  \tilde\caG^{\ga}_{t_{i+1}- t_i}, \qquad  \textrm{with} \, \,  \tilde \caG^{\ga}_t  := \e^{- \frac{t}{3 \tau} } \,  \tilde \caU^{\ga}_t  
 \eeq
 \item If the times $t_i$ and $t_{i+1}$ belong to different  vertices, then 
  \beq \label{def: H2}
\tilde\caH^{\ga}_{t_{i+1}, t_i}:=   \tilde\caZ_{t_{i+1}- t_i}^{\tau, \ga}
\eeq
 \end{itemize}
 The idea of this distinction is clear: Within a dressed vertex, we get additional decay from the short diagrams; this is the origin of the exponential decay $\e^{- \frac{t}{3 \tau}}$ in $\tilde\caG^{\ga}_t$.  Between the vertices, we encounter the cutoff reduced evolution $ \caZ^{\tau}_t$, as already visible in \eqref{eq: formula caC repeated}.  Moreover, we get an additional small factor for each dressed vertex. To make this explicit, we define
 \beq
 \str \frL \str_{\textrm{dressed}} := \# \{\textrm{dressed} \,  \frl_k  \}  \qquad  (= \textrm{number of dressed vertices in the vertex partition $\frL$})
 \eeq 
 
 \begin{lemma}\label{lem: bounds with cah}
Let the operators $\tilde\caH^{\ga}_{t_i,t_{i+1}}$ be defined as above, depending on the diagram $\dsi_l$ and the vertex partition $\frL$. Then,  for $\la,\ga$ small enough, 
%\beq \label{eq: bound with cah}
% \caC_t(\dsil, \frL)  \meleq     \left[ \la^{(2-2\al) } c(\ga)\right]^{\str \frL \str_{\textrm{dressed}}}     \left\{ \begin{array}{c}  \tilde\caG^{\ga}_{t,t_{n}}  \\ 1 \end{array}  \right\}     \tilde\caI_{n(\dsi_l)} \tilde{\caH}_{t_n,t_{n-1}}    \tilde\caI_{(n-1)(\dsi_l)}        \ldots     \tilde{\caH}_{t_2-t_1}
%\tilde\caI_{1(\dsi_l)}          \left\{ \begin{array}{c}   \tilde\caG^{\ga}_{t_1,0}   \\ 1 \end{array}  \right\}   
%\eeq
\beq \label{eq: bound with cah}
 \caC_t(\dsil, \frL)  \meleq     \left[ (\str\la\str \tau)^2 \cone \right]^{\str \frL \str_{\textrm{dressed}}}     \tilde\caG^{\ga}_{t-t_{2n}}     \tilde\caI_{x_{2n},l_{2n}}  \tilde\caH^{\ga}_{t_{2n},t_{2n-1}}         \ldots    \tilde\caH^{\ga}_{t_3-t_2}  \tilde\caI_{x_2,l_2}       \tilde \caH^{\ga}_{t_2-t_1}
\tilde\caI_{x_1,l_1}         \tilde\caG^{\ga}_{t_1} 
\eeq
\end{lemma}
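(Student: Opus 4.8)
\textbf{Proof plan for Lemma \ref{lem: bounds with cah}.}

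The plan is to substitute the vertex-operator bound of Lemma \ref{lem: bound on vertex operators} into the explicit formula \eqref{eq: formula caC repeated} for $\caC_t(\dsil,\frL)$, and then to integrate out the vertex time-coordinates $\underline{s}^{\mathrm{i}},\underline{s}^{\mathrm{f}}$. Concretely, I would first replace each dressed vertex operator $\caB(\frl_k,s^{\mathrm{i}}_k,s^{\mathrm{f}}_k)$ on the RHS of \eqref{eq: formula caC repeated} by the product $w(s^{\mathrm{f}}_k-s^{\mathrm{i}}_k)\,\tilde\caU^{\ga}_{s^{\mathrm{f}}_k-t'_m}\tilde\caI\cdots\tilde\caI\,\tilde\caU^{\ga}_{t'_1-s^{\mathrm{i}}_k}$ guaranteed by \eqref{bound tildebag}, replace each bare vertex operator by the corresponding $\tilde\caI_{x_j,l_j}$ (and note $s^{\mathrm{i}}_k=s^{\mathrm{f}}_k=t_j$ is a dummy), and replace each intermediate $\caZ^\tau_t$ by its bound $\tilde\caZ^{\tau,\ga}_t$ coming from Lemma \ref{lem: bounds on cutoff dynamics}. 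Because $\meleq$ is compatible with composition of positive bounding operators (matrix elements of a product are bounded by products of matrix elements, entry-wise), this turns \eqref{eq: formula caC repeated} into an $\meleq$-bound consisting of an alternating string of $\tilde\caU^{\ga}$'s, $\tilde\caZ^{\tau,\ga}$'s and $\tilde\caI$'s, multiplied by $\prod_{k:\,\frl_k\ \text{dressed}} w(s^{\mathrm{f}}_k-s^{\mathrm{i}}_k)$, still under the integral $\int\caD\underline{s}^{\mathrm{i}}\caD\underline{s}^{\mathrm{f}}$.

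The second step is to carry out the $\underline{s}^{\mathrm{i}},\underline{s}^{\mathrm{f}}$-integrals, vertex by vertex, from left to right. Between two consecutive \emph{dressed} vertices one has a factor $\caZ^\tau_{s^{\mathrm{i}}_{k+1}-s^{\mathrm{f}}_k}$; I want to split off part of its exponential decay. Using that $\tilde\caZ^{\tau,\ga}_t$ carries (from \eqref{def: tilde caZ}) either $\e^{r_\tau(\ga,\la)t}$ or $\e^{-\la^2 g_c t}$, and that $r_\tau(\ga,\la)=O(\la^2)O(\ga^2)+o(\la^2)\le \tfrac{1}{3\tau}$ for $\la$ small (recall $\la^2\tau\to 0$), I can bound each $\tilde\caZ^{\tau,\ga}$ bridging two dressed vertices by $\e^{\frac{1}{3\tau}(s^{\mathrm{i}}_{k+1}-s^{\mathrm{f}}_k)}$ times $\tilde\caZ^{\tau,\ga}$ with a slightly worse rate — more simply, I dominate $\tilde\caZ^{\tau,\ga}_t\meleq\e^{t/(3\tau)}\tilde\caZ^{\tau,\ga}_t$ is false in general, so instead I keep the $\tilde\caZ^{\tau,\ga}$ between vertices of \emph{different} vertices as $\tilde\caH^{\ga}$ per \eqref{def: H2} without extracting decay, and I spend the decay of $w(\cdot)$ to do the $s$-integral. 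Within a dressed vertex the internal propagators are $\tilde\caU^{\ga}$, and the factor $\e^{-t/(3\tau)}$ in $\tilde\caG^{\ga}_t$ (definition \eqref{def: H1}) is produced as follows: after fixing all external times $t_i$, the free vertex integration variable is $s^{\mathrm{i}}_k\in[t_{j-1},t_j]$, $s^{\mathrm{f}}_k\in[t_{j+m-1},t_{j+m}]$, and I bound $\int \d s^{\mathrm{i}}_k\,\d s^{\mathrm{f}}_k\, w(s^{\mathrm{f}}_k-s^{\mathrm{i}}_k)\,\tilde\caU^{\ga}_{s^{\mathrm{f}}_k-t'_m}\cdots\tilde\caU^{\ga}_{t'_1-s^{\mathrm{i}}_k}$. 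Using \eqref{eq: inequality caucau cau} to collapse the internal $\tilde\caU^{\ga}$'s into a single $\tilde\caU^{\ga/2}_{s^{\mathrm{f}}_k-s^{\mathrm{i}}_k}\,\e^{\ctwo(s^{\mathrm{f}}_k-s^{\mathrm{i}}_k)}$ (the intervening $\tilde\caI$'s summed away), and then dominating the remaining $\tilde\caU^{\ga/2}$ over the larger interval $[t_{j-1},t_{j+m}]$ while pulling the $w$-weighted $s$-integral out as a scalar, one is left with the scalar $\int_0^{\infty}\d u\, |u|^{?}\,w(u)\,\e^{\ctwo u}$-type quantity, which by the estimate \eqref{eq: fast decay for function w} is $O(\la^2\tau^2)\cone$ per dressed vertex — this is the origin of the factor $[(\str\la\str\tau)^2\cone]^{\str\frL\str_{\mathrm{dressed}}}$. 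The leftover uniform decay $\e^{-u/(3\tau)}$ I keep inside $\tilde\caG^{\ga}$; concretely I use $\e^{\ctwo u}\le \e^{u/(3\tau)}$ for small $\la$ and write $\e^{u/(3\tau)}=\e^{u/(3\tau)}$ carefully so that $w(u)\e^{u/(2\tau)}$ (integrable, bound \eqref{eq: fast decay for function w}) absorbs it while $\e^{-u/(3\tau)}$ is what survives in $\tilde\caG^{\ga}_u$. Finally I reassemble the string: between times in the same vertex I get $\tilde\caG^{\ga}$ (\eqref{def: H1}), between times in different vertices I get $\tilde\caZ^{\tau,\ga}$ (\eqref{def: H2}), the endpoints $[0,t_1]$ and $[t_{2n},t]$ contribute $\tilde\caG^{\ga}_{t_1}$ and $\tilde\caG^{\ga}_{t-t_{2n}}$ (since $\frl_1$ resp.\ $\frl_{\str\frL\str}$ may be dressed with $s^{\mathrm{i}}_1=0$ resp.\ $s^{\mathrm{f}}=t$, and if bare the endpoints are $\tilde\caU^{\ga}$ which we trivially dominate by $\tilde\caG^{\ga}$), and each $\tilde\caI_{x_i,l_i}$ sits at time $t_i$.

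The bookkeeping — getting exactly \eqref{eq: bound with cah} with the right $\tilde\caH^{\ga}$'s at the right places and the exponent $\str\frL\str_{\mathrm{dressed}}$ rather than $\str\frL\str$ or $\str\dsil\str$ — is the main obstacle, together with the care needed to route the $\tfrac{1}{3\tau}$-decay: one must not double-count the $w$-decay (it is spent once to do the $s$-integral and once more, a residual $\e^{-u/(3\tau)}$, kept in $\tilde\caG^{\ga}$), and one must check that after extracting the scalar $O(\la^2\tau^2)$ factors the remaining operator is genuinely of the alternating form claimed, with all $\ga$'s consistently reduced (one may pass to a common value, say $\ga$, absorbing the halvings into the generic constants $\cone$, exactly as in the proof of Lemma \ref{lem: bound on vertex operators}). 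Apart from this, every individual step is a routine application of Lemma \ref{lem: bounds with tilde}, Lemma \ref{lem: bound on vertex operators}, the bound \eqref{eq: fast decay for function w}, and the entry-wise monotonicity of $\meleq$ under composition; no new idea is required beyond organizing the vertex integrations.
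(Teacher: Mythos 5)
Your overall strategy is the same as the paper's: substitute the bound of Lemma \ref{lem: bound on vertex operators} into \eqref{eq: formula caC repeated}, integrate out the non-dummy vertex times $s^{\mathrm{i}}_k,s^{\mathrm{f}}_k$ using the change of variables $\int \d s^{\mathrm{i}}\d s^{\mathrm{f}}\,w(s^{\mathrm{f}}-s^{\mathrm{i}})\e^{(s^{\mathrm{f}}-s^{\mathrm{i}})/(2\tau)}\leq\int_{\bbR^+}\d u\,|u|\,w(u)\e^{u/(2\tau)}=O(\la^2\tau^2)$ from \eqref{eq: fast decay for function w} (the power of $|u|$ you were unsure about is indeed $1$), keep the residual $\e^{-(\cdot)/(3\tau)}$ portion of the $w$-decay as the $\tilde\caG^{\ga}$ factors inside each dressed vertex, merge the outer $\tilde\caU^{\ga}$'s of each vertex with the neighbouring $\caZ^{\tau}$'s via \eqref{eq: inequality caecau cae}--\eqref{eq: inequality caucae cae}, and treat each dressed vertex independently, which is exactly where the exponent $\str\frL\str_{\textrm{dressed}}$ comes from.

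One intermediate step as you wrote it would, however, fail: you propose to ``collapse the internal $\tilde\caU^{\ga}$'s into a single $\tilde\caU^{\ga/2}_{s^{\mathrm{f}}_k-s^{\mathrm{i}}_k}$ (the intervening $\tilde\caI$'s summed away)'' and then dominate by a single propagator over $[t_{j-1},t_{j+m}]$. The $\tilde\caI_{x_i,l_i}$ sitting at the time-coordinates of the \emph{long} diagram must not be summed away at this stage --- their labels $x_i,l_i$ are still needed to contract against $\zeta(\dsil)$ later (cf.\ the definition of $\tilde\caE^{\ga}(\dsi)$ in Lemma \ref{lem: irreducible bounds with caE}), and the target \eqref{eq: bound with cah} explicitly retains each $\tilde\caI_{x_i,l_i}$ and each separate factor $\tilde\caG^{\ga}_{t_{i+1}-t_i}$; once collapsed, the alternating string cannot be reassembled as your final paragraph claims. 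The fix is what the paper does: keep the interior string $\tilde\caI_{x_{j+m-1},l_{j+m-1}}\tilde\caU^{\ga}_{\cdot}\cdots\tilde\caU^{\ga}_{\cdot}\tilde\caI_{x_j,l_j}$ intact as a single block $\tilde\caA^{\ga}$ and use only the property $\tilde\caA^{\ga}\meleq\e^{\ctwo t}\tilde\caA^{\ga/2}$ (a consequence of \eqref{eq: inequality caucau cau} with the diagonal $\tilde\caI$'s left in place); the resulting $\e^{\ctwo t}$ blowup is then absorbed, together with the $1/(3\tau)$ bookkeeping, exactly as you describe. With that correction your argument coincides with the paper's proof.
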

Note that between the times $0$ and $t_1$, we always (for each vertex partition) put $\tilde \caG^{\ga}_{t_1}$. This is because either $t_1=0$, in which case $\tilde \caG^{\ga}_{t_1}=1$, or $t_1$ belongs to a dressed vertex   whose initial time coordinate, $s^{\mathrm{i}}_1$, is fixed to be $s^{\mathrm{i}}_1=0$. The same remark applies between the times $t_n$ and $t$.

\begin{figure}[ht!] 
\vspace{0.5cm}
\begin{center}
\psfrag{time1}{$t_1$}
\psfrag{time2}{$t_2$}
\psfrag{time3}{$t_3$}
\psfrag{time4}{$t_4$}
\psfrag{time5}{$t_5$}
\psfrag{time6}{$t_6$}
\psfrag{time7}{$t_7$}
\psfrag{time8}{$t_8$}
\psfrag{timet}{$t$}
\psfrag{b1}{ $\scriptstyle{\frl_1}$}
\psfrag{b2}{ $\scriptstyle{\frl_2}$}
\psfrag{b3}{ $\scriptstyle{\frl_3}$}
\psfrag{b4}{ $\scriptstyle{\frl_4}$}
\psfrag{b5}{ $\scriptstyle{\frl_5}$}
\psfrag{b6}{ $\scriptstyle{\frl_6}$}
\psfrag{G}{ $\scriptstyle{\tilde\caG^{\ga}}$}
\psfrag{Z}{ $\scriptstyle{\tilde\caZ^{\tau,\ga}}$}
\psfrag{I}{ $\scriptstyle{\tilde\caI}$}
\psfrag{transfo}{Lemma \ref{lem: bounds with cah}}
\includegraphics[width = 12cm]{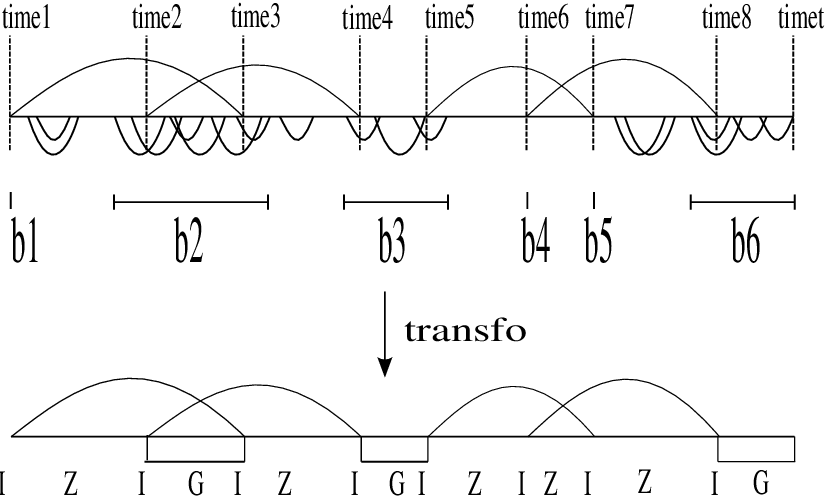}
\end{center}
\caption{\footnotesize{Consider the long diagram $\dsi_l \in \Si_{[0,t]}(\scriptstyle{>\tau})$ with $\str \dsi \str=4$ shown above.  In the upper figure, we show a short diagram $\dsi$ such that $\dsi_l \cup \dsi$ is irreducible in $[0,t]$. The corresponding vertex  partition $\frL= \{ \frl_1, \ldots, \frl_6 \}$ is indicated by vertical lines for the bare vertices $\frl_1,\frl_4,\frl_5$ and horizontal bars for the dressed vertices $\frl_2, \frl_3, \frl_6$.   In the picture below we suggest the representation that emerges after applying Lemma \ref{lem: bounds with cah}: There are no vertex time coordinates any more.   The time-coordinates of the long diagrams correspond to operators $\tilde \caI$. The intervals between time-coordinates of the long diagrams correspond to operators $\tilde\caZ^{\tau,\ga}$ or $\tilde\caG^{\ga}$.  The intervals corresponding to $\tilde\caG^{\ga}$ are those which in the upper picture belong entirely to the domain of a short diagram.
 } \label{fig: fromvertexoperatorstovertices1}
} 
\end{figure}

\begin{proof}
The proof starts from the representation of $\caC_t(\dsil,\frL)$ in \eqref{eq: formula caC repeated} and the bound for the vertex operators $ \caB(\frl_k, s^{\mathrm{i}}_k,s^{\mathrm{f}}_k) $ given in Lemma \ref{lem: bound on vertex operators}. Then we integrate out the $s^{\mathrm{i}}_k,s^{\mathrm{f}}_k$-coordinates for the dressed vertices $\frl_k$. The main tool in doing so is the fast decay of the function $w(\cdot)$, as follows from  \eqref{eq: fast decay for function w}. 

We consider a simple example. Take $t_1=0$ and $t_{2n}=t$ and let $\str \frL \str=1$, i.e.\ there is one vertex $\frl$. It follows that $\frl$ is dressed and  $S(\frl)=\{ t_2, \ldots, t_{2n-1} \}$.
In this case, formula  \eqref{formula caC} reads 
\beq
\caC_t(\dsil,\frL) =       \mathop{\int}\limits_{\footnotesize{ \left.\begin{array} {c} 0< s^{\mathrm{i}}  < t_2  \\  t_{2n-1} < s^{\mathrm{f}} < t    \end{array} \right. }} \d s^{\mathrm{i}}  \d s^{\mathrm{f}}    \caI_{x_{2n},l_{2n}}   \caZ^{\tau}_{ t-s^{\mathrm{f}} }   \caB(\frl, s^{\mathrm{i}},s^{\mathrm{f}})  \caZ^{\tau}_{ s^{\mathrm{i}} -t_{1}}  \caI_{x_{1},l_{1}} 
\eeq
and the bound in Lemma \ref{lem: bound on vertex operators} is
\beq \label{eq: specific vertex bound}
\caB(\frl, s^{\mathrm{i}},s^{\mathrm{f}})   \meleq     w(s^{\mathrm{f}}-s^{\mathrm{i}})  \,  \tilde\caU^{\ga}_{ s^{\mathrm{f}} -t_{2n-1}}  \times  \underbrace{\tilde\caI_{x_{2n-1},l_{2n-1}}   \tilde\caU^{\ga}_{t_{2n-1}-t_{2n-2}} \ldots  \tilde\caU^{\ga}_{t_3-t_2}\tilde\caI_{x_{2},l_{2}}}_{\tilde \caA^{\ga}} \times 
 \,    \tilde \caU^{\ga}_{t_2- s^{\mathrm{i}}  }\eeq 
where the operator $\tilde\caA^{\ga}$ is defined as  the `interior part' of the vertex operator. The sole property of  $\tilde\caA^{\ga}$ that is relevant for the present argument is that
\beq\label{eq: only relevant property}
\tilde \caA^{\ga} \meleq      \e^{\ctwo t} \tilde\caA^{\frac{\ga}{2}} 
\eeq
as follows from the definition of $\tilde\caU^{\ga}_t$ and the bound \eqref{eq: inequality caucau cau}.
From \eqref{eq: specific vertex bound}, \eqref{eq: only relevant property}  and (\ref{eq: inequality caecau cae}, \ref{eq: inequality caucae cae}) ,  we obtain
\baq
\caC_t(\dsil,\frL) & \meleq&   \e^{-\frac{(t_{2n-1}-t_2)}{3\tau} }      \mathop{\int}\limits_{\footnotesize{ \left.\begin{array} {c} 0< s^{\mathrm{i}}  < t_2  \\  t_{2n-1} < s^{\mathrm{f}} < t   \end{array} \right. }} \d s^{\mathrm{i}}  \d s^{\mathrm{f}}   \,   w(s^{\mathrm{f}}-s^{\mathrm{i}})  \,  \e^{\frac{s^{\mathrm{f}}-s^{\mathrm{i}}}{2\tau} }          (\cone)^2      \\
&& \hspace{6cm}      \tilde   \caI_{x_{2n},l_{2n}}   \tilde\caZ^{\tau,\frac{\ga}{2}}_{ t_{2n}-t_{2n-1}}  \tilde\caA^{\frac{\ga}{2}}      \tilde\caZ^{\tau,\frac{\ga}{2}}_{ t_{2}-t_{1}}   \tilde\caI_{x_1,l_1}   \nonumber
\eaq
where we have used the decomposition $s^{\mathrm{f}}-s^{\mathrm{i}}=  (s^{\mathrm{f}}-t_{2n-1})+   (t_{2n-1}-t_2) + (t_2-s^{\mathrm{i}})  $ and we have chosen $\la,\ga$ small enough such that $\ctwo < 1/(6 \tau)$ in \eqref{eq: only relevant property}. 
By  a change of integration  variables, we find that
\beq
 \mathop{\int}\limits_{\footnotesize{ \left.\begin{array} {c} 0< s^{\mathrm{i}}  < t_2  \\  t_{2n-1} < s^{\mathrm{f}} < t   \end{array} \right. }} \d s^{\mathrm{i}}  \d s^{\mathrm{f}}    w(s^{\mathrm{f}}-s^{\mathrm{i}})  \e^{\frac{s^{\mathrm{f}}-s^{\mathrm{i}}}{2 \tau}  }   \leq 
   \mathop{\int}\limits_{\bbR^+} \d t \, \str t\str   w(t)  \e^{\frac{t}{2 \tau}}      \label{eq: change of integration w}
\eeq
and we note that this bound remains valid if, in the integration domain on the LHS, we  replaced $0$ by a smaller number, or $t_{2n}$ by a larger number.
Hence, by the bound \eqref{eq: fast decay for function w}, we  obtain
\beq \label{eq: specific bound 2}
\caC_t(\dsil,\frL) \meleq       [\cone]^2  ( \str\la\str \tau)^2     \e^{-\frac{(t_{2n-1}-t_2)}{3 \tau} }    \caI_{x_{2n},l_{2n}}  \tilde\caZ^{\tau,\ga}_{ t_{2n}-t_{2n-1}}  \tilde\caA^{\ga}     \tilde\caZ^{\tau,\ga}_{ t_{2}-t_{1}}  \tilde\caI_{x_1,l_1} 
\eeq
where the constant that originates from the RHS of the bound \eqref{eq: fast decay for function w} has been absorbed in $\cone$.
The bound \eqref{eq: specific bound 2}  is indeed \eqref{eq: bound with cah} for our special choice of $\frL$ in which $\str \frL \str_{\textrm{dressed}}=1$. To obtain the general bound, one repeats the above calculation for each dressed vertex. These calculations can be performed completely independently of  each other, as is visible from the remark below \eqref{eq: change of integration w}. 
 
\end{proof} 
 
 In Lemma \ref{lem: bounds with cah}, the bound depends on $\frL$ through $\tilde \caH^{\ga}$, see \eqref{def: H1} and \eqref{def: H2}.  The next step is to sum over $\frL$. 
 First, we weaken our bound in \eqref{eq: bound with cah} to be valid for all $\frL$, such that the sum over $\frL$ amounts to counting all possible $\frL \sim \dsi_l$. 
 By ``weakening the bound", we mean that we bound some of the operators $\tilde\caG^{\ga}_t$ by $\tilde \caZ^{\tau,\ga}_t$. This can always be done, since, for $\la$ small enough, 
\beq
\tilde\caG^{\ga}_t \meleq \tilde \caZ^{\tau,\ga}_t
\eeq
with $\caG^{\ga}_t$ as in \eqref{def: H1} (in fact,  $\tilde \caG^{\ga}_t$ is  smaller than the second term of $ \tilde \caZ^{\tau,\ga}_t$, see  \eqref{def: tilde caZ}).
  Let  $\dsi_1, \ldots, \dsi_{m}$ be the decomposition of $\dsi_l$ into irreducible components and let $s_{2i-1},s_{2i}$ be the boundaries of the domain of $\dsi_i$. These times $s_i$ should not be confused with the vertex time-coordinates $\underline{s}^{\mathrm{i}}, \underline{s}^{\mathrm{f}} $ that were employed in an earlier stage of our analysis. In particular, the times $s_{2i-1},s_{2i}, i=1,\ldots m$, are a subset of the times $t_i, i=1,\ldots,2n$.  The central remark is that
 \begin{center}
 \emph{ For any $i$, the times $s_{2i}, s_{2i+1}$ belong to the same vertex \emph{for all} vertex partitions $\frL \sim \dsi_l$. }
 \end{center}
 Indeed, since the interval $[s_{2i}, s_{2i+1}]$ is not in the domain of $\dsi_l$, it must be in the domain of any short diagram contributing to $\caC_t(\dsi_l)$, or, in other words, any vertex partition $\frL \sim \dsi_l$ must contain a vertex whose vertex set contains both  $s_{2i}, s_{2i+1}$.
 Consequently, the operators $\tilde \caH^{\ga}_{s_{2i}, s_{2i+1}}$ in \eqref{eq: bound with cah} are always (i.e., for each compatible vertex partition) equal to $\tilde\caG^{\ga}_{s_{2i}, s_{2i+1}}$, and we will not replace them.  However, we replace all other $\tilde\caH^{\ga}_{t_j,t_{j+1}}$, i.e.\ those with the property that  the times $t_j,t_{j+1} $ are in the domain of the same irreducible component of $\dsi_l$, by $\tilde \caZ^{\tau,\ga}_{t_{j+1}-t_j}$.
 
 This procedure  is illustrated in Figure \ref{fig: frommanyCtooneC}.

\begin{figure}[ht!] 
\vspace{0.5cm}
\begin{center}
\psfrag{s1}{$s_1$}
\psfrag{s2}{$s_2$}
\psfrag{s3}{$s_3$}
\psfrag{s4}{$s_4$}
\psfrag{G}{ $\scriptstyle{\tilde\caG^{\ga}}$}
\psfrag{Z}{ $\scriptstyle{\tilde\caZ^{\tau,\ga}}$}
\psfrag{I}{ $\scriptstyle{\tilde\caI}$}
\includegraphics[width = 14cm]{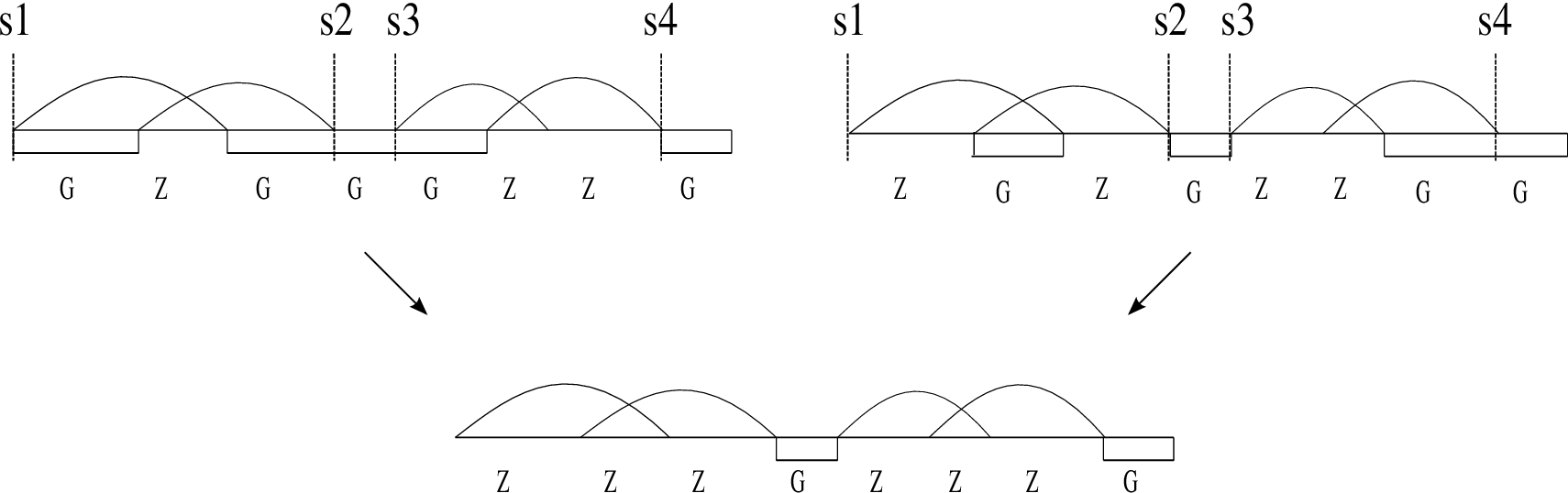}
\end{center}
\caption{\footnotesize{ Consider the long diagram $\dsi_l \in \Si_{[0,t]}(>\tau)$ with $\str \dsi \str=4$ shown above. It has two irreducible components with domains $[s_1,s_2]$ and $[s_3,s_4]$.   In the upper figures, two different vertex partitions (compatible with $\dsi_l$) are shown together with their respective bounds, obtained in Lemma \ref{lem: bounds with cah}.  These bounds are represented by the operators $\tilde\caG^{\ga}$ and $\tilde\caZ^{\tau,\ga}$, as in Figure \ref{fig: fromvertexoperatorstovertices1}, except for the fact that we omit the operators $\tilde \caI$ corresponding to the time coordinates of $\dsi_l$.  In the lower figure, we show the (weaker) bound that gives rise to Lemma \ref{lem: irreducible bounds with caE}. To establish this weaker bound, we replace the $\tilde\caG^{\ga}$ that are 'bridged' by the long diagram by $\tilde\caZ^{\tau,\ga}$.   }
} \label{fig: frommanyCtooneC}
\end{figure}
  
  After this replacement, the operator part of the resulting expression is  independent of $\frL$, and we can perform the sum over $\frL \sim \dsil$ by estimating
   \beq \label{eq: sum over vertex partitions}
   \sum_{\frL \sim \dsil}   \left[ (\str\la\str \tau)^2 \cone\right]^{\str \frL \str_{\textrm{dressed}}}  \leq   (\str\la\str \tau)^{ 2 v(\dsil)}   \cone^{ \str \dsi_l \str}, \qquad \textrm{for}\, \,    \str\la\str \tau \leq 1
 \eeq
with  
 \beq
 v(\dsil):=  \min_{ \frL \sim \dsi_l }  \str \frL \str_{\textrm{dressed}} 
 \eeq
 To obtain \eqref{eq: sum over vertex partitions}, one uses that
 \beq \label{eq: counting the vertex partitions}
\#    \{ \frL \sim \dsi_l \}   \leq   4^{2 \str \dsi_l \str-1}
 \eeq
 Indeed, $2^{2\str\dsil\str-1}$ is the number of ways to partition the time-coordinates into vertex sets. The extra factor $2$ for each vertex takes into account the choice bare/dressed.

 We have thus arrived at the following lemma

\begin{lemma}\label{lem: irreducible bounds with caE}
Let $s_{2i-2},s_{2i}$ be the boundaries of the domain of $\dsi_i$, the $i'$th irreducible component of $\dsi_l$.  Then, for $\la,\ga$ small enough, 
\beq \label{eq: bound with caE}
 \caC_t(\dsil)  \meleq  \,       (\str\la\str \tau)^{ 2 v(\dsil)}   \tilde  \caG^{\ga}_{t-s_{2m}}    \tilde\caE^{\ga}(\dsi_m) \tilde\caG^{\ga}_{s_{2m-1}-s_{2m-2}}  \tilde \caE^{\ga}(\dsi_{m-1})  \ldots   \tilde\caE^{\ga}(\dsi_1)    \tilde \caG^{\ga}_{s_1}     
  \eeq
  where, for  an irreducible diagram $\dsi$ with $\str \dsi \str=p$, 
  \beq \label{def: caE}
\tilde \caE^{\ga}(\dsi) :=    [\cone]^{\str \dsi\str}  \tilde \caI_{x_{2p}(\dsi),l_{2p}(\dsi)}  \tilde\caZ^{\tau,\ga}_{t_{2p}(\dsi)-t_{2p-1}(\dsi) }   \ldots   \tilde\caZ^{\tau,\ga}_{t_2(\dsi)-t_{1}(\dsi) }    \tilde \caI_{x_1(\dsi),l_1(\dsi)}  
\eeq 
with  $v(\dsi_l)$ as defined above. 
\end{lemma}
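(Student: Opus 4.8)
The plan is to assemble Lemma \ref{lem: irreducible bounds with caE} from the pieces already established, essentially by feeding the decomposition \eqref{eq: decomposition of cac into pi} into the bound of Lemma \ref{lem: bounds with cah} and summing over vertex partitions. First I would recall the representation
\beq
\caC_t(\dsil)= \sum_{ \frL \sim \dsil} \caC_t(\dsi_l, \frL)
\eeq
and, for each fixed $\frL \sim \dsil$, insert the estimate \eqref{eq: bound with cah} from Lemma \ref{lem: bounds with cah}. That estimate controls $\caC_t(\dsil,\frL)$ in the sense of matrix elements by a product of operators $\tilde\caI$, $\tilde\caG^{\ga}$ and $\tilde\caZ^{\tau,\ga}$, with a prefactor $[(\str\la\str \tau)^2 \cone]^{\str \frL \str_{\textrm{dressed}}}$. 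The only $\frL$-dependence on the operator side sits in whether a given gap $[t_j,t_{j+1}]$ between consecutive time-coordinates of $\dsil$ is bridged inside a dressed vertex (giving $\tilde\caG^{\ga}$) or not (giving $\tilde\caZ^{\tau,\ga}$).

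The second step is to make the operator string $\frL$-independent so the sum over $\frL$ factorizes cleanly. Here I would use the pointwise domination $\tilde\caG^{\ga}_t \meleq \tilde\caZ^{\tau,\ga}_t$, valid for $\la$ small (since $\tilde\caG^{\ga}_t$ is dominated by the second term of \eqref{def: tilde caZ}), to replace every $\tilde\caG^{\ga}$ that sits \emph{inside} the domain of some irreducible component of $\dsil$ by $\tilde\caZ^{\tau,\ga}$. The key structural observation — which I would state explicitly — is that whenever $[s_{2i},s_{2i+1}]$ is a gap \emph{between} two irreducible components of $\dsil$, that interval lies outside $\Dom\dsil$, hence inside the domain of any short diagram contributing to $\caC_t(\dsil)$, so $t_{j},t_{j+1}$ with $[t_j,t_{j+1}]=[s_{2i},s_{2i+1}]$ always lie in a common (dressed) vertex for every $\frL \sim \dsil$; those $\tilde\caG^{\ga}$ factors are retained. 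After this replacement, each irreducible component $\dsi_i$ of $\dsil$ contributes a block
\beq
\tilde\caE^{\ga}(\dsi_i) :=    [\cone]^{\str \dsi_i\str}  \tilde \caI_{x_{2p_i}(\dsi_i),l_{2p_i}(\dsi_i)}  \tilde\caZ^{\tau,\ga}_{t_{2p_i}(\dsi_i)-t_{2p_i-1}(\dsi_i) }   \ldots    \tilde \caI_{x_1(\dsi_i),l_1(\dsi_i)}
\eeq
(with $p_i=\str\dsi_i\str$), and between the blocks sit $\tilde\caG^{\ga}_{s_{2i+1}-s_{2i}}$ factors, exactly as on the RHS of \eqref{eq: bound with caE}.

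The final step is the combinatorial sum: with the operator part now independent of $\frL$, we are left with $\sum_{\frL \sim \dsil}[(\str\la\str \tau)^2 \cone]^{\str \frL \str_{\textrm{dressed}}}$, which I would bound using the counting estimate \eqref{eq: counting the vertex partitions}, $\#\{\frL \sim \dsil\} \leq 4^{2\str\dsil\str-1}$, together with the fact that every $\frL \sim \dsil$ has at least $v(\dsil)=\min_{\frL\sim\dsil}\str\frL\str_{\textrm{dressed}}$ dressed vertices, to obtain \eqref{eq: sum over vertex partitions}, namely $(\str\la\str\tau)^{2v(\dsil)}\cone^{\str\dsil\str}$ after absorbing the $4^{2\str\dsil\str-1}$ and the per-component constants into $\cone^{\str\dsil\str}$ (legitimate since $\cone$ need not be uniform in $\la$, only decreasing in $\ga$). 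This yields \eqref{eq: bound with caE}. I do not expect a serious obstacle; the one point requiring care is bookkeeping the identification of which $\tilde\caG^{\ga}$ factors survive the weakening — i.e.\ making precise that the gaps $[s_{2i},s_{2i+1}]$ between irreducible components are forced into dressed vertices for \emph{every} compatible $\frL$, while all other $\tilde\caG^{\ga}$'s occur for \emph{some} $\frL$ and must be discarded — and checking that $\tilde\caG^{\ga}_t \meleq \tilde\caZ^{\tau,\ga}_t$ holds uniformly for $\la$ below the relevant threshold. Figure \ref{fig: frommanyCtooneC} already captures this replacement pictorially, so I would simply refer to it.
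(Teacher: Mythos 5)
Your proposal is correct and follows the paper's own argument essentially step for step: decompose over vertex partitions, apply Lemma \ref{lem: bounds with cah}, weaken $\tilde\caG^{\ga}\meleq\tilde\caZ^{\tau,\ga}$ only on the gaps interior to irreducible components while keeping the $\tilde\caG^{\ga}$'s on the forced gaps $[s_{2i},s_{2i+1}]$, and then sum over $\frL$ via the counting bound \eqref{eq: counting the vertex partitions} and the definition of $v(\dsil)$. The only (inconsequential) slip is your parenthetical that $\cone$ ``need not be uniform in $\la$'' — by the paper's convention $\cone$ is $\la$-independent, but the factor $4^{2\str\dsil\str-1}$ absorbs into $\cone^{\str\dsil\str}$ as a pure constant anyway.
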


Note that $v(\dsil)$ is actually the number of factors $\tilde\caG^{\ga}_u$ in the expression \eqref{eq: bound with caE} for which $u \neq 0$ ($u$ can be zero only for the rightmost and leftmost $\tilde\caG^{\ga}_u$). Or, alternatively,
\beq
 v(\dsil)= \#\{ \textrm{irreducible components in}\, \dsil \}-1 +  1_{ t_{2n}  \neq t} + 1_{ t_1 \neq 0} 
\eeq

\subsection{Bounds on $\caR_{ex}(z)$ in terms of $\tilde\caE^{\ga}(\dsi)$}  \label{sec: laplace of renormalized diagrams}

To realize why the bound \eqref{eq: bound with caE} in Lemma  \ref{lem: irreducible bounds with caE} is useful, we recall that our aim is to calculate $\caR_{ex}(z)$, given by (see  \eqref{eq: representation of laplace ex with C}) 
\beq \label{eq: representation of laplace ex with C repetition}
 \caR_{ex}(z)   =   \int_{\bbR^+} \d t  \, \e^{-tz}  \mathop{\int}\limits_{\Si_{[0,t]}(>\tau)} \d \dsi    \zeta(\dsi)   \caC_t(\dsi)  \,  \left[1+\delta(t_1(\dsil) \right]  \, \left[1+\delta(t_{2\str \dsil\str}(\dsil)-t) \right]
\eeq
We calculate $ \caR_{ex}(z) $ by replacing the integral over diagrams by an integral over sequences of irreducible diagrams, as we  did in \eqref{eq: z as polymer model}, i.e.,
\beq
\mathop{\int}\limits_{\Si_{[0,t]}(>\tau)} \d \dsi   \ldots  =   \sum_{n \geq 1} \mathop{\int}\limits_{0 \leq s_1< \ldots < s_{2n}  \leq  t}  \prod_{j=1}^n  \left(\mathop{\int}\limits_{\Si_{[s_{2j-1}, s_{2j}]}(>\tau, \mathrm{ir})  } \d \dsi_j \right)  \ldots
\eeq
Using the bound \eqref{eq: bound with caE},  we obtain, with the shorthand $z_r:= \Re z$, 
\baq
\caR_{ex}(z) &\meleq&      \sum_{n \geq 0}   (1+    \caR_{\tilde\caG}(z_r))  \caR_{\tilde\caE}(z_r) \left(  \caR_{\tilde\caG}(z_r)      \caR_{\tilde\caE}(z_r)\right)^n (1+    \caR_{\tilde\caG}(z_r)) \nonumber  \\[2mm]
&= &      (1+    \caR_{\tilde\caG}(z_r))    \caR_{\tilde\caE}(z_r)  \left(1-   \caR_{\tilde\caG}(z_r)  \caR_{\tilde\caE}(z_r)  \right)^{-1}  (1+    \caR_{\tilde\caG}(z_r))    \label{eq: caR as a function of caRcaE and caRcaG}
\eaq
where, for $\Re z$ large enough,  
\baq
 \caR_{\tilde\caG}(z)  &:=&  (\tau \str\la\str)^2    \int_{\bbR^+}   \d t \,  \e^{-t z}  \tilde\caG^{\ga}_t  \\[2mm]
 \caR_{\tilde\caE}(z)  &:=& \int_{\bbR^+} \d t  \, \e^{-t z}  \mathop{\int}\limits_{\Si_{[0,t]}(>\tau, \mathrm{ir})} \d \dsi    \zeta(\dsi)  \tilde\caE^{\ga}(\dsi)   \label{def: caRcaE}
\eaq
Since $\tilde\caG^{\ga}_t$ is known explicitly, the only task that remains is to study $ \caR_{\tilde\caE}(z) $. This study is undertaken in Section \ref{sec: model only long}.  

%Although the inequality in 
%\eqref{eq: caR as a function of caRcaE and caRcaG} makes sense only for $\Im z=0$ because the operators in the RHS must have positive `matrix elements', it suffices for our purposes since, obviously 
%\beq
%\norm \caR_{ex}(z) \norm \leq  \norm \caR_{ex}( \Re z) \norm
%\eeq

\section{The renormalized model} \label{sec: model only long}

In this  section, we prove  Lemma \ref{lem: polymer model full}, thereby concluding the proof of our main result, Theorem \ref{thm: main}.  We briefly recall the logic of our proof.
As announced in Section \ref{sec: joint system reservoir correlation}, we analyze  $\caZ_t^{\mathrm{ir}}$ and $\caZ_t $ through a renormalized perturbation series, where the  short diagrams have already been resummed. 
However, we do not study the  Laplace transform of irreducible diagrams (defined in \eqref{def: irreducible evolutions})
\beq  \label{eq: repetition caZ irr}
    \caR^{\mathrm{ir}}(z)   = \mathop{\int}\limits_{\bbR^+} \d t  \,   \e^{- t z}   \caZ^{\mathrm{ir}}_t = \mathop{\int}\limits_{\bbR^+} \d t \,  \e^{- t z}   \mathop{\int}\limits_{\Si_{[0,t]}(\mathrm{ir})} \d \dsi \,   \zeta(\dsi) \caV_t(\dsi),
\eeq
directly, 
but rather the Laplace transform of irreducible renormalized diagrams (defined in Lemma \ref{lem: irreducible bounds with caE} and Section \ref{sec: laplace of renormalized diagrams})
\beq \label{eq: renormalized model}
  \caR_{\tilde \caE}(z)    =    \mathop{\int}\limits_{\bbR^+} \d t  \,   \e^{- t z}     \mathop{\int}\limits_{\Si_{[0,t]}(>\tau, \mathrm{ir})}  \d \dsi  \zeta(\dsi)  \tilde \caE^{\ga}(\dsi).  
\eeq
 Although the quantities  \eqref{eq: renormalized model} and \eqref{eq: repetition caZ irr} are not equal, we will argue below (in the proof of Lemma \ref{lem: polymer model full} starting from Lemma \ref{lem: exponential decay long excitations})   that  good bounds on $\caR_{\tilde \caE}(z)$ yield good bounds on $  \caR^{\mathrm{ir}}(z) $ and hence also on $\caR(z)$. 
 The reason that the expression \eqref{eq: repetition caZ irr} itself cannot be bounded by an integral over long irreducible diagrams, is the fact that an irreducible diagram in the interval $[0,t]$ does not necessarily contain an irreducible \emph{long} subdiagram in the interval $[0,t]$.  Indeed, Lemma \ref{lem: irreducible bounds with caE} decomposes the domain of an irreducible diagram into domains of long, irreducible subdiagrams and intermediate intervals. These remaining intervals  give rise to  operators $\tilde\caG_t^{\ga}$, which are easily dealt with, as we will see  below, in the proof of Lemma \ref{lem: polymer model full}, since they originate from short diagrams and therefore have good decay properties. 

Nevertheless, we clearly see the similarity between \eqref{eq: repetition caZ irr} and \eqref{eq: renormalized model}. 
To highlight this similarity, we write the inverse Laplace transform of  $\caR_{\tilde \caE}(z)$: For $r>0$ large enough, we have 
\beq
 \frac{1}{2\pi i} \mathop{\int}\limits_{r+ \i \bbR} \d z \, \e^{tz} \caR_{\tilde \caE}(z) =    \mathop{\int}\limits_{\Si_{[0,t]}(>\tau, \mathrm{ir})}  \d \dsi   \, \cone^{\str \dsi\str} \zeta(\dsi)  \,   \tilde \caI_{x_{2n},l_{2n}}\tilde \caZ^{\tau,\ga}_{t_{2n}-t_{2n-1}} \ldots      \tilde \caI_{x_{2},l_{2}} \tilde \caZ^{\tau,\ga}_{t_{2}-t_{1}}     \tilde \caI_{x_{1},l_{1}}  \label{eq: inverse laplace of renormalized series}
\eeq
where $\underline{t},\underline{x},\underline{l}$ are the coordinates of $\dsi$ and, since $\dsi$ is irreducible in $[0,t]$, $t_1=0$ and $t_{2n}=t$.  
The inverse Laplace transform of $ \caR^{\mathrm{ir}}(z)$, i.e.\ $\caZ_{t}^{\mathrm{ir}}$, is
\beq
\caZ_{t}^{\mathrm{ir}}=   \mathop{\int}\limits_{\Si_{[0,t]}( \mathrm{ir})}  \d \dsi  \,  \zeta(\dsi)   \caI_{x_{2n},l_{2n}}   \caU^{}_{t_{2n}-t_{2n-1}} \ldots   \caI_{x_{2},l_{2}}  \caU^{}_{t_{2}-t_{1}}  \caI_{x_{1},l_{1}}
\label{eq: inverse laplace of series}
\eeq
where  $\underline{t},\underline{x},\underline{l}$ have the same meaning as above. 
Thus, the perturbation series in \eqref{eq: inverse laplace of renormalized series} is indeed a renormalized version of \eqref{eq: inverse laplace of series}. The diagrams are constrained to be long, and the short diagrams have been absorbed into the 'dressed free propagator' $\tilde \caZ^{\tau,\ga}_t$. This point of view has also been stressed in Section \ref{sec: the renormalized model}.
Observe, however, that $\caZ_t^{\tau, \ga}$ depends on the positive parameter $\ga$, whereas there is no such dependence in \eqref{eq: repetition caZ irr}. 

The following lemma is our main result on $  \caR_{\tilde \caE}(z)  $.
\begin{lemma} \label{lem: exponential decay long excitations}
Recall  that $\caR_{\tilde\caE}(z)$ depends on $\ga$, because $\tilde\caE^{\ga}(\cdot)$ does.  One can choose $\ga$  such that there are
 positive constants $g_{ex}>0$ and $\delta_{ex}>0$, such that
\beq
\mathop{\sup}\limits_{\str \Im\ka^{}_{\rechts, \links}\str  \leq \de_{ex},  \Re z \geq -\la^2 g_{ex}} 
 \norm    \caJ_{\ka}   \caR_{\tilde\caE}(z)   \caJ_{-\ka} \norm  = o(\la^2),   
  \qquad  \textrm{as}\,\,  \la \searrow 0  \label{eq: inequality caF}
\eeq
\end{lemma}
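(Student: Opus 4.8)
\textbf{Plan of proof of Lemma \ref{lem: exponential decay long excitations}.}
The plan is to estimate $\caR_{\tilde\caE}(z)$ by the same $L^1$--$L^\infty$ technology used for $\caR^\tau_{ex}(z)$ in Section \ref{sec: bounds on long pairings}, but now exploiting the crucial extra ingredient that every diagram appearing in $\tilde\caE^\ga(\dsi)$ contains at least one \emph{long} pairing, i.e.\ a pairing with temporal extent $\geq \tau$, and that the intermediate propagators are the \emph{dressed} propagators $\tilde\caZ^{\tau,\ga}_t$ rather than the free ones. First I would fix $\ka=(\ka^{}_\links,\ka^{}_\rechts)$ purely imaginary with $\str\Im\ka^{}_{\links,\rechts}\str<\ga$, so that conjugation by $\caJ_\ka$ preserves positivity of matrix elements and the bound $\caR_{\tilde\caE}(z)\meleq\cdots$ survives under $\caJ_\ka\cdot\caJ_{-\ka}$. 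Applying Lemma \ref{lem: integrating out pairings} to the explicit expression \eqref{def: caRcaE}--\eqref{def: caE}, with the $\tilde\caZ^{\tau,\ga}_t$-version of \eqref{eq: integrating out pairings}, I would sum over the internal coordinates $\underline{x}(\dsi),\underline{l}(\dsi)$ of each irreducible long diagram, collapsing the chain of operators $\tilde\caI\,\tilde\caZ^{\tau,\ga}\,\cdots\,\tilde\caZ^{\tau,\ga}\,\tilde\caI$ into a single $\tilde\caZ^{\tau,\ga/2}_t$ times a $\ga$-dependent combinatorial factor, leaving an integral over the equivalence classes $[\dsi]\in\Pi_T\Si_{[0,t]}(>\tau,\mathrm{ir})$ weighted by $\cone^{\str\dsi\str}\sup_{\underline{x},\underline{l}}\str\zeta(\dsi)\str$.

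The heart of the estimate is then a combinatorial bound on
\[
\mathop{\int}\limits_{\bbR^+}\d t\,\e^{at}\mathop{\int}\limits_{\Si_{[0,t]}(>\tau,\mathrm{ir})}\d[\dsi]\,\cone^{\str\dsi\str}\sup_{\underline{x}(\dsi),\underline{l}(\dsi)}\str\zeta(\dsi)\str,
\]
valid for $a$ slightly negative, say $a=-\la^2 g_{ex}+\ctwo$, analogous to \eqref{eq: bound combinatorics diagram cutoff} but now for the \emph{long}-diagram class. Here I would use Lemma \ref{lem: combinatorics of irreducible diagrams} from Appendix \ref{app: combinatorics} with the kernel $k(t):=\la^2\cone\,1_{\str t\str\geq\tau}\sup_x\str\psi(x,t)\str$. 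The key point is that $\norm k\norm_1\to0$ as $\la\searrow0$: indeed, by Lemma \ref{lem: integrability}, $\int_{\bbR^+}\d t\,\sup_x\str\psi(x,t)\str<\infty$, and the restriction $\str t\str\geq\tau(\la)$ with $\tau(\la)\to\infty$ forces $\int_\tau^\infty\d t\,\sup_x\str\psi(x,t)\str\to0$; hence $\norm k\norm_1=\la^2\cdot o(1)=o(\la^2)$. Because every long irreducible diagram carries at least one such factor $k$, and because the free blow-up in the dressed propagator has been arranged (via Lemma \ref{lem: bounds on cutoff dynamics}, eq.\ \eqref{eq: bound on r tau}) to be $r_\tau(\ga,\la)=O(\la^2)O(\ga^2)+o(\la^2)$ so that by choosing first $\ga$ small and then $\la$ small one can absorb it into the genuine decay rate $\la^2 g_c$, the combinatorial sum yields $o(\la^2)$ uniformly on $\Re z\geq-\la^2 g_{ex}$ for a suitable $g_{ex}<g_c$. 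Converting the matrix-element bound into an operator-norm bound via \eqref{eq: conversion exponential bound norm bound 1}, exactly as in the closing step of the proof of \eqref{eq: repetition crucial bound lemma polymer cutoff}, then fixes $\ga$ and thereby determines the maximal admissible $\de_{ex}$, giving \eqref{eq: inequality caF}. Analyticity of $\caR_{\tilde\caE}(z)$ in the stated half-plane follows from absolute convergence of the $t$-integral in that region, which is exactly what the above bound provides.

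The main obstacle I anticipate is bookkeeping the interplay of the three small/large parameters $\la,\ga,\tau$: one needs $\la^2\tau\to0$ (so the dressed propagator bounds of Lemma \ref{lem: bounds on cutoff dynamics} and the $\tilde\caG$-decay survive), $\tau\to\infty$ (so that the tail $\int_\tau^\infty\sup_x\str\psi(x,t)\str\,\d t$ is $o(1)$ and hence the per-diagram factor is genuinely $o(\la^2)$), and $\ga$ fixed small but \emph{after} $\tau$ is tied to $\la$ (so that $r_\tau(\ga,\la)$ and $\ctwo$ are dominated by the gap $\la^2 g_c$). The order of quantifiers — fix $\ga$, then shrink $\la$ (which shrinks $\ga$-independent quantities through $\tau(\la)$) — must be respected throughout, and one must check that the generic constants $\cone,\ctwo,\cthree$ from Section \ref{sec: generic constants} propagate correctly through Lemma \ref{lem: combinatorics of irreducible diagrams}; this is the same juggling already carried out for the cutoff model, so it is routine but delicate. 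Everything else — the $L^1$--$L^\infty$ reduction, the geometric resummation, the passage from matrix elements to norms — is a direct transcription of the arguments already established for $\caR^\tau_{ex}(z)$ and $\caR^\tau_{ld}(z)$.
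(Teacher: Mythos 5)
There is a genuine gap at the heart of your argument. After summing over the internal coordinates you propose to control
\[
\mathop{\int}\limits_{\bbR^+}\d t\,\e^{at}\mathop{\int}\limits_{\Pi_T\Si_{[0,t]}(>\tau,\mathrm{ir})}\d[\dsi]\,\cone^{\str\dsi\str}\sup_{\underline{x},\underline{l}}\str\zeta(\dsi)\str
\]
via Lemma \ref{lem: combinatorics of irreducible diagrams} with kernel $k(t)=\la^2\cone\,1_{t\geq\tau}\sup_x\str\psi(x,t)\str$. That lemma requires not only $\norm k\norm_1<\infty$ but $\norm t\,\e^{at}k\norm_1<1$, and the latter fails: in $d=4$ the dispersive estimate \eqref{eq: dispersive estimate} gives $\sup_x\str\psi(x,t)\str\sim t^{-3/2}$, so $\int_\tau^\infty t\cdot t^{-3/2}\,\d t=\infty$ already for $a=0$ (and a fortiori for the negative $a$ you need to reach $\Re z\geq-\la^2 g_{ex}$). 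The bare long-time tail of $\psi$ is integrable but its first moment is not, which is precisely why the naive transcription of the cutoff-model argument breaks down here. Your proposal uses the dressed propagators $\tilde\caZ^{\tau,\ga}_t$ only as objects whose blow-up $r_\tau(\ga,\la)$ must be tamed; it never extracts \emph{decay} from them, and without that extra decay the Laplace transform cannot be continued past $\Re z=0$, let alone give $o(\la^2)$ there.

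The paper's proof supplies exactly the missing ingredient in three steps. First, Lemma \ref{lem: renormalized propagator} shows that the \emph{joint} renormalized correlation function satisfies $\la^2\str\psi(x'_{l_2}-x_{l_1},t)\str\cdot\norm(\caJ_\ka\tilde\caZ^{\tau,\ga}_t\caJ_{-\ka})_{\ldots}\norm\leq h(t)\,\e^{-\la^2 g_r t}$, combining reservoir decay at subluminal speed with the decoherence of $\caZ^\tau_t$ at superluminal speed. Second, since for a general irreducible diagram one cannot match each pairing with a dressed propagator spanning the same interval, Lemma \ref{lem: bound irr by min irr} first reduces to \emph{minimally} irreducible diagrams. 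Third, Lemma \ref{lem: bound min irr} splits such a diagram into two ladder subdiagrams, applies the joint bound to one while taking suprema over the other, and takes a geometric mean to obtain the factor $\e^{-\frac12\la^2 g_r t}$ over the whole of $[0,t]$. Only then is Lemma \ref{lem: combinatorics of irreducible diagrams} invoked, with $k(t)=c(\ga)\e^{-\frac18\la^2 g_r t}h_\tau(t)$, for which $\norm k\norm_1=c(\ga)\,o(\la^2)$ \emph{and} $\norm tk\norm_1=c(\ga)\,o(\str\la\str^0)$ are both finite. Your bookkeeping of $\la,\ga,\tau$ and the passage from matrix elements to norms are fine, but without the renormalized-correlation decay and the minimally-irreducible/ladder decomposition the central integral simply diverges.
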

The  main tools in the proof of Lemma \ref{lem: exponential decay long excitations}  will be the  exponential decay of the 'renormalized correlation function', which follows from the  bounds on $\caZ_t^\tau$ stated in Lemma \ref{lem: bounds on cutoff dynamics}, and the strategy for integrating over diagrams presented in Lemma \ref{lem: integrating out pairings}.  
With Lemma  \ref{lem: exponential decay long excitations} at hand, the proof of Lemma \ref{lem: polymer model full} is immediate. \\

\noindent\emph{ Proof of Lemma \ref{lem: polymer model full}}   We only need  to prove Statement 2) since Statement 1) will follow by a remark analogous to that in the proof of Statement 1) of Lemma \ref{lem: polymer model cutoff}.
Clearly,  for $\la$ small enough, 
\beq \label{eq: bound on tilde caG}
\sup_{\str \Im\ka^{}_{\rechts, \links}\str  \leq \ga ,  \Re z \geq - \frac{1}{4\tau}  }      \left\norm   \caJ_{\ka}  \caR_{\tilde \caG}(z)    \caJ_{-\ka}  \right\norm    \leq   O(\la),   \qquad  \textrm{as}\,\,  \la \searrow 0 
\eeq
 This follows from the properties of $\tilde \caU^{\ga}_t$, see e.g.\  the proof of Lemma \ref{lem: polymer model cutoff}, and the definition of $\tilde\caG_t^{\ga}$, see \eqref{def: H1}.
 Next, we remark that
 \beq
 \norm   \caJ_{\ka}  \caR_{ex}(z)   \caJ_{-\ka}  \norm   \leq    \norm 1+      \caJ_{\ka}  \caR_{\tilde\caG}(z_r)   \caJ_{-\ka}  \norm^2    \norm   \caJ_{\ka} \caR_{\tilde\caE}(z_r)   \caJ_{-\ka}  \norm   \norm \left(1-     \caJ_{\ka}  \caR_{\tilde\caG}(z_r)  \caR_{\tilde\caE}(z_r)   \caJ_{-\ka}   \right)^{-1} \norm     \label{eq: caR as a function of caRcaE and caRcaG norm}
\eeq
with $z_r =\Re z$.  This follows from the bound \eqref{eq: caR as a function of caRcaE and caRcaG}, the fact that $\caJ_{\ka} \caJ_{-\ka}=1$, and the implication \eqref{eq: meleq implies norm} (which allows to pass from `$\meleq$' to an inequality between norms). 

Hence, Statement 2) follows by plugging the bounds of Lemma \ref{lem: exponential decay long excitations} and \eqref{eq: bound on tilde caG} into the the RHS of \eqref{eq: caR as a function of caRcaE and caRcaG norm}.
\qed

  \subsection{Bound on the renormalized correlation function}\label{sec: bound on the renormalized correlation function}

In this section, we prove  Lemma \ref{lem: renormalized propagator}, which establishes (as its first claim) the property  \eqref{eq: bound for off-diagonal decay} with $\La_t$ replaced by $\caZ_t^{\tau}$. Indeed, in Section \ref{sec: polymer model for cutoff dynamics}, we argued that $\caZ_t^{\tau}$ is very close to $\La_t$, and this was made explicit in Lemma \ref{lem: bounds on cutoff dynamics}.  
Let 
\beq \label{def: h}
h(t) :=  \la^2  c_h  \sup_{x \in \lat} \left\{ \begin{array}{ll}    \e^{- (1/2)g_\res  t}    &  \str x\str/t \leq v^*       \\[3mm]     \str\psi (x,t)\str   &  \str x\str/t \geq v^*        \end{array}\right.
\eeq
with the velocity $v^*$ and decay rate $g_\res$ as in Lemma  \ref{lem: exponential decay} and the constant $c_h$  chosen such that
  \beq
\la^2   \sup_{x} \left\str \psi(x,t)  \right\str \leq  h(t), \qquad  \textrm{for} \, \, t >\tau  \label{eq: h dominates correlation}
  \eeq
Lemma \ref{lem: exponential decay} ensures that such a choice is possible.

\begin{lemma}\label{lem: renormalized propagator}
There  are positive constants $\delta_{r}>0$ and $g_{r}>0$ such that, for all $\ga < \delta_r$, $\la$ small enough,  and $\ka\equiv(\ka_{\links}, \ka^{}_{\rechts})$ satisfying $\str \Im\ka^{}_{\rechts, \links}\str  \leq \frac{\delta_{r}}{2}$, 
\beq\label{eq: renormalized with correlation bound}
\la^2 \str \psi(x'_{l_2}-x^{}_{l_1},t)\str \times   \left\norm (\caJ_{\ka} \tilde\caZ^{\tau,\ga}_{t}  \caJ_{-\ka})_{x^{}_{\links},x^{}_{\rechts};x^{'}_{\links},x^{'}_{\rechts}}    \right\norm  \leq h(t) \e^{- \la^2 g_{r}t }, \qquad  \textrm{for}  \qquad l_1,l_2 \in \left\{ \links, \rechts \right\} 
  \eeq
and
 \beq \label{eq: renormalized apriori bound}
  \left\norm (\caJ_{\ka} \tilde\caZ^{\tau,\ga}_{t}  \caJ_{-\ka})_{x^{}_{\links},x^{}_{\rechts};x^{'}_{\links},x^{'}_{\rechts}}     \right\norm   \leq C \e^{ \ctwo t }
\eeq
\end{lemma}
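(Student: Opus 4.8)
\textbf{Proof plan for Lemma \ref{lem: renormalized propagator}.}

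The plan is to reduce both claims to the explicit pointwise formula for the bounding operator $\tilde\caZ^{\tau,\ga}_t$ in \eqref{def: tilde caZ} together with the conjugation relation \eqref{eq: relation kappa and fibers}. First I would compute $(\caJ_{\ka}\tilde\caZ^{\tau,\ga}_t\caJ_{-\ka})_{x^{}_{\links},x^{}_{\rechts};x'_{\links},x'_{\rechts}}$ for purely imaginary $\ka$: since the kernel of $\tilde\caZ^{\tau,\ga}_t$ depends only on the combinations $(x'_{\links}+x'_{\rechts})-(x_{\links}+x_{\rechts})$, $x_{\links}-x_{\rechts}$, $x'_{\links}-x'_{\rechts}$, and since the operators $\caJ_{\pm\ka}$ only multiply by exponentials in the position variables, the conjugation multiplies the matrix element by $\e^{\frac12\ka_{\links}\cdot x_{\links}-\frac12\ka_{\rechts}\cdot x_{\rechts}}$ evaluated at the shifted points — equivalently it amounts to shifting the exponents $\ga \Im(\cdot)$ by bounded amounts. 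Concretely one gets, for $\str\Im\ka_{\links,\rechts}\str\leq\ga/2$, a bound of the form
\beq
\left\norm (\caJ_{\ka} \tilde\caZ^{\tau,\ga}_{t}  \caJ_{-\ka})_{x^{}_{\links},x^{}_{\rechts};x^{'}_{\links},x^{'}_{\rechts}}    \right\norm   \leq   C\,\e^{r_\tau(\ga,\la)t}\,\e^{-\frac{\ga}{4}\str(x'_{\links}+x'_{\rechts})-(x_{\links}+x_{\rechts})\str}\e^{-\frac{\ga}{2}\str x_{\links}-x_{\rechts}\str}\e^{-\frac{\ga}{2}\str x'_{\links}-x'_{\rechts}\str} + (\textrm{second term}),
\eeq
where the second term carries $\e^{-\la^2 g_c t}$ and the decoherence-type exponential. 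Since $r_\tau(\ga,\la)=O(\la^2)O(\ga^2)+o(\la^2)$ by \eqref{eq: bound on r tau}, and $q_\ve$-type terms from $\caJ_\ka$ are also $O(\la^2)$, the factor $\e^{r_\tau(\ga,\la)t}$ is dominated by $\e^{\ctwo t}$; this gives the a priori bound \eqref{eq: renormalized apriori bound} directly.

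For the main bound \eqref{eq: renormalized with correlation bound}, I would repeat verbatim the reasoning in Section \ref{sec: exp decay for dressed correlation} that establishes \eqref{eq: bound for off-diagonal decay}, but with $\La_t$ replaced by $\tilde\caZ^{\tau,\ga}_t$ (whose pointwise bound \eqref{def: tilde caZ} has exactly the same structure as the bound \eqref{eq: bounds on markov dynamics} on $\La_t$, with $\de_{rw}\to\de_c$ and $g_{rw}\to g_c$). The argument splits on whether the "speed" $\str x'_{l_2}-x^{}_{l_1}\str/t$ is above or below $v^*$. At high speed, the triangle inequality \eqref{eq: triangle inequality distances} forces one of the three spatial arguments on the first line of \eqref{def: tilde caZ} to be $\geq\frac13\str x'_{l_2}-x^{}_{l_1}\str\geq \frac{v^*}{3}t$, so that line is $\leq C\e^{(r_\tau(\ga,\la)-\frac{\ga v^*}{6})t}$; choosing $\ga$ small but fixed (so $r_\tau(\ga,\la)=O(\ga^2\la^2)+o(\la^2)$ is negligible) makes this decay exponentially with an $O(1)$ rate, which beats the mild growth $\e^{\ctwo t}$ of the correlation factor at fixed $x$; the second line of \eqref{def: tilde caZ} is already $\e^{-\la^2 g_c t}$ times exponential spatial decay. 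At low speed, the reservoir factor $\la^2\str\psi(x'_{l_2}-x^{}_{l_1},t)\str\leq h(t)$ decays exponentially with rate $\frac12 g_\res$ by Lemma \ref{lem: exponential decay}, while $\norm\caJ_\ka\tilde\caZ^{\tau,\ga}_t\caJ_{-\ka}\norm\leq\e^{\ctwo t}$ by the a priori bound just proved; combining, one loses at most $o(\la^2)t + O(\ga^2\la^2)t$ in the exponent, so choosing $\ga$ and then $\la$ small enough leaves a net rate $\geq\la^2 g_r$ for suitable $g_r>0$. In both regimes I absorb the correlation factor evaluated at fixed spatial argument into $h(t)$ using \eqref{eq: h dominates correlation} (valid for $t>\tau$; for $t\leq\tau$ the bound is trivial since $c_h$ was chosen to make $h$ dominate $\la^2\sup_x\str\psi\str$ after including the cutoff), and I set $\de_r$ to be the minimum of $\de_c$ and the $\ga$ finally chosen.

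The main obstacle is bookkeeping rather than conceptual: one must track that the $\ga$ used to define $\tilde\caZ^{\tau,\ga}_t$, the $\ga/2$ bound on $\Im\ka$, and the $\ga$ that gets fixed at the end of the high-speed estimate are mutually consistent, and that all the accumulated $O(\ga^2\la^2)$ and $o(\la^2)$ errors in the exponent can simultaneously be made smaller than, say, $\frac12\min(g_c,g_\res,\ga v^*/12)\cdot$(appropriate power of $\la$) by the standard "first fix $\ga$, then fix $\la$" order of quantifiers. This is exactly the mechanism described in Section \ref{sec: strategy} (the blowup $\e^{r_\tau(\ga,\la)t}$ compensated by $\e^{-g_c\la^2 t}$ for $\ga$ small), so no new idea is needed; care is simply required that the final constants $\de_r,g_r$ do not depend on $\la$.
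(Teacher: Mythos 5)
Your proposal is correct and follows essentially the same route as the paper, which simply notes that the lemma is obtained by repeating the argument of Section \ref{sec: renormalized joint system reservoir correlation} (high/low speed split via the triangle inequality \eqref{eq: triangle inequality distances}) with the bound \eqref{eq: bounds on cutoff dynamics} on $\tilde\caZ^{\tau,\ga}_t$ in place of \eqref{eq: bounds on markov dynamics}, plus the observation that the $\caJ_\ka$ conjugation only costs a controllable spatial blowup. Your parameter bookkeeping (fix $\ga$ first, then $\la$) matches the paper's mechanism exactly.
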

This lemma is derived from the bound \eqref{eq: bounds on cutoff dynamics} in Lemma  \ref{lem: bounds on cutoff dynamics} in a way that is completely analogous to the proof  of \eqref{eq: bound for off-diagonal decay} starting from  \eqref{eq: bounds on markov dynamics}, as outlined in Section \ref{sec: joint system reservoir correlation}.
The only difference is that in Lemma  \ref{lem: renormalized propagator}, we allow for a small blowup in space given by the multiplication operator $\caJ_\ka$. 

For future use, we also define 
\beq \label{def: htau}
h_{\tau}(t) :=   1_{ \str t\str \geq  \tau }   h(t)
\eeq
and we note that
\beq
\norm h_\tau \norm_1 := \int_{\bbR^+}  h_{\tau}(t)  =o(\la^2), \qquad \textrm{as}\,  \la \searrow 0   \label{eq: htau small}
\eeq
since $\norm h \norm_1 < \infty$ and $\tau(\la) \to \infty$ as $\la \searrow 0$.

\subsection{Sum over non-minimally irreducible diagrams}\label{sec: from irreducible long to minimal}

In a first step towards performing the integral in \eqref{eq: renormalized model}, we reduce the integral over irreducible diagrams to an integral over minimally irreducible diagrams.   Indeed, since any diagram that is irreducible in $I$ has a minimally irreducible (in $I$) subdiagram, we have, for  any  positive function  $F$,
\beq  \label{eq: irr bounded by min irr}
 \mathop{\int}\limits_{\Sigma_I(\mathrm{ir}, > \tau)}   \d \dsi F(\dsi)  \leq    \mathop{\int}\limits_{\Sigma_I(\mathrm{mir}, > \tau)}  \d \dsi    \left( F(\dsi)+    \mathop{\int}\limits_{\Sigma_I( > \tau)}  \d \dsi' F(\dsi \cup \dsi')  \right)
\eeq
The first term between  brackets on the RHS corresponds to the minimally irreducible diagrams on the LHS. The second term contains the integration over 'additional' diagrams $\dsi'$. The integration over these diagrams is unconstrained since $\dsi \cup \dsi'$ is irreducible in $I$ for any $\dsi'$, provided that $\dsi$ is irreducible in $I$.
  This is also explained and used in Appendix \ref{app: combinatorics}: see  \eqref{eq: reduction to min irr 1} and \eqref{eq: reduction to min irr 2}. 
  
%  To describe the result of Lemma \ref{lem: bound irr by min irr}, below, we introduce a shorthand notation that will also be used in Section  \ref{sec: summing min irr pairings}.
%  \begin{itemize}
%  \item
%  We write $c(\ga)$ to denote a decreasing function of $\ga \in \bbR^+_0$ with $c(\ga=0)=+\infty$, but $c(\ga > 0) < \infty$. The precise meaning of the function $c(\ga)$ can change, even within the same equation. 
%  Note that $c_{\tilde\caU}(\ga),\cone, \cone, b(\ga)$ are examples of such functions $c(\ga)$. 
%  \end{itemize}

Lemma \ref{lem: bound irr by min irr}
   shows that such an integration over  unconstrained  long diagrams yields a factor  $\exp\{ \cthree \str I \str \}$, with the generic constant $\cthree$ as introduced in Section \ref{sec: generic constants}.

  \begin{lemma} \label{lem: bound irr by min irr}
For $\la,\ga$ small enough, 
\beq \label{eq: lem bound by minimal}
 \mathop{\int}\limits_{\Sigma_{[0,t]}( > \tau, \mathrm{ir})}   \d \dsi  \,  \tilde\caE^{\ga}(\dsi) \str\zeta(\dsi) \str  \\
 \meleq      \e^{ \cthree t }    \mathop{\int}\limits_{\Sigma_{[0,t]}( > \tau,\mathrm{mir})}  \d \dsi \, \str\zeta(\dsi) \str  \tilde\caE^{\frac{\ga}{2}}(\dsi)
 \eeq
 \end{lemma}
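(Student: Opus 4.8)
The goal is to bound the integral over all long irreducible diagrams by an integral over long \emph{minimally} irreducible diagrams, at the cost of the factor $\e^{\cthree t}$. The plan is to apply the combinatorial reduction \eqref{eq: irr bounded by min irr} with the positive function $F(\dsi) := \left\norm \left(\tilde\caE^{\ga}(\dsi)\right)_{x^{}_\links,x^{}_\rechts;x'_\links,x'_\rechts} \right\norm \str\zeta(\dsi)\str$ (read as a matrix-element bound, using Definition \ref{def: bounds me}). The first term on the RHS of \eqref{eq: irr bounded by min irr} already gives the minimally-irreducible contribution; the issue is the second term, where we integrate over an unconstrained additional long diagram $\dsi'$ glued to a minimally irreducible $\dsi$. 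So the real work is: fix a minimally irreducible $\dsi \in \Sigma_{[0,t]}(>\tau,\mathrm{mir})$ and estimate $\int_{\Sigma_{[0,t]}(>\tau)} \d\dsi' \, \str\zeta(\dsi\cup\dsi')\str \, \tilde\caE^{\ga}(\dsi\cup\dsi')$ in the sense of matrix elements by $\e^{\cthree t}\, \str\zeta(\dsi)\str\, \tilde\caE^{\frac{\ga}{2}}(\dsi)$.

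First I would unfold the definition \eqref{def: caE} of $\tilde\caE^{\ga}(\dsi\cup\dsi')$: it is an alternating product of $\tilde\caI$'s (one at each of the $2(\str\dsi\str+\str\dsi'\str)$ time-coordinates of the merged diagram) and $\tilde\caZ^{\tau,\ga}$-propagators between consecutive times, times a factor $\cone^{\str\dsi\str+\str\dsi'\str}$. Now the structure is precisely the situation handled by Lemma \ref{lem: integrating out pairings} (in its $\tilde\caZ^{\tau,\ga}$-version): the $m=2\str\dsi\str$ fixed triples come from $\dsi$, and the diagram being integrated is $\dsi'$; the hypothesis $\dsi\cup\dsi'\in\Sigma_I(\mathrm{ir})$ is automatic because $\dsi$ is already irreducible in $I$. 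However, one cannot apply Lemma \ref{lem: integrating out pairings} verbatim: there the integrated diagram $\dsi'$ ranges over $\Si_I(\mathrm{ir})$, whereas here $\dsi'$ is an \emph{arbitrary} long diagram (not irreducible, not domain-filling). The resolution is the standard one: decompose $\dsi'$ into its irreducible components $\dsi'_1,\dots,\dsi'_q$, which sit in disjoint subintervals, and note that the $\tilde\caZ^{\tau,\ga}$-factors outside the $\Dom\dsi'_j$ together with the multiplicativity of $\zeta$ (the factorization $\zeta(\dsi'_1\cup\cdots\cup\dsi'_q)=\prod_j\zeta(\dsi'_j)$, cf.\ \eqref{def: zeta2}) let us integrate each irreducible long component independently, exactly as in the passage from \eqref{eq: caZ as integral over diagrams} to \eqref{eq: z as polymer model}. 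Each such independent integration produces, via the $L^1$–$L^\infty$ estimate of Lemma \ref{lem: integrating out pairings} and the bound \eqref{eq: bound irreducible} of Lemma \ref{lem: combinatorics of irreducible diagrams} applied with $k(t):=\la^2\cone\,\sup_x\str\psi(x,t)\str$, a convergent contribution; summing over the number $q$ of inserted components and over where they are inserted produces a geometric-type series whose sum is $\e^{\cthree t}$, with the decay rate $1/(2\tau)$ from the $\tilde\caZ^{\tau,\ga}$ propagators (more precisely from the $\e^{-\la^2 g_c t}$ and $\e^{r_\tau(\ga,\la)t}$ terms in \eqref{def: tilde caZ}) absorbing the mild exponential growth. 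The halving $\ga\to\ga/2$ is the usual price for the $L^1$–$L^\infty$ step, as in \eqref{eq: inequality caecae cae}.

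The bookkeeping points to be careful about are: (i) the inserted diagrams $\dsi'$ may land in intervals between two consecutive time-coordinates of $\dsi$, and there can be several in the same interval — this is handled by the same geometric resummation that gave \eqref{eq: z as polymer model}, since between the $\tilde\caI$'s of $\dsi$ one has a sequence of $\tilde\caZ^{\tau,\ga}$'s separated by irreducible long insertions, whose resummed Laplace transform is $(1 - (\text{something }o(\la^2)\cdot\cdot))^{-1}$ — bounded; (ii) the total exponent collected is proportional to $t$, not to the individual interval lengths, because the $\tilde\caZ^{\tau,\ga}$-multiplication rules \eqref{eq: inequality caecae cae} turn a product of propagators over a partition of $[0,t]$ into a single propagator over $[0,t]$ times $\e^{\ctwo t}$, and the insertion factors are of the form $o(\la^2)$ per inserted component, hence summable with a net $\e^{\cthree t}$; (iii) the constant $\cone^{\str\dsi\str+\str\dsi'\str}$ correctly splits as $\cone^{\str\dsi\str}$ (kept, reappearing in $\tilde\caE^{\frac{\ga}{2}}(\dsi)$) times $\cone^{\str\dsi'\str}$ (absorbed in the resummation).

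\textbf{Main obstacle.} The genuinely delicate step is the very last one: showing that the sum over the number and positions of the inserted irreducible long diagrams really converges and yields only $\e^{\cthree t}$, i.e.\ that the insertions are a genuinely subleading (small in $\la$, after accounting for $\tau$) perturbation. This is where the constraint $\la^2\tau\to0$ in \eqref{def: condition on tau} enters essentially — the insertions live on long diagrams ($\str v-u\str>\tau$), so each carries at least one correlation factor $\la^2\sup_x\str\psi(x,t)\str$ integrated over $t\ge\tau$, which is $o(\la^2)$ by time-integrability (Lemma \ref{lem: integrability}) and $\tau\to\infty$; but one must ensure these small factors are not defeated by an entropic count of insertion patterns, and that the exponential growth rates $r_\tau(\ga,\la)=O(\la^2)O(\ga^2)+o(\la^2)$ of the $\tilde\caZ^{\tau,\ga}$-propagators stay below the available decay. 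The clean way to organize this is to pass to Laplace transforms (as is done throughout Section \ref{sec: bounds on long pairings}), bound everything in terms of $\norm\cdot\norm_1$-norms of $k(t)$ and $h_\tau(t)$, and invoke Lemma \ref{lem: combinatorics of irreducible diagrams} once; the geometric series then converges for $\la$ small, uniformly in $\str\Im\ka\str$ small, and the stated bound \eqref{eq: lem bound by minimal} follows.
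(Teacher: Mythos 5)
Your overall strategy coincides with the paper's: start from \eqref{eq: irr bounded by min irr} applied to matrix elements of $\str\zeta(\dsi)\str\,\tilde\caE^{\ga}(\dsi)$, observe that the whole content of the lemma is the estimate of the unconstrained extra integration $\int_{\Sigma_{[0,t]}(>\tau)}\d\dsi'\,\str\zeta(\dsi\cup\dsi')\str\,\tilde\caE^{\ga}(\dsi\cup\dsi')$ at fixed minimally irreducible $\dsi$, and handle it by the sum-over-$\underline{x},\underline{l}$ / propagator-multiplication mechanism of Lemma \ref{lem: integrating out pairings}, paying $\ga\to\ga/2$. Where you diverge is the final combinatorial step. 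The paper does \emph{not} decompose $\dsi'$ into irreducible components, does not resum a geometric series over insertion positions, and does not pass to Laplace transforms: after Lemma \ref{lem: integrating out pairings} reduces everything to $\int_{\Pi_T\Sigma_{[0,t]}(>\tau)}\d[\dsi']\,c(\ga)^{\str\dsi'\str}\sup\str\zeta(\dsi')\str$, it invokes the elementary \emph{unconstrained} integration bound \eqref{eq: unconstrained integration diagrams}, which gives $\e^{c(\ga)\norm h_\tau\norm_1 t}-1$ in one line; this is exactly the $\e^{\cthree t}$ factor since $\norm h_\tau\norm_1=o(\la^2)$. Your detour through irreducible components plus \eqref{eq: bound irreducible} reconstructs this same quantity in a roundabout way and is logically sound, but note that \eqref{eq: bound irreducible} is already a Laplace-transformed ($\int\d t\,\e^{at}\cdots$) statement, so it does not plug directly into the fixed-$t$ bound the lemma asserts; you would have to stay in the time domain (i.e.\ redo \eqref{eq: unconstrained integration diagrams}) or restate the lemma for the Laplace transform, which is unnecessary work. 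One bookkeeping point to fix: your choice $k(t)=\la^2\cone\sup_x\str\psi(x,t)\str$ omits the restriction $1_{t>\tau}$; without it $\norm k\norm_1=O(\la^2)$ rather than $o(\la^2)$, and the resulting exponent $\cone\,O(\la^2)\,t$ is \emph{not} of the form $\cthree\, t$ and could not be beaten by the $\tfrac14\la^2 g_r$ decay in the proof of Lemma \ref{lem: exponential decay long excitations}. Since $\dsi'$ is a long diagram, every pairing satisfies $\str v-u\str>\tau$, so the correct comparison function is $h_\tau$ of \eqref{def: htau}, with $\norm h_\tau\norm_1=o(\la^2)$ by \eqref{eq: htau small}; you say this correctly in your closing paragraph, so it is an inconsistency of the write-up rather than a gap in the argument.
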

\begin{proof}
By  formula \eqref{eq: irr bounded by min irr} (applied in the case where $F(\dsi)$ is a matrix element of the operator $ \str\zeta(\dsi)\str \tilde\caE^{\ga}(\dsi)$), we  have that
\beq \label{eq: bound by minimal practical}
 \mathop{\int}\limits_{\Sigma_{[0,t]}( > \tau, \mathrm{ir})}   \d \dsi  \,  \tilde\caE^{\ga}(\dsi) \str\zeta(\dsi) \str \meleq  \mathop{\int}\limits_{\Sigma_{[0,t]}( > \tau, \mathrm{mir})}   \d \dsi  \,  \tilde\caE^{\ga}(\dsi) \str\zeta(\dsi) \str +    \mathop{\int}\limits_{\Sigma_{[0,t]}( > \tau, \mathrm{mir})}   \d \dsi        \mathop{\int}\limits_{\Sigma_{[0,t]}( > \tau)}  \d \dsi'  \,   \tilde\caE^{\ga}(\dsi \cup \dsi') \str\zeta(\dsi \cup \dsi') \str  
\eeq
First, we bound
\beq   \label{eq: lem bound by minimal2}
    \mathop{\int}\limits_{\Sigma_{[0,t]}( > \tau)}  \d \dsi'  \,   \tilde\caE^{\ga}(\dsi \cup \dsi') \str\zeta(\dsi \cup \dsi') \str  
\eeq
with $\dsi$ fixed. 
To perform the integral over $\dsi'$ in \eqref{eq: lem bound by minimal2}, we recall that $\tilde \caE^{\ga}(\cdot)$ consists of products of the operators $\tilde \caI_{x_i,l_i}$ and $\caZ^{\tau,\ga}_{t_{i+1}-t_i}$. 
Hence, by Lemma
\ref{lem: integrating out pairings} with $\tilde \caU^{\ga}_t$ replaced by $\tilde \caZ^{\tau,\ga}_t$,  we can sum over the $\underline{x},\underline{l}$-coordinates of $\dsi'$ and multiply the $\tilde\caZ^{\tau,\ga}_{t_{i+1}-t_i}$ operators using the bound \eqref{eq: inequality caecae cae}. This yields 
\beq
\textrm{\eqref{eq: lem bound by minimal2}}  \, \meleq   \,    \tilde\caE^{\frac{\ga}{2}}_t(\dsi ) \str\zeta(\dsi) \str  \,   \e^{ \ctwo t}       \mathop{\int}\limits_{\Pi_T \Sigma_t( > \tau)}  \d [\dsi']   c(\ga)^{\str \dsi' \str}   \sup_{\underline{x}(\dsi'), \underline{l}(\dsi')}    \str\zeta(\dsi') \str.         \label{eq: lem bound by minimal3}
\eeq

The integral on the RHS of  \eqref{eq: lem bound by minimal3} is estimated as
 \beq \label{eq: integrating correlation functions of non-minimal}
   \mathop{\int}\limits_{\Pi_T \Sigma_t( > \tau)}  \d [\dsi']   c(\ga)^{\str \dsi' \str}   \sup_{\underline{x}(\dsi'), \underline{l}(\dsi')}    \str\zeta(\dsi')\str \quad   \leq \quad  \e^{  c(\ga)  \norm h_{\tau} \norm_{1} t} -1
  \eeq
with  $h_\tau$ as defined in \eqref{def: htau}. This follows from the bound \eqref{eq: unconstrained integration diagrams} (integral over unconstrained diagrams)  in Appendix \ref{app: combinatorics}.
To bound the  first term in \eqref{eq: bound by minimal practical}, we dominate
\beq \label{eq: bound on first term}
\tilde\caE^{\ga}(\dsi) \meleq   \e^{ \cthree t}  \,  \tilde\caE^{\frac{\ga}{2}} (\dsi) 
\eeq
The lemma follows by inserting the bounds  \eqref{eq: bound on first term} and  (\ref{eq: lem bound by minimal3}, \ref{eq: integrating correlation functions of non-minimal})  in \eqref{eq: bound by minimal practical} and using that $\cone\norm h_\tau \norm_1 =\cthree$ since  $\norm h_\tau \norm_1=o(\la^2) $.
\end{proof}

 \subsection{Sum over minimally irreducible diagrams} \label{sec: summing min irr pairings}
 
 In this section, we perform the integral
 \beq
 \mathop{\int}\limits_{\Sigma_t(> \tau, \mathrm{mir} )}  \d \dsi \, \str\zeta(\dsi) \str  \tilde\caE^{\ga}(\dsi)
 \eeq
 that appears on the RHS of the bound in Lemma \ref{lem: bound irr by min irr} (upon replacing $\ga \to \ga/2$), 
 and we prove that it is exponentially decaying in time with decay rate $O(\la^2)$, for well-chosen $\ga$ and $\la$ small enough, depending on $\ga$. 
 It is in this place that we  use the decay property of the renormalized correlation function that was stated in Lemma \ref{lem: renormalized propagator}.
 
 The key idea is the following. If $\dsi$ is a long diagram  with $\str \dsi \str=1$ consisting of the two triples $(t_i,x_i,l_i)_{i=1}^2$,  then
 \beq\label{eq: what if caC has one pair}
 \str\zeta(\dsi) \str  \tilde\caE^{\ga}(\dsi) = \cone \str  \psi^{\#}(x_2-x_1,t_2-t_1) \str \quad   \tilde\caI_{x_2,l_2}  \tilde\caZ^{\tau,\ga}_{t_2-t_1} \tilde \caI_{x_1,l_1}
 \eeq
 In this case, we can obviously use Lemma \ref{lem: renormalized propagator} to deduce exponential decay in $t_2-t_1$ of \eqref{eq: what if caC has one pair}, uniformly in $x_1,x_2,l_1,l_2$. 
  In general, there is of course more than one pairing in an irreducible diagram and so one has to `split' the decay coming from $ \tilde\caI_{x_2,l_2}  \tilde\caZ^{\tau,\ga}_{t_2-t_1} \tilde \caI_{x_1,l_1}$ between the different pairings, thus weakening the decay by a factor which can be as high as $\str \dsi \str$.
 
   However, since we are considering minimally irreducible pairings, there are at most two pairings bridging any given time $t'$, see Figure \ref{fig: fromirtomir1}. Hence, one can attempt to split the decay from  $ \tilde\caI_{x_2,l_2}  \tilde\caZ^{\tau,\ga}_{t_2-t_1} \tilde \caI_{x_1,l_1}$  in half. This can  be done and it is described in Lemma \ref{lem: bound min irr}.

For  $\dsi \in  \Si_{[0,t]}$, we define the function
 \beq
 H_{\tau}(\dsi) =\prod_{j=1}^{ \str\dsi\str} h_{\tau}(v_j-u_j)
 \eeq
 with $h_{\tau}$ as  in \eqref{def: htau}.  Note that  $H_{\tau}(\dsi)$ depends only on the equivalence class $[\dsi]$, and hence we can write $H_{\tau}([\dsi]):=H_{\tau}(\dsi)$.
 
  \begin{lemma} \label{lem: bound min irr}
Let the positive constants $g_{r}$ and $\de_r$ be defined as in Lemma \ref{lem: renormalized propagator} and choose $\ga < \delta_r$. Let $\ka=(\ka_{\links},\ka^{}_{\rechts} )$ such that $\str \Im\ka^{}_{\links,\rechts} \str \leq \ga/8 $ and fix a long irreducible diagram class  $ [\dsi] \in \Pi_T\Si_{[0,t]}( > \tau, \mathrm{mir})$. Then 
 \beq 
  \left  \norm  \sum_{\underline{x}(\dsi),\underline{l}(\dsi)} \str\zeta(\dsi) \str      \caJ_{\ka}   \tilde\caE^{\ga}(\dsi)  \caJ_{-\ka} \right\norm  \leq   c(\ga)^{\str \dsi \str}     \e^{ \ctwo t}  \e^{-\frac{1}{2} \la^2 g_r t} H_{\tau}([\dsi])   \label{eq: min irr}  
 \eeq
  Note that the operator between $\norm \cdot \norm$ depends on the equivalence class $[\dsi]$ only, due to the sum over
$\underline{x}(\dsi),\underline{l}(\dsi)$.
 \end{lemma}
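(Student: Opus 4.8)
\textbf{Proof proposal for Lemma \ref{lem: bound min irr}.}

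The plan is to combine the explicit product structure of $\tilde\caE^{\ga}(\dsi)$ with the decay estimate of Lemma \ref{lem: renormalized propagator}, distributing the available exponential decay of the renormalized propagators $\tilde\caZ^{\tau,\ga}_{t}$ evenly between the two pairings that can straddle any given interval of a minimally irreducible diagram. First I would unfold the definition \eqref{def: caE} of $\tilde\caE^{\ga}(\dsi)$: it is an alternating product of the bounded operators $\tilde\caI_{x_i,l_i}$ and the propagators $\tilde\caZ^{\tau,\ga}_{t_{i+1}-t_i}$, weighted by $\cone^{\str\dsi\str}$. The correlation factor $\str\zeta(\dsi)\str$ is itself a product $\prod_{(r,s)\in\dsi}\la^2\str\psi(x_s-x_r,t_s-t_r)\str$ over the $\str\dsi\str$ pairings. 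The key combinatorial input is that for a minimally irreducible diagram any interval $[t_i,t_{i+1}]$ is bridged by at most two pairings (this follows from the definition of $\mathrm{mir}$, as already indicated in the text and Figure \ref{fig: fromirtomir1}); therefore I can write $\la^2\str\psi(x_s-x_r,t_s-t_r)\str = (\la^2\str\psi\str)^{1/2}(\la^2\str\psi\str)^{1/2}$ and assign one square-root factor to each of the (at most two) propagators $\tilde\caZ^{\tau,\ga}$ lying under the pairing $(r,s)$. Since Lemma \ref{lem: renormalized propagator} gives, for each $l_1,l_2$,
\beq
\la^2\str\psi(x'_{l_2}-x^{}_{l_1},t)\str \,\left\norm\left(\caJ_{\ka}\tilde\caZ^{\tau,\ga}_t\caJ_{-\ka}\right)_{x^{}_{\links},x^{}_{\rechts};x'_{\links},x'_{\rechts}}\right\norm \leq h(t)\e^{-\la^2 g_r t},
\eeq
applying this with the full $\la^2\str\psi\str$ replaced by its square root on each side yields a factor $\big(h(t)\big)^{1/2}\e^{-\frac{1}{2}\la^2 g_r t}$ per propagator; here one uses $\str\Im\ka_{\links,\rechts}\str\leq\ga/8$ so that, after the square-root splitting (which at worst doubles the required analyticity width) one still stays inside the strip $\str\Im\ka_{\links,\rechts}\str\leq\de_r/2$ demanded by Lemma \ref{lem: renormalized propagator}, and one also uses \eqref{eq: renormalized apriori bound} as an a priori bound on any leftover propagator not covered by a pairing.

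Next I would carry out the sum over $\underline{x}(\dsi),\underline{l}(\dsi)$. Since, after the above step, the spatial dependence sits entirely in the $\tilde\caI_{x_i,l_i}$ operators and in the (square-rooted) correlation amplitudes, the sum over $\underline{x}(\dsi),\underline{l}(\dsi)$ is handled exactly as in Lemma \ref{lem: integrating out pairings} / Lemma \ref{lem: bound on vertex operators}: one uses $\sum_{x_i,l_i}\tilde\caI_{x_i,l_i}=1$ together with the multiplication rule \eqref{eq: inequality caecae cae} for the $\tilde\caZ^{\tau,\ga}$ operators. Collapsing the product this way produces the combinatorial prefactor $c(\ga)^{\str\dsi\str}$ (one factor $\cone$ per pairing from \eqref{def: caE}, plus the constants from \eqref{eq: inequality caecae cae} and from the analyticity conversion, all absorbed into $c(\ga)$), a factor $\e^{\ctwo t}$ from the accumulated $\e^{\ctwo\cdot}$ terms in the multiplication rules, the global decay $\e^{-\frac12\la^2 g_r t}$ obtained by summing the per-propagator rates $\frac12\la^2 g_r$ over the intervals spanned by the diagram (the total elapsed time is $t$ because $\dsi$ is irreducible in $[0,t]$), and finally the product $\prod_j h_\tau(v_j-u_j)=H_\tau([\dsi])$ coming from the square-rooted amplitudes $h^{1/2}(t_s-t_r)\cdot h^{1/2}(t_s-t_r)=h(t_s-t_r)$ reassembled per pairing, with the cutoff indicator $1_{\str v_j-u_j\str\geq\tau}$ present because the diagram is long.

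The main obstacle I anticipate is the bookkeeping of the square-root splitting at the temporal boundaries and at intervals bridged by fewer than two pairings: one must check that every propagator $\tilde\caZ^{\tau,\ga}_{t_{i+1}-t_i}$ receives \emph{at least} the decay needed, i.e. that no interval is left "uncovered" in a way that would spoil the clean $\frac12\la^2 g_r$ rate, and that intervals covered by only one pairing still get enough decay — this is where minimality of the diagram is essential, and a careful case distinction (first interval, last interval, interior intervals bridged by one vs.\ two pairings) is required. A secondary technical point is ensuring the analyticity in $\ka$ survives the splitting: taking a square root of $\str\psi\str$ is harmless for the modulus but one must phrase everything through the already-established bound of Lemma \ref{lem: renormalized propagator}, i.e.\ never take a literal square root of the propagator kernel, only of the scalar amplitude, and invoke \eqref{eq: renormalized with correlation bound} with the reduced strip width. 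Once the covering argument is set up, the remaining estimates are routine applications of Lemmata \ref{lem: integrating out pairings} and \ref{lem: renormalized propagator}, and the bound \eqref{eq: min irr} follows.
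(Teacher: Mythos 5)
You have correctly isolated the two essential ingredients: the fact that minimal irreducibility forces every point of $[0,t]$ to be bridged by at most two pairings, and the decay of the renormalized joint correlation function from Lemma \ref{lem: renormalized propagator}. However, the central mechanism you propose — writing $\la^2\str\psi\str=(\la^2\str\psi\str)^{1/2}(\la^2\str\psi\str)^{1/2}$ and attaching one square-root factor locally to each propagator — does not go through. First, the covering relation is stated backwards: a single long pairing $(u_j,v_j)$ of a minimally irreducible diagram spans up to \emph{three} elementary propagator intervals (it contains the times $v_{j-1}$ and $u_{j+1}$ in its interior); it is each propagator interval that is covered by at most two pairings. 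More seriously, \eqref{eq: renormalized with correlation bound} couples the amplitude $\psi(x_s-x_r,t_s-t_r)$ to the kernel of a \emph{single} propagator $\tilde\caZ^{\tau,\ga}_{t_s-t_r}$ whose arguments are precisely the endpoints $x_r,x_s$ of the pairing. In the diagram, the pairing instead spans several propagators interleaved with the $\tilde\caI_{x_i,l_i}$ insertions of the \emph{other} pairings; to invoke the lemma you must first merge these into one propagator via \eqref{eq: inequality caecae cae}, and that merging sums over exactly the intermediate positions on which the interleaved pairings' (half-)amplitudes still depend. You therefore cannot simultaneously keep half of every amplitude attached to its propagators: the composition needed for one family of pairings destroys the position dependence needed for the overlapping family. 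This is the gap, and your own closing caveat ("never take a literal square root of the propagator kernel, only of the scalar amplitude") does not resolve it, since without splitting the propagator there is nothing for the half-amplitude to decay against.

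The paper resolves this globally rather than locally: it splits $\dsi$ into two ladder subdiagrams $\dsi_1,\dsi_2$ whose pairings are mutually non-overlapping within each ladder, derives one complete bound by taking the supremum over the coordinates of $\dsi_2$ (so that the propagators under each pairing of $\dsi_1$ can be merged and Lemma \ref{lem: renormalized propagator} applied, yielding $H_\tau(\dsi_1)H_\tau(\dsi_2)\,\e^{-\la^2 g_r\str\Dom\dsi_1\str}$ up to the usual $c(\ga)^{\str\dsi\str}\e^{\ctwo t}$ factors), derives the symmetric bound with $\dsi_1\leftrightarrow\dsi_2$, and only then takes the geometric mean of the two finished scalar bounds together with $t\leq\str\Dom\dsi_1\str+\str\Dom\dsi_2\str$. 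The factor $\tfrac12$ in the decay rate is thus produced by a square root applied once, at the very end, to two separately valid estimates — not by a factor-by-factor splitting. To repair your argument you would need to replace the local splitting by this two-pass decomposition.
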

 \begin{proof}
 For concreteness, we assume that $\str \dsi \str = n$ is even, the argument for $\str \dsi \str$ odd is analogous. 
We can find $\dsi_1$ and $\dsi_2$ such that $\dsi_1 \cup \dsi_2= \dsi, \str \dsi_1\str =\str \dsi_2\str=n/2 $ and  $\dsi_1$ and $\dsi_2$ are ladder diagrams, i.e.\ their decompositions into irreducible components consists of singletons. To be more concrete,  the time-pairs of $\dsi_1$ are
\beq
(t_1,t_3), (t_4,t_7), (t_8,t_{11}) \ldots   (t_{2n-4}, t_{{2n-1}})
\eeq
and those of $\dsi_2$ are
\beq
(t_2,t_5), (t_6,t_9), (t_{10}, t_{13}) \ldots   (t_{2n-2}, t_{{2n}})
\eeq
The possibility of making such a decomposition is a consequence of the structure of minimally irreducible diagrams, as illustrated in Figure \ref{fig: fromirtomir1}. 
\begin{figure}[ht!]
\vspace{0.5cm}
\begin{center}
\psfrag{dsi}{$\dsi$}
\psfrag{dsi1}{$\dsi_1$}
\psfrag{dsi2}{$\dsi_2$}
\includegraphics[width = 14cm]{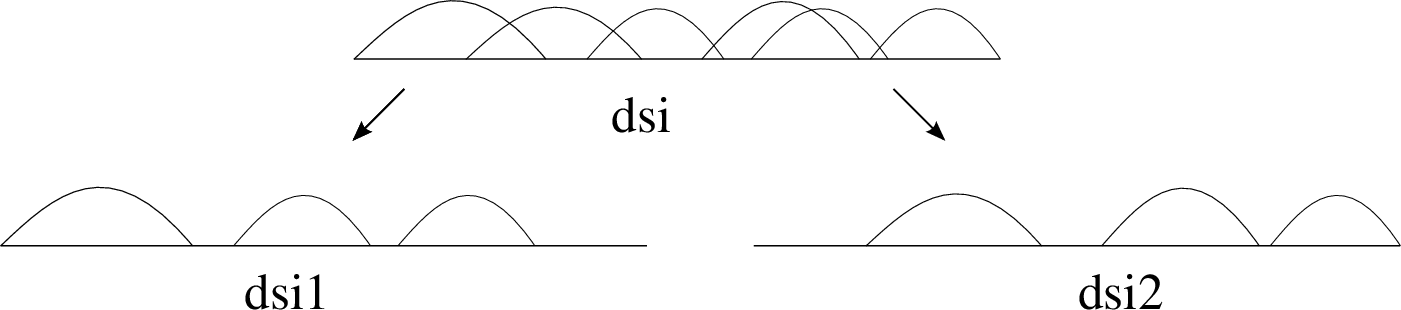}
\end{center}
\caption{\footnotesize{The decomposition of a minimally irreducible diagram $\dsi$ into two `ladder diagrams' $\dsi_1$ and $\dsi_2$. In the upper figure, one can easily check that any point on the (horizontal) time-axis is bridged by at most two pairings.}
}   \label{fig: fromirtomir1}
\end{figure}

 We now estimate the LHS of \eqref{eq: min irr}   in two ways. In our first estimate, we take the supremum over the $\underline{x},\underline{l}$-coordinates of  $\dsi_2$ and we keep those of $\dsi_1$.  In the second estimate, the roles of $\dsi_1$ and $\dsi_2$ are reversed. 

  We estimate
 \baq
 &&\sum_{ x(\dsi),l(\dsi)}  \str\zeta(\dsi) \str   \tilde\caE^{\ga}(\dsi)     \label{eq: bound by splitting min irr}  \\[2mm]
 &= &  [\cone]^{\str \dsi \str}    \sum_{ x(\dsi_2),l(\dsi_2)}   \str\zeta(\dsi_2) \str \sum_{ x(\dsi_1),l(\dsi_1)}  \str\zeta(\dsi_1) \str \left( \tilde \caI_{x_{2n},l_{2n}}  \caZ^{\tau,\ga}_{t_{2n}-t_{2n-1} }   \ldots   \tilde \caI_{x_2,l_2} \caZ^{\tau,\ga}_{t_2-t_1 }    \tilde \caI_{x_1,l_1} \right)    \nonumber \\[2mm]
 &\meleq &  [\cone]^{\str \dsi \str}       \e^{\ctwo t}  [\cone]^{\str \dsi_2 \str}  \left(  \sup_{\underline{x}(\dsi_2), \underline{l}(\dsi_2)}      \str\zeta(\dsi_2) \str  \right) 
  \nonumber \\[2mm]
   &&\qquad    \sum_{ x(\dsi_1),l(\dsi_1)}    \str \zeta(\dsi_1) \str     \tilde \caZ^{\tau,\frac{\ga}{2}}_{t_{2n}-t_{2n-1} } \tilde \caI_{x_{2n-1},l_{2n-1}}  \tilde \caZ^{\tau,\frac{\ga}{2}}_{t_{2n-1}-t_{2n-4} }   \ldots  \tilde \caI_{x_3,l_3} \tilde \caZ^{\tau,\frac{\ga}{2}}_{t_3-t_1 }    \tilde \caI_{x_1,l_1)}   \nonumber 
% &\meleq &   [\cone]^{\str \dsi \str}   H_{\tau}(\dsi_2)  c(\ga)^{\str \dsi_2 \str} \sum_{ x(\dsi_1),l(\dsi_1)}   \str \zeta(\dsi_1) \str     \left(   \tilde{\caZ}^c_{t_{2m}-t_{2m -1}}    \tilde \caI_{x_{2m-1},l_{2m-1}} \tilde{\caZ}^c_{t_{(\str \dsi \str -1)}-t_{(\str \dsi \str -3)}  }   \ldots  \tilde{\caZ}^c_{s_5-s_3}   \tilde\caI_{x_3,l_3}   \tilde{\caZ}^c_{s_3-s_1} \tilde\caI_{x_1,l_1}  \right)  \nonumber 
 \eaq
 The equality follows from the definition of $\tilde \caE^{\ga}(\dsi)$. To obtain the inequality on the third line, we perform the sum over  $\underline{x}(\dsi_2),\underline{l}(\dsi_2)$ by the same procedure that was used to obtain   \eqref{eq: lem bound by minimal3}, i.e., by using  Lemma \ref{lem: integrating out pairings}.  
 Since $\str\zeta (\dsi_1)\str $ factorizes into a function of the pairs in $\dsi_1$, the operator part in the last line of \eqref{eq: bound by splitting min irr}  is a product of two types of terms, namely; 
 \beq
\la^2 \mathop{\sum}
\limits_{\footnotesize{\left.\begin{array} {c}
 x_{4i},x_{4i+3}  \\
l_{4i},l_{4i+3}  \\
 \end{array} \right.} }
\left\str \psi(x_{4i+3}-x_{4i}, t_{4i+3}-t_{4i})\right\str   \tilde\caI_{x_{4i+3},l_{4i+3}}   \tilde\caZ^{\tau,\frac{\ga}{2}}_{t_{4i+3}-t_{4i}}  \tilde\caI_{x_{4i},l_{4i}}   \label{eq: bound i odd}  \eeq
for $i=1, \ldots, n/2-1$ (and an analogous term where we replace $4i \to 1$ and $4i+3 \to 3$, corresponding to the first pair in $\dsi_1$)
 and 
 \beq
  \tilde\caZ^{\tau,\frac{\ga}{2}}_{t_{4i}-t_{4i-1}}     \label{eq: bound i even}
 \eeq
 for $i=1, \ldots, n/2$.
 
We note that Lemma \ref{lem: renormalized propagator} provides a  bound on the matrix elements of these expressions. In particular, we use \eqref{eq: renormalized with correlation bound} to bound \eqref{eq: bound i odd}  and  \eqref{eq: renormalized apriori bound} to bound \eqref{eq: bound i even}. We obtain, for $\str\Im \ka^{}_{\links,\rechts} \str \leq \frac{\ga}{4}$,
   \baq
 \left( \caJ_{\ka}    \left(   \textrm{\ref{eq: bound i odd}} \right) \caJ_{-\ka}\right)_{ x^{}_{\links},x^{}_{\rechts}; x'_{\links},x^{'}_{\rechts} }    &\leq&   h_{\tau}\left(  t_{4i+3}-t_{4i}   \right)   \e^{-\la^2 g_r  \left(   t_{4i+3}-t_{4i}  \right)}
     \\[2mm]
   \left(  \caJ_{\ka}    \left(   \textrm{\ref{eq: bound i even}} \right) \caJ_{-\ka}  \right)_{ x^{}_{\links},x^{}_{\rechts}; x'_{\links},x^{'}_{\rechts} } & \leq &     e^{ \ctwo \left(  t_{4i}-t_{4i-1}   \right)}
   \eaq
   By the relation  stated in \eqref{eq: conversion exponential bound norm bound 1} and the line following it, we can convert these bounds on the kernels into bounds on the operator norms, yielding, for 
   $\str\Im \ka^{}_{\links,\rechts} \str \leq \frac{1}{8} \ga$
      \baq
 \left\norm    \caJ_{\ka}   \left(   \textrm{\ref{eq: bound i odd}} \right) \caJ_{-\ka}\right\norm  & \leq &  \cone    h_{\tau}\left(  t_{2i+1}-t_{2i-1}   \right)   \e^{-\la^2 g_r  \left(  t_{4i+3}-t_{4i}    \right)}
     \\[2mm]
 \left\norm   \caJ_{\ka}   \left(   \textrm{\ref{eq: bound i even}} \right) \caJ_{-\ka}  \right\norm & \leq&  \cone    \e^{ \ctwo \left(  t_{4i}-t_{4i-1}   \right)}
   \eaq
and hence, by multiplying these bounds for the operators appearing in \eqref{eq: bound by splitting min irr} and using that
\beq
h_\tau( t_{3}-t_{1})  \mathop{\prod}\limits_{i=1 }^{n/2-1} h_\tau( t_{4i+3}-t_{4i})   \leq H_\tau(\dsi_1), \qquad    \sup_{\underline{x}(\dsi_2), \underline{l}(\dsi_2)}      \str\zeta(\dsi_2) \str  \leq H_\tau(\dsi_2),
\eeq
 (see \eqref{eq: h dominates correlation}), we arrive at
 \beq \label{eq: product of bounds}
 \left\norm \caJ_{\ka}  \left(  \textrm{  \ref{eq: bound by splitting min irr}} \right)   \caJ_{-\ka} \right\norm  \leq  c(\ga)^{\str \dsi \str}    H_{\tau}(\dsi)     \e^{-  \la^2 g_{r}\str \Dom \dsi_1 \str} \e^{ \ctwo t} \eeq
  The claim of the lemma now follows by applying the same bound with the roles of $\dsi_1$ and $\dsi_2$ swapped, taking the geometric mean of the two bounds and noting that  \beq
 [0,t] \leq \str  \Dom \dsi_1 \str + \str \Dom \dsi_2 \str \eeq
\end{proof}

Next, we use  Lemmata \ref{lem: bound irr by min irr} and  \ref{lem: bound min irr} to prove Lemma \ref{lem: exponential decay long excitations}. By these two lemmas, the integral over renormalized irreducible diagrams is reduced to an integral over minimally irreducible equivalence classes $[\dsi]$.  Each equivalence class $[\dsi]$ essentially contributes  $ c(\ga)^{\str \dsi \str}  H_\tau(\dsi)$ to the integral.  Since $H_\tau(\dsi)$ is not exponentially decaying in $\Dom \dsi$, the Laplace transform of $H_\tau(\dsi)$ cannot be continued to negative $\Re z$. 
However, the factor $\e^{-\la^2 \frac{1}{2}g_r t }$ in Lemma  \ref{lem: bound min irr} enables  us to do such a continuation since the factors  $\cthree, \ctwo$ from Lemmata  \ref{lem: bound irr by min irr} and \ref{lem: bound min irr} can be made smaller than  $\la^2 \frac{1}{2}g_r$ by first choosing $\ga$ small enough, and then adjusting $\la$.
\vspace{0.2cm}

\noindent\emph{Proof of Lemma \ref{lem: exponential decay long excitations}}
We choose $\ga$ small enough, as required in the conditions of Lemmata \ref{lem: bound irr by min irr} and \ref{lem: bound min irr}, and we estimate, for $\str\Im \ka^{}_{\links,\rechts} \str \leq \ga/8$,
\baq
 \left\norm \mathop{\int}\limits_{\Si_{[0,t]}(>\tau, \mathrm{ir})} \d \dsi    \str\zeta(\dsi)\str    \caJ_{\ka}   \tilde\caE^{\ga}(\dsi)  \caJ_{-\ka} \right\norm   
 &\leq &    \e^{\cthree t}      \left\norm \mathop{\int}\limits_{\Si_{[0,t]}(>\tau, \mathrm{mir})} \d \dsi    \str\zeta(\dsi)\str    \caJ_{\ka}   \tilde\caE^{\frac{\ga}{2}}(\dsi)  \caJ_{-\ka} \right\norm   \label{eq: bound cac irr 4}  \\[2mm]
 &\leq&   \e^{\cthree t}   
 \mathop{\int}\limits_{\Pi_T\Si_{[0,t]}(>\tau, \mathrm{mir})} 
 \d [ \dsi]         \left \norm  \sum_{\underline{x}(\dsi), \underline{l}(\dsi)}     \str \zeta(\dsi) \str \,   \caJ_{\ka}  \tilde\caE^{\frac{\ga}{2}}(\dsi) \caJ_{-\ka}  \right\norm  \nonumber  \\[2mm]
  & \leq &    \e^{(\cthree -\frac{1}{4} \la^2 g_r)  t}       \mathop{\int}\limits_{\Pi_T \Si_{[0,t]}(>\tau, \mathrm{mir})}  \,  \d [ \dsi]   \,     c(\ga)^{\str \dsi \str}  \e^{-\frac{1}{4} \la^2 g_r t}  H_{\tau}(\dsi)   \nonumber
\eaq
The first inequality is Lemma \ref{lem: bound irr by min irr}, the second inequality uses the definition of the measure $\d \dsi$, and the third inequality 
 follows from   Lemma \ref{lem: bound min irr}.   

We will now estimate the Laplace transform of the integral in the last line of \eqref{eq: bound cac irr 4}. To prove   Lemma \ref{lem: exponential decay long excitations}, we fix $g_{ex} :=\frac{1}{8}  g_r $ and we show that one can choose $\ga$ such that, for $\la$ small enough and $\Re z  \geq - \la^2 g_{ex}$, 
\beq \label{eq: laplace of last proof}
\int_{\bbR^+} \d t  \, \e^{-tz}  \e^{(\cthree -\frac{1}{4} \la^2 g_r)  t} \, \mathop{\int}\limits_{\Pi_T \Si_{[0,t]}(>\tau, \mathrm{mir} )} \d [\dsi]     c(\ga)^{\str \dsi \str}    \e^{-\frac{1}{4} \la^2 g_r t}  H_{\tau}(\dsi)  =o(\la^2), \qquad \textrm{as} \, \, \la \searrow 0
\eeq
Of course, the choice of $\ga$ will have to depend on the specific value of $\cthree$.
To show  \eqref{eq: laplace of last proof}, we first choose $\ga$ such that, for $\la$ small enough, we have
\beq
\cthree -\frac{1}{4} \la^2 g_r \leq - \frac{1}{8} \la^2 g_r
\eeq
with $\cthree$ as in \eqref{eq: laplace of last proof}.

Consequently, we can dominate the factor $ \e^{-z t} \e^{(\cthree -\frac{1}{4} \la^2 g_r)  t}$ by $1$ in \eqref{eq: laplace of last proof}. 
Next, we note that, if $\dsi$ is minimally irreducible in the interval $[0,t]$, then  \beq  \sum_{i=1}^{\str \dsi \str} \str v_i-u_i \str \leq 2 t \eeq where $(u_i,v_i)$ are the pairs of time-coordinates associated to $\dsi$. This follows from the observation that each point  in the interval $[0,t]$ is bridged by at most two pairings of $\dsi$, see also Figure \ref{fig: fromirtomir1}. Consequently, we find that
\beq
  H_{\tau}(\dsi)   \e^{-\frac{1}{4} \la^2 g_r t}   \leq   \prod_{j=1}^{ \str\dsi\str} h_{\tau}(v_j-u_j)   \e^{-\frac{1}{8} \la^2 g_r  \str v_j-u_j \str} 
\eeq

We  estimate the LHS of \eqref{eq: laplace of last proof},  with $ \e^{-z t}  \e^{(\cthree -\frac{1}{4} \la^2 g_r)  t} $ replaced by $1$, by  invoking \eqref{eq: bound minimally irreducible classes all} in Lemma \ref{lem: combinatorics of irreducible diagrams}, with 
\beq
k(t):=     c(\ga)   \e^{-\frac{1}{8} \la^2 g_r  t } h_\tau(t) \qquad \textrm{and}\qquad a:=0
\eeq
Indeed, by using  $ \norm  h_{\tau} \norm_1=o(\la^2)$ and the exponential decay $ \e^{-\frac{1}{8} \la^2 g_r  t }$, we obtain that
\baq
\norm  k \norm_1 &= &   c(\ga)   o(\la^2)    \label{eq: finale 1}  \\[3mm]
\norm  t k \norm_1   &= & c(\ga)     o(\str \la \str^0), \qquad \textrm{as} \, \la \searrow 0   \label{eq: finale 2}
\eaq
Therefore,  the bound \eqref{eq: bound minimally irreducible classes all} yields  \eqref{eq: laplace of last proof}.
\qed

\subsection{List of important  parameters} \label{sec: list of parameters} 

We list some constants and functions that we use throughout the paper.
We start with the different decay rates; in the third column we indicate where the constant appears first. By ``full model", we mean: ``the model without cutoff".
 \begin{center}

 \begin{tabular}{|c|l|l|}\hline 
     $g_\res$  & bare reservoir correlation fct.\ (for subluminal speed)    &  Lemma \ref{lem: exponential decay}        \\[1mm] 
      \hline
       $\frac{1}{\tau}= \str\la\str^{1/2}$&   bare reservoir correlation fct.\  in the cut-off model   &   Section \ref{sec: cut off model}    \\[1mm]         \hline
        $ \la^2g_{r}$&  renormalized joint $\sys-\res$ correlation function    &  Lemma \ref{lem: renormalized propagator}        \\[1mm]         \hline
          $ \la^2 g_{rw}$&  Markov semigroup    &  Proposition  \ref{prop: properties of m}       \\[1mm] 
      \hline
       $\la^2g_{c}$&  cut-off model     &  Lemma \ref{lem: bounds on cutoff dynamics}      \\[1mm]       \hline
     $ \la^2g_{ex}$ &  excitations in the full model &   Lemma \ref{lem: exponential decay long excitations}    \\[1mm]       \hline
        $ \la^2g $&  full model       &   Theorem  \ref{thm: equipartition}    \\[1mm] 
 \hline 
\end{tabular}  
\end{center}

Additionally, the rates $g_{rw}, g_{c}, g$ come with a superscript $\scriptsize{low,high}$  indicating that the gap refers to small, large fibers $p$, respectively. \\ \vspace{0.1cm}

\noindent The  following constants restrict the values of complex deformation parameters, in particular the parameter $\ka$ in $\caJ_{\ka}$, as defined in \ref{def: caJ}.
 \begin{center}

 \begin{tabular}{|c|l|l|}\hline 
     $\delta_{\ve}$  &  particle dispersion law  & Assumption \ref{ass: analytic dispersion}      \\[1mm] 
      \hline
       $\delta_\res $&   reservoir dispersion law  &   Assumption  \ref{ass: analytic form factor}  \\[1mm]         \hline
        $\delta $&              full  model         &  Theorem \ref{thm: main}    \\[1mm]         \hline
          $\delta_{rw}$&   full  Markov semigroup    &  Proposition  \ref{prop: properties of m}       \\[1mm] 
      \hline
%       $\delta_{1}$&    cut-off model     & Lemma \ref{lem: polymer model cutoff}      \\[1mm]       \hline
     $ \delta_{ex}$ &  excitations in the full  model &   Lemma \ref{lem: polymer model full}    \\[1mm]       \hline
        $\delta_{r}$&  renormalized  $\sys-\res$ correlation fct.\       &  Lemma \ref{lem: renormalized propagator}  \\[1mm] 
 \hline 
\end{tabular}  

\end{center}
\vspace{0.3cm}
%
%\noindent  The following  functions of $\ga$ originate from the summation over $\underline{x},\underline{l}$-coordinates of diagrams.
% \begin{center}
% \begin{tabular}{|c|l|l|}\hline 
%     $b(\ga)$&    composition of exp.\ decaying kernels     &  Lemma \ref{lem: bounds with tilde}    \\[1mm]       \hline
%     $c_{\tilde \caU}(\ga)$  &  composition of $\tilde \caU^{\ga}_t$ and/or  $\tilde \caZ^{\tau\ga}_t$    & Lemma \ref{lem: bounds with tilde}     \\[1mm] 
%      \hline
%       $\cone$&  bound on $\caC_t(\dsil,\frL)$  &   Lemma \ref{lem: bounds with cah}  \\[1mm]         \hline
%        $c_{\tilde \caE}(\ga)$&        bound on $\caE^{\ga}(\dsi)$           & Lemma   \ref{lem: irreducible bounds with caE}   \\[1mm]         \hline
%\end{tabular}  

%\end{center}
%In the final Section \ref{sec: model only long}, similar functions of $\ga$ are represented by one symbol, namely $c(\ga)$, standing for decreasing functions of $\ga \in \bbR^+$ that are finite, except at $\ga=0$. \\ \vspace{0.1cm}

\noindent  The following functions of $\ga,\la$ appear as blowup-rates in exponential bounds.
 \begin{center}
 \begin{tabular}{|c|l|l|}\hline 
     $r_{rw}(\ga,\la)$&   Markov semigroup    &  Lemma \ref{lem: bounds with tilde}    \\[1mm]       \hline \hspace{2mm}
     $r_{\ve}(\ga,\la)$  &  $ \tilde \caU^{\ga}_t$   & Section \ref{sec: bounding operators}     \\[1mm] 
      \hline
       $r_{\tau}(\ga,\la)$&   $\tilde \caZ^{\tau\ga}_t$ &   Lemma  \ref{lem: bounds on cutoff dynamics}  \\[1mm]         \hline
\end{tabular}  
\end{center}
In the final Sections \ref{sec: bounds on long pairings} and \ref{sec: model only long}, these rates are represented by the generic constants $\ctwo,\cthree$, introduced in Section \ref{sec: generic constants}. 
% Bounds on $\ga$
% 
% \beq
%\left.
%\begin{array}{ll}
%\delta_{rw} &    \textrm{bound for $\ga$ in the particle propagator of the Liouvillian}\\
%\delta_{\tau} &    \textrm{bound for $\ga$ in the particle propagator of cut-off model}\\
%\delta_{r} &   \textrm{bound for $\ga$ in the estimate of the dressed excitation}\\
%\delta_{ex}&    \textrm{bound for $\ga$ in  the full model}\\
%\end{array}cae
%\right.
%\eeq
% 

\appendix

 \section{Appendix: The reservoir correlation function}  % use *-form to suppress numbering
 \label{app: reservoirs}

In this appendix, we study the reservoir correlation function $\psi(x,t)$ and we prove Lemmata \ref{lem: exponential decay} and \ref{lem: integrability}.  Recall the definition of the ``effective squared form factor" $\hat\psi$ in \eqref{def: psi}. It is related to $\psi(x,t)$ by, see \eqref{def: correlation function},
\beq  \label{def: repetition psi}
 \psi(x,t) =     \mathop{\int}\limits_\bbR  \d \om  \mathop{\int}\limits_{\bbS^{d-1}}\d s \,    \hat\psi (\om) \e^{\i t \om}   \e^{\i  \om s \cdot x  } 
\eeq
From this expression, one understands that  $\psi(x,t)$ cannot  have exponential decay in $t$, uniformly in $x$.  One also sees that, for $x$ fixed, there is exponential decay provided that 
 $ \hat\psi (\cdot)$ is analytic in a strip around $\bbR$.

Consider $q(\cdot)$ such that
\beq \label{eq: psi as sphere integral}
 \psi(x,t)  =   \mathop{\int}\limits_{\bbS^{d-1}}\d s \,   q(t+ s \cdot x)
\eeq
 By Assumption \ref{ass: analytic form factor}, there is a $\delta_{\res}>0$ such that
$ q(t)$  decays  as $C \e^{ -\delta_{\res} \str t \str} $. Choosing $v^*=\frac{1}{2}$, we obtain, for $ \str x \str / \str t \str  \leq v^*$, 
\beq
\str \psi(x,t) \str  \leq    \e^{-g_\res \str t  \str}  C,  \qquad  \textrm{with} \,   g_\res := \frac{2}{3} \delta_\res
\eeq
which proves Lemma \ref{lem: exponential decay}.  From now on, we assume that $ \str x \str / \str t \str  > v^*$.  \vspace{3mm}

We remark that \eqref{eq: psi as sphere integral} can be rewritten, after an explicit calculation,  as
\beq
 \psi(x,t)  = \int_{-1}^{1} \d \eta \,     q(t+\eta \str x\str)  a(\eta), \qquad   a(\eta) :=  \mathrm{Area}(\bbS^{d-2})     \left( 1-\eta^2\right)^{\frac{d-3}{2}}  
\eeq
By Assumption \ref{ass: analytic form factor}, in particular  the condition $\hat \psi(0)=0$ and the analyticity of $\hat \psi$, we deduce that $\frac{\hat\psi(\om)}{\om}$ is analytic in a strip around $\bbR$, as well. Its Fourier transform, $Q$, is an exponentially decaying $C^1$-function (since $\hat\psi(\om) \in L^1$) whose derivative equals $q$. 

Hence,  by partial integration, the fact that $a(\eta)\big\str_{-1}= a(\eta)\big\str_{1}=0$ for $d >3$, and  the change of variables $\zeta= \str x \str \eta$, we obtain
\beq \label{eq: integral in zeta}
 \psi(x,t)  =- \frac{1}{\str x \str^{3/2}} \int_{-\str x \str }^{\str x \str} \d \zeta   \,   Q(t+\zeta) \,  \frac{1}{\str x \str^{1/2}}a'(\frac{\zeta}{\str x \str}), \qquad Q'=q
\eeq
Here, $Q'$ and $a'$ stand for the derivatives of $Q$ and $a$.  
We evaluate this integral by splitting it into the regions 
\beq 
 -\str x \str \leq  \zeta  \leq -\str x \str+ 1, \qquad  -\str x \str+ 1 \leq \zeta  \leq  \str x \str- 1, \qquad  \str x \str- 1 \leq \zeta  \leq \str x \str
 \eeq
  In the second region, we dominate the integral \eqref{eq: integral in zeta} by $ \norm Q \norm_1\, \times\,  \norm \frac{1}{\str x \str^{1/2}}a'(\frac{\zeta}{\str x \str}) \norm_{\infty}$ (we  assume here that $\str x \str \geq 1$, otherwise the decay in $t$ has been proven above). In the first and third region, we dominate  the integral \eqref{eq: integral in zeta} by $ \norm Q \norm_{\infty}\, \times\, \norm \frac{1}{\str x \str^{1/2}}a'(\frac{\zeta}{\str x \str}) \norm_{1}$. Using the explicit form of $a'$ and the fact that $ \str x \str / \str t \str  > v^*$,
we conclude
\beq
\sup_x \str \psi(x,t) \str \leq C (1+\str t \str)^{-3/2}, \qquad  \textrm{for}  \, \, d \geq 4   \label{eq: dispersive estimate}
\eeq
which implies Lemma \ref{lem: integrability}. 
Obviously, dispersive estimates  like \eqref{eq: dispersive estimate} can be derived in much greater generality, see e.g.\ \cite{evansbook}.

  \section{Appendix: Spectral perturbation theory}  % use *-form to suppress numbering
 \label{app: spectral}
 
Let $\epsilon \in \bbR$ be a small parameter and consider a continuous function $ \bbR^+ \ni t  \mapsto  V(t, \epsilon) $, taking values in  a Banach space, and  such that  
\beq \mathop{\sup}\limits_{t \geq 0 } \e^{-t m}\norm V(t, \epsilon) \norm < \infty, \qquad  \textrm{for some} \, \,  m>0.    \label{eq: bound on V} \eeq 
The Laplace transform 
\beq \label{def: abstract laplace}
 A(z, \epsilon) := \int_{\bbR^+} \d t   \e^{-t z } V(t, \epsilon) 
\eeq
is well-defined for $\Re z >m $ and it follows (by the inverse Laplace transform) that
\beq
V(t, \epsilon) =  \frac{1}{2 \pi \i} \mathop{\int}\limits_{\Ga^{\rightarrow}} \d z  \,   \e^{z t}  A(z,\epsilon), \qquad \qquad \textrm{with} \, \,  \Ga^{\rightarrow}:= m'+\i \bbR \,\, \textrm{for any} \, m'>m
\eeq
where the integral is meant in the sense of improper Riemann integrals.

We will state  assumptions that allow to continue $A(z,\epsilon) $ downwards in the complex plane, i.e., to $\Re z \leq m$ and to obtain bounds on $V(t, \epsilon) $. 

\begin{lemma}\label{lem: spectral abstract}
For $\Re z$ large enough, let
\beq \label{def: expression A}
 A(z, \epsilon)  :  =  (z- \i  B-  A_{1}(z,\epsilon))^{-1}
\eeq  
and assume the following conditions.
\ben
\item{$ B$ is  bounded and  its  spectrum consists of finitely many semisimple eigenvalues on the real axis, that is
\beq
B= \sum_{b \in \sp B} b 1_b(B),
\eeq
where $1_b(B)$ is the spectral projection corresponding to the eigenvalue $b$. For concreteness, we assume that $0 \in \sp B$.  }

\item{ For $\epsilon$ small enough, the operator-valued function $z \mapsto A_{1}(z,\epsilon)$  is analytic in the domain $\Re z>- \epsilon g_{A}$ and  
\baq
%&  1_y(B)A_{ex,2}(\i y) 1_y(B)=0, \qquad \textrm{for any}\quad y \in \sp B   \\[2mm]
  \mathop{\sup}\limits_{\Re z >-   \epsilon g_{A} } \norm  A_{1}(z,\epsilon)\norm& =&O(\epsilon),\qquad  \epsilon \searrow 0,   \label{eq: conditions A2}  \\[2mm] 
    \mathop{\sup}\limits_{\Re z >-   \epsilon g_{A} } \norm  \frac{\partial}{\partial z} A_{1}(z,\epsilon)\norm &=&o(\str \epsilon\str^0), \qquad  \epsilon \searrow 0  \label{eq: conditions A3}
\eaq
}

\item{ There are bounded  operators $N_b$, for $b \in \sp B$, acting on the spectral subspaces $\Ran 1_b(B)$  and such that, for all $b \in \sp B$,
\beq
    \epsilon N_b - 1_b(B)  A_{1}(\i b,\epsilon) 1_b(B)= o(\epsilon), \qquad  \epsilon \searrow 0.  \label{eq: app closeness A and N}
\eeq
Consider the operator
\beq
 N:= \mathop{\oplus}\limits_{b \in \sp B}  N_b, \qquad \textrm{with} \, [B,N]=0,
 \eeq
and assume that $N$  has a simple eigenvalue $f_{N}$ such that 
\beq \label{eq: gap abstract}
 \sp N= \{ f_N \}  \cup \Om_N \qquad  \textrm{and}   \qquad  \sup  \Re \Om_N \leq   - g_{N}
\eeq 
for some gap $g_{N}>0$.  We also require that 
\beq \label{eq: eigenvalue smaller than gap}
\Re f_N   >  - g_{N}, \qquad     \Re f_N   >     -  g_{A}   
\eeq
The eigenvalue $f_N$ is necessarily an eigenvalue of  $N_b$ for some $b \in \sp B$. For concreteness (and to match with our applications), we assume that it is an eigenvalue of $N_0$
}

\een

Then, there is an $\epsilon_0 >0$ such that, for $\str \epsilon \str \leq \epsilon_0$, there is a  number  $  f(\epsilon)$, a rank-one operator  $P(\epsilon)$,  bounded operators $R(t,\epsilon)$ and a decay rate $g>0$, such that 
\beq \label{eq: exponential decay1 abstract}
V(t, \epsilon)    =    P(\epsilon) \e^{f(\epsilon) t}  +   R(t,\epsilon) \e^{- \epsilon g t } 
\eeq
with
\baq
 f(\epsilon) -\epsilon f_N &=& o(\epsilon)    \label{eq:  lemma bounds1 abstract} \\
 \norm    P(\epsilon)- 1_{f_{N}}(N)\norm &=& o(\str\epsilon\str^0)  \label{eq:  lemma bounds2 abstract}  \\ 
\sup_{t \in \bbR^+}  \norm R(t,\epsilon) \norm &=& O(\str\epsilon\str^0), \qquad  \textrm{as}\, \,  \str \epsilon \str \searrow 0  \label{eq:  lemma bounds3 abstract} 
\eaq
with $1_{f_{N}}(N)$ the spectral projection  of $N$ associated to the eigenvalue $f_{N}$. The decay rate $g$ can be chosen arbitrarily close to $\min\{g_N, g_A \}$ by making  $\epsilon_0$ small enough. In particular, one can choose $g$ and $\epsilon_0$ such that $ \Re f(\epsilon)  > - \epsilon g $ for all $\str\epsilon \str \leq \epsilon_0 $.

If, in addition $N$ and  $A_{1}$ depend analytically on a parameter $\al$ in a complex domain $ \caD \subset \bbC,$ such that  
\eqref{eq: conditions A2}-\eqref{eq: conditions A3}-\eqref{eq: app closeness A and N}-\eqref{eq: gap abstract}-\eqref{eq: eigenvalue smaller than gap} hold uniformly in $\al \in \caD$, then  \eqref{eq: exponential decay1 abstract} holds  with $f, P$ and $R$ analytic in $\al$ and the estimates \eqref{eq:  lemma bounds1 abstract}-\eqref{eq:  lemma bounds2 abstract}-\eqref{eq:  lemma bounds3 abstract}  are satisfied uniformly in $\al \in \caD$.

\end{lemma}

 Lemma \ref{lem: spectral abstract}  follows in a straightforward way from spectral perturbation theory of discrete spectra. For completeness, we give a proof below, using freely some well-known results that can be found in, e.g., \cite{katoperturbation}. 

\begin{lemma} \label{lem: spectrum change small o}
The singular points of  $A(z,\epsilon)$ in the domain $ \Re z \geq - \epsilon g_{A}$
%\beq
% \left(z- \i  B- \epsilon A_{2}(z) - A_{>2}(z,\epsilon)\right)^{-1} , \qquad    \Re z \geq g_A, 
%\eeq
lie within a distance of  $o(\epsilon)$ of the spectrum of $\i B +\epsilon N$ (provided that there are any singular points at all).
\end{lemma}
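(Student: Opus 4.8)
\textbf{Proof proposal for Lemma \ref{lem: spectrum change small o}.}

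The plan is to compare the resolvent of the operator $z \mapsto \i B + \epsilon N$ with the expression $A(z,\epsilon) = (z - \i B - A_1(z,\epsilon))^{-1}$ by a standard Neumann-series perturbation argument, treating $A_1(z,\epsilon) - \epsilon N$ as the small correction. First I would fix $z$ with $\Re z \geq -\epsilon g_A$ lying at a distance strictly greater than some $\rho(\epsilon) = o(\epsilon)$ from $\sp(\i B + \epsilon N)$, and show that for such $z$ the operator $z - \i B - A_1(z,\epsilon)$ is invertible (hence $z$ is not a singular point of $A(z,\epsilon)$). Since $[B,N]=0$ and $\i B + \epsilon N$ is bounded with purely discrete spectrum, the resolvent $(z - \i B - \epsilon N)^{-1}$ is bounded by $C/\mathrm{dist}(z, \sp(\i B+\epsilon N)) \leq C/\rho(\epsilon)$ on the region in question (this uses that $B$ has semisimple eigenvalues and $N$ restricted to each $\Ran 1_b(B)$ is a fixed bounded operator, so the resolvent bound degrades only polynomially, not catastrophically, near the spectrum; one may need to enlarge $\rho(\epsilon)$ by a bounded factor to absorb the size of the spectral projections and Jordan structure of $N$, but it stays $o(\epsilon)$).

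The key algebraic step is the identity
\beq
z - \i B - A_1(z,\epsilon) = \left( 1 - \big(A_1(z,\epsilon) - \epsilon N\big)(z - \i B - \epsilon N)^{-1} \right)(z - \i B - \epsilon N),
\eeq
so it suffices to show the Neumann factor is invertible, i.e. that
\beq
\norm \big(A_1(z,\epsilon) - \epsilon N\big)(z - \i B - \epsilon N)^{-1} \norm < 1.
\eeq
Here I would bound the norm of $A_1(z,\epsilon) - \epsilon N$ and exploit the hypothesis \eqref{eq: app closeness A and N}, which gives $1_b(B)A_1(\i b,\epsilon)1_b(B) = \epsilon N_b + o(\epsilon)$ for each $b \in \sp B$. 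Combined with \eqref{eq: conditions A2} (giving $\norm A_1(z,\epsilon)\norm = O(\epsilon)$) and \eqref{eq: conditions A3} (which controls the $z$-dependence of $A_1$, so that the value of $A_1$ at a general $z$ with $\Re z \geq -\epsilon g_A$ is close to its value at the nearby point $\i b$ on the imaginary axis — the distance between $z$ and the relevant $\i b$ is $O(\epsilon)$, so the difference is $O(\epsilon) \cdot o(1) = o(\epsilon)$), one obtains $\norm A_1(z,\epsilon) - \epsilon N \norm = o(\epsilon)$ uniformly on the region. Multiplying by the resolvent bound $C/\rho(\epsilon)$ and choosing $\rho(\epsilon)$ to be a suitably slowly decaying $o(\epsilon)$ quantity — e.g. $\rho(\epsilon) = \sqrt{\epsilon \cdot h(\epsilon)}$ where $h(\epsilon) \to 0$ bounds $\norm A_1 - \epsilon N\norm/\epsilon$ — makes the product $o(\epsilon)/\rho(\epsilon) = o(1) < 1$ for $\epsilon$ small.

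The main obstacle I anticipate is the bookkeeping around the resolvent bound for $\i B + \epsilon N$: because $N$ need not be self-adjoint or diagonalizable, $(z - \i B - \epsilon N)^{-1}$ can grow like $\rho(\epsilon)^{-k}$ near the spectrum where $k$ is the maximal Jordan block size, rather than like $\rho(\epsilon)^{-1}$; and because the eigenvalues of $\epsilon N$ sit at distance $O(\epsilon)$ from the real axis, separating "near the spectrum" from "far from the spectrum" requires care to ensure the $o(\epsilon)$ neighborhood is genuinely the right scale. I would handle this by working blockwise in $b \in \sp B$ (using $[B,N]=0$ so the problem decouples), where on each block the spectrum of $\i b + \epsilon N_b$ is a finite set of points within $O(\epsilon)$ of $\i b$, and the resolvent estimate $\norm(z - \i b - \epsilon N_b)^{-1}\norm \leq C_b \,\mathrm{dist}(z,\sp)^{-1}\big(1 + \epsilon/\mathrm{dist}(z,\sp)\big)^{k_b - 1}$ holds with constants $C_b, k_b$ independent of $\epsilon$; this only forces $\rho(\epsilon)$ to be a fixed power of the naive $o(\epsilon)$ scale, still $o(\epsilon)$. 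Everything else is routine, and the conclusion is exactly the stated inclusion of singular points of $A(z,\epsilon)$ in an $o(\epsilon)$-neighborhood of $\sp(\i B + \epsilon N)$.
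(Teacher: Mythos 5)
Your overall strategy (show that $z-\i B-A_1(z,\epsilon)$ is invertible whenever $z$ stays at distance $\rho(\epsilon)=o(\epsilon)$ from $\sp(\i B+\epsilon N)$, via a Neumann factorization around $z-\i B-\epsilon N$) is sound, but there is a genuine gap at the step where you claim $\norm A_1(z,\epsilon)-\epsilon N\norm=o(\epsilon)$ uniformly. Hypothesis \eqref{eq: app closeness A and N} only controls the \emph{diagonal} blocks $1_b(B)A_1(\i b,\epsilon)1_b(B)$; it says nothing about the off-diagonal blocks $1_b(B)A_1(z,\epsilon)1_{b'}(B)$ with $b\neq b'$, which by \eqref{eq: conditions A2} are merely $O(\epsilon)$ and in the intended application are genuinely of that size (the whole point of the spectral averaging in \eqref{eq: def of M} is that these blocks do \emph{not} vanish but only become negligible because of the order-one energy gaps of $B$). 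Since $N=\oplus_b N_b$ commutes with $B$ and is therefore block-diagonal, $A_1-\epsilon N$ inherits these $O(\epsilon)$ off-diagonal blocks, so $\norm A_1(z,\epsilon)-\epsilon N\norm$ is only $O(\epsilon)$. Your Neumann factor then has a block of size $O(\epsilon)/\rho(\epsilon)\to\infty$ for any $\rho(\epsilon)=o(\epsilon)$ (take $b'=b_0$ the block whose resolvent is large and $b\neq b_0$), and the series does not converge; the argument collapses exactly where the lemma has content.

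The repair is the one the paper uses: before estimating, conjugate $\i B+A_1(z,\epsilon)$ by the near-identity similarity transformation $U(z,\epsilon)=1+O(\epsilon)$ of the standard reduction process, which \emph{exactly} block-diagonalizes it with respect to the spectral projections $1_b(B)$. Because the eigenvalues of $B$ are separated by order-one gaps, this pushes the off-diagonal coupling into an $O(\epsilon^2)$ correction of each diagonal block; together with \eqref{eq: app closeness A and N} and the $|z-\i b|\,o(|\epsilon|^0)$ term from \eqref{eq: conditions A3}, the $b$-th block becomes $\i b+\epsilon N_b+o(\epsilon)$ for $z$ within $O(\epsilon)$ of $\i b$, and at that point a one-block perturbation argument of exactly the kind you propose (including your correct caution about Jordan structure of $N_b$) finishes the proof. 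Equivalently, you could run a Feshbach--Schur complement in the block $b_0$ where the resolvent is large, which makes the off-diagonal contribution appear as $O(\epsilon)\cdot O(1)\cdot O(\epsilon)/\rho(\epsilon)=O(\epsilon^2)/\rho(\epsilon)=o(1)$; but some such second-order device is indispensable.
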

\begin{proof}

Standard perturbation theory implies that the spectrum of the operator
\beq \label{eq: operator to be perturbed}
\i  B+ A_{1}(z,\epsilon), \qquad \textrm{for} \, \,       \Re z \geq - \epsilon g_{A}
\eeq  
 lies at a distance  $O(\epsilon)$ from the spectrum of $\i B$. Here and in what follows, the estimates in powers of $\epsilon$ are uniform for $ \Re z \geq -\epsilon g_{A}$.  Let $1_b^0 \equiv 1_b(B)$ be the spectral projections of $B$ on the eigenvalue $b$.  As long as $\epsilon$ is small enough, there is an invertible operator $U\equiv U(\epsilon,z)$ satisfying $\norm U-1 \norm=O(\epsilon)$ and such that the projections
\beq
1_b := U  1^0_b  U^{-1}, \qquad b \in \sp B
\eeq
are spectral projections of the operator \eqref{eq: operator to be perturbed} associated to the spectral patch originating from the eigenvalue $b$ at $\ep=0$ (see Chapter 2.4 of \cite{katoperturbation} for an explicit construction of $U$).
It follows that the spectral problem for \eqref{eq: operator to be perturbed} is equivalent to the spectral problem for
\beq
 \sum_b   U^{-1}   1_b \left( \i  B+  A_{1}(z,\epsilon) \right)1_b  U   
=  \sum_b \left(  \i b 1^0_b  +\epsilon N_b +  A_{ex,b}(z,\epsilon)   \right) \label{eq: to what it is equivalent}
\eeq
where 
\beq
\left.\begin{array}{rclr}
A_{ex,b}(z,\epsilon) &:=&  1^0_b U^{-1}( \i B)  U 1^0_b - \i  b 1^0_b, \qquad \qquad &  (O(\epsilon^2))      \\[2mm]
& +&   \epsilon   1^0_b U^{-1} N_b  U 1^0_b - \epsilon N_b , \qquad \qquad &  (O(\epsilon^2))     \\[2mm]
& +&      1^0_b U^{-1}  \left( A_{1}(\i b,\epsilon)  - \epsilon N_b               \right)   U  1^0_b, \qquad \qquad & (o(\epsilon))    \\[2mm]
&+&      1^0_b U^{-1}  \left( A_1(z,\epsilon) -A_1(\i b,\epsilon)   \right)   U  1^0_b, \qquad \qquad &  ( \str z-\i b \str o(\str\epsilon\str^0)  )
\end{array}\right. \label{eq: bounds on tildeAex}
\eeq
The estimates in powers of $\epsilon$ are obtained by using  $U-1=O(\epsilon)$, the property $1_b  U =  U 1_b^0$ and the bounds \eqref{eq: conditions A2}-\eqref{eq: conditions A3}-\eqref{eq: app closeness A and N}.
When $z$ is chosen at a distance $O(\epsilon)$ from $\i b$, then all terms in \eqref{eq: bounds on tildeAex} are $o(\epsilon)$. 
The claim of Lemma \ref{lem: spectrum change small o} now follows by simple perturbation theory applied to the RHS of \eqref{eq: to what it is equivalent}.

\end{proof}

\begin{lemma}\label{lem: at least one singularity}
The function $A(z)$
%\beq \label{eq: function to have at least one}
% \left(z- \i  B- \epsilon A_{2}(z) - A_{>2}(z,\epsilon)\right)^{-1} 
%\eeq
has exactly one singularity at  a distance $o(\epsilon)$ from $\epsilon f_N$. This singularity is called $f \equiv f(\epsilon)$.   The corresponding residue  $P \equiv P(\ep)$  is a rank-one operator satisfying
\beq \label{ex: P close to projection}
\norm P- 1_{N}(f_N) \norm =o(\str\epsilon\str^0), \qquad     \epsilon \searrow 0
\eeq
\end{lemma}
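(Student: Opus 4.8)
\textbf{Proof plan for Lemma~\ref{lem: at least one singularity}.}
The plan is to derive this lemma from Lemma~\ref{lem: spectrum change small o} together with the explicit rank-one structure coming from the simple eigenvalue $f_N$ of $N$, exactly as in standard analytic perturbation theory for an isolated eigenvalue. First I would apply Lemma~\ref{lem: spectrum change small o}: it tells us that, inside the half-plane $\Re z \geq -\epsilon g_A$, every singular point of $A(z,\epsilon)$ lies within $o(\epsilon)$ of the spectrum of $\i B + \epsilon N$. Since $0 \in \sp B$ and $f_N$ is a simple eigenvalue of $N_0$, separated from the rest of $\sp N$ by the gap $g_N$ (assumption \eqref{eq: gap abstract}) and from the other eigenvalues $\i b$ of $\i B$ by an $O(1)$ distance, the point $\epsilon f_N$ sits at distance $\gg \epsilon \cdot o(1)$ from all other points of $\sp(\i B + \epsilon N)$ once $\epsilon$ is small enough. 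Hence among the singular points of $A(z,\epsilon)$ there is a well-separated cluster near $\epsilon f_N$, and no other singular point comes within, say, $\tfrac12\min\{g_N,g_A\}\epsilon$ of it.

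Next I would show this cluster consists of \emph{exactly one} singular point, carrying a rank-one residue. For this I would use the reduction \eqref{eq: to what it is equivalent}: the spectral problem for $\i B + A_1(z,\epsilon)$ in the patch near $b=0$ is equivalent to that for $\i\cdot 0\cdot 1_0^0 + \epsilon N_0 + A_{ex,0}(z,\epsilon)$, and when $z$ is taken within $O(\epsilon)$ of $0$ all terms in $A_{ex,0}$ are $o(\epsilon)$ by \eqref{eq: bounds on tildeAex}. Thus on $\Ran 1_0(B)$ we are perturbing $\epsilon N_0$ by an $o(\epsilon)$ analytic operator-valued function of $z$; since $f_N$ is a \emph{simple} eigenvalue of $N_0$, its Riesz spectral projection $Q(z,\epsilon) := \tfrac{1}{2\pi\i}\oint 1_0^0(w - \epsilon N_0 - A_{ex,0}(z,\epsilon))^{-1}1_0^0\,\d w$ (contour of radius $\sim \epsilon$ around $\epsilon f_N$) stays rank one and depends analytically on $z$, with $\|Q(z,\epsilon) - 1_{f_N}(N)\| = o(\str\epsilon\str^0)$. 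The singular points of $A(z,\epsilon)$ near $\epsilon f_N$ are then the fixed points $z = \lambda(z,\epsilon)$, where $\lambda(z,\epsilon) := \Tr\big(Q(z,\epsilon)(\epsilon N_0 + A_{ex,0}(z,\epsilon))\big)$ is the (analytic, scalar) perturbed eigenvalue; because $\partial_z \lambda = O(\|\partial_z A_1\|) = o(1)$ by \eqref{eq: conditions A3}, the map $z\mapsto \lambda(z,\epsilon)$ is a contraction on a small disk about $\epsilon f_N$, so the fixed point $f=f(\epsilon)$ is unique, and $f(\epsilon) - \epsilon f_N = \lambda(\epsilon f_N,\epsilon) - \epsilon f_N + o(\epsilon) = o(\epsilon)$. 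The residue $P(\epsilon)$ of $A(z,\epsilon) = (z - \i B - A_1(z,\epsilon))^{-1}$ at $z=f$ is computed from the Riesz projection $1_0(B)$ (which is $1_0^0 + O(\epsilon)$) conjugated back by $U$ and restricted to the one-dimensional range of $Q(f,\epsilon)$; a short computation using that $f$ is a \emph{simple} zero of $z\mapsto \det$ (equivalently $\partial_z(z-\lambda)\neq 0$) gives that $P(\epsilon)$ is rank one with $\|P(\epsilon) - 1_{f_N}(N)\| = o(\str\epsilon\str^0)$, which is \eqref{ex: P close to projection}.

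The main obstacle I anticipate is purely bookkeeping rather than conceptual: one must be careful that the Riesz-projection contour of radius $\sim\epsilon$ around $\epsilon f_N$ genuinely isolates $f_N$ inside $\sp N_0$ — i.e. that the $o(\epsilon)$ perturbation $A_{ex,0}(z,\epsilon)$ does not move eigenvalues across the contour — which requires choosing the contour radius as a fixed multiple of $\epsilon$ (smaller than the spectral gap of $N_0$ but larger than the $o(\epsilon)$ error) and then verifying the resolvent bound $\|1_0^0(w-\epsilon N_0)^{-1}1_0^0\| = O(\epsilon^{-1})$ uniformly on it. Once that is in place, the contraction-mapping argument for the fixed point $f(\epsilon)$ and the $o(\str\epsilon\str^0)$ estimate on $P(\epsilon)$ are routine, and the lemma follows. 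I would also note that the separation from the \emph{other} patches $b\neq 0$ is automatic since $\str\i b - \epsilon f_N\str = O(1)$ there, so no singular point from those patches can contribute near $\epsilon f_N$.
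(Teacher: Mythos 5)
Your proposal is correct, but for the existence and uniqueness of the singularity it takes a genuinely different route from the paper. The paper obtains ``at most one'' by appealing to Lemma \ref{lem: spectrum change small o}, and then proves ``at least one'' by a contour-integral comparison: it integrates the difference $(z-\epsilon N_0-A_{ex,0}(z,\epsilon))^{-1}-(z-\epsilon N_0)^{-1}$ over the circle $\Ga^f$ of radius $\epsilon r$ about $\epsilon f_N$, shows this difference contributes only $o(1)$, and concludes that since $\oint_{\Ga^f}(z-\epsilon N_0)^{-1}\d z$ is the nonvanishing spectral projection, the contour integral of the full resolvent cannot vanish, so a singularity must sit inside $\Ga^f$. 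You instead reduce to the scalar fixed-point equation $z=\lambda(z,\epsilon)$ for the perturbed simple eigenvalue and use the contraction property $\partial_z\lambda=o(\str\epsilon\str^0)$, which follows from \eqref{eq: conditions A3}; this delivers existence and uniqueness in one stroke, and in fact makes fully explicit the uniqueness step that the paper's one-line appeal to Lemma \ref{lem: spectrum change small o} leaves somewhat implicit (that lemma only localizes singularities near $\sp(\i B+\epsilon N)$, it does not by itself count them). The price of your route is the bookkeeping you correctly flag: one must check that the Riesz projection $Q(z,\epsilon)$ on a contour of radius a fixed multiple of $\epsilon$ is well defined and of constant rank, i.e.\ that the $o(\epsilon)$ perturbation does not push spectrum across the contour. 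For the residue, the two arguments coincide in substance: the paper's expansion $F(z)=F_0+zF_1+\ldots$ with $F_1=1-\partial_zA_1=1+o(\str\epsilon\str^0)$ and $\mathrm{Res}(F(z)^{-1})=1_0(F_1^{-1}F_0)F_1^{-1}$ is exactly your observation that $f$ is a simple zero of $z\mapsto z-\lambda(z,\epsilon)$, yielding a rank-one residue within $o(\str\epsilon\str^0)$ of $1_{f_N}(N)$.
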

\begin{proof}
By  Lemma \ref{lem: spectrum change small o}, there can be at most one singularity. We prove below that there is at least one. 
By the reasoning in the proof of Lemma \ref{lem: spectrum change small o} and the fact that the eigenvector corresponding to $f_N$ belongs to $\Ran 1^{0}_{b=0}$ (see condition 3) of Lemma \ref{lem: spectral abstract}), it suffices to study the singularities of the function
\beq \label{eq: function to have effectively at least one}
 z \mapsto \left(z- \epsilon N_0+ A_{ex,0}(z,\epsilon) \right)^{-1}   
\eeq
%Let $I \equiv I(\epsilon)$ be the projector on the patch corresponding to eigenvalue $0$ of $B$, as above. This means that for $\epsilon=0$, $I=1_0(B)$.
Let the contour $\Ga^{f} \equiv \Ga^{f}(\epsilon)$ be a  circle with center $\epsilon f_N$ and radius   $\epsilon r$ for some $r>0$.
Clearly, for $r$ small enough,  the entire spectrum of $\epsilon N_0$ lies outside  the contour $\Ga^f$, except for the eigenvalue $\epsilon f_N$.  The contour integral of $(z-\epsilon N)^{-1}$ along $\Ga^f$ equals the spectral projection corresponding to $f_N$.  We estimate
%We will prove that the contour integral of $(z-\i B-\epsilon N-A_{1})^{-1}$ differs little from 
\baq
 && \oint_{\Ga^f} \d z \,   \left[(z-\epsilon N_0-A_{ex,0}(z,\epsilon))^{-1}-    (z-\epsilon N_0)^{-1}   \right] \\ [2mm]
  &=& \oint_{\Ga^f}  \d z \, (z-\epsilon N_0)^{-1}  A_{ex,0}(z,\epsilon)  (z-\epsilon N_0- A_{ex,0}(z,\epsilon) )^{-1}     \\ [2mm]
    &  =&   \oint_{\Ga^f}  \d z \,       ( \epsilon^{-2}c(r))^{2}   o(\epsilon), \qquad  \textrm{as}  \,\, \epsilon \searrow 0  \,\,  \textrm{with} \,\, c(r):= \sup_{\str z-f_N \str=r}  \norm (z-N_0)^{-1} \norm,          \label{eq: at least one4}
\eaq
The last estimate holds in norm  and it follows from the bound $ \norm A_{ex,0}(z,\epsilon) \norm =o(\epsilon)$, see \eqref{eq: bounds on tildeAex}.
The expression \eqref{eq: at least one4} is $o(1)$, as $\epsilon \searrow 0$,  since the circumference of the contour $\Ga^f$ is $2 \pi r \epsilon$. 
From the fact that the contour integral of  \eqref{eq: function to have effectively at least one} does not vanish, we conclude that $A(z)$
has at least one singularity inside $\Ga^f$. 

The claim about the residue is most easily seen in an abstract setting.  Let $F(z)$ be an operator-valued analytic function in some open domain containing $0$, and such that  $0 \in \sp F(0)$ is an isolated eigenvalue.
We have  the Taylor expansion
\beq
F(z) = \mathop{\sum}_{n \geq 0}  \frac{z^{n}}{n!}  F_{n} , \qquad F_{n} := F^{(n)}(0), \qquad  0 \in \sp F_0
\eeq
If $\norm F_1-1 \norm$ is small enough,  then also $F_1^{-1} F_0$  has $0$ as an isolated eigenvalue. We denote the corresponding spectral projection by $1_0(F_1^{-1} F_0)$ and we calculate the residue ($\mathrm{res}$) at $0$
\beq
\mathrm{res} (F(z)^{-1})  =\mathrm{res}  ( (F_0 + z F_1)^{-1} ) =  \left(  \mathrm{res} ( F_1^{-1} F_0 + z )^{-1}  \right)   F_1^{-1} = 1_0(F_1^{-1} F_0)    F_1^{-1}.
\eeq
The last expression is clearly a rank-one operator.  In the case at hand, $F_1^{-1}=1+o(\str \epsilon \str^0)$, as $\epsilon \searrow 0$, which yields \eqref{ex: P close to projection}.
\end{proof}

We proceed to the proof of Lemma \ref{lem: spectral abstract}.

First, we choose the rate $g$ such that  $ f_N <   g < \min\{ g_A, g_N \}$ and we fix  the  contours $\Ga^{f}$ and $\Ga_{\rightarrow}$ (see also Figure \ref{fig: spectral}); 
%that encircles the singularity $f$ and, for each $b \in \sp B$, we fix a contour $\Ga_b$ that encircles the spectrum corresponding to the spectral projection $1_b$ (in the case $b=0$, the contour $\Ga_0$ is such that it excludes the eigenvalue  $f$.  To be more precise, we fix
\begin{itemize}
\item The contour $\Ga^f$ is as described in Lemma \ref{lem: at least one singularity}, with $r <  \str g-f_N \str $. In particular,  for small $\epsilon$, it encircles the point $f$ but no other singular points of $A(z)$.
%\item The contours $\Ga^b$ are such that, for $b \neq 0$; 
%\beq
% \str\epsilon\str^2 g_N/4 \leq \mathrm{dist}(\Ga^b, \i b+ \epsilon\sp N_b)   \leq    \str\epsilon\str^2 g_N/3
%\eeq
%and, for $b=0$; 
%\beq
% \str\epsilon\str^2 g_N/4  \leq \mathrm{dist}(\Ga^b,  \epsilon\left(\sp N_0 \setminus \{ f_N \} ) \right)   \leq    \str\epsilon\str^2 g_N/3
%\eeq
\item The contour $\Ga_{\rightarrow}$ is given by $\Ga_{\rightarrow}:= - \epsilon g  +\i \bbR$.
\end{itemize}

\begin{figure}[ht!]
\vspace{0.5cm}
\begin{center}
\psfrag{rez}{$\Re z$}
\psfrag{imz}{$\i \Im z$}
\psfrag{contourgamma}{$\Ga^{\rightarrow}$}
\psfrag{cgpriem}{$\Ga_{\rightarrow}$}
\psfrag{contourcf}{$\Ga^f$}
\psfrag{f}{$\epsilon f_N$}
\psfrag{contourczero}{$b=0$-patch excluding $\epsilon f_N$  }
\psfrag{ccright}{ $b=b_{-1}$-patch}
\psfrag{ccleft}{$b=b_{1}$-patch}
%\psfrag{contourcpriem}{$\epsilon\mathbf{C}'$}
\psfrag{nonanalyticity}{ $A_1(z,\ep)$ nonanalytic}
\psfrag{gapg}{$\epsilon g$}
\psfrag{gapN}{$\epsilon g_N$}
\psfrag{gapgres}{$\epsilon g_A$}
\psfrag{small}{$o(\ep)$}
%\psfrag{overallsize}{$O(\epsilon)$}
%\psfrag{\la\bel12}{$t'_2$}
%\psfrag{\label13}{$t'_3$}
%\psfrag{\label14}{$t'_4$}
%\psfrag{\label15}{$t'_5$}
%\psfrag{\label16}{$t'_6$}
%\psfrag{\label17}{$t'_7$}
%\psfrag{\label18}{$t'_8$}
\includegraphics[width = 14cm]{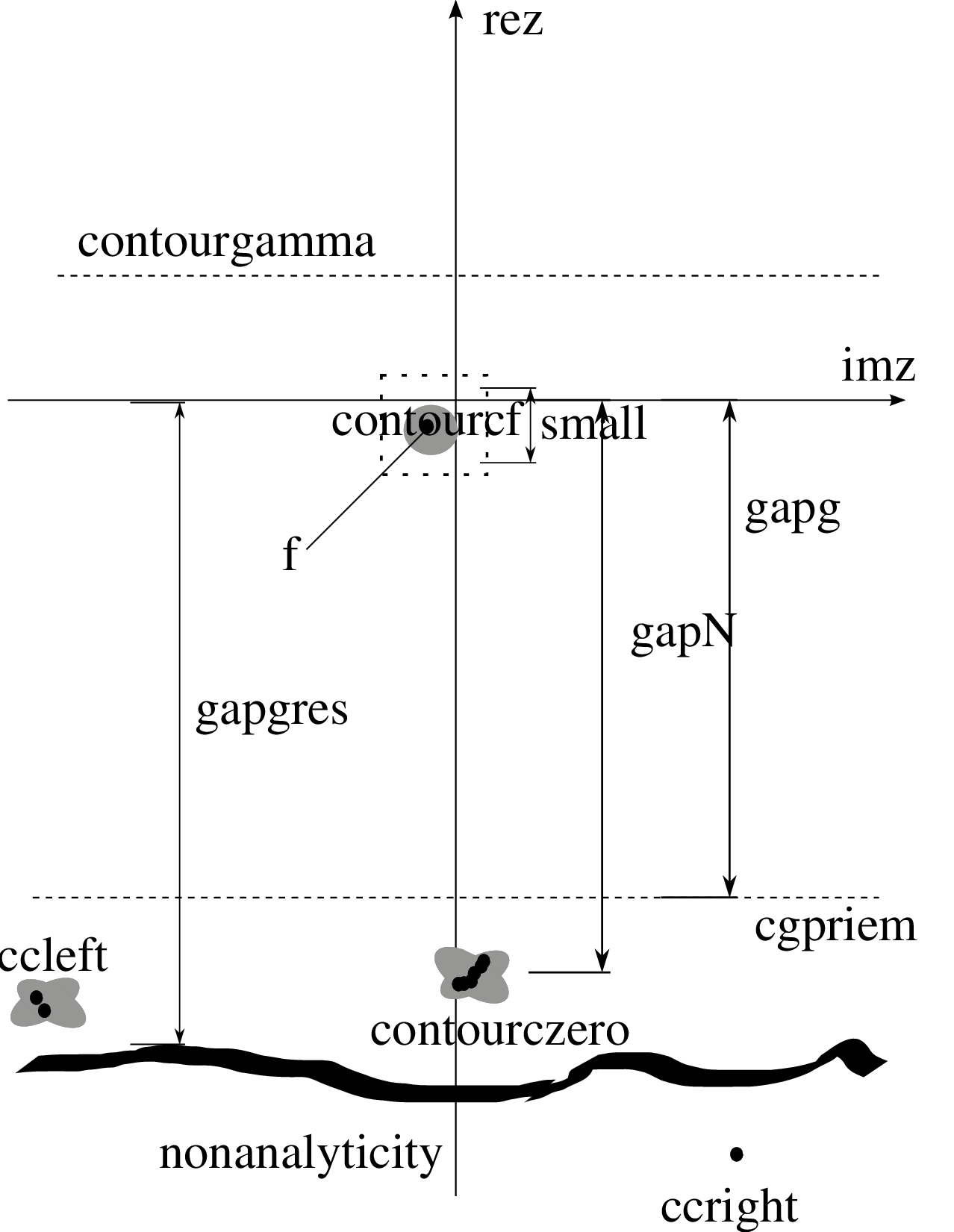}
\end{center}
\caption{\footnotesize{The (rotated) complex plane. The black dots  indicate the spectrum of $\i B+ \epsilon N$ (which need not be discrete). The upper dot is the eigenvalue $\ep f_N$. In the  picture, we have assumed that the spectrum of $B$ consists of 3 semisimple eigenvalues: $0,b_1,b_{-1}$. The gray patches contain the possible singularities of the function $A(z)$ above the irregular black line. These singularities lie at $o(\epsilon)$ from the spectrum of $\i B+\epsilon N$. Below the irregular black line, i.e., in the region $\Re z< -\ep g_A$, we have no control since $A_1(z,\ep)$ ceases to be analytic in that region (hence we have also not drawn a patch around $b_{-1}$).  The integration contours $\Ga^{\rightarrow},\Ga_{\rightarrow}$ and $\Ga^f$ are drawn as dashed lines.}
}   \label{fig: spectral}
\end{figure}

By Lemma \ref{lem: spectrum change small o}, we know that for small $\epsilon$, there are no singularities of $A(z)$  in the region $\Re z >- \epsilon g$ except for the point $z \equiv f$.  Hence,  we can deform contours as follows
\baq
V(t, \epsilon)& =&  \frac{1}{2 \pi \i} \mathop{\int}\limits_{\Ga^{\rightarrow}} \d z \,  \e^{z t}  A(z, \epsilon) \\
& =&  \frac{1}{2 \pi \i} \mathop{\oint}\limits_{\Ga^{f}} \d z \,    \e^{z t}  A(z, \epsilon)    +    \frac{1}{2 \pi \i} \mathop{\int}\limits_{\Ga_{\rightarrow}} \d z   \,  \e^{z t}  A(z, \epsilon)  \label{eq: deformation contours2}
\eaq
The first term in \eqref{eq: deformation contours2} yields $ \e^{t f} P$.
The second term of \eqref{eq: deformation contours2} is split as follows
\baq
&&  \int_{\Ga_{\rightarrow}} \d z   \,  \e^{z t}  A(z, \epsilon)     \\
 &=&     \int_{\Ga_{\rightarrow}} \d z \,    \e^{z t}   (z-\i B-\epsilon N)^{-1}  \label{eq: deformation contours3}  \\
&+&     \int_{\Ga_{\rightarrow}}   \d z   \, \e^{z t}   (z-\i B-\epsilon N)^{-1}    (A_{1}(z,\epsilon)-\epsilon N  )  A(z, \epsilon)    \label{eq: deformation contours4}  
\eaq
The term  \eqref{eq: deformation contours3} equals
\beq \label{eq: estimate contour semigroup}
  \e^{ t \left(  \i B+ \epsilon   1_{\Om_N}(N) N\right) } =O(\e^{-\epsilon g_N  t}),  \qquad  t \nearrow \infty
\eeq
since the contour $\Ga_{\rightarrow}$ can be closed in the lower half-plane to enclose the spectrum of  $\i B +\epsilon N$ minus the eigenvalue $\epsilon f_N$, i.e., the set $\epsilon\Om_N$.

The integrand of \eqref{eq: deformation contours4}  decays as $\str z \str^{-2}$ for $z \nearrow \infty$ , since for   a bounded operator  $M$
\beq
\norm (z-M)^{-1} \norm =O( \frac{1}{ \str z \str} ), \qquad  \str z \str \nearrow \infty
\eeq
Using that $A_1(z,\ep)=O(\ep)$, it is now easy to establish that the integral in \eqref{eq: deformation contours4} is $O(1)$, as $\epsilon \searrow 0$. 
One extracts   $\e^{t\Re z}$ from the integration \eqref{eq: deformation contours4} to get the bound $O(\e^{-\epsilon g t})$. Together with \eqref{eq: estimate contour semigroup}, this proves  Lemma \ref{lem: spectral abstract}.

 \section{Appendix: Construction and analysis of the Lindblad generator $\caM$ }  
\label{app: caM}

The operator $\caM$ was introduced at the beginning of Section \ref{sec: lindblad}. We provide a more explicit construction and we prove Propositions \ref{prop: wc1} and \ref{prop: properties of m}.
\subsection{Construction of $\caM$} \label{sec: construction of caM}

First, we note that  by using the notions introduced in Section  \ref{sec: dyson expansion}, the operator  $\caL(z)$, defined in Section  \ref{sec: construction of semigroup}, can be expressed as
\beq \label{eq: caL as perturbation}
\caL(z) =  \mathop{\int}\limits_{\bbR^+} \d t \,  \e^{-t z}    \mathop{\sum}\limits_{x_1,x_2,l_1,l_2}  \psi^{\#}(x_2-x_1,t)   \,    \caI_{x_2,l_2}  \e^{-\i \adjoint(Y)t}   \caI_{x_1,l_1}
\eeq
where $\psi^{\#}$ equals $\psi$ or $\overline{\psi}$, depending on $l_1,l_2$, according to the rules  in \eqref{def: zeta}. In words, $\la^2 \caL(z)$ contains the terms of order $\la^2$ in the Lie-Schwinger series of Lemma \ref{lem: definition dynamics}.

Next, we define some auxiliary objects.
\baq
\Upsilon &:=& \Im  \sum_{a \in \sp(\adjoint(Y))}    W_{a}  W^*_{a} \,  \int_{\bbR^+}    \d t  \,     \psi(0 ,t)    \e^{\i a t}        \\[2mm]
\Psi(\rho) &:=&  \sum_{x^{}_\links, x^{}_\rechts \in \lat}   \sum_{a \in \sp(\adjoint(Y))}     \left( \int_{\bbR} \d t \, \e^{\i a t} \psi(x^{}_\links- x^{}_\rechts,t) \right)\,  \times \, (1_{x^{}_\links} \otimes W_{a})   \rho         (1_{x^{}_\rechts} \otimes  W_{a})^{*} \label{eq: explicit Psi} 
\eaq

The operator  $\Upsilon=\Upsilon^* \in \scrB(\scrS)$ was already referred to in Section \ref{sec: lindblad}. 
From the above expression and the definition of $W_a$ in \eqref{def: Wa}, we check immediately that  $[Y,\Upsilon]=0$.
Further,  we can rewrite \eqref{eq: explicit Psi} as 
\beq
\Psi(\rho) =   2 \pi \mathop{ \sum}\limits_{a \in \sp(\adjoint(Y))}  \mathop{ \int} \limits_{\bbS^{d-1}} \d s   \,     \hat\psi(a)
V(s,a)   \rho  V^*(s,a)    \label{eq: explicit Psi kraus} 
\eeq
with $\hat\psi(\cdot)$ as in \eqref{def: psi} and \eqref{def: correlation function}, and 
\beq
V(s,a) :=    \sum_{x \in \lat} \e^{\i a s \cdot x} 1_x \otimes W_{a} 
\eeq
The expression \eqref{eq: explicit Psi kraus} is essentially the Kraus decomposition of $\Psi$, see \cite{alickifannesbook}, and hence it shows that $\Psi$ is a completely positive map.  This means in particular that for $\rho \geq 0$, $\norm \Psi(\rho)\norm_1=\Tr \Psi(\rho)$, and hence, by using \eqref{eq: explicit Psi}, 
 \beq
 \norm\Psi\rho\norm_1  = \sum_{x}    \Tr_{\scrS} \left[(\Psi\rho)(x, x) \right]  \leq     \left( \dim \scrS \int \d t \str \psi(0,t)\str  \norm W \norm^2  \right)     \norm\rho \norm_1 
 \eeq
 where the finiteness of the factor between brackets on the RHS  is  implied by  Lemma \ref{lem: integrability}.   Since any trace class operator can be written as a linear combination of four positive trace class operators, it follows that $\Psi$ is bounded on $\scrB_1(\scrH_\sys)$  and  by a similar calculation, one can check that $\Psi$ is also bounded on $\scrB_2(\scrH_\sys)$. 
 
 We are now ready to verify that
 \beq \label{eq: general form generator}
\caM(\rho) =  - \i [  \ve(P)+ \Upsilon, \rho] +    \Psi (\rho) - \frac{1}{2} (\Psi^*(1) \rho+\rho \Psi^*(1) ). 
\eeq
Indeed, this is checked most conveniently starting from \eqref{eq: def of M} and employing \eqref{eq: caL as perturbation}.  The terms with $l_1 \neq l_2$ give rise to $\Psi(\rho)$, while the terms with $l_1=l_2$ give rise to $-\i[\Upsilon, \rho]$ and $- \frac{1}{2} (\Psi^*(1) \rho+\rho \Psi^*(1) )$. Moreover, by the boundedness and complete positivity of $\Psi$ and the representation \eqref{eq: general form generator}, it follows that $\caM$ is of Lindblad type, see e.g.\ \cite{alickifannesbook}.
 Starting from \eqref{eq: explicit Psi kraus}, one can derive the momentum space representation of $\caM$ given in Section \ref{sec: momentum representation of caM}. For example, by expressing $V(s,a)$ in momentum representation, one obtains
\beq
\Psi(\rho) (k^{}_\links, k^{}_\rechts) =  2 \pi \mathop{ \sum}\limits_{a \in \sp(\adjoint(Y))}  \mathop{ \int} \limits_{\bbS^{d-1}} \d s   \,     \hat\psi(a)   W_{a} \rho(k^{}_\links +s a, k^{}_\rechts+s a)   W^*_{a} 
\eeq
which gives rise to the first term of \eqref{def: caN}.

\subsubsection{Proof of Proposition \ref{prop: wc1}} \label{sec: app proof of proposition}

By the integrability in time of the correlation function $\psi(x,t)$, as stated in Lemma \ref{lem: integrability}, the expression \eqref{eq: caL as perturbation} implies immediately that $\caL(z)$ can be continued continuously to $z \in \bbR$. This proves \eqref{eq: caL defined on real axis}. 
The boundedness  of $\caM$ on $\scrB_2(\scrH_\sys)$ and $\scrB_1(\scrH_\sys)$  follows from the boundedness of $\Psi$, which was explained above. 
 The complete positivity of the map $\Psi$ and the canonical form \eqref{eq: general form generator} imply that $\caM$ is a Lindblad generator, see e.g.\ \cite{alickifannesbook}.
 Consequently, $-\i \adjoint(Y)+ \la^2 \caM$ is also a Lindblad generator and the semigroup  $\La_t$ is positivity-preserving and trace-preserving.
To check \eqref{eq: caM has only propagation in hamiltonian}, we note that 
\beq
\caJ_{\ka}  \caM \caJ_{-\ka} -\caM =      -\i \left[ \caJ_{\ka}  \adjoint (\varepsilon(P))    \caJ_{-\ka} - \adjoint (\varepsilon(P)) \right]   
\eeq
and hence \eqref{eq: caM has only propagation in hamiltonian} follows immediately from Assumption \ref{ass: analytic dispersion}.

\subsection{Spectral analysis and proof of Proposition \ref{prop: properties of m}}
The  claims of Proposition \ref{prop: properties of m} require a spectral analysis which we present now. 
 We recall the decomposition 
\beq
\caM=  \int_{\tor }^\oplus \d p \,  \caM_p = \int_{\tor }^\oplus \d p \, \mathop{\oplus}\limits_{a  \in \sp(\adjoint(Y))}  \caM_{p,a} 
\eeq
and we keep in mind that Proposition \ref{prop: properties of m}  treats  $\caM_p$ as an operator on the Hilbert space $\scrG \sim L^2(\tor, \scrB_2(\scrS))$. 
\subsubsection{Explicit representation of $\caM_{p,0}$}

By exploiting the nondegeneracy condition in Assumption \ref{ass: fermi golden rule}, we can identify $\caM_{p,a=0} $, for each $p$ with an operator on $L^2(\tor \times \sp Y)$.  This was explained in Section \ref{sec: momentum representation of caM}.   
We introduce explicitly  \emph{gain}, \emph{loss}  and \emph{kinetic}  operators; $G, L$ and $K_{p}$, acting on $L^2(\tor \times \sp Y)$,  by
\baq
 G \varphi (k,e)  & := &\sum_{e' \in \sp Y}\int_{\tor} \d k'     r(k',e';k,e)   \varphi(k',e') \\[1mm]
L  \varphi (k,e)  & := &  -   \sum_{e' \in \sp Y}\int_{\tor} \d k'    r(k,e;k',e')  \varphi(k,e) \\[1mm]
K_{p}  \varphi (k,e)   & := &   \i( \varepsilon(k+\frac{p}{2} )-  \varepsilon(k-\frac{p}{2} ) ) \varphi(k,e), \qquad  \varphi \in L^2(\tor \times \sp Y)
\eaq
The kinetic operator $K_{p}$ models the free flight of the particle between collisions.  The operators $L$ and $K_p$ act by multiplication in the variables $k,e$. 
The expression for $\caM_{p,0}$, given in   Section \ref{sec: momentum representation of caM} (in particular in \eqref{def: caMp}), can  be rewritten as
\beq \label{eq: app formula for caMoo}
\caM_{p,0} =  G +L +K_{p}
\eeq
We define the similarity transformation 
\beq
A \mapsto \hat A :=     \e^{ \frac{1}{2} \be Y} A  \e^{- \frac{1}{2} \be Y}, \qquad \textrm{for} \,\,  A \in \scrB(L^2(\tor \times \sp Y))
\eeq
where we have slightly abused the notation by writing $Y$ to denote a multiplication  operator on $\sp Y$, i.e.,  $Y \varphi(k,e)= e \varphi(k,e)$.
Since $L$ and $K_{p}$ act by multiplication, we have $\hat L =L$ and $\hat K_{p}= K_{p} $. 
The usefulness of this  similarity transformation resides in the fact that $\hat G$, and hence also $\hat \caM_{0,0}$, are  self-adjoint on $L^2(\tor \times \sp Y)$.  

\subsubsection{Explicit representation of $\caM_{p,a \neq 0}$}

To write an explicit expression for $\caM_{p,a \neq 0}$, we first define the operators (acting on $\scrB(\scrS)$)
 \beq
\adjoint_a( \Upsilon) = 1_a(\adjoint(Y))  \adjoint(\Upsilon) 1_a(\adjoint(Y)), \qquad  a \in \sp (\adjoint(Y))   \eeq
which satisfy $\adjoint_0(\Upsilon)=0$  and $ \adjoint(\Upsilon)  = \oplus_{a}  \adjoint_a( \Upsilon)$ since $[\Upsilon, Y]=0$.  Due to the nondegeneracy condition in Assumption \ref{ass: fermi golden rule},  both $1_a(\adjoint(Y))$ and  $\adjoint_a( \Upsilon)$ are rank-one operators and we identify $\adjoint_a( \Upsilon)$ with a number $\Upsilon_a$, such that $\adjoint_a( \Upsilon) =  \Upsilon_a 1_a(\adjoint(Y))$.  
  In fact, as already remarked in Section \ref{sec: momentum representation of caM},  the operator $\caM_{p,a \neq 0}$  itself acts as the rank-one operator $1_a(\adjoint(Y))$  on $\scrB(\scrS)$ and hence we identify it with an operator on $L^2(\tor)$ (which is also called $\caM_{p,a \neq 0}$ here):
 \beq \label{eq: first expression caMpa}
\caM_{p,a \neq 0} \varphi(k)= - \i (\Upsilon_a+\varepsilon(k+\frac{p}{2} )-  \varepsilon(k-\frac{p}{2} ) )  \varphi(k)  - \frac{1}{2} \left(j(k,e)+j(k,e')\right)  \varphi(k),  \qquad  \varphi \in L^2(\tor) 
\eeq
where  $j(k,e)$ are the escape rates introduced in  \eqref{def: escape rates} and $e,e'$ are determined by $a=e-e'$.
To check \eqref{eq: first expression caMpa}, one starts from \eqref{def: caMp} and one uses
\begin{itemize}
\item  The fact that $r(k',e';k,e)$ vanishes for $e'=e$, as remarked following \eqref{def: singular jump rates}. 
\item The definition of the matrices $W_a$ in  \eqref{def: Wa} and the escape rates $j(\cdot,\cdot)$ in  \eqref{def: escape rates}.
\item The definition $ \varphi(k)\equiv \langle e, \xi(k) e' \rangle_{\scrS} $ in \eqref{def: varphi}.
\end{itemize}
In particular, the last term on the RHS of \eqref{eq: first expression caMpa} appears because 
\beq
 \int_{\tor}  \d k' r_{e'-e}(k,k')  \Big\langle e',  \left( W_a W_{a}^* \xi(k) +  \xi(k) W_a W_{a}^*  \right)e\Big\rangle_{\scrS} =    \left(j(k,e)+j(k,e')\right) \varphi(k)
\eeq

Hence, $\caM_{p,a \neq 0}$ acts by multiplication in the variable $k$.

\subsubsection{Analysis of $\caM_{0,0}$} \label{sec: analysis of caM00}

We already established that $\caM_{0,0}$ is a bounded  Markov generator  on $L^1(\tor \times \sp Y)$. 
This implies that 
\beq \label{eq: spectrum of markov}
 \Re \sp_{L^1} \caM_{0,0}   \leq 0. 
\eeq
The operator $\hat \caM_{0,0}$ is not longer a Markov generator, but its spectrum is  identical to $\hat \caM_{0,0}$,  since $  \e^{\pm \frac{1}{2} \be Y}$ is bounded and invertible. 
The \emph{loss} operator $L$ is a multiplication operator and its spectrum is found  to be (see \eqref{def: escape rates})
\beq
\sp(L) =  - \left\{  j(e,k) \, \big\str \,  e \in \sp Y, k \in \tor   \right\}    
\eeq
It is important to note that the escape rates $j(e,k)$ are bounded away from $0$; this is a consequence of Assumption \ref{ass: fermi golden rule} and more concretely, of the fact that for each $e$, there is a $e'$ such that \eqref{def: connectedness} holds. Hence, we have
\beq
\sp(L)  <0
\eeq
Next, we argue that $G$ is a compact operator on $L^2$.
 Indeed,  for fixed  $e,e'$, the kernel $ r(k',e';k,e)$ depends only on $\Delta k \equiv k-k'$  and, hence, its Fourier transform acts on $l^2(\lat)$ by multiplication with the function 
\beq \label{eq: multiplication operator}
\lat \ni x \mapsto  \int_{\tor} \d (\Delta k) \e^{\i \Delta k \cdot x}  r(0,e';\Delta k,e)
\eeq
From the explicit expression for $r(k',e';k,e)$, one checks that the function \eqref{eq: multiplication operator} decays at infinity if 
the dimension $d>1$ (recall that $d \geq 4$ by Assumption  \ref{ass: space}).  Hence $G$ is compact.

Given the compactness of $G$, Weyl's theorem ensures that the self-adjoint operators $\hat \caM_{0,0}$ and $\hat L$ have the same essential spectrum.

By inspection, we  check that $\hat \caM_{0,0}$ has an eigenvalue $0$, corresponding to the  eigenvector $\hat\varphi^{eq}(k,e) \equiv \e^{-\frac{\be}{2}e}$. Note that the corresponding right eigenvector of $\caM_{0,0}$ is the (unnormalized) Gibbs state $\varphi^{eq}(k,e) \equiv \e^{-\be e}$ and the corresponding left eigenvector is the constant function, since indeed  
\beq
\hat\varphi^{eq} =    \e^{\frac{\be}{2} Y} \varphi^{eq}, \qquad      \hat\varphi^{eq} =    \e^{-\frac{\be}{2} Y}  1_{\tor \times \sp Y}
\eeq
Since any eigenvalue of $\hat \caM_{0,0}$ on $L^2$ has to be an eigenvalue of  $\hat \caM_{0,0}$ on $L^1$ (Note that $L^2(\tor \times \sp Y) \subset  L^1(\tor \times \sp Y)  $), the relation \eqref{eq: spectrum of markov}
implies that there are no eigenvalues with strictly positive real part. \\

We now exploit a Perron-Frobenius  argument to argue that the eigenvalue $0$ is simple and that it is the only eigenvalue on the real axis.    
 See e.g.\ Theorem 13.3.6 in \cite{davieslinearoperators}  for a version of the Perron-Frobenius theorem that establishes this in our case, provided that the semigroup  $\e^{ t \hat\caM_{0,0}} $ is irreducible, i.e., that for any nonnegative functions $\varphi \in L^1(\tor\times\sp Y)$, the inclusion
 \beq
\supp\left( \e^{t \hat\caM_{0,0}} \varphi \right)  \subset \supp( \varphi), \qquad (\supp\, \textrm{stands for 'support'})    \eeq
implies that either $  \supp (\varphi)= \tor \times \sp Y$ or $\varphi=0$. 
This irreducibility criterion is easily checked starting from Assumption \ref{ass: fermi golden rule}, in particular its rephrasing in  terms of a connected graph.
 Theorem 13.3.6 yields that  the eigenvalue $1$ of  $\e^{ t \hat\caM_{0,0}} $ is simple, which implies that the eigenvalue $0$ of $\hat\caM_{0,0}$ is simple. To exclude purely imaginary eigenvalues  $\i b $ of $\hat\caM_{0,0}$, we apply this theorem for $t$ such that $\e^{\i b t}=1$.

\subsubsection{Analysis of $\caM_{p,0}$ and $\caM_{p,a}$ }   \label{sec: analysis of caMp0}
We investigate the spectrum of $\hat \caM_{p,0}$ as follows. 
By the same reasoning as in Section \ref{sec: analysis of caM00}, any spectrum with real part greater than (the negative number) $\sup\sp L$ consists of eigenvalues of finite multiplicity. 
Assume that  $\hat \caM_{p} $ has an eigenvalue $m_p$ with (right)  eigenvector) $\hat \varphi_p$. 
Then
\baq
 \Re  m_p \langle \hat \varphi_p, \hat \varphi_p \rangle  & =&  \Re   \langle \hat \varphi_p, \hat \caM_{p,0}  \hat \varphi_p \rangle  \\
& =&   \Re   \langle \hat \varphi_p,  K_{p}  \hat \varphi_p \rangle  +  \Re   \langle \hat \varphi_p, \hat \caM_{0,0}  \hat \varphi_p \rangle  \label{eq: app no other eigenvectors}
\eaq
The first term in \eqref{eq: app no other eigenvectors} vanishes because the multiplication operator $K_{p}$ is purely imaginary.  
The second term can only become positive if $\hat \varphi=\hat \varphi^{eq}$, with $\hat \varphi^{eq}$ the  eigenvector of $\hat \caM_{0,0}$ corresponding to the eigenvalue $0$.  This means that either the eigenvalue $m_p$ has strictly negative real part, or  the vector $\hat \varphi^{eq}$ is an eigenvector of  $\hat \caM_{0,0} $ with eigenvalue $0$.  In the latter case,  $\hat \varphi^{eq}$ must also be an eigenvector of   $K_{p} $ with eigenvalue $0$, which can only hold if $\varepsilon(k+\frac{p}{2} )-  \varepsilon(k-\frac{p}{2} )=0$ for all $k$.   This is however excluded by the condition \eqref{eq: no smaller periodicity} in Assumption \ref{ass: analytic dispersion}.

We conclude that for all $p \in \tor \setminus \{ 0 \}$, we have $
\Re \sp \caM_p  <0 $.
By compactness of $\tor$ and the lower semicontinuity of the spectrum, we deduce hence that 
\beq 
\sup_{\tor  \setminus I_0}   \Re \sp \caM_{p,0}    = c(I_0) <0, \qquad \textrm{for any neighborhood $I_0$ of $0$} 
\eeq

%\subsubsection{Analysis of $\caM_{p,a}$ for $a=0$}   \label{sec: analysis of caMpa}

For $a \neq0$, the operator $\caM_{p,a \neq 0}$ is a multiplication operator in $k$ and
\beq \label{eq: spectrum caMpa bounded}
 \Re \sp  \caM_{p,a} \leq  - \frac{1}{2} \inf_{k,e} j(k,e) <0, \qquad  \textrm{independently of}\, p
\eeq
as follows by \eqref{eq: first expression caMpa} and the fact that $j(k,e) $ is bounded away from $0$.

\subsubsection{Proof of Proposition \ref{prop: properties of m}} \label{sec: proof of properties of m}

We summarize the results of Sections  \ref{sec: analysis of caM00} and \ref{sec: analysis of caMp0}. 
For $a \neq 0$, the real part of the spectrum of the operators $\caM_{p,a}$ is  strictly negative, uniformly in $p$, see \eqref{eq: spectrum caMpa bounded}.
The real part of the spectrum of $\caM_{p,0}$ is strictly negative, uniformly in $p$ except for a neighborhood of $0 \in \tor$.  

The operator $\caM_{0,0}$ has a simple eigenvalue at $0$ with corresponding eigenvector $\varphi^{eq}$, as defined in Section \ref{sec: analysis of caM00}.  The rest of the spectrum of  $\caM_{0,0}$ is separated from the eigenvalue by a gap. 

Since $\caM_{p} =\oplus_{a} \caM_{p,a}$, and using the uniform bound \eqref{eq: spectrum caMpa bounded}, we obtain immediately that the operator 
$\caM_{0}$ has a simple eigenvalue at $0$ with corresponding eigenvector $\xi^{eq}$
\beq
 \xi^{eq}  :=   \varphi^{eq}   \oplus  \underbrace{0 \oplus \ldots \oplus 0}_{a \neq 0},
\eeq
separated from the  rest of the spectrum of $\caM_{0} $ by a gap. 
By the analyticity in $\ka$, see \eqref{eq: caM has only propagation in hamiltonian}, and the correspondance between $\ka$ and $(p,\nu)$, as stated in \eqref{eq: relation kappa and fibers}, we can apply analytic perturbation theory in $p$ to the family of operators $\caM_p$.  We conclude that for $p$ in a neighborhood of $0$, the operator $\caM_p$ has a simple eigenvalue, which we call $f_{rw}(p)$, that is separated by a gap from the rest of the spectrum.  We also obtain that the corresponding eigenvector is analytic in $p$ and $\nu$.

In this way we have derived all claims of Proposition \ref{prop: properties of m}, except for the symmetry $\nabla_p f_{rw}(p)=0$ and the strict postive-definiteness of the matrix $(\nabla_p)^2 f_{rw}(p)$. These two claims will be proven in Section \ref{sec: strict positivity}.  
We note that the function $f_{rw}(p)$, which we defined above as the simple and isolated eigenvalue of $\caM_p$ with maximal real part,  is also a simple and isolated eigenvalue of $\caM_{p,0}$ with maximal real part.

\subsubsection{Strict positivity of the diffusion constant} \label{sec: strict positivity}

By the remark at the end of Section \ref{sec: proof of properties of m} and the fact that $\sp \hat\caM_{p,0}=\sp\caM_{p,0}$, 
we view $f_{rw}(p)$ as the eigenvalue of $\hat \caM_{p,0}$ that reduces to $0$ for $p=0$. 

We recall that $ \hat \caM_{p,0} = \hat \caM_{0,0}+K_{p}$ and we define the operator-valued vector $V:= \nabla_p K_{p}\big\str_{p=0}$ (note that $V$ is in fact a vector of operators). The first order shift of the eigenvalue is given by 
\beq \label{eq: vanishing first order}
\nabla_p f_{rw}(p)  :=  \frac{1}{\langle \hat \varphi^{eq},  \hat \varphi^{eq} \rangle  } \langle \hat \varphi^{eq},  V  \hat \varphi^{eq}  \rangle =0
\eeq
To check that \eqref{eq: vanishing first order} indeed vanishes, we  use that $\hat \varphi^{eq}$ is symmetric under the transformation $k \mapsto -k$ (in fact, it is independent of $k$) while $V$ is anti-symmetric under $k \mapsto -k$ (this follows from the symmetry $\ve(k)=\ve(-k)$ in Assumption \ref{ass: analytic dispersion}). 

The second order shift is then given by
\beq \label{eq: second order}
D_{rw}:= \left(\nabla_p \right)^{2} f_{rw}(p)  =   - \frac{1}{\langle \hat \varphi^{eq},  \hat \varphi^{eq} \rangle} \langle \hat \varphi^{eq}, V    \hat\caM_{0,0}^{-1} V  \hat \varphi^{eq}  \rangle  + \frac{1}{\langle \hat \varphi^{eq},  \hat \varphi^{eq} \rangle}     \langle \hat \varphi^{eq},  ( \nabla_p)^2 K_{p}    \hat \varphi^{eq}  \rangle 
\eeq
where the first term on the RHS of \eqref{eq: second order} is well-defined since  $  V \hat \varphi^{eq}  $ is orthogonal to the $0$-spectral subspace of $\caM_{0,0}$, by \eqref{eq: vanishing first order}.
The second term vanishes because $( \nabla_p)^2 K_{p}=0$, as can again be checked explicitly. 

Let $\upsilon \in \bbR^d$ and $V_{\upsilon}:=\upsilon \cdot V$ (recall that $V$ is a vector).
Then, by \eqref{eq: second order},
\beq
\upsilon \cdot D_{rw}  \upsilon = -  \frac{1}{\langle \hat \varphi^{eq},  \hat \varphi^{eq} \rangle}  \langle \hat \varphi^{eq}, V_\upsilon    \hat\caM_{0,0}^{-1} V_\upsilon  \hat \varphi^{eq}  \rangle
\eeq
Upon using the spectral theorem and the gap for the self-adjoint operator $\hat\caM_{0,0}$, we see that the RHS of the last expression is positive and it can only vanish if
\beq
0=\norm V_{\upsilon}  \hat \varphi^{eq}  \norm^2 = \left[\sum_{e \in \sp Y}\e^{-\be e}\right] \int \d k  \str\upsilon \cdot \nabla\ve(k)\str^2
\eeq
which is however excluded by Assumption \ref{ass: analytic dispersion}. 
The strict positive-definiteness of the diffusion constant $D_{rw}$ is hence proven.

  \section{Appendix: Combinatorics}  % use *-form to suppress numbering
\label{app: combinatorics}
 
 In this appendix, we show how to integrate over irreducible equivalence classes of diagrams. In other words, we assume that the $\underline{x}, \underline{l}$-coordinates have already been summed over (or a supremum over them has been taken) and we carry out the remaining integration over the time-coordinates $\underline{t}$ and the diagram size $\str \dsi \str$.  
 We first define a function of diagrams, $K(\dsi)$, that depends only on the equivalence class $[\dsi]$.
  Let $k$ be a positive function on $\bbR^+$ and put
\beq
K(\dsi) := \mathop{\prod}_{i=1}^{\str \dsi \str}  k(v_i-u_i)
\eeq
where $(u_i,v_i)$ are the pairs of times in the diagram $\dsi$.  In the applications, the function $k$ will be (a multiple of) $\sup_x\str \psi(x,t)\str $, sometimes restricted to $t<\tau$ or $t > \tau$.

  \begin{lemma}\label{lem: combinatorics of irreducible diagrams}
Let $a \geq 0$ and assume that $\norm t\e^{ a t } k\norm_1= \int_{\bbR^+} \d t \, t\e^{ a t } k(t) <1 $,  then 
\beq \label{eq: bound minimally irreducible classes all}
 \mathop{\int}\limits_{\bbR^+} \d t \,  \e^{ a t }   \mathop{\int}\limits_{\Pi_T \Si_{[0,t]}(\mathrm{mir}) }  \d [ \dsi]  K(\dsi)  \leq   \norm \e^{ a t } k\norm_1 \frac{1}{1-\norm t\e^{ a t } k\norm_1}
\eeq
If in addition, $\norm t\e^{ {\tilde a} t } k\norm_1 <1$ with $ {\tilde a}:= a +\norm k\norm_1 $, then 
\baq  
 \mathop{\int}\limits_{\bbR^+}  \d t \,  \e^{ a t }   \mathop{\int}\limits_{\Pi_T \Si_{[0,t]}(\mathrm{ir}) }  \d [ \dsi]   \, K(\dsi)  &\leq&   2 \norm \e^{ {\tilde a} t } k\norm_1 \frac{1 }{1-\norm t\e^{ {\tilde a} t } k\norm_1}   \label{eq: bound irreducible classes all}  \\
 \mathop{\int}\limits_{\bbR^+}  \d t \,  \e^{ a t }   \mathop{\int}\limits_{\Pi_T \Si_{[0,t]} (\mathrm{ir}) }  \d [ \dsi]  1_{\str \dsi \str \geq 2} \, K(\dsi) & \leq &  2 \norm \e^{ {\tilde a} t } k\norm_1 \frac{\norm t\e^{ {\tilde a} t } k\norm_1 }{1-\norm t\e^{ {\tilde a} t } k\norm_1}  \label{eq: bound irreducible} \eaq
\end{lemma}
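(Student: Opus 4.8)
\textbf{Proof plan for Lemma \ref{lem: combinatorics of irreducible diagrams}.}

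The plan is to reduce everything to a direct combinatorial count of minimally irreducible and irreducible time-patterns, after which the three bounds follow by elementary power-series estimates. First I would set up notation: for an equivalence class $[\dsi]$ with $\str\dsi\str = n$, the pairs $(u_i,v_i)$ are determined by the time-coordinates alone, and the condition of irreducibility in $[0,t]$ means $\bigcup_i [u_i,v_i] = [0,t]$. The first step is to bound the integral over minimally irreducible classes. The key structural fact (already used in Section \ref{sec: summing min irr pairings}) is that in a minimally irreducible diagram every point of $[0,t]$ is covered by at most two of the intervals $[u_i,v_i]$; consequently, if one orders the pairs so that $u_1 < u_2 < \cdots < u_n$ (which one may, since the $u_i$ are distinct), the intervals form a ``staircase'': $u_{i+1} < v_i$ for all $i$ (otherwise there would be a gap), and moreover $v_i < u_{i+2}$ (otherwise the $i$-th pair could be removed while preserving irreducibility, contradicting minimality). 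Given this, the change of variables $\ell_i := v_i - u_i$ together with the ordering constraints lets one bound $\int_{\Pi_T\Si_{[0,t]}(\mathrm{mir})} \d[\dsi]\, K(\dsi)$ from above by iterating: fixing $\ell_1,\dots,\ell_n$ and the starting point $u_1 = 0$, the admissible positions of the remaining $u_i$ are constrained to an interval of length at most $\ell_{i-1}$ (because $u_i \in (v_{i-2}, v_{i-1})$, and this window has length $\le \ell_{i-1}$). Integrating over $t$ with the weight $\e^{at}$ and using $t = v_n \le \ell_1 + \cdots + \ell_n$ for a minimally irreducible diagram, one gets
\beq
\mathop{\int}\limits_{\bbR^+}\d t\, \e^{at}\mathop{\int}\limits_{\Pi_T\Si_{[0,t]}(\mathrm{mir})}\d[\dsi]\, K(\dsi) \le \sum_{n\ge 1} \norm \e^{at}k\norm_1 \cdot \norm t\e^{at}k\norm_1^{\,n-1},
\eeq
where the first factor comes from the outermost integration (the pair containing $0$) and each subsequent factor $\norm t\e^{at}k\norm_1$ comes from integrating one more $\ell_i$ together with the position of $u_i$ over a window of length $\le \ell_{i-1} \le t$. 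Summing the geometric series under the hypothesis $\norm t\e^{at}k\norm_1 < 1$ yields \eqref{eq: bound minimally irreducible classes all}.

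For the irreducible (not necessarily minimal) case I would use the standard device, already invoked in Section \ref{sec: from irreducible long to minimal}: any $[\dsi] \in \Pi_T\Si_{[0,t]}(\mathrm{ir})$ contains a minimally irreducible subclass $[\dsi_0]$, obtained by deleting pairs one at a time as long as irreducibility survives; the remaining pairs $[\dsi']$ are then \emph{unconstrained} (any collection of pairs contained in $[0,t]$ may be adjoined to a diagram that is already irreducible in $[0,t]$). The choice of $[\dsi_0]$ inside $[\dsi]$ need not be unique, but overcounting only weakens the bound. Hence
\beq
\mathop{\int}\limits_{\Pi_T\Si_{[0,t]}(\mathrm{ir})}\d[\dsi]\, K(\dsi) \le \mathop{\int}\limits_{\Pi_T\Si_{[0,t]}(\mathrm{mir})}\d[\dsi_0]\, K(\dsi_0) \sum_{m\ge 0}\frac{1}{m!}\left(\mathop{\int}\limits_{\Pi_T\Si_{[0,t]}}\d[\dsi']\, 1_{\str\dsi'\str=1} k(v-u)\right)^{m},
\eeq
and the inner single-pair integral over $[0,t]$ is bounded by $t\norm k\norm_1$ (fix the length $\ell=v-u$, then the position $u$ ranges over an interval of length $\le t$), so the sum over $m$ gives $\e^{t\norm k\norm_1}$. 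Folding this factor into the weight replaces $a$ by $\tilde a = a + \norm k\norm_1$ in the minimally-irreducible estimate, and applying \eqref{eq: bound minimally irreducible classes all} with $\tilde a$ in place of $a$ gives \eqref{eq: bound irreducible classes all} (the harmless factor $2$ absorbs the difference between $\e^{t\norm k\norm_1}$ acting as a multiplier and being merged into $\e^{\tilde a t}$ cleanly, and the condition $\norm t\e^{\tilde a t}k\norm_1 < 1$ is exactly what makes the resulting geometric series converge). Finally \eqref{eq: bound irreducible} follows from the same computation by simply noting that the restriction $\str\dsi\str \ge 2$ forces $n \ge 2$ in the minimally irreducible sum \emph{or} at least one adjoined unconstrained pair, so in either case one extra factor of $\norm t\e^{\tilde a t}k\norm_1$ (resp. $t\norm k\norm_1 \le \norm t\e^{\tilde a t}k\norm_1$) is gained, which is precisely the extra numerator factor in \eqref{eq: bound irreducible}.

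The main obstacle I anticipate is making the ``window of length $\le \ell_{i-1}$'' bookkeeping fully rigorous: one needs to verify carefully that, for a minimally irreducible ordered pattern, the position $u_i$ of the $i$-th pair (ordered by left endpoint) is genuinely confined to an interval whose length is controlled by a \emph{previous} pair's length rather than by $t$, since a bound by $t$ would not produce a convergent series. This is where the ``at most double coverage'' property and the minimality (no removable pair) must be combined precisely — it is essentially the claim that a minimally irreducible pattern is a staircase of overlapping intervals with no triple overlaps and no two-step skips. Once that combinatorial lemma is cleanly stated and proved, the rest is routine: a change of variables to interval lengths, an $L^1$--$L^\infty$ estimate of the type in Lemma \ref{lem: F and G}, and summation of geometric series. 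I would also double-check the edge cases (diagrams touching $0$ or $t$, and the $n=1$ term) so that the stated constants, in particular the factor $2$, are honestly justified rather than merely plausible.
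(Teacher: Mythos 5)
Your proposal is correct and follows essentially the same route as the paper's proof: the same staircase ordering $0=u_1\le u_2\le v_1\le u_3\le v_2\le\cdots\le v_n=t$ for minimally irreducible classes yielding the bound $\norm \e^{at}k\norm_1\,\norm t\e^{at}k\norm_1^{\,n-1}$ per size $n$, the same reduction of irreducible classes to a minimally irreducible core times an unconstrained remainder bounded by $\e^{t\norm k\norm_1}$ (absorbed into $\tilde a$), and the same case split (core of size $\ge 2$ versus core of size $1$ plus at least one extra pair) producing the factor $2$ and the extra $\norm t\e^{\tilde a t}k\norm_1$ in \eqref{eq: bound irreducible}. The only difference is cosmetic bookkeeping: you distribute the length factors via windows of size $\le \ell_{i-1}$ after changing variables to interval lengths, whereas the paper integrates the pairs out from the last one backwards after enlarging the $u_n$-domain to $(-\infty,v_{n-1}]$ — these are the same estimate.
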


\begin{proof}

First, we note  that for each irreducible diagram $\dsi \in \Si_{[0,t]}(\mathrm{ir})$, we can find a subdiagram $\dsi' \subset \dsi$ such that $\dsi'$ is minimally irreducible in $[0,t]$, i.e., $\dsi' \in \Si_{[0,t]}(\mathrm{mir})$.  Note that the choice of subdiagram $\dsi'$ is not necessarily unique. Conversely, given a minimally irreducible diagram $\dsi' \in \Si_{[0,t]}(\mathrm{mir})$, we can add any diagram $\si'' \in \Si_{[0,t]}$ to $\si'$, thereby creating a new irreducible diagram $\si:= \si' \cup \si'' \in \Si_{[0,t]}(\mathrm{ir})$. 
By these considerations, we easily deduce
\baq 
 \mathop{\int}\limits_{\Pi_T\Si_{[0,t]}(\mathrm{ir}) }  \d [\dsi] \, 1_{\str \dsi \str \geq 2}    K(\si)     
     & \leq &    \Bigg(       \mathop{\int}\limits_{\Pi_T\Si_{[0,t]}(\mathrm{mir}) }  \d [\dsi']   1_{\str \dsi' \str \geq 2}     K(\dsi')      \Bigg)               \Bigg(1+ \mathop{\int}\limits_{\Pi_T\Si_{[0,t]}} \d [\dsi''] K(\dsi'') \Bigg)    \label{eq: reduction to min irr 1} \\
     & + &    \Bigg(       \mathop{\int}\limits_{\Pi_T\Si_{[0,t]}(\mathrm{mir}) }  \d [\dsi']  1_{\str \dsi' \str =1}    K(\dsi')      \Bigg)               \Bigg( \mathop{\int}\limits_{\Pi_T\Si_{[0,t]}} \d [\dsi''] K(\dsi'') \Bigg)    \label{eq: reduction to min irr 2}
\eaq
The $1+\cdot$ in \eqref{eq: reduction to min irr 1} covers the case in which the diagram $\dsi$ was itself minimally irreducible, and hence no diagrams $\dsi''$ are added to $\dsi'$.
In \eqref{eq: reduction to min irr 2}, one always has to add at least one pair to $\dsi'$, since $\str \dsi\str \geq 2$ but $\str \dsi' \str=1$. 
In fact, the  equivalence classes in the inequality could be dropped, i.e., one can omit  the projections $\Pi_T$  and replace $\d[\dsi], \d [\dsi'], \d [\dsi'']$  by $\d\dsi, \d \dsi', \d \dsi''$, respectively.

We recall that if  a diagram $\dsi$ with $\str \dsi \str=1$ is irreducible (or minimally irreducible) in the interval $I$, then its time-coordinates are fixed to be the boundaries of $I$; i.e., there is only one equivalence class of such diagrams:
\beq  \label{eq: 1 minimally irreducible diagram}
 \mathop{\int}\limits_{\Pi_T\Si_{[0,t]}(\mathrm{ir}) }  \d [\dsi]   1_{\str \dsi \str = 1}   K(\si) =  \mathop{\int}\limits_{\Pi_T\Si_{[0,t]}(\mathrm{mir}) }  \d [\dsi]  1_{\str \dsi \str = 1}    K(\si) =  k(t)  
 \eeq

The unconstrained integral over all (equivalence classes of) diagrams, that appears in \eqref{eq: reduction to min irr 1} and \eqref{eq: reduction to min irr 2}, can be performed as follows
\baq
 \mathop{\int}_{\Pi_T \Si_{[0,t]}}      \d [\dsi] K(\si) & = &   \sum_{n \geq 1}   \mathop{\int}\limits_{0< u_1< \ldots < u_n<t}  \d u_1 \ldots \d u_n   \mathop{\int}\limits_{v_i >u_i}     \d v_1 \ldots \d v_n \left( \mathop{\prod}\limits_{ i =1}^n  k(v_i-u_i) \right)   \nonumber \\
&\leq  &    \sum_{n \geq 1}   \mathop{\int}\limits_{0< u_1< \ldots < u_n<t}  \d u_1 \ldots \d u_n    ( \norm k \norm_1   )^n  =   \sum_{n \geq 1} \frac{t^n}{n!}   ( \norm k \norm_1   )^n  =  \e^{t \norm k\norm_1 }-1
\label{eq: unconstrained integration diagrams}
\eaq

Next, we perform the integral over (equivalence classes of) minimally irreducible diagrams. For $\si \in \Si_{[0,t]}(\mathrm{mir})  $ with $\str \dsi \str=n >1$, the relative order of the times $u_i,v_i$ is  fixed as follows:
 \beq
 0= u_1 \leq  u_2 \leq v_1 \leq u_3 \leq v_2  \leq  u_4 \leq \ldots      \leq  v_{n-2}  \leq  u_n \leq v_{n-1}\leq  v_n=t \eeq
 We have hence
\baq
&&     \int_{\bbR^+} \d t \,  \e^{at }    \mathop{\int}\limits_{\Pi_T\Si_{[0,t]}(\mathrm{mir}) }  \d \dsi   K(\si)    1_{\str \dsi\str=n}    =   \int_0^\infty   \d v_1k(v_1-u_1)   \e^{a (v_1-u_1)}  \,       \int_0^{v_{1} } \d u_{2}  \int_{v_{1}}^{\infty } \d v_{2}    \ldots     \nonumber  \\[1mm]     
   &  &     \qquad \ldots \qquad    \int_{v_{n-5}}^{v_{n-4} } \d u_{n-2}     \int_{v_{n-3}}^{\infty } \d v_{n-2}       \ldots     \int_{v_{n-3}}^{v_{n-2} }  \d u_{n-1}  \int_{v_{n-2}}^{\infty } \d v_{n-1}    \e^{a(v_{n-1}-v_{n-2})}   k(v_{n-1}-u_{n-1})       \nonumber    \\[1mm]    
   && \quad \qquad \qquad       \int_{v_{n-2}}^{v_{n-1} } \d u_n   \int_{v_{n-1}}^{\infty } \d v_n     \e^{a(v_n-v_{n-1})}   k(v_n-u_n)    .  \label{eq: calculation minimally irreducible}  
\eaq
First we extend the domain of integration of $u_{n}$  from $[v_{n-2},v_{n-1}]$ to $(-\infty,v_{n-1}]$ and we estimate the integrals over the variables $u_n$ and $v_n$ by 
\beq
\int_{-\infty}^{v_{n-1}} \d u_{n}  \int_{v_{n-1}}^{\infty}  \d v_n  \e^{a (v_n- v_{n-1})}  k(v_n-u_n)  \leq   \norm t e^{at} k \norm_1
\eeq
Next, we perform the integration over $u_{n-1}, v_{n-1}$ in the same way, we continue the procedure until only the variable $v_1$ is left (note that $u_1=0$ is fixed).  The integral over $v_1$ gives $\norm \e^{at} k \norm_1$. 
This yields the bound
\beq  \label{eq: result minimally irreducible} 
 \textrm{LHS of \eqref{eq: calculation minimally irreducible} }  \leq        \norm \e^{at} k \norm_1 \times  \norm t e^{at} k \norm_1^{n-1}
\eeq
We are ready to evaluate the Laplace transform of \eqref{eq: reduction to min irr 1}-\eqref{eq: reduction to min irr 2}. Using \eqref{eq: unconstrained integration diagrams}, we  bound 
\beq
 \Bigg(1+ \mathop{\int}\limits_{\Pi_T\Si_{I}} \d [\dsi] K(\si) \Bigg)  \leq \e^{t \norm k\norm_1}, \qquad     \Bigg( \mathop{\int}\limits_{\Pi_T\Si_{I}} \d [\dsi] K(\si) \Bigg) \leq \e^{t \norm k\norm_1}-1 \leq t \norm k\norm_1 \e^{t \norm k\norm_1}
\eeq
Combining this with  \eqref{eq: 1 minimally irreducible diagram}  and  \eqref{eq: result minimally irreducible} , and summing over $n \geq 2$, we obtain
\beq
   \int \d t \,  \e^{ a t }   \mathop{\int}\limits_{\Pi_T \Si_{[0,t]}(\mathrm{ir}) }  \d [ \dsi]  1_{\str \dsi \str \geq 2} \, K(\dsi)     \leq \norm \e^{ {\tilde a} t } k\norm_1 \frac{\norm t\e^{ {\tilde a} t } k\norm_1 }{1-\norm t\e^{ {\tilde a} t } k\norm_1}  +    \norm  k\norm_1 \norm t\e^{ {\tilde a} t } k\norm_1
\eeq
where the two terms on the RHS correspond to  \eqref{eq: reduction to min irr 1} and \eqref{eq: reduction to min irr 2}, respectively.  This ends the proof of \eqref{eq: bound irreducible}.
The bound in \eqref{eq: bound irreducible classes all}  follows by adding $\norm \e^{at}k\norm_1$, which is the contribution of $\str\dsi\str=1$ (see \eqref{eq: 1 minimally irreducible diagram}), to \eqref{eq: bound irreducible}.   The bound  \eqref{eq: bound minimally irreducible classes all} is proven by summing  \eqref{eq: result minimally irreducible}  over $n \geq 1$.

\end{proof}

\bibliographystyle{plain}
\bibliography{mylibrary08}

\end{document}